\numberwithin{equation}{section}
\newcommand{\trans}{\intercal} 
\newcommand{\dd}{\mathrm{d}}
\newcommand{\R}{\mathbb{R}}
\newcommand{\Q}{\mathbb{Q}}
\newcommand{\eps}{\epsilon}
\newcommand{\bigo}[1]{ \mathcal{O}\!\left( #1 \right) }
\newcommand{\littleo}[1]{ {o}\!\left( #1 \right) }
\newcommand{\ie}{\emph{i.e.}}
\newcommand{\solsplit}{\mathsf{Y}}
\newcommand{\solfuse}{\protect{\mathpalette\rotsfy\relax}}
\newcommand{\rotsfy}[2]{\rotatebox[origin=c]{180}{$#1\mathsf{Y}$}}
\def\eq{\begin{equation}}
\def\endeq{\end{equation}}
\newcommand{\pd}[3][ ]{\frac{\partial^{#1} #2}{\partial #3^{#1} } }
\DeclareMathOperator{\imag}{Im}
\DeclareMathOperator{\re}{Re}
\DeclareMathOperator{\sech}{sech}
\theoremstyle{plain}
\newtheorem{thm}{Theorem}
\newtheorem{prop}[thm]{Proposition}
\newtheorem{myrhp}{Riemann-Hilbert Problem}
\theoremstyle{definition}
\newtheorem{definition}{Definition}
\theoremstyle{remark}
\newtheoremstyle{RHP}  		
  {3pt}					
  {3pt}					
  {\itshape}				
  {}						
  {\bfseries}				
  {}						
  {.5em}					
  {Riemann-Hilbert Problem #2 #3} 
\theoremstyle{RHP}
\newcommand{\Qpot}{\begin{pmatrix} 0 & Q^{[3]}(x,t) & -\gamma^{[1]} \gamma^{[3]} Q^{[2]}(x,t)^* \\ \gamma^{[1]}\gamma^{[2]} Q^{[3]}(x,t)^* & 0 & Q^{[1]}(x,t) \\ Q^{[2]}(x,t) & \gamma^{[2]} \gamma^{[3]} Q^{[1]}(x,t)^* & 0 \end{pmatrix}}
\title{Semiclassical soliton ensembles for the \\ three-wave resonant interaction equations}
\author{R.~J.~Buckingham}
\address{Department of Mathematical Sciences\\University of Cincinnati}
\email{buckinrt@uc.edu}
\author{R.~M.~Jenkins}
\address{Department of Mathematics\\University of Arizona}
\email{rjenkins@math.arizona.edu}
\author{P.~D.~Miller}
\address{Department of Mathematics\\University of Michigan}
\email{millerpd@umich.edu}
\date{\today}
\begin{document}
\begin{abstract}
The three-wave resonant interaction equations are a non-dispersive system of 
partial differential equations with quadratic coupling describing the time 
evolution of the complex amplitudes of three resonant wave modes.  Collisions 
of wave packets induce energy transfer between different modes via pumping 
and decay.  We analyze the collision of two or three packets in the 
semiclassical limit by applying the inverse-scattering transform.  Using WKB 
analysis, we construct an associated semiclassical soliton ensemble, a family 
of reflectionless solutions defined through their scattering data, intended 
to accurately approximate the initial data in the semiclassical limit.  
The map from the initial packets to the soliton ensemble is explicit and 
amenable to asymptotic and numerical analysis.  Plots of the soliton 
ensembles indicate the space-time plane is partitioned into regions 
containing either quiescent, slowly varying, or rapidly oscillatory waves. 
This behavior resembles the well-known generation of dispersive shock waves 
in equations such as the Korteweg-de Vries and nonlinear Schr\"{o}dinger 
equations, although the physical mechanism must be different in the absence 
of dispersion.
\end{abstract}
\maketitle

\tableofcontents

\section{Introduction}
\subsection{The three-wave resonant interaction (TWRI) equations}
A primary obstacle to the effective description of small-amplitude dispersive 
waves by linear theory is the presence of \emph{resonant triads}, in which 
two wave modes conspire to generate a third mode that grows until the 
small-amplitude assumption is violated.  A weakly nonlinear model for this 
process (see Appendix \ref{appA-derivation}) is that the complex amplitudes 
of these modes satisfy the 
\emph{three-wave resonant interaction (TWRI) equations}:
\begin{equation}\label{3wave} \begin{split}
	\eps \left(\pd{q^{[1]}}{t} + c^{[1]}\pd{q^{[1]}}{x}\right) &= \gamma^{[1]} q^{[2]*} q^{[3]*}, \\	
	\eps \left(\pd{q^{[2]}}{t} + c^{[2]}\pd{q^{[2]}}{x}\right) &= \gamma^{[2]} q^{[3]*} q^{[1]*}, \\	
	\eps \left(\pd{q^{[3]}}{t} + c^{[3]}\pd{q^{[3]}}{x}\right) &= \gamma^{[3]} q^{[1]*} q^{[2]*},
\end{split} \end{equation}	
where the wave speeds $c^{[k]}$ are distinct real constants and 
$(\gamma^{[k]})^2 =1$ for $k=1,2,3$. 
The unknowns are three complex-valued fields $q^{[k]}=q^{[k]}(x,t)$ that we will sometimes refer to as \emph{channels} or, when localized, \emph{packets}.  
The TWRI equations have a wide variety of physical applications, following from the 
fact that resonant wave coupling is such a basic nonlinear phenomenon.  These 
include 
waves in plasmas \cite{Sagdeev:1969, Stenflo:1994},
Rossby waves \cite{Newell:1969}, 
capillary-gravity waves \cite{McGoldrick:1965}, 
buckling of cylindrical shells \cite{Lange:1973}, 
Bose-Einstein condensates \cite{Sun:2005}, 
and a variety of applications to nonlinear optics, such as  
stimulated Raman and Brillouin scattering \cite{Baldwin:1969}, 
resonant Bragg reflection \cite{Mak:1998}, 
parametric amplification \cite{Ibragimov:1999}, 
and information storage and processing \cite{Baronio:2009}.
See \cite{Kaup76,Kaup:2010} for more references and discussion.
As the equations \eqref{3wave} are invariant to permutations of the indices, without 
loss of generality we will assume that 
\begin{equation}
	c^{[1]} > c^{[2]} > c^{[3]}.
	\label{eq:c-inequalities}
\end{equation} 
We also assume throughout this work that
\begin{equation}
\epsilon>0 
\label{eq:epsilon-inequality}
\end{equation}
(see \S\ref{subsec-intro-semiclassical} for further discussion).

The TWRI system \eqref{3wave} admits a reduction in which the fields $q^{[k]}(x,t)$ are real-valued, in which case \eqref{3wave} fits into a family of systems including Broadwell models \cite{Broadwell64}, i.e., approximations of the Boltzmann equation of kinetic theory in which the independent velocity variable of the phase space probability density function $q(x,v,t)$ is sampled at the discrete velocities $v=c^{[k]}$, $k=1,\dots,N$, and quadratic interaction terms representing a corresponding discretization of the Boltzmann collision operator are included.  Such models have been studied by many authors, and many qualitative features of the dynamics have been determined using methods from the theory of hyperbolic systems.  These studies show that the type of interaction, i.e., the choice of signs $\gamma^{[k]}$ in the context of \eqref{3wave}, strongly influences the long-term behavior.  See \cite{Beale86} and the references therein.  The key existence and uniqueness results for such systems carry over also to the complex case \eqref{3wave}; for example in \cite[Theorem 9.2.3]{Rauch12} it is shown\footnote{The statement of Theorem 9.2.3 in \cite{Rauch12} appears to pertain to the real reduction of \eqref{3wave}, but from the proof it is obvious that it applies more generally to the complex case, a fact that has been confirmed directly to us by the author.} that if the signs $\gamma^{[k]}$, $k=1,2,3$, are all the same, then there exist smooth compactly supported Cauchy data for which the solution of \eqref{3wave} blows up in $L^\infty$ in finite time,
and that otherwise Cauchy data with $q^{[k]}(\cdot,0)\in H^s(\mathbb{R})$ for all $s>0$ yield a unique global solution in which each field is a $C^s$ function of time with values in $H^s(\mathbb{R})$ for every $s>0$.

Therefore, in order to ensure that solutions remain bounded, unless otherwise indicated we assume throughout that the signs $\gamma^{[k]}$ satisfy 
\begin{equation}
\gamma^{[1]}\gamma^{[2]}=\gamma^{[2]}\gamma^{[3]}=-1,\quad \gamma^{[1]}\gamma^{[3]}=1.
\label{eq:gamma-assumption}
\end{equation}
It is not hard to see where this condition arises.  Indeed, multiplying the equation governing 
$q^{[k]}(x,t)$ by $q^{[k]}(x,t)^*$ and adding the result to its complex 
conjugate gives
\begin{equation}
\eps\left(\pd{|q^{[k]}|^2}{t} + c^{[k]}\pd{|q^{[k]}|^2}{x}\right)=2\gamma^{[k]}\re\{q^{[1]}q^{[2]}q^{[3]}\},\quad k=1,2,3.
\end{equation}
For classical solutions that decay as $x\to \pm\infty$, we therefore obtain
\begin{equation}
\eps\frac{\dd}{\dd t}\int_\mathbb{R}|q^{[k]}(x,t)|^2\,\dd x = 2\gamma^{[k]}\int_{\mathbb{R}}\re\{q^{[1]}(x,t)q^{[2]}(x,t)q^{[3]}(x,t)\}\,\dd x,\quad k=1,2,3.
\end{equation}
Such solutions therefore obey the \emph{Manley-Rowe relations} \cite{ManleyR56}:
\begin{equation}
\begin{split}
\int_\mathbb{R}\left(|q^{[1]}(x,t)|^2-\gamma^{[1]}\gamma^{[2]}|q^{[2]}(x,t)|^2\right)\,\dd x &= K_{12},\\
\int_\mathbb{R}\left(|q^{[2]}(x,t)|^2-\gamma^{[2]}\gamma^{[3]}|q^{[3]}(x,t)|^2\right)\,\dd x &= K_{23},\\
\int_\mathbb{R}\left(|q^{[3]}(x,t)|^2-\gamma^{[3]}\gamma^{[1]}|q^{[1]}(x,t)|^2\right)\,\dd x &= K_{31},
\end{split}
\label{eq:Manley-Rowe}
\end{equation}
where $K_{jk}$ are constants.  These relations show immediately that \emph{provided the signs $\gamma^{[1]}$, $\gamma^{[2]}$, and $\gamma^{[3]}$ are not all the same}, the $L^2(\mathbb{R}_x)$ norms of all three fields $q^{[k]}(x,t)$ are uniformly bounded in time $t$.  Indeed under this condition exactly two of the three signs $-\gamma^{[1]}\gamma^{[2]}$, $-\gamma^{[2]}\gamma^{[3]}$, and $-\gamma^{[3]}\gamma^{[1]}$ will be positive, and therefore the corresponding constants $K_{jk}$ will be nonnegative and will bound both terms on the left-hand side.  For example, if $\gamma^{[1]}\gamma^{[2]}=\gamma^{[2]}\gamma^{[3]}=-1$, then $K_{12}$ and $K_{23}$ will be nonnegative
and we will have the inequalities:
\begin{equation}
\|q^{[1]}(\cdot,t)\|_2^2\le K_{12},\quad \|q^{[2]}(\cdot,t)\|_2^2\le\min\{K_{12},K_{23}\},\quad
\|q^{[3]}(\cdot,t)\|_2^2\le K_{23}.
\end{equation}
On the other hand, if all three signs are the same:  $\gamma^{[1]}=\gamma^{[2]}=\gamma^{[3]}$, then the Manley-Rowe relations all involve indefinite functionals that do not control the growth of any $L^2(\mathbb{R}_x)$ norms of the fields.  In addition to the finite-time blow-up result of Rauch \cite{Rauch12} mentioned above, it has been well-known since the work of Zakharov and Manakov \cite{Zakharov:1973} that in this case there exist explicit solutions corresponding to Schwartz-class initial conditions that exhibit blow-up in finite time (the elementary solitons of types $\solsplit$ and $\solfuse$; see \S\ref{subsec:lax-pair} and Appendix~\ref{sec:solitons} for details).  For these reasons, the case that all signs are the same is therefore known as the \emph{explosive instability} case of the TWRI system \eqref{3wave}, while the case that one of the signs $\gamma^{[k]}$ differs from the other two is called the \emph{decay instability} case.  By associating the distinguished sign $\gamma^{[2]}$ with the intermediate velocity $c^{[2]}$ (see \eqref{eq:c-inequalities}), we are studying a particular kind of decay instability in this paper.  See the last paragraph of \S\ref{subsec:zs-matrix} for a justification of this choice.

The system \eqref{3wave} is completely integrable in the sense that it admits 
a Lax pair representation (see \S\ref{subsec:lax-pair} and Appendix 
\ref{appB-inverse-scattering} for details).  The Lax pair was 
discovered by Zakharov and Manakov \cite{Zakharov:1973}, and the 
inverse-scattering procedure was first developed by them \cite{Zakharov:1975} 
and Kaup \cite{Kaup76}.  For early qualitative results see Kaup, Reiman, and 
Bers \cite{Kaup:1979}.  More recently, Shchesnovich and Yang have studied 
higher-order TWRI solitons \cite{Shchesnovich:2003}.  There are a number of 
generalizations of the TWRI equations that are also completely integrable, 
including the TWRI equations in a spatially inhomogeneous medium 
\cite{Reiman:1979}, the $N$-wave equations in $1+1$ (spatial+time) dimensions 
\cite{Ablowitz:1975,Gerdjikov:1981, BealsC84, BealsC85, Novikov:1984}, the 
TWRI equations in $2+1$ dimensions \cite{Novikov:1984}, and a coupled system 
of PDEs with a $4\times 4$ Lax pair recently derived by Biondini and Wang 
\cite{Biondini:2015}.

The discovery that the TWRI system is integrable provides a starting point 
for understanding solutions, but a tremendous amount of work remains to be 
done.  The main objective is to find an effective way of describing the map 
from the initial conditions to the solution profile at a later time.  Except 
for a very few special cases, it is not possible to carry out either the 
direct or inverse scattering maps explicitly.  The most promising techniques 
for studying the qualitative behavior of solutions are asymptotic analysis 
(which we pursue here), numerics, and series expansions.  For a numerical 
approach to the direct-scattering problem see Degasperis et 
al.\@ \cite{Degasperis:2011};  for recent results on the series approach 
(avoiding the inverse-scattering machinery), see the interesting work of 
Martin and Segur \cite{Martin:2016}.

\subsection{The semiclassical limit}
\label{subsec-intro-semiclassical}
The TWRI equations pose an interesting challenge for asymptotic analysis.  For most 
nonlinear wave equations, the long-time limit provides a effective way of 
understanding post-collision dynamics as the solution profiles often 
simplify in this limit.  The situation is completely different for the 
TWRI equations, where the propagation speed in each channel is fixed.  For 
example, suppose we start with three disjointly-supported packets (as we will see below, this significantly 
simplifies the direct-scattering calculation).  Given the distinct constant propagation speeds, it is reasonable to expect\footnote{To our knowledge a suitably precise statement to this effect has yet to be rigorously proven for the TWRI system, although the hypothesis is prevalent in the literature, see e.g.,  \cite{Kaup76}.  However, for Broadwell-type systems similar to the real reduction of \eqref{3wave}, see \cite{Beale86}.} that after some time the packets will 
have passed each other and interaction will effectively cease.  The profiles the packets 
have at this moment, no matter how complicated, will be preserved as-is 
for all future time, and the long-time limit provides no additional 
information.  This essentially non-dispersive character of the TWRI system distinguishes it from other well-known integrable nonlinear wave equations such as the Korteweg-de Vries or nonlinear Schr\"odinger equations.

We propose here that it is more fruitful to study instead the semiclassical
limit $\epsilon\downarrow 0$, as the post-collision packet profiles can have 
asymptotic expansions in $\epsilon$ with simpler leading terms.  
We consider the TWRI equations \eqref{3wave} with initial data 
\begin{equation}
q^{[k]}(x,0)=e^{i\theta^{[k]}}H^{[k]}(x)e^{i\kappa^{[k]}x/\epsilon},\quad k=1,2,3,
\label{eq:initial-data}
\end{equation}
where $\{\theta^{[k]}\}_{k=1}^3$ and $\{\kappa^{[k]}\}_{k=1}^3$ are real 
($\epsilon$-independent) constants and $\{H^{[k]}\}_{k=1}^3$ are real-valued 
($\epsilon$-independent) non-negative and rapidly decaying functions on $\mathbb{R}$.  
Semiclassical limits have been investigated via the inverse-scattering method 
for a variety of scalar equations, including the 
Korteweg-de Vries 
\cite{Lax:1983a,Lax:1983b,Lax:1983c,Claeys:2009,Claeys:2010a,Claeys:2010b}, 
nonlinear Schr\"odinger 
\cite{Miller02,Kamvissis:2003,LyngM07,Bertola:2010,Bertola:2013,Jenkins:2014}, 
modified nonlinear Schr\"odinger \cite{DiFranco:2008,DiFranco:2011,DiFranco:2013}, 
and sine-Gordon equations 
\cite{Buckingham:2008,Buckingham:2012a,Buckingham:2012b,Buckingham:2013}.  
The interplay between the short length scale 
$\epsilon$ and the long length scale set by the initial condition typically 
leads to a situation in which slowly-varying waves develop large gradients that are then regularized by the generation of \emph{dispersive shock waves}, \ie, slowly modulated trains of highly oscillatory waves.    
These dispersive shock waves occupy space-time domains that become fixed as $\epsilon\downarrow 0$, and the mathematical goal is to describe the limiting boundaries of the regions and the asymptotic (in $\epsilon$) behavior of the waves in each region.
Furthermore, for these scalar equations it has been shown that, at least for certain classes 
of initial conditions, the semiclassical limit can be \emph{universal} near certain critical points in the space-time, in the sense that the leading-order behavior has a fixed form with dependence on the initial conditions entering only parametrically.  A recent collection of papers on this topic that includes several review articles is the special issue \cite{PhysicaDIssue}.

One way to understand the semiclassical limit for integrable equations is to note that the mass of 
each soliton is $\mathcal{O}(\epsilon)$.  Since the total mass of the initial 
condition is independent of $\epsilon$, one might expect that the initial 
profile is a nonlinear condensate of $\mathcal{O}(1/\epsilon)$ solitons, 
plus some amount of radiation.  Furthermore, the small width of the 
constituent solitons allows them to approximate the initial profile 
increasingly well as $\epsilon$ decreases, which suggests radiation 
generically plays a small role in the semiclassical limit.  This large mass 
of thin solitons provides a 
mechanism for the generation of oscillations from the non-oscillatory 
initial data, as the constituent solitons move with respect to one another either through having different velocities or via incurring phase shifts due to interactions with other solitons.    
Note that this integrable interpretation of the semiclassical limit can apparently apply to systems such as the TWRI equations for which dispersion is absent and hence a physical explanation for the generation of structures resembling dispersive shock waves (say as a dispersive response to wave steepening) is less clear.
On the spectral side, the scattering data is expected to 
include $\mathcal{O}(1/\epsilon)$ exceptional points\footnote{Exceptional points are values of the spectral parameter at which the fundamental matrix solution involved in the Riemann-Hilbert problem of inverse scattering fails to exist.  In the Zakharov-Shabat spectral problem and other similar $2\times 2$ direct scattering problems, these exceptional points are exactly the $L^2(\mathbb{R})$ eigenvalues of the problem, but as will be seen in Appendix~\ref{appB-inverse-scattering}, the direct scattering problem for the TWRI system does not have any eigenvalues per se.}, typically accumulating 
on fixed complex contours, and a vanishingly small reflection coefficient.  

This suggests that a natural way to study the semiclassical limit for a 
given initial condition is to apply a WKB (small $\epsilon$) approximation to the direct 
scattering map, obtaining estimated discrete scattering data 
that better approximate the exact scattering data as $\epsilon$ tends to 
zero.  For further simplification the ``continuous" part of the scattering data can be disregarded.  
Thus, for each $\epsilon$, the given original initial data 
$q^{[k]}(x,0)$ can be associated with a set of purely discrete approximate scattering data, which 
in turn corresponds to a bona fide solution of the 
TWRI equations.  We call this collection of exact solutions, one for each small $\epsilon$, a 
\emph{semiclassical soliton ensemble} for the given initial data, and it is these solutions that we 
study in the semiclassical limit.  This procedure has been used previously 
to study the semiclassical limit of scalar wave equations.  The expectation 
is that the time evolution of the semiclassical soliton ensemble provides a 
good approximation of the solution $q^{[k]}(x,t)$ to the original problem.  
In fact, for the TWRI equations we will demonstrate convergence at time $t=0$ for  
certain initial data $q^{[k]}(x,0)$ (although, interestingly, we also 
identify some initial data where our soliton ensembles do \emph{not} 
converge to $q^{[k]}(x,0)$).  
In a forthcoming work \cite{Buckingham:inprep} we translate the algebraic equations
determining the semiclassical soliton ensemble into suitable jump conditions for
a $3\times 3$ matrix Riemann-Hilbert problem in the spirit of \cite{Buckingham:2013,Kamvissis:2003,LyngM07,Miller02,Miller08}, and then apply the Deift-Zhou steepest descent technique
to prove convergence at $t=0$ in the absence of a central packet.  We also give a conditional
convergence result in the presence of three packets.  These analytical results mirror the
numerical observations we make in this paper, and they also provide a description of the interesting 
interactions we report here for $t>0$.

\subsection{Outline of the paper}  
We begin in \S\ref{sec2-plots} by illustrating the dynamics of some semiclassical soliton ensembles
for the TWRI system constructed using the methodology developed in this paper.  The plots shown in \S\ref{sec2-plots} require the numerical solution of numerous poorly-conditioned linear algebraic systems, but the solutions plotted are otherwise exact; they are not numerical simulations of the TWRI system via any time-marching scheme.  These plots demonstrate the emergence of phenomena similar to those familiar from studies of other integrable equations in the semiclassical limit.

In \S \ref{sec3-exact-scattering} we summarize the key points of the inverse-scattering transform solution of the Cauchy problem for the TWRI system, and then we use this theory to compute the scattering data for three 
disjointly supported packets.  We assume the packet in channel 1 is to the 
left of the packet in channel 2, which is to the left of the packet in 
channel 3, guaranteeing collision of the packets in finite time.  
The disjoint support condition means that for any fixed $x$-value the 
TWRI scattering problem reduces to the Zakharov-Shabat eigenvalue 
problem associated to the focusing nonlinear Schr\"odinger equation with a potential related to just the one packet nonzero at $x$.  This 
greatly simplifies the direct scattering problem, although it is more involved 
than the corresponding nonlinear Schr\"odinger analysis since it is 
necessary to stitch together solutions of multiple Zakharov-Shabat problems.  
This procedure for computing scattering data for disjointly-supported 
packets is well known (see, for instance, Kaup \cite{Kaup76}).  However, the 
formulae \eqref{eq:type-1-beta32}, \eqref{eq:type-1-beta23}, 
\eqref{eq:type-3-beta21}, \eqref{eq:type-3-beta12}, 
\eqref{eq:type-solsplit-beta31}, \eqref{eq:type-solsplit-beta32}, 
\eqref{eq:type-solsplit-beta13}, and \eqref{eq:type-solsplit-beta12}, 
expressing the connection coefficients (part of the discrete scattering data) in terms of quantities that can 
be computed without analytic continuation from the solutions of the individual Zakharov-Shabat problems, are new.  

We require 
the aforementioned exact formulae for the connection coefficients to consider the semiclassical limit and effectively construct the soliton ensemble associated with the disjointly-supported initial data.  This construction is described in   
\S\ref{sec4-semiclassical-approx}.
Here we assume that for $k=1,2,3$, $H^{[k]}$ is a single-peaked or Klaus-Shaw potential and use corresponding semiclassical formulae for the Zakharov-Shabat problem to specify the (discrete) scattering data for the soliton ensemble.
We then study the accuracy of the semiclassical soliton ensemble approximation, by comparing some ensembles with $t=0$ to the corresponding specified Cauchy data that generated them.  These plots illustrate the expected convergence to the Cauchy data in the situation that the central packet with speed $c^{[2]}$ is absent, and they also show that if the central packet is present then both convergence and divergence are possible.  The fact that semiclassical soliton ensembles for three disjointly-supported packets may not converge to the Cauchy data at $t=0$ is evidently related to the ad-hoc neglect of terms that are small beyond all orders but that are difficult to calculate accurately.  

In the Appendix we collect some background material related to our study of the TWRI system.
Appendix \ref{appA-derivation} contains a brief derivation of the TWRI system in the context of triad resonances in a family of semilinear dispersive equations.  Appendix \ref{appB-inverse-scattering} is a self-contained account of the treatment of the Cauchy problem for the TWRI system by the inverse-scattering transform. 
Finally, Appendix \ref{appC-nonselfadjoint-zs} summarizes necessary information about the Zakharov-Shabat scattering problem.

\

\noindent
\textbf{Acknowledgements.}  We thank D. J. Kaup, J. Rauch, and H. Segur for useful discussions.  The collaboration of all three authors began at the conference ``Scattering and Inverse-Scattering in Multidimensions'' held in May 2014 at the University of Kentucky and funded by the National Science Foundation grant DMS-1408891.  We also thank the National Science Foundation for support on research grants DMS-1312458 and DMS-1615718 (Buckingham) and DMS-0807653, DMS-1206131, and DMS-1513054 (Jenkins and Miller).

\section{Plots of Semiclassical Soliton Ensembles}
\label{sec2-plots}
\label{subsec:numerical-results}

While our end goal is to obtain rigorous asymptotic 
expansions valid for wide classes of initial data, numerical plots 
are useful guides to the qualitative behavior of soliton ensembles.   
Given initial data $q^{[k]}(x,0)$, $k=1,2,3$, the calculation of the 
scattering data for the 
approximating semiclassical soliton ensembles, which we label $\widetilde{q}^{[k]}(x,t)$, $k=1,2,3$, 
is explicit in terms of integrals that can, in principle, be computed 
numerically.  Due to rapid oscillations with frequencies growing as 
$\epsilon \downarrow 0$, the computation time for the numerical quadratures 
increases as $\epsilon$ decreases.  Once scattering data is 
obtained (possibly to a desired numerical precision), the 
inverse-scattering transformation can be carried out explicitly through 
the solution of the linear systems of equations \eqref{eq:b-plus-1}--\eqref{eq:c-plus-23} (with coefficients depending 
on $x$ and $t$). In practice the computations quickly 
become unwieldy as $\epsilon$ decreases and the system size $N(\epsilon)\times N(\epsilon)$, with $N(\epsilon)=\bigo{\epsilon^{-1}}$, increases.  
However, one remarkable fact about the inverse-scattering procedure is that the solution at 
any given values of $x$ and $t$ can be obtained \emph{without} 
calculating the solution at any other values of $x$ and $t$.  To 
generate plots of the exact semiclassical soliton ensembles we start by
choosing an appropriate space-time grid\footnote{ Rapid oscillations in 
$\widetilde{q}^{[k]}(x,t)$ require 
picking $\mathcal{O}(\epsilon^{-1})$ points in both the $x$ and $t$ 
directions to obtain decent resolution.} $\{(x_j,t_j)\}$.  
Then, at each grid point, we compute the coefficients in the linear system 
\eqref{eq:b-plus-1}--\eqref{eq:c-plus-23}.
This linear system (consisting of $N(\epsilon) \times N(\epsilon)$ 
equations with numerical, not functional, coefficients) can then be inverted.
However, the system can be poorly conditioned; 
typically we require $N(\epsilon)$ decimal places of precision in the coefficients of the linear system to obtain accurate solutions. 

\subsection{Dynamics of ensembles associated with two colliding packets}

For illustration we consider here initial data consisting of two disjointly 
supported colliding packets with semicircular profiles (each multiplied by its own scale factor).
We choose semicircles because the scattering data of the semiclassical soliton ensemble can be computed \emph{exactly}, 
avoiding the numerical evaluation of many integrals, which significantly reduces 
the computation time.  
Specifically, we consider initial data of the type \eqref{eq:initial-data}
with initial envelopes
\begin{equation}
H^{[2]}(x) \equiv 0, \qquad 
H^{[k]}(x) :=\frac{2H^{[k]}_\text{max}\chi_{(a^{[k]},b^{[k]})}(x)}{b^{[k]}-a^{[k]}}\sqrt{(x-a^{[k]})(b^{[k]}-x)}, \quad k=1,3,
\label{plot-Hk}
\end{equation}
where the positive square root is meant
and the condition 
\begin{equation}
\label{endpoints-two-packets}
a^{[1]} < b^{[1]} < a^{[3]} < b^{[3]}
\end{equation}
on the support endpoints (together with \eqref{eq:c-inequalities}) ensures collision of the packets for some positive $t$.  

We fix the parameters
\eq
\begin{gathered}
	\{c^{[1]}, c^{[2]}, c^{[3]}\}=\{1,0,-1\}, \quad 
	\{\gamma^{[1]}, \gamma^{[2]}, \gamma^{[3]}\} = \{1,-1,1\}, \quad 
	\{H_\text{max}^{[1]},H_\text{max}^{[3]}\} = \{1,1\}, \\
	\{\theta^{[1]},\theta^{[3]}\} = \{0,0\}, \quad 
	\{(a^{[1]},b^{[1]}), (a^{[3]},b^{[3]}) \} 
	= \left\{\left(-\frac{3}{2},-\frac{1}{2}\right),\left(\frac{1}{2},\frac{3}{2}\right)\right\}, \quad
	\kappa^{[1]}=0.
\label{plot-vals1}
\end{gathered}
\endeq
For these parameters, the number of solitons associated to each channel are given 
in Table \ref{N-for-plot-vals1} for each value of $\epsilon$ we plot.
\begin{table}[H]
\renewcommand{\arraystretch}{1.2}
\centering
\begin{tabular}{@{}l@{\qquad}c@{\qquad \qquad}c@{\qquad \qquad}c@{}}
\toprule
$\epsilon$ & $N^{[1]}(\epsilon)$ & $N^{[2]}(\epsilon)$ & $N^{[3]}(\epsilon)$ \\
\midrule
1/20 & 4 & 0 & 4 \\
1/40 & 7 & 0 & 7 \\
1/80 & 14 & 0 & 14 \\
1/160 & 28 & 0 & 28 \\
1/320 & 57 & 0 & 57 \\
\bottomrule
\end{tabular}
\caption{The number of solitons of each type assuming initial data 
defined by \eqref{eq:initial-data}, \eqref{plot-Hk}, and \eqref{plot-vals1} for the 
values of $\epsilon$ used in Figures \ref{fig-sc13-2d_k3-1}, 
\ref{fig-sc13-2d_e0p025}, \ref{fig-region-map}, \ref{fig-sc13-1d-collisions}, 
\ref{fig-sc13-1d-t1p2}, \ref{fig-sc13-1d-t2}, \ref{fig-sc13-1d-t0}, and 
\ref{fig-sc13-1d-t0-error}.  The 
$N^{[k]}(\epsilon)$ are defined in \eqref{eq:N-of-epsilon} and independent of 
$\kappa^{[1]}$, $\kappa^{[2]}$, and $\kappa^{[3]}$.}
\label{N-for-plot-vals1}
\end{table}
Varying $\kappa^{[1]}$ and $\kappa^{[3]}$ does not appear to 
affect the plots of $|\widetilde{q}^{[k]}(x,t)|$, $k=1,2,3$, as long as 
$|\kappa^{[3]}-\kappa^{[1]}|$ remains fixed.  Therefore, we fix $\kappa^{[1]}=0$ and vary 
$\kappa^{[3]}>0$.  Further details regarding the construction of the following plots can be found in 
\S\ref{subsec-data-for-plots}.

We first consider dynamics as the system evolves in time.  
In Figure 1 we fix $\kappa^{[3]}=1$ and show how the semiclassical soliton ensemble behaves as $\epsilon$ is decreased.  We clearly see the emergence in the limit $\epsilon\downarrow 0$ of fixed space-time regions containing qualitatively different types of waves.  In particular, the excitation of the initially absent packet in the channel with intermediate speed $c^{[2]}$ appears to be confined to a diamond-shaped region of space-time.  

\begin{figure}[H]
\begin{center}
\includegraphics[height=5.9in]{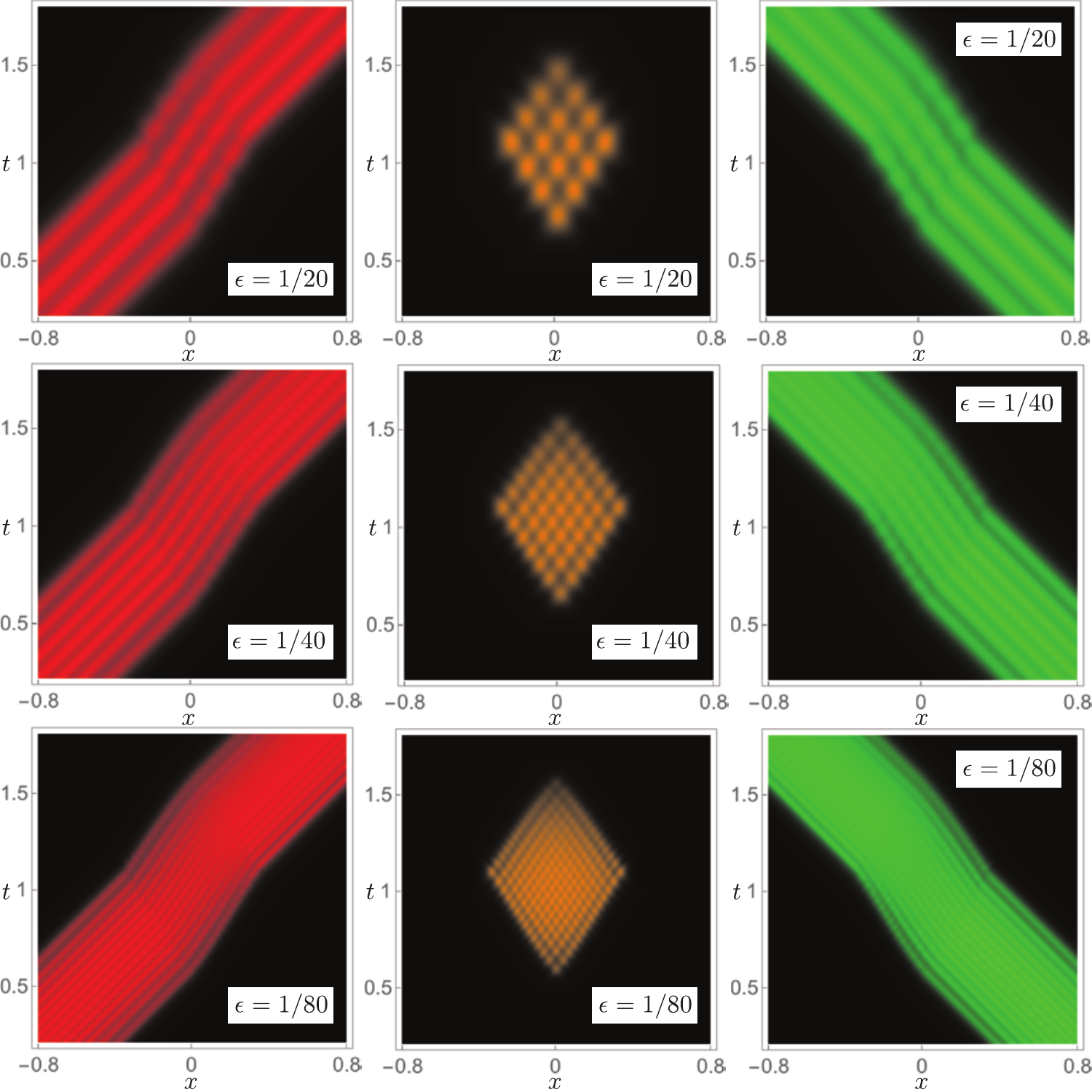}
\caption{Space-time plots of $|\widetilde{q}^{[k]}(x,t)|$, as defined by 
\eqref{eq:initial-data} and \eqref{plot-Hk} with parameters 
\eqref{plot-vals1} and $\kappa^{[3]}=1$ for 
$k=1,2,3$ (left-to-right) and $\epsilon\in\{1/20,1/40,1/80\}$ (top-to-bottom, 
as indicated). The bottom row of plots with $\epsilon=1/80$ is the same as 
the second row of plots in Figure \ref{fig-sc13-2d_e0p025}.}
\label{fig-sc13-2d_k3-1}
\end{center}
\end{figure}

Next, Figure \ref{fig-sc13-2d_e0p025} shows the effect of 
varying the phase gradient $\kappa^{[3]}$ in the initial conditions, holding $\epsilon$ small but fixed.  
First, note from Figure \ref{fig-sc13-2d_e0p025} that for $\kappa^{[3]}$ much larger than 
$\kappa^{[1]}$ (i.e. $\kappa^{[3]}=|\kappa^{[3]}-\kappa^{[1]}|=5$), the packets appear 
to pass through each other without much interaction or change between the 
pre-collision and post-collision profiles.  On the other hand, the closer 
$\kappa^{[3]}$ is to $\kappa^{[1]}=0$, the more interaction there is between packets 
during collision and the more perturbed the profiles are post-collision.  
Also note the shape of the overlap region changes as 
$\kappa^{[3]}$ is adjusted.

\begin{figure}[H]
\begin{center}
\includegraphics[height=5.9in]{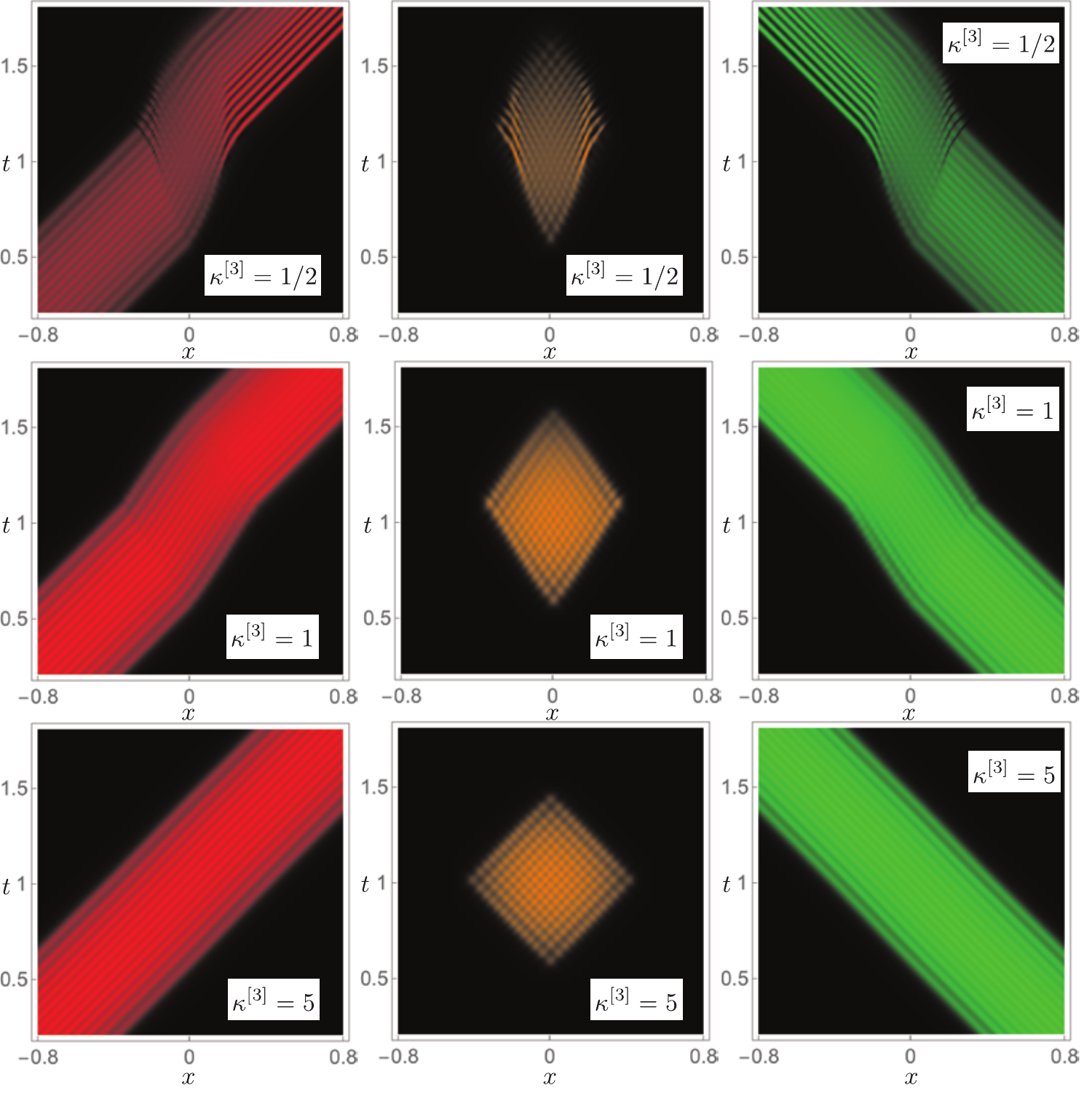}
\caption{Space-time plots of $|\widetilde{q}^{[k]}(x,t)|$, as defined by 
\eqref{eq:initial-data} and \eqref{plot-Hk} with parameters 
\eqref{plot-vals1} and $\epsilon=1/80$ for $k=1,2,3$ (left-to-right) and 
$\kappa^{[3]}\in\{1/2,1,5\}$ (top-to-bottom, as indicated).  The middle row 
of plots with $\kappa^{[3]}=1$ is the same as the bottom row 
of plots in Figure \ref{fig-sc13-2d_k3-1}.}
\label{fig-sc13-2d_e0p025}
\end{center}
\end{figure}

In Figure \ref{fig-region-map}, we reproduce the top-left panel from Figure 
\ref{fig-sc13-2d_e0p025} with various regions marked indicating the (apparent) 
small-$\epsilon$ behavior.  Table~\ref{table:regions_for_fig} describes the qualitative 
behavior in each region.  The descriptor ``zero'' refers to the expected 
semiclassical limit;  of course since the semiclassical soliton ensemble is a pure multisoliton solution, the fields $\widetilde{q}^{[k]}(x,t)$ are typically non-zero 
everywhere for any non-zero value of $\epsilon$.  Similarly, ``oscillatory'' refers to microstructure with $\bigo{\epsilon}$ wavelengths and frequencies but with amplitude that does not vanish with $\epsilon$, while ``non-oscillatory'' regions may contain oscillations whose amplitude apparently decays as $\epsilon\downarrow 0$.  
Oscillations that are damped out in finite time are called ``transient"; otherwise 
they are called ``persistent".

\begin{table}
\renewcommand{\arraystretch}{1.1}
\begin{tabular}{@{}llll@{}}
\toprule
Region & Channel 1 & Channel 2 & Channel 3 \\
\midrule
A & Zero & Zero & Zero \\
B & Non-zero, Non-oscillatory & Zero & Zero \\
$\text{B}^{\prime}$ & Zero & Zero & Non-zero, Non-oscillatory \\
C & Non-zero, Non-oscillatory & Non-zero, Non-oscillatory & Non-zero, Non-oscillatory \\
D & Oscillatory (Persistent) & Oscillatory (Transient) & Oscillatory (Transient) \\
$\text{D}^\prime$ & Oscillatory (Transient) & Oscillatory (Transient) & Oscillatory (Persistent) \\
E & Oscillatory (Persistent) & Zero & Zero \\
$\text{E}^\prime$ & Zero & Zero & Oscillatory (Persistent) \\
\bottomrule
\end{tabular}
\caption{Qualitative description of the behavior of each wave channel in each of the regions defined in Figure~\ref{fig-region-map}.
\label{table:regions_for_fig}
}
\end{table}

\begin{figure}[H]
\begin{center}
\includegraphics[height=2.5in]{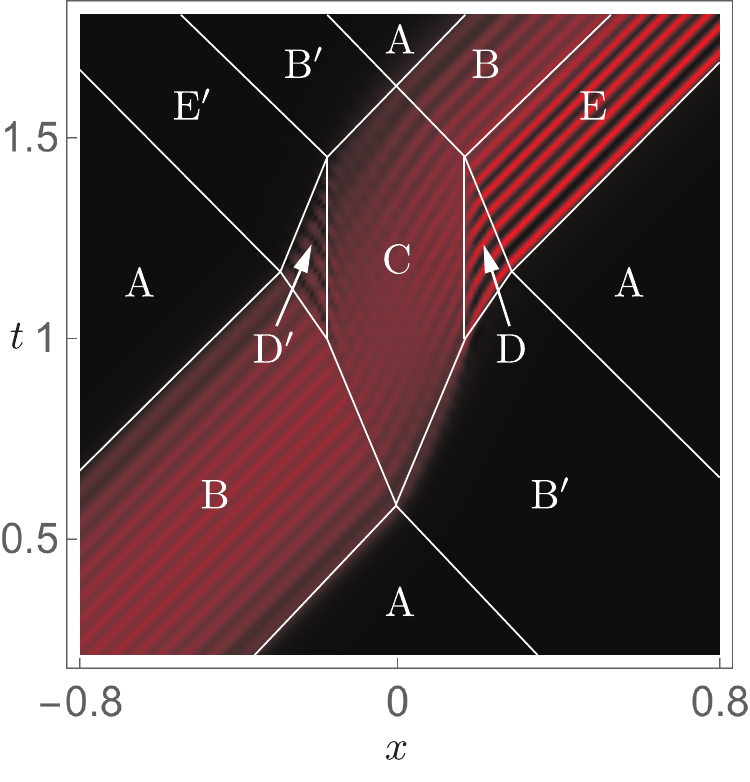}
\caption{Space-time plot of $|\widetilde{q}^{[1]}(x,t)|$, as defined by 
\eqref{eq:initial-data} and \eqref{plot-Hk} with parameters \eqref{plot-vals1}, 
$\epsilon=1/80$, and  $\kappa^{[3]}=1/2$, along with the regions A, B, 
B$^\prime$, C, D, D$^\prime$, E, and E$^\prime$.
The behavior in each region is described in Table~\ref{table:regions_for_fig}.  
The plot is the same as the top-left plot in Figure \ref{fig-sc13-2d_k3-1}.}
\label{fig-region-map}
\end{center}
\end{figure}

We now illustrate the dynamics through several representative time slices.  As this 
requires solving the (poorly conditioned 
$\mathcal{O}(\epsilon^{-1})\times\mathcal{O}(\epsilon^{-1})$ system) at merely 
$\mathcal{O}(\epsilon^{-1})$ points, as opposed to $\mathcal{O}(\epsilon^{-2})$ 
points for the space-time plots, we can decrease $\epsilon$ significantly.
In Figure \ref{fig-sc13-1d-collisions} we plot horizontal (fixed-$t$) slices for 
three values of $t$ and three values of $\kappa^{[3]}$ (note $\epsilon$ is decreased 
by a factor of two compared to Figure \ref{fig-sc13-2d_e0p025}).  
The three chosen times illustrate the evolution of the system in the overlap 
region.  Here it becomes clearer that when $\kappa^{[3]}$ is relatively large 
compared to $\kappa^{[1]}=0$ the packets with speeds $c^{[1]}$ and $c^{[3]}$ pass 
through each other with relatively little disturbance (and little excitation of 
channel 2).  On the other hand, as $|\kappa^{[3]}-\kappa^{[1]}|$ decreases, there is 
greater excitation of channel 2 in the overlap region, and a corresponding 
increase in the perturbation of channels 1 and 3 post-collision.  Comparing the top 
row of plots with $\kappa^{[3]}=1/2$ in Figure \ref{fig-sc13-1d-collisions} with the 
labeled regions in Figure \ref{fig-region-map}, we see the plot at time $t=0.8$ 
cuts horizontally through regions A, B, C, B$^\prime$, and A;  the plot at time 
$t=1$ cuts through regions A, B, D$^\prime$, C, D, B$^\prime$, and A;  and the 
plot at time $t=1.2$ cuts through regions A, D$^\prime$, C, D, and A.

\begin{figure}[H]
\begin{center}
\includegraphics[height=4.2in]{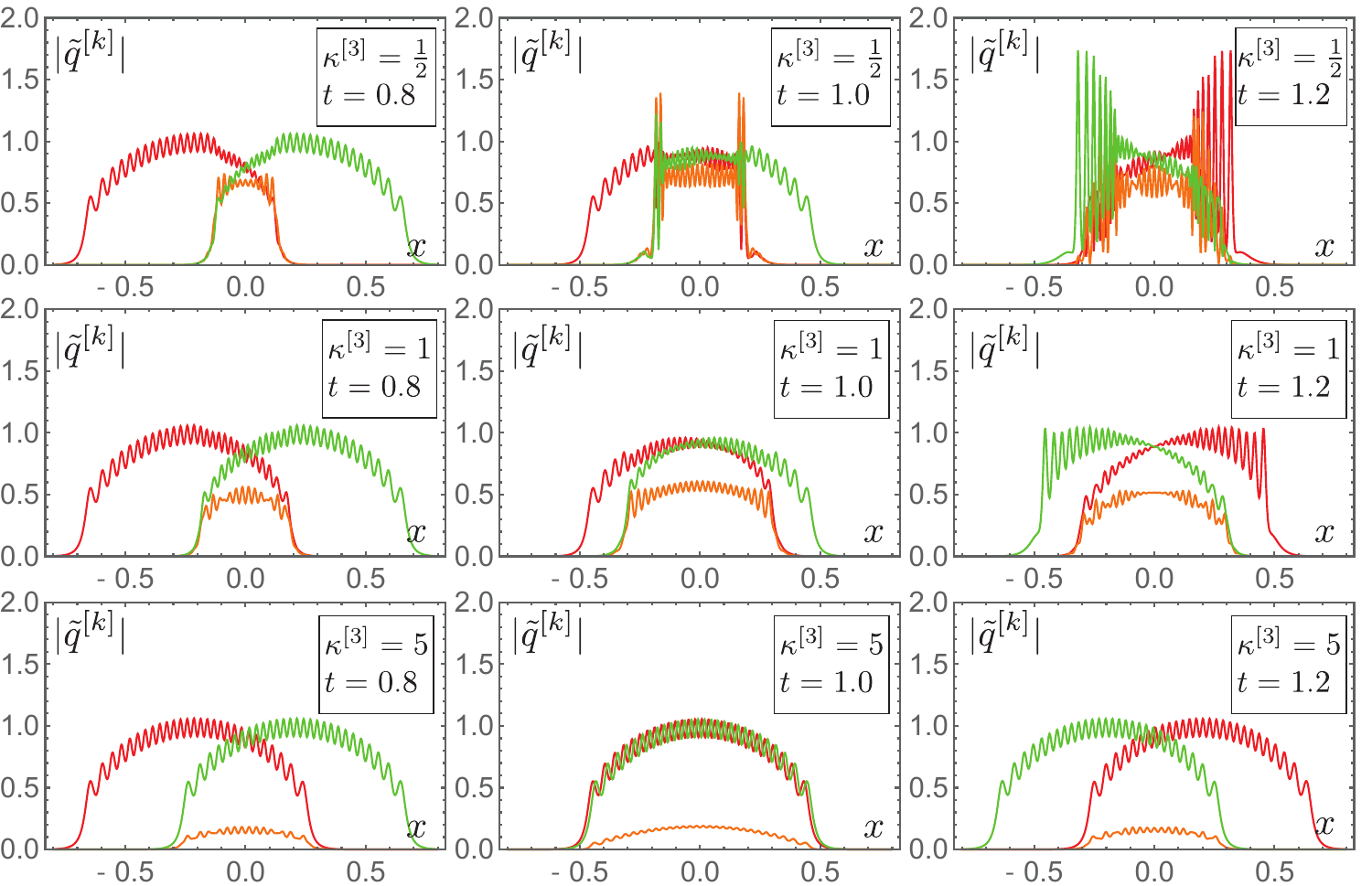}
\caption{Plots of $|\widetilde{q}^{[k]}(x,t)|$, as defined by 
\eqref{eq:initial-data} and \eqref{plot-Hk} with parameters 
\eqref{plot-vals1} and $\epsilon=1/160$ for $t \in\{0.8,1,1.2\}$ 
(left-to-right, as indicated) and $\kappa^{[3]}\in\{1/2,1,5\}$ (top-to-bottom, as 
indicated). 
Red:  $|\widetilde{q}^{[1]}(x,t)|$. 
Orange:  $|\widetilde{q}^{[2]}(x,t)|$. 
Green:  $|\widetilde{q}^{[3]}(x,t)|$. 
}
\label{fig-sc13-1d-collisions}
\end{center}
\end{figure}

In Figure \ref{fig-sc13-1d-t1p2} we fix $t=1.2$ and $\kappa^{[3]}=1/2$ as 
in the top-right plot in Figure \ref{fig-sc13-1d-collisions} and plot 
$|\widetilde{q}^{[1]}(x,1.2)|$ for various values of $\epsilon$.  We restrict to 
the solution in channel 1 for clarity.  The plots pass through the regions 
labeled A, D$^\prime$, C, D, and A in Figure \ref{fig-region-map}.  The plots 
suggest that $|\widetilde{q}^{[1]}|$, in the small-$\epsilon$ limit, has 
$\mathcal{O}(\epsilon)$-period oscillations within an $\epsilon$-independent 
envelope 
in regions D$^\prime$ and D, and is non-oscillatory but non-zero in region C.  
From the plots in Figure \ref{fig-sc13-1d-collisions}, it appears the height of the 
envelopes in regions D$^\prime$ and D decreases as $\kappa^{[3]}$ increases (recall 
we are restricting ourselves to $\kappa^{[1]}=0$ and $\kappa^{[3]}>0$).   

\begin{figure}[H]
\begin{center}
\includegraphics[height=1.35in]{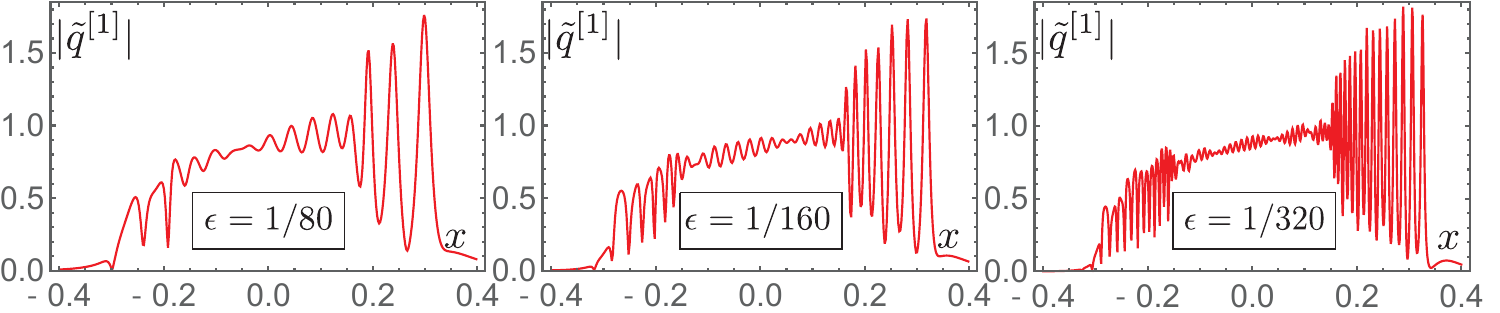}
\caption{Plots of $|\widetilde{q}^{[1]}(x,1.2)|$, as defined by 
\eqref{eq:initial-data} and \eqref{plot-Hk} with parameters 
\eqref{plot-vals1} and $\kappa^{[3]}=1/2$, for $\epsilon\in\{1/80,1/160,1/320\}$ 
(left-to-right, as indicated).  The plot with $\epsilon=1/160$ corresponds to 
the top-right plot in Figure \ref{fig-sc13-1d-collisions}.  Note that in all 
plots $t=1.2$.}
\label{fig-sc13-1d-t1p2}
\end{center}
\end{figure}

In Figure \ref{fig-sc13-1d-t2}, we keep $\kappa^{[3]}=1/2$ and plot 
$|q^{[1]}(x,2)|$ for $0\leq x\leq 1.5$ and various values of 
$\epsilon$.  These 
plots show the final wave profile in channel 1 after interaction (the 
profiles in channels 2 and 3 are indistinguishably close to zero at 
this scale).  The 
corresponding regions in Figure \ref{fig-sc13-1d-collisions} are A, B, E, 
and A.

\begin{figure}[H]
\begin{center}
\includegraphics[height=1.35in]{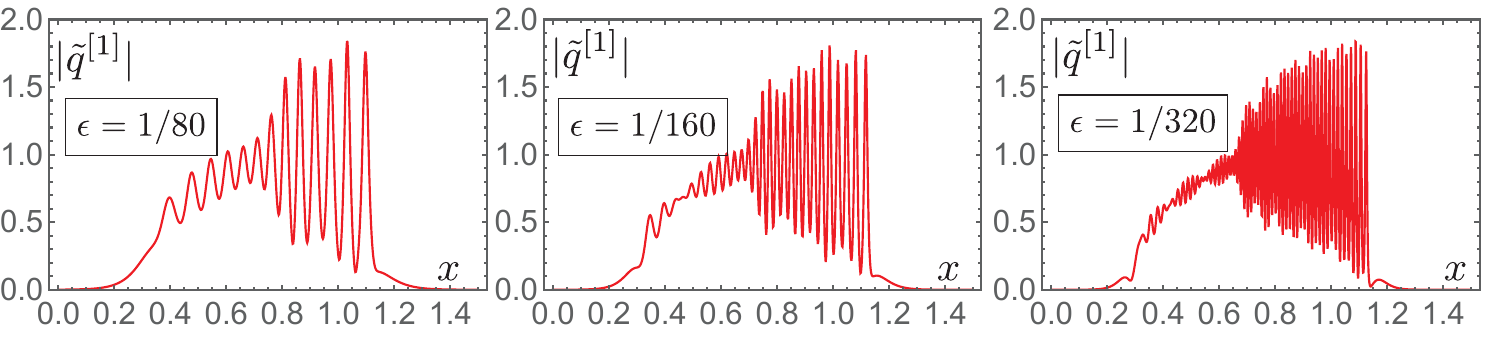}
\caption{Plots of $|\widetilde{q}^{[1]}(x,2)|$, as defined by 
\eqref{eq:initial-data} and \eqref{plot-Hk} with parameters 
\eqref{plot-vals1} and $\kappa^{[3]}=1/2$, for $\epsilon\in\{1/80,1/160,1/320\}$ 
(left-to-right, as indicated).  
Note that in all plots $t=2$.}
\label{fig-sc13-1d-t2}
\end{center}
\end{figure}

\subsection{Dynamics of ensembles associated with three colliding packets}
We now illustrate a semiclassical soliton ensemble associated with three 
semicircles, one in each channel.  Given the large number of parameters, instead of 
attempting a comprehensive study we present one example that gives a flavor the 
type of phenomena seen.  Specifically, we choose initial data of the type 
\eqref{eq:initial-data}
with initial envelopes
\begin{equation}
H^{[k]}(x) :=\frac{2H^{[k]}_\text{max}\chi_{(a^{[k]},b^{[k]})}(x)}{b^{[k]}-a^{[k]}}\sqrt{(x-a^{[k]})(b^{[k]}-x)}, \quad k=1,2,3,
\label{plot-Hk-sc123}
\end{equation}
with parameters 
\eq
\begin{split}
\{c^{[1]}, c^{[2]}, c^{[3]}\}=\{1,0,-1\}, \quad \{\gamma^{[1]}, \gamma^{[2]}, \gamma^{[3]}\} = \{1,-1,1\}, \quad \{\kappa_1,\kappa_2,\kappa_3\} = \left\{-\frac{1}{2},0,\frac{1}{2}\right\}, \\
\{H_\text{max}^{[1]},H_\text{max}^{[2]},H_\text{max}^{[3]}\} = \left\{\frac{3}{4},1,\frac{5}{4}\right\}, \quad \{\theta^{[1]},\theta^{[2]},\theta^{[3]}\} = \{0,0,0\}, \hspace{.9in}\\
\{(a^{[1]},b^{[1]}), (a^{[2]},b^{[2]}), (a^{[3]},b^{[3]}) \} = \left\{\left(-2,-\frac{3}{4}\right),\left(-\frac{1}{2},\frac{1}{2}\right),\left(\frac{5}{4},\frac{7}{4}\right)\right\}. \hspace{.5in}
\end{split}
\label{plot-vals2}
\endeq
Table \ref{N-for-plot-vals2} gives the number of solitons of each type for each 
value of $\epsilon$ used in a plot.
\begin{table}[H]
\renewcommand{\arraystretch}{1.1}
\centering
\begin{tabular}{@{}l@{\qquad}c@{\qquad \qquad}c@{\qquad \qquad}c@{}}
\toprule
$\epsilon$ & $N^{[1]}(\epsilon)$ & $N^{[2]}(\epsilon)$ & $N^{[3]}(\epsilon)$ \\
\midrule
1/20 & 3 & 5 & 2 \\
1/40 & 7 & 10 & 4 \\
1/80 & 13 & 20 & 9 \\
1/160 & 27 & 40 & 18 \\
\bottomrule
\end{tabular}
\caption{The number of solitons of each type assuming initial data 
defined by \eqref{eq:initial-data}, \eqref{plot-Hk}, and \eqref{plot-vals2} for the 
values of $\epsilon$ used in Figures \ref{fig-sc123-2d-e0p0125}, 
\ref{fig-sc123-1d-tpos}, and \ref{fig-sc123-1d-t0}.}
\label{N-for-plot-vals2}
\end{table}
It should be cautioned that for three packets the 
semiclassical soliton ensembles generated by our procedure do \emph{not} 
necessarily converge to the desired initial data (see 
\S\ref{subsec:three-packets-convergence} for more details).  
Figure \ref{fig-sc123-1d-t0} in \S\ref{subsec:three-packets-convergence} suggests 
that the initial data \eqref{plot-vals2} avoid this issue.

The spatiotemporal dynamics of three colliding packets is shown in Figure 
\ref{fig-sc123-2d-e0p0125}.  As with two packets, we observe various spacetime 
regions with different qualitative behaviors.
\begin{figure}[h]
\begin{center}
\includegraphics[height=1.9in]{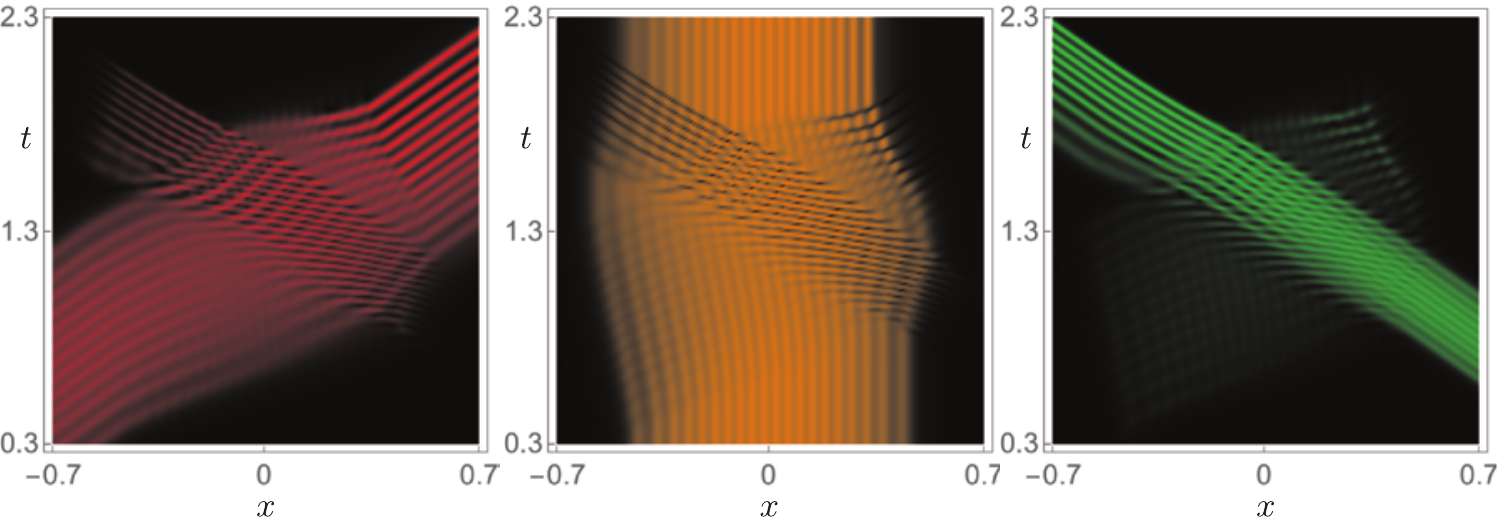}
\caption{Space-time plots of $|\widetilde{q}^{[k]}(x,t)|$, as defined by 
\eqref{eq:initial-data} and \eqref{plot-Hk} with parameters \eqref{plot-vals2} and 
$\epsilon=1/80$, for $k=1,2,3$ (left-to-right).}
\label{fig-sc123-2d-e0p0125}
\end{center}
\end{figure}

\noindent
Three representative time slices are shown in Figure \ref{fig-sc123-1d-tpos}.
\begin{figure}[H]
\begin{center}
\includegraphics[height=1.35in]{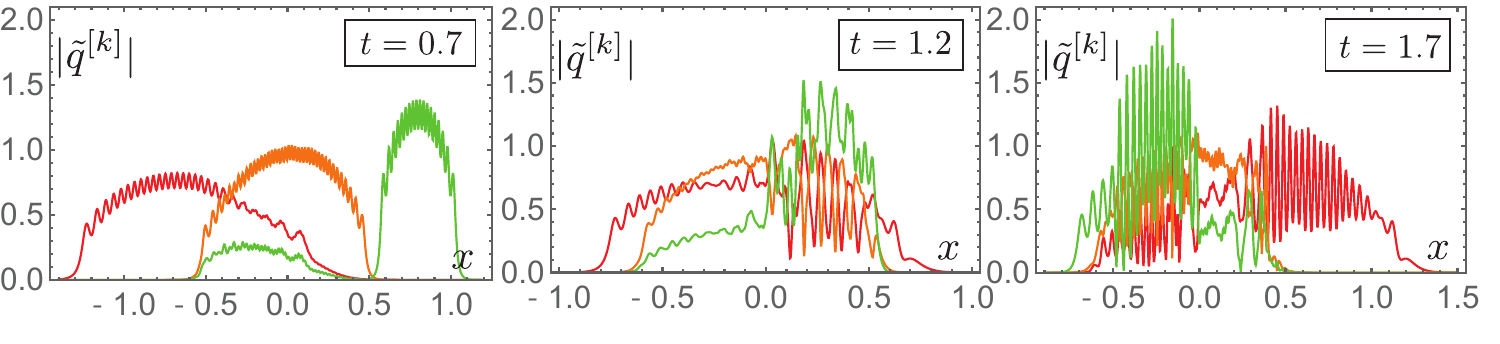}
\caption{Plots of $|\widetilde{q}^{[k]}(x,t)|$, as defined by 
\eqref{eq:initial-data} and \eqref{plot-Hk} with parameters 
\eqref{plot-vals2} and $\epsilon=1/160$, for $t\in\{0.7,1.2,1.7\}$ 
(left-to-right, as indicated).  
Red:  $|\widetilde{q}^{[1]}(x,t)|$. 
Orange:  $|\widetilde{q}^{[2]}(x,t)|$. 
Green:  $|\widetilde{q}^{[3]}(x,t)|$. 
}
\label{fig-sc123-1d-tpos}
\end{center}
\end{figure}

\section{Exact Scattering Data for Disjointly Supported Fields}
\label{sec3-exact-scattering}
\subsection{Summary of the inverse-scattering transform for the TWRI equations}
\label{subsec:lax-pair}
The TWRI equations \eqref{3wave} have the following Lax pair:
\begin{align}
\label{eq:LaxPair-x}	\eps \pd{\Phi}{x} &= \mathcal{L} \Phi, \\
\label{eq:LaxPair-t}	\eps \pd{\Phi}{t} &= \mathcal{B} \Phi,
\end{align}
where we define the $3\times 3$ matrices 
\begin{align}
	\mathcal{L} =\mathcal{L}(x,t;\lambda)&:= -i\lambda \mathbf{C} - \mathbf{Q}(x,t) \label{x flow},\\
	\mathcal{B}=\mathcal{B}(x,t;\lambda) &:= -i\lambda |\mathbf{C}| \mathbf{C}^{-1} + |\mathbf{C}| \mathbf{C}^{-1} \mathbf{Q}(x,t) \mathbf{C}^{-1}. \label{t flow}
\end{align}
Here 
\begin{gather}\label{CQ}
\begin{split}
\mathbf{C}&:=\mathrm{diag}(c^{[1]},c^{[2]},c^{[3]}),\quad |\mathbf{C}|:=\det\mathbf{C},\quad \text{and}\\
	\mathbf{Q}(x,t) &:= \Qpot, 
\end{split}
\intertext{where each $Q^{[k]}$ is a renormalization of $q^{[k]}$ given by}
\label{Q normalization}
Q^{[k]}(x,t):=\sqrt{\frac{\Delta^{[k]}}{\Delta^{[1]}\Delta^{[2]}\Delta^{[3]}}}\,\gamma^{[k]}q^{[k]}(x,t)
\intertext{with}
\Delta^{[1]}:=c^{[2]}-c^{[3]}>0,\quad
\Delta^{[2]}:=c^{[1]}-c^{[3]}>0,\quad\text{and}\quad
\Delta^{[3]}:=c^{[1]}-c^{[2]}>0.
\label{eq:Delta123}
\end{gather} 
Of course $\mathbf{C}^{-1}$ only exists if none of the wave speeds $c^{[k]}$ vanishes, but it is easy to check the apparent singularity of the matrix $\mathcal{B}$ associated with the vanishing of a single wave speed is in fact removable due to the fact that the diagonal elements of $\mathbf{Q}(x,t)$ are all zero.

Let $[\mathbf{A},\mathbf{B}]$ denote the matrix commutator $[\mathbf{A},\mathbf{B}]= \mathbf{A}\mathbf{B}-\mathbf{B}\mathbf{A}$. 
The system \eqref{eq:LaxPair-x}--\eqref{eq:LaxPair-t} has a simultaneous fundamental solution for a given $\lambda\in\mathbb{C}$ if and only if the matrices $\mathcal{L}$ and $\mathcal{B}$ satisfy the (zero-curvature) compatibility condition:
\begin{equation}
\eps\frac{\partial\mathcal{L}}{\partial t}-\eps\frac{\partial\mathcal{B}}{\partial x}
+[\mathcal{L},\mathcal{B}]=\mathbf{0}.
\label{eq:zero-curvature}
\end{equation}
The latter is easily seen to be equivalent (for all $\lambda\in\mathbb{C}$) to 
a matrix reformulation of \eqref{3wave} in terms of the renormalized potentials \eqref{Q normalization}.
This important observation is originally due to Zakharov and Manakov \cite{Zakharov:1973}.

We now summarize the salient points of the inverse-scattering transform based on the Lax representation \eqref{eq:zero-curvature}.  The inverse-scattering transform for the TWRI system was first described by Zakharov and Manakov \cite{Zakharov:1975} and Kaup \cite{Kaup76} with additional important contributions by Beals and Coifman \cite{BealsC84,BealsC85,BealsC87}.
Full details and proofs of the following results can be found in Appendix~\ref{appB-inverse-scattering}.
Let $\sigma=\pm$ be a fixed sign and suppose that $\imag\{\lambda\}\neq 0$.  Given initial data with $q^{[k]}\in L^1(\mathbb{R})$ and $q^{[k]}_x\in L^1(\mathbb{R})$, seek a fundamental matrix solution $\Phi^\sigma$ of \eqref{eq:LaxPair-x} defined by the following conditions on $\mathbf{M}^\sigma(x;\lambda):=\Phi^\sigma(x;\lambda)e^{i\lambda\mathbf{C}x/\epsilon}$:
\begin{equation}
\lim_{x\to \sigma\infty}\mathbf{M}^\sigma(x;\lambda)=\mathbb{I}\quad\text{and}\quad
\sup_{x\in\mathbb{R}}\|\mathbf{M}^\sigma(x;\lambda)\|<\infty.
\end{equation}
These conditions are equivalent to a certain Fredholm integral equation; see \eqref{eq:Fredholm-system} in Appendix~\ref{appB-inverse-scattering}.
The Fredholm equation has a unique solution $\mathbf{M}^\sigma(x;\lambda)$ in $L^\infty(\mathbb{R})$ except for isolated points $\lambda\in\mathbb{C}\setminus\mathbb{R}$ called \emph{exceptional points}.  The solution satisfies the Schwarz-symmetry condition
\begin{equation}
\mathbf{M}^\sigma(x;\lambda^*)=\mathbf{E}
\mathbf{M}^\sigma(x;\lambda)^{-\dagger}\mathbf{E},\quad\imag\{\lambda\}\neq 0,\quad
\mathbf{E}:=\mathrm{diag}(\gamma^{[1]},-\gamma^{[2]},\gamma^{[3]}).
\label{eq:TWRI-Schwarz-symmetry}
\end{equation}
It can be shown that 
\begin{gather}
\lim_{x\to -\infty}\mathbf{M}^+(x;\lambda)=\mathbf{D}(\lambda)\quad\text{and}\quad
\lim_{x\to +\infty}\mathbf{M}^-(x;\lambda)=\mathbf{D}(\lambda)^{-1},
\intertext{and therefore}
\mathbf{M}^+(x;\lambda)=\mathbf{M}^-(x;\lambda)\mathbf{D}(\lambda),\quad \imag\{\lambda\}\neq 0,
\label{eq:Mplus-Mminus-intro}
\intertext{with}
\mathbf{D}(\lambda)=\begin{cases}\mathrm{diag}(1/u(\lambda),u(\lambda)/v(\lambda),v(\lambda)),\quad &\imag\{\lambda\}>0,\\
\mathrm{diag}(u(\lambda),v(\lambda)/u(\lambda),1/v(\lambda)),\quad &\imag\{\lambda\}<0,
\end{cases}
\label{eq:D-representation}
\end{gather}
where $u$ and $v$ are functions analytic in $\mathbb{C}\setminus\mathbb{R}$ with $u(\lambda^*)=u(\lambda)^*$ and $v(\lambda^*)=v(\lambda)^*$.  Their zeros are exactly the singularities of $\mathbf{M}^\sigma(x;\lambda)$, and this actually implies that the exceptional points in $\mathbb{C}\setminus \mathbb{R}$ are all poles.  If $\lambda_0\in\mathbb{C}_+$ is a simple pole of $\mathbf{M}^\sigma(x;\lambda)$, then there exist corresponding nonzero constant matrices
\begin{equation}
\mathbf{N}^+=\begin{pmatrix}0 & 0 & 0\\\beta_{21} & 0 & 0\\\beta_{31} & \beta_{32} & 0\end{pmatrix},\;\beta_{21}\beta_{32}=0,\;\text{and}\;
\mathbf{N}^-=\begin{pmatrix}0 & \beta_{12} & \beta_{13}\\0 & 0 & \beta_{23} \\ 0 & 0 & 0\end{pmatrix},\; \beta_{12}\beta_{23}=0,
\label{eq:Nplus-Nminus-form}
\end{equation}
such that 
\begin{equation}
\mathop{\mathrm{Res}}_{\lambda=\lambda_0}\mathbf{M}^\sigma(x;\lambda)=
\lim_{\lambda\to \lambda_0}\mathbf{M}^\sigma(x;\lambda)e^{-i\lambda_0\mathbf{C}x/\epsilon}
\mathbf{N}^\sigma e^{i\lambda_0\mathbf{C}x/\epsilon}.
\label{eq:M-sigma-residue-summary}
\end{equation}
We call the quantities $\beta_{jk}$ \emph{connection coefficients} for the pole $\lambda_0$.
There is also a \emph{jump matrix} $\mathbf{V}_0^\sigma(\lambda)$ defined for $\lambda\in\mathbb{R}$ such that
\begin{equation}
\mathbf{M}_+^\sigma(x;\lambda)=\mathbf{M}_-^\sigma(x;\lambda)e^{-i\lambda\mathbf{C}x/\epsilon}
\mathbf{V}_0^\sigma(\lambda)e^{i\lambda\mathbf{C}x/\epsilon},\quad \lambda\in\mathbb{R}
\label{eq:M-sigma-jump-intro}
\end{equation}
where $\mathbf{M}^\sigma_\pm(x;\lambda)$ are the boundary values taken by $\mathbf{M}^\sigma(x;\lambda')$ as $\lambda'\to \lambda\in\mathbb{R}$ from $\mathbb{C}_\pm$.

The \emph{scattering data} consists of the jump matrix $\mathbf{V}_0^\sigma(\lambda)$ for $\lambda\in\mathbb{R}$ and the discrete data consisting generically of finitely many pairs $(\lambda_0\in\mathbb{C}_+,\mathbf{N}^\sigma)$.  This data evolves in time in a simple way:
\begin{equation}
\mathbf{V}_0^\sigma(\lambda;t)=e^{-i\lambda\det(\mathbf{C})\mathbf{C}^{-1}t/\epsilon}
\mathbf{V}_0^\sigma(\lambda)e^{i\lambda\det(\mathbf{C})\mathbf{C}^{-1}t/\epsilon},\quad\lambda\in\mathbb{R},
\label{eq:jump-evolution-intro}
\end{equation}
and for the residue matrix $\mathbf{N}^\sigma$ associated to a pole $\lambda_0\in\mathbb{C}_+$:
\begin{equation}
\mathbf{N}^\sigma(t)=e^{-i\lambda_0\det(\mathbf{C})\mathbf{C}^{-1}t/\epsilon}\mathbf{N}^\sigma
e^{i\lambda_0\det(\mathbf{C})\mathbf{C}^{-1}t/\epsilon}.
\label{eq:pole-evolution-intro}
\end{equation}
The poles $\lambda_0\in\mathbb{C}_+$ are independent of $t$.

The solution $q^{[k]}(x,t)$ of the initial-value problem with the given initial data can be recovered from generic scattering data\footnote{Modifications necessary in the case of nongeneric data are outlined in Appendix \ref{subapp-Zhou-transform}.} by solving a $3\times 3$ matrix Riemann-Hilbert problem for an unknown $\mathbf{M}^\sigma(\lambda)=\mathbf{M}^\sigma(x,t;\lambda)$ (on the inverse side we view $x$ and $t$ as real parameters and consider $\lambda$ as the main independent complex variable); see Riemann-Hilbert Problem~\ref{rhp:M-sigma}.  
In the particular decay instability case that the signs $\gamma^{[k]}$ satisfy \eqref{eq:gamma-assumption}, there exists a unique classical solution of Riemann-Hilbert Problem~\ref{rhp:M-sigma} for all $(x,t)\in\mathbb{R}^2$.  The matrix $\mathbf{M}^\sigma(x;\lambda)$ in the direct scattering theory is related to the solution of the Riemann-Hilbert problem by $\mathbf{M}^\sigma(x;\lambda)=\mathbf{M}^\sigma(x,0;\lambda)$.

From the solution $\mathbf{M}^\sigma(x,t;\lambda)$ we may extract a matrix coefficient $\mathbf{F}^\sigma(x,t)$ by
\begin{equation}
\mathbf{M}^\sigma(x,t;\lambda)=\mathbb{I}+\mathbf{F}^\sigma(x,t)\lambda^{-1} + o(\lambda^{-1}),\quad\lambda\to\infty
\end{equation}
and the solution of the initial-value problem for the TWRI system is:
\begin{equation}
\begin{split}
q^{[1]}(x,t)&=-i\gamma^{[1]}\Delta^{[1]}\sqrt{\Delta^{[2]}\Delta^{[3]}}F^\sigma_{23}(x,t)\\
q^{[2]}(x,t)&=i\gamma^{[2]}\Delta^{[2]}\sqrt{\Delta^{[1]}\Delta^{[3]}}F_{31}^\sigma(x,t)\\
q^{[3]}(x,t)&=-i\gamma^{[3]}\Delta^{[3]}\sqrt{\Delta^{[1]}\Delta^{[2]}}F_{12}^\sigma(x,t).
\end{split}
\end{equation}

By contrast with $\mathbf{M}^\sigma(x;\lambda)$, the \emph{Jost solutions} $\mathbf{M}^\sigma_\mathrm{J}(x;\lambda)=\Phi^\sigma_\mathrm{J}(x;\lambda)e^{i\lambda\mathbf{C}x/\epsilon}$ for the problem are generally defined for $\lambda\in\mathbb{R}$ only.  They satisfy Volterra equations; see \eqref{eq:TWRI-Jost-Volterra} in Appendix~\ref{appB-inverse-scattering}.
A \emph{scattering matrix} $\mathbf{S}(\lambda)$ is then defined from the Jost solutions by
\begin{equation}
\mathbf{M}_\mathrm{J}^+(x;\lambda)=\mathbf{M}_\mathrm{J}^-(x;\lambda)e^{-i\lambda\mathbf{C}x/\epsilon}\mathbf{S}(\lambda)e^{i\lambda\mathbf{C}x/\epsilon},\quad\lambda\in\mathbb{R}.
\label{eq:scattering-relation}
\end{equation}
We have the following ``LDU'' and ``UDL'' factorizations:
\begin{equation}
\begin{split}
\mathbf{S}(\lambda)^{-1}&=\begin{pmatrix}1&0&0\\
T_{21}^+(\lambda)&1&0\\
T_{31}^+(\lambda)&T_{32}^+(\lambda)&1\end{pmatrix}\mathbf{D}_+(\lambda)^{-1}
\begin{pmatrix}1&T_{12}^+(\lambda) & T_{13}^+(\lambda)\\
0 & 1 & T_{23}^+(\lambda)\\
0&0&1\end{pmatrix}\\
\mathbf{S}(\lambda)&=
\begin{pmatrix}1&T_{12}^-(\lambda) & T_{13}^-(\lambda)\\
0 & 1 & T_{23}^-(\lambda)\\ 0 & 0 & 1\end{pmatrix}\mathbf{D}_+(\lambda)
\begin{pmatrix}1 & 0 & 0\\
T_{21}^-(\lambda)&1&0\\
T_{31}^-(\lambda)&T_{32}^-(\lambda)&1\end{pmatrix}.
\end{split}
\label{eq:LDU-UDL}
\end{equation}
Here, $\mathbf{D}_+(\lambda)$ refers to the boundary value from the upper half-plane taken by the diagonal matrix $\mathbf{D}(\lambda)$.
Then, the jump matrices are expressed in terms of the triangular factors by
\begin{equation}
\begin{split}
\mathbf{V}_0^+(\lambda)&:=\begin{pmatrix}1&-\gamma^{[1]}\gamma^{[2]}T_{21}^+(\lambda)^* & 
\gamma^{[1]}\gamma^{[3]}T_{31}^+(\lambda)^* \\
0 & 1 & -\gamma^{[2]}\gamma^{[3]}T_{32}^+(\lambda)^*\\0&0&1\end{pmatrix}
\begin{pmatrix}1&0&0\\T_{21}^+(\lambda) & 1 & 0\\T_{31}^+(\lambda)& T_{32}^+(\lambda) & 1
\end{pmatrix}\quad\text{and}\\
\mathbf{V}_0^-(\lambda)&:=\begin{pmatrix}1 & 0 & 0\\
-\gamma^{[1]}\gamma^{[2]}T_{12}^-(\lambda)^* & 1 & 0\\
\gamma^{[1]}\gamma^{[3]}T_{13}^-(\lambda)^* & -\gamma^{[2]}\gamma^{[3]}T_{23}^-(\lambda)^* & 1
\end{pmatrix}
\begin{pmatrix} 1 & T_{12}^-(\lambda)& T_{13}^-(\lambda)\\
0 & 1 & T_{23}^-(\lambda)\\ 0 & 0 & 1\end{pmatrix}.
\end{split}
\label{eq:TWRI-Jump}
\end{equation}
Also, the boundary values taken on $\mathbb{R}$ by the scalar functions $u$ and $v$ analytic in $\mathbb{C}_+$ are
\begin{equation}
u_+(\lambda)=[\mathbf{S}(\lambda)^{-1}]_{11}=S_{11}(\lambda)^*\quad\text{and}\quad
v_+(\lambda)=S_{33}(\lambda),\quad\lambda\in\mathbb{R}.
\end{equation}

If $\lambda_0\in\mathbb{C}_+$ is a simple zero of $v(\lambda)$ for which $u(\lambda_0)\neq 0$, then we say that $\lambda_0$ is a \emph{simple pole of type $1$}, and the residue matrices take the form
\begin{gather}
\label{eq:type-1-residues}
\mathbf{N}^+=\begin{pmatrix}0&0&0\\0&0&0\\0&\beta^{[1]}_{32}&0\end{pmatrix}\quad\text{and}\quad
\mathbf{N}^-=\begin{pmatrix}0&0&0\\0&0&\beta^{[1]}_{23}\\0&0&0\end{pmatrix},
\intertext{where the connection coefficients are related by}
\label{eq:type-1-beta-relations}
\beta^{[1]}_{23}\beta^{[1]}_{32}=u(\lambda_0)/v'(\lambda_0)^2.
\end{gather}
If $\mathbf{S}(\lambda)$ is diagonal and there are no other poles in $\mathbb{C}_+$, the corresponding solution is a pure soliton of $q^{[1]}(x,t)$ with $q^{[2]}(x,t)=q^{[3]}(x,t)\equiv 0$; see \eqref{eq:pure-1}.  It is a traveling wave, a function of $x-c^{[1]}t$ only.
Likewise, if $\lambda_0\in\mathbb{C}_+$ is a simple zero of $u(\lambda)$ for which $v(\lambda_0)\neq 0$, 
then we say that $\lambda_0$ is a \emph{simple pole of type $3$}, and the residue matrices take the form
\begin{gather}
\label{eq:type-3-residues}
\mathbf{N}^+=\begin{pmatrix}0&0&0\\\beta^{[3]}_{21}&0&0\\0&0&0\end{pmatrix}\quad\text{and}\quad
\mathbf{N}^-=\begin{pmatrix}0&\beta^{[3]}_{12}&0\\0&0&0\\0&0&0\end{pmatrix},
\intertext{where the connection coefficients are related by}
\label{eq:type-3-beta-relations}
\beta^{[3]}_{12}\beta^{[3]}_{21}=v(\lambda_0)/u'(\lambda_0)^2.
\end{gather}
If $\mathbf{S}(\lambda)$ is diagonal and there are no other poles in $\mathbb{C}_+$, the corresponding solution is a pure soliton of $q^{[3]}(x,t)$ with $q^{[1]}(x,t)=q^{[2]}(x,t)\equiv 0$; see \eqref{eq:pure-3}.  It is a traveling wave, a function of $x-c^{[3]}t$ only.

If $\lambda_0\in\mathbb{C}_+$ is a simultaneous simple zero of both $u(\lambda)$ and $v(\lambda)$ there are three possibilities:
\begin{itemize}
\item $\lambda_0$ is a \emph{simple pole of type $2$} if
\begin{gather}
\label{eq:type-2-residues}
\mathbf{N}^+=\begin{pmatrix}0&0&0\\0&0&0\\\beta^{[2]}_{31}&0&0\end{pmatrix}\quad\text{and}\quad
\mathbf{N}^-=\begin{pmatrix}0&0&\beta^{[2]}_{13}\\0&0&0\\0&0&0\end{pmatrix},
\intertext{where the connection coefficients are related by}
\label{eq:type-2-beta-relations}
\beta^{[2]}_{13}\beta^{[2]}_{31}=1/(u'(\lambda_0)v'(\lambda_0)).
\end{gather} 
In the absence of other scattering data, this is a pure soliton of $q^{[2]}(x,t)$, with $q^{[1]}(x,t)=q^{[3]}(x,t)\equiv 0$, a traveling wave depending only on $x-c^{[2]}t$; see \eqref{eq:pure-2}.
\item $\lambda_0$ is a \emph{simple pole of type $\solsplit$} if for $\beta_{31}^{[\solsplit]}$, $\beta_{32}^{[\solsplit]}$, $\beta_{12}^{[\solsplit]}$, and $\beta_{13}^{[\solsplit]}$ all nonzero,
\begin{gather}
\label{eq:type-solsplit-residues}
\mathbf{N}^+=\begin{pmatrix}0&0&0\\0&0&0\\\beta^{[\solsplit]}_{31}&\beta^{[\solsplit]}_{32}&0\end{pmatrix}\quad\text{and}\quad\mathbf{N}^-=\begin{pmatrix}0&\beta^{[\solsplit]}_{12}&\beta^{[\solsplit]}_{13}\\0&0&0\\0&0&0\end{pmatrix},
\intertext{where the connection coefficients are related by}
\label{eq:type-solsplit-beta-relations}
\beta^{[\solsplit]}_{13}\beta^{[\solsplit]}_{31}=1/(u'(\lambda_0)v'(\lambda_0))\quad\text{and}\quad\beta^{[\solsplit]}_{12}=v'(\lambda_0)\beta^{[\solsplit]}_{32}/(u'(\lambda_0)^2\beta^{[\solsplit]}_{31}).
\end{gather}
In absence of other scattering data, this represents a ``splitting'' of a type $2$ soliton into a sum of solitons of types $1$ and $3$ as $t$ increases; see \eqref{eq:split-1}--\eqref{eq:split-3}.
\item $\lambda_0$ is a \emph{simple pole of type $\solfuse$} if for $\beta_{21}^{[\solfuse]}$, $\beta_{31}^{[\solfuse]}$, $\beta_{13}^{[\solfuse]}$, and $\beta_{23}^{[\solfuse]}$ all nonzero,
\begin{gather}
\label{eq:type-solfuse-residues}
\mathbf{N}^+=\begin{pmatrix}0&0&0\\\beta^{[\solfuse]}_{21}&0&0\\\beta^{[\solfuse]}_{31}&0&0\end{pmatrix}\quad\text{and}\quad
\mathbf{N}^-=\begin{pmatrix}0&0&\beta^{[\solfuse]}_{13}\\0&0&\beta^{[\solfuse]}_{23}\\0&0&0\end{pmatrix},
\intertext{where the connection coefficients are related by}
\label{eq:type-solfuse-beta-relations}
\beta^{[\solfuse]}_{13}\beta^{[\solfuse]}_{31}=1/(u'(\lambda_0)v'(\lambda_0))\quad\text{and}\quad\beta^{[\solfuse]}_{23}=-u'(\lambda_0)\beta^{[\solfuse]}_{21}/(v'(\lambda_0)^2\beta^{[\solfuse]}_{31}).
\end{gather}
In absence of other scattering data, this represents the reverse process; a ``fusion'' of solitons of types $1$ and $3$ into a single soliton of type $2$ as $t$ increases; see \eqref{eq:fuse-1}--\eqref{eq:fuse-3}.
\end{itemize}
These soliton solutions all exist globally in the Schwartz space $\mathscr{S}(\mathbb{R})$ provided the signs $\gamma^{[k]}$ satisfy the conditions \eqref{eq:gamma-assumption}.  For all other choices of signs, four of the five elementary types of solitons blow up in $L^\infty_\mathrm{loc}(\mathbb{R})$ for at least some real $t$; for all decay instability cases except \eqref{eq:gamma-assumption} the singularities are persistent for all $t\in\mathbb{R}$ which simply rules out these as physically relevant solutions, while in the case of explosive instability there exists a blow-up time $t_*\in\mathbb{R}$ such that the soliton of type $\solsplit$ is in $\mathscr{S}(\mathbb{R})$ for all $t<t_*$ but is not in $L^\infty_\mathrm{loc}(\mathbb{R})$ for $t\ge t_*$ and such that the soliton of type $\solfuse$ is in $\mathscr{S}(\mathbb{R})$ for all $t>t_*$ but is not in $L^\infty_\mathrm{loc}(\mathbb{R})$ for $t\le t_*$.  (This observation goes back at least to \cite{Zakharov:1973}.)  See Figure~\ref{fig:SolitonSingularities}.
\begin{figure}[h]
\begin{center}
\includegraphics{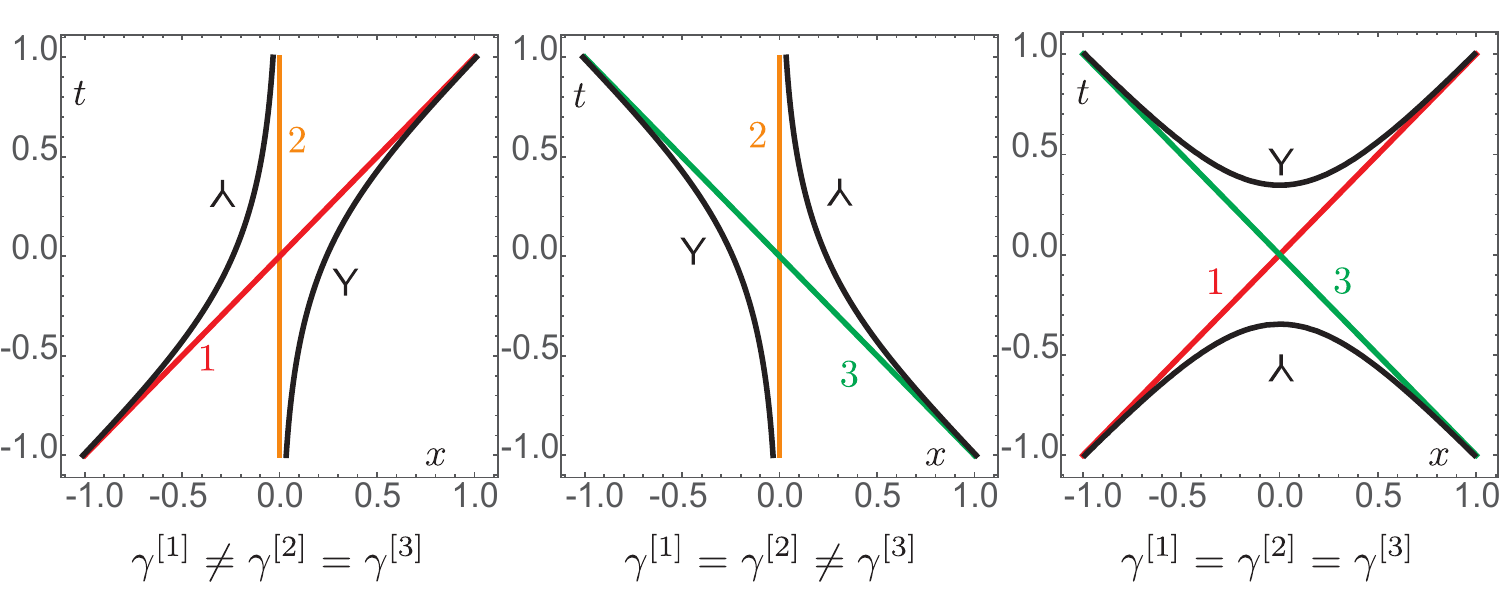}
\end{center}
\caption{The blow-up locus in the $(x,t)$-plane for the solitons of types $1$, $2$, $3$, $\mathsf{Y}$, and \protect\rotatebox[origin=c]{180}{$\mathsf{Y}$}.  In the decay instability cases $\gamma^{[1]}\neq\gamma^{[2]}=\gamma^{[3]}$ (left) and $\gamma^{[1]}=\gamma^{[2]}\neq\gamma^{[3]}$ (center), only the type $3$ (respectively, type $1$) soliton is bounded for all $(x,t)$.  In the explosive instability case (right), only the type $2$ soliton is bounded for all $(x,t)$, and the solitons of types  $\mathsf{Y}$ and \protect\rotatebox[origin=c]{180}{$\mathsf{Y}$} exhibit finite-time blowup.  Here $\epsilon=1$ and the speeds are $c^{[1]}=1$, $c^{[2]}=0$, and $c^{[3]}=-1$, while the pole data is $\imag\{\lambda_0\}=1$, $\beta_{jk}^{[\mathrm{type}]}=2i$.}
\label{fig:SolitonSingularities}
\end{figure}
When there are several simple poles of different types, one has in the usual way a kind of nonlinear superposition of the associated elementary solitons combined with a radiation field generated by the jump condition across the real line.  However, the non-dispersive character of the TWRI system means that unlike the solitons of other familiar integrable equations such as the Korteweg-de Vries or nonlinear Schr\"odinger equation, the fundamental TWRI solitons of types 1, 2, and 3 do not represent a balance between nonlinear and dispersive effects because \emph{any} initial condition in which two of the fields are identically zero will propagate without change of form at the constant velocity of the third excited channel.  These solitons (and their nonlinear superpositions provided only one of the three types is included) therefore are more of a mathematical artifact of the inverse-scattering solution method than coherent structures in the usual sense.  The situation is of course different for the fundamental solitons of type $\solsplit$ and $\solfuse$ for which all three fields are typically nonzero at each $(x,t)\in\mathbb{R}^2$, and each of which describes a basic physical mode of the three-wave interaction.

In the fundamental papers \cite{Zakharov:1975,Kaup76}, the constants $\beta_{jk}$ associated with a simple pole $\lambda_0\in\mathbb{C}_+$ are characterized in terms of off-diagonal elements of the scattering matrix $\mathbf{S}(\lambda)$ evaluated for $\lambda=\lambda_0\in\mathbb{C}_+$.  There are two main difficulties with such formulae:
\begin{itemize}
\item They require sufficient decay of the initial data in $x$ to admit analytic continuation of $\mathbf{S}(\lambda)$ from $\mathbb{R}$ to a neighborhood of $\lambda_0\in\mathbb{C}_+$.  For example,  $\mathbf{S}(\lambda)$ is entire for compactly supported initial data, but for general Schwartz-class data $\mathbf{S}(\lambda)$ is nowhere analytic on $\mathbb{R}$.
\item Even if the required analyticity is present, it is difficult to accurately compute the matrix $\mathbf{S}(\lambda)$ in the complex plane.  This difficulty becomes worse as $\epsilon\downarrow 0$.
\end{itemize}

Therefore, before proceeding to study the semiclassical limit $\epsilon\downarrow 0$, it is necessary to first develop suitable exact formulae for the connection coefficients $\beta_{jk}$ associated to a pole $\lambda_0\in\mathbb{C}_+$.  This will be done in \S\ref{section:3wave-residues} in the special case of initial data with disjoint supports:  
\begin{equation}
\mathrm{supp}(q^{[1]}(x,0))<x_{12}<\mathrm{supp}(q^{[2]}(x,0))<x_{23}<\mathrm{supp}(q^{[3]}(x,0)).
\label{eq:disjoint-support-assumption}
\end{equation}
Thus, the fastest-moving packet is initially supported furthest to the left, the slowest-moving packet is initially supported furthest to the right, and the packet with the intermediate speed is initially supported in between.
There is a strictly positive time $T$ before which the quadratic nonlinear terms on the right-hand side of the TWRI system \eqref{3wave} vanish identically, and hence the solution is exactly (and independently of $\eps>0$)
\begin{equation}
q^{[k]}(x,t)=q^{[k]}(x-c^{[k]}t,0),\quad k=1,2,3,\quad t<T.
\end{equation}
Since $c^{[1]}>c^{[2]}>c^{[3]}$, only after time $T$ will the packets collide and 
the nonlinear interaction become important.
In this situation, the scattering matrix obtained from the disjointly supported initial data has a natural factorization $\mathbf{S}(\lambda)=\mathbf{S}^{[1]}(\lambda)\mathbf{S}^{[2]}(\lambda)\mathbf{S}^{[3]}(\lambda)$ in which $\mathbf{S}^{[k]}(\lambda)$ only depends on $q^{[k]}(x,0)$ (see Proposition~\ref{prop-scattering-factorize}).  
The matrix factors $\mathbf{S}^{[k]}(\lambda)$ take the following form:
\begin{equation}
\mathbf{S}^{[1]}=\begin{pmatrix}1&0&0\\0 & S_{22}^{[1]} & S_{23}^{[1]}\\
0 & S_{32}^{[1]} & S_{33}^{[1]}\end{pmatrix},\;
\mathbf{S}^{[2]}=\begin{pmatrix}S_{11}^{[2]} & 0 & S_{13}^{[2]}\\
0 & 1 & 0\\
S_{31}^{[2]} & 0 & S_{33}^{[2]}\end{pmatrix},\;
\mathbf{S}^{[3]}=\begin{pmatrix}S_{11}^{[3]} & S_{12}^{[3]} & 0\\
S_{21}^{[3]} & S_{22}^{[3]} & 0\\ 0 & 0 & 1\end{pmatrix}.
\label{eq:S-factors-special-form}
\end{equation}
Therefore, 
\begin{equation}
u_+(\lambda)=S_{33}^{[2]}(\lambda)S_{22}^{[3]}(\lambda)\quad\text{and}\quad
v_+(\lambda)=S_{33}^{[1]}(\lambda)S_{33}^{[2]}(\lambda),\quad \lambda\in\mathbb{R}.
\label{eq:u-plus-v-plus-partials}
\end{equation}
It follows that if $S_{33}^{[1]}$, $S_{33}^{[2]}$, and $S_{22}^{[3]}$ have distinct simple zeros in $\mathbb{C}_+$, then
\begin{itemize}
\item The roots of $S_{33}^{[1]}$ in $\mathbb{C}_+$ --- determined from $q^{[1]}(x,0)$ alone --- are those of $v$ that are not zeros of $u$, i.e., simple poles of type $1$.
\item The roots of $S_{22}^{[3]}$ in $\mathbb{C}_+$ --- determined from $q^{[3]}(x,0)$ alone --- are those of $u$ that are not zeros of $v$, i.e., simple poles of type $3$.
\item The roots of $S_{33}^{[2]}$  in $\mathbb{C}_+$ --- determined from $q^{[2]}(x,0)$ alone --- are  simultaneous zeros of both $u$ and $v$, i.e., simple poles of type $2$, $\solsplit$, or $\solfuse$.
\end{itemize}
In this way, the soliton content of the solution is determined separately and individually by the three initial waves $q^{[k]}(x,0)$.  Note that, although the locations of poles generated by $q^{[k]}(x,0)$ are completely independent of $q^{[j]}(x,0)$ for $j\neq k$, it will be shown in \S\ref{section:3wave-residues} that the connection coefficients for those poles actually depend on all three potentials.  

A key point emphasized by Kaup \cite{Kaup76} is that under the support assumption \eqref{eq:disjoint-support-assumption}, the partial scattering matrices $\mathbf{S}^{[k]}(\lambda)$ can be calculated from the analysis of three different $2\times 2$ first-order systems of Zakharov-Shabat type.  To recount the relevant formulas and also to derive exact (and later asymptotic) formulas for the connection coefficients given the assumption \eqref{eq:disjoint-support-assumption}, we now need to recall certain definitions related to the nonselfadjoint Zakharov-Shabat spectral problem.  Further details can be found in Appendix~\ref{appC-nonselfadjoint-zs}.
Let $A:\mathbb{R}\to\mathbb{R}$ be a real function in $L^1(\mathbb{R})$.  The Jost solutions $\mathbf{W}^\pm=(\mathbf{w}^{\pm,1},\mathbf{w}^{\pm,2})$ for the Zakharov-Shabat problem are defined by
\begin{equation}
\epsilon\frac{d\mathbf{W}^\pm}{dx}=\begin{pmatrix}-i\zeta & A(x)\\-A(x) & i\zeta\end{pmatrix}\mathbf{W}^\pm,\;
\lim_{x\to\pm\infty}\mathbf{W}^\pm(x;\zeta)e^{i\zeta\sigma_3x/\epsilon}=\mathbb{I},\;\zeta\in\mathbb{R}.
\label{eq:ZS-Jost}
\end{equation}
The scattering matrix $\mathbf{S}^\mathrm{ZS}(\zeta)$ is defined by $\mathbf{W}^+(x;\zeta)=\mathbf{W}^-(x;\zeta)\mathbf{S}^\mathrm{ZS}(\zeta)$ and takes the form
\begin{equation}
\mathbf{S}^\mathrm{ZS}(\zeta)=\begin{pmatrix}a(\zeta)^* & b(\zeta)^*\\-b(\zeta) & a(\zeta)\end{pmatrix},
\quad |a(\zeta)|^2+|b(\zeta)|^2=1,\quad\zeta\in\mathbb{R}.
\label{eq:ZS-scat-form}
\end{equation}
The vector solutions $\mathbf{w}^{-,1}(x;\zeta)$ and $\mathbf{w}^{+,2}(x;\zeta)$ are boundary values of functions analytic for $\imag\{\zeta\}>0$, and $a(\zeta)=\det(\mathbf{w}^{-,1}(x;\zeta),\mathbf{w}^{+,2}(x;\zeta))$ is also.  Its zeros in $\mathbb{C}_+$ are $L^2(\mathbb{R})$ eigenvalues for \eqref{eq:ZS-Jost}; there is for each such $\zeta_0\in\mathbb{C}_+$ a nonzero proportionality constant $\tau$ such that
\begin{equation}
\mathbf{w}^{-,1}(x;\zeta_0)=\tau\mathbf{w}^{+,2}(x;\zeta_0),\quad a(\zeta_0)=0,\quad\imag\{\zeta_0\}>0,
\label{eq:tau-def-1}
\end{equation}
and since $\imag\{\zeta_0\}>0$ the left- and right-hand sides of this equation are nonzero vector solutions of \eqref{eq:ZS-Jost} exhibiting exponential decay in opposite directions as $|x|\to\infty$.
The proportionality constant $\tau$ is the Zakharov-Shabat analogue of the connection coefficients $\beta_{jk}$ in the TWRI system.

\subsection{Expression of the TWRI jump matrix in terms of Zakharov-Shabat data}
\label{subsec:zs-matrix}
Recall the form \eqref{eq:initial-data} of the initial data, as well as the support assumption \eqref{eq:disjoint-support-assumption}.
By definition, the scattering matrix $\mathbf{S}^{[1]}(\lambda)$ is that of the TWRI spectral problem with potential $\mathbf{Q}(x)$ replaced by the ``cutoff'' potential $\mathbf{Q}^{[1]}(x):=\mathbf{Q}(x)\chi_{(-\infty,x_{12})}(x)$.  Since according to \eqref{eq:disjoint-support-assumption} only the 23 and 32 entries of the matrix $\mathbf{Q}^{[1]}(x)$ are nonzero for $x\in\mathbb{R}$, the corresponding TWRI Jost solutions $\Phi_{\mathrm{J}}^{[1]\pm}(x;\lambda)$ have the block-diagonal form
\begin{equation}
\Phi_{\mathrm{J}}^{[1]\pm}(x;\lambda)=\begin{pmatrix}e^{-i\lambda c^{[1]}x/\epsilon} & \mathbf{0}^\trans\\
\mathbf{0} & \omega_1^{\sigma_3}e^{-i\lambda (c^{[2]}+c^{[3]})x/(2\epsilon)}e^{i\kappa^{[1]}\sigma_3x/(2\epsilon)}\mathbf{W}^{[1]\pm}(x;\lambda)\omega_1^{-\sigma_3}\end{pmatrix},
\end{equation}
where $\omega_1^2=-\gamma^{[1]}e^{i\theta^{[1]}}$ and where $\mathbf{W}^{[1]\pm}(x;\lambda)$ are matrix solutions of the non-selfadjoint Zakharov-Shabat system \eqref{eq:ZS-Jost} with potential $A(x)=A^{[1]}(x)$ and spectral parameter $\zeta=\zeta^{[1]}(\lambda)$, 
where 
\begin{equation}
A^{[1]}(x):=\frac{H^{[1]}(x)}{\sqrt{\Delta^{[2]}\Delta^{[3]}
}}\quad\text{and}\quad\zeta^{[1]}(\lambda):=\frac{\Delta^{[1]}}{2}\lambda + \frac{\kappa^{[1]}}{2},
\label{eq:A1-zeta1}
\end{equation}
and where we have used \eqref{eq:gamma-assumption} and reality of $H^{[1]}$.  Note that due to \eqref{eq:c-inequalities}, $\zeta^{[1]}$ is an affine transformation of the upper half-plane onto itself that preserves $\mathbb{R}$.  Since $\lambda\in\mathbb{R}$ for $\Phi_{\mathrm{J}}^{[1]\pm}(x;\lambda)$ we also have $\zeta^{[1]}(\lambda)\in\mathbb{R}$, and the boundary conditions satisfied by $\Phi_{\mathrm{J}}^{[1]\pm}(x;\lambda)$ as $x\to\pm\infty$ show that
$\mathbf{W}^{[1]\pm}(x;\lambda)$ are precisely the Jost matrices of the Zakharov-Shabat system \eqref{eq:ZS-Jost}, linked by a corresponding Zakharov-Shabat scattering matrix $\mathbf{S}^{\mathrm{ZS}[1]}(\zeta^{[1]}(\lambda))$ whose elements involve functions $a^{[1]}(\zeta^{[1]}(\lambda))$ and $b^{[1]}(\zeta^{[1]}(\lambda))$ according to \eqref{eq:ZS-scat-form}.  The essential elements of the scattering matrix $\mathbf{S}^{[1]}(\lambda)$ are therefore given in terms of Zakharov-Shabat data by
\begin{equation}
\begin{split}
\begin{pmatrix} S^{[1]}_{22}(\lambda) & S^{[1]}_{23}(\lambda)\\S^{[1]}_{32}(\lambda) & S^{[1]}_{33}(\lambda)\end{pmatrix}&= \omega_1^{\sigma_3}\mathbf{S}^{\mathrm{ZS}[1]}(\zeta^{[1]}(\lambda))\omega_1^{-\sigma_3} \\ &= \begin{pmatrix}a^{[1]}(\zeta^{[1]}(\lambda))^* & -\gamma^{[1]}e^{i\theta^{[1]}}b^{[1]}(\zeta^{[1]}(\lambda))^*\\
\gamma^{[1]}e^{-i\theta^{[1]}}b^{[1]}(\zeta^{[1]}(\lambda)) & a^{[1]}(\zeta^{[1]}(\lambda))\end{pmatrix},\quad \lambda\in\mathbb{R}.
\end{split}
\label{eq:S1-pieces}
\end{equation}

Similarly, to calculate $\mathbf{S}^{[2]}(\lambda)$, the TWRI scattering matrix for the cutoff potential $\mathbf{Q}^{[2]}(x):=\mathbf{Q}(x)\chi_{(x_{12},x_{23})}(x)$ for which only the $13$ and $31$ entries are nonzero for $x\in\mathbb{R}$, observe that the Jost solutions $\Phi_{\mathrm{J}}^{[2]\pm}(x;\lambda)$ corresponding to this compactly supported potential have the form
\begin{multline}
\Phi_{\mathrm{J}}^{[2]\pm}(x;\lambda)=\\
\begin{pmatrix}
e^{-i[\lambda(c^{[1]}+c^{[3]})+\kappa^{[2]}]x/(2\epsilon)}W_{11}^{[2]\pm}(x;\lambda) & 0 & \omega_2^2e^{-i[\lambda(c^{[1]}+c^{[3]})+\kappa^{[2]}]x/(2\epsilon)}W_{12}^{[2]\pm}(x;\lambda)\\
0 & e^{-i\lambda c^{[2]}x/\epsilon} & 0\\
\omega_2^{-2}e^{-i[\lambda(c^{[1]}+c^{[3]})-\kappa^{[2]}]x/(2\epsilon)}W_{21}^{[2]\pm}(x;\lambda) & 0 & 
e^{-i[\lambda(c^{[1]}+c^{[3]})-\kappa^{[2]}]x/(2\epsilon)}W_{22}^{[2]\pm}(x;\lambda)\end{pmatrix},
\end{multline}
where $\omega_2^2=-\gamma^{[2]}e^{-i\theta^{[2]}}$ and where $\mathbf{W}^{[2]\pm}(x;\lambda)$ are exactly the Jost matrices of the non-selfadjoint Zakharov-Shabat system \eqref{eq:ZS-Jost} with potential $A(x)=A^{[2]}(x)$ and spectral parameter $\zeta=\zeta^{[2]}(\lambda)$,
where
\begin{equation}
A^{[2]}(x):=\frac{H^{[2]}(x)}{\sqrt{\Delta^{[3]}\Delta^{[1]}}}\quad\text{and}\quad\zeta^{[2]}(\lambda):=\frac{\Delta^{[2]}}{2}\lambda-\frac{\kappa^{[2]}}{2}.
\label{eq:A2-zeta2}
\end{equation}
Again $\zeta^{[2]}$ preserves the upper half-plane taking $\mathbb{R}$ to itself.  Letting $a^{[2]}(\zeta^{[2]}(\lambda))$ and $b^{[2]}(\zeta^{[2]}(\lambda))$ denote the independent elements of the Zakharov-Shabat scattering matrix $\mathbf{S}^{\mathrm{ZS}[2]}(\zeta^{[2]}(\lambda))$ linking $\mathbf{W}^{[2]\pm}(x;\lambda)$,
we obtain
\begin{equation}
\begin{split}
\begin{pmatrix}S^{[2]}_{11}(\lambda) & S^{[2]}_{13}(\lambda)\\S^{[2]}_{31}(\lambda) & S^{[2]}_{33}(\lambda)\end{pmatrix}&=
\omega_2^{\sigma_3}\mathbf{S}^{\mathrm{ZS}[2]}(\zeta^{[2]}(\lambda))\omega_2^{-\sigma_3}\\
&=\begin{pmatrix}a^{[2]}(\zeta^{[2]}(\lambda))^* & -\gamma^{[2]}e^{-i\theta^{[2]}}b^{[2]}(\zeta^{[2]}(\lambda))^*\\
\gamma^{[2]}e^{i\theta^{[2]}}b^{[2]}(\zeta^{[2]}(\lambda)) & a^{[2]}(\zeta^{[2]}(\lambda))\end{pmatrix},\quad\lambda\in\mathbb{R}.
\end{split}
\label{eq:S2-pieces}
\end{equation}

Finally, to find $\mathbf{S}^{[3]}(\lambda)$ associated with the cutoff potential $\mathbf{Q}^{[3]}(x):=\mathbf{Q}(x)\chi_{(x_{23},+\infty)}(x)$ for which only the $12$ and $21$ entries are nonzero for $x\in\mathbb{R}$, note that the Jost solutions $\Phi_{\mathrm{J}}^{[3]\pm}(x;\lambda)$ for this potential have the form
\begin{equation}
\Phi_{\mathrm{J}}^{[3]\pm}(x;\lambda)=\begin{pmatrix}\omega_3^{\sigma_3}e^{-i\lambda(c^{[1]}+c^{[2]})x/(2\epsilon)}e^{i\kappa^{[3]}\sigma_3x/(2\epsilon)}\mathbf{W}^{[3]\pm}(x;\lambda)\omega_3^{-\sigma_3} & \mathbf{0}\\
\mathbf{0}^\trans & e^{-i\lambda c^{[3]}x/\epsilon}\end{pmatrix},
\end{equation}
where $\omega_3^2=-\gamma^{[3]}e^{i\theta^{[3]}}$ and where $\mathbf{W}^{[3]\pm}(x;\lambda)$ are the Jost solutions of the non-selfadjoint Zakharov-Shabat system \eqref{eq:ZS-Jost} with potential 
$A(x)=A^{[3]}(x)$ and spectral parameter $\zeta=\zeta^{[3]}(\lambda)$,
where
\begin{equation}
A^{[3]}(x):=\frac{H^{[3]}(x)}{\sqrt{\Delta^{[1]}\Delta^{[2]}}}\quad\text{and}\quad\zeta^{[3]}(\lambda):=\frac{\Delta^{[3]}}{2}\lambda + \frac{\kappa^{[3]}}{2}.
\label{eq:A3-zeta3}
\end{equation}
Once again $\zeta^{[3]}$ preserves $\mathbb{C}_+$ taking $\mathbb{R}$ onto $\mathbb{R}$.  Letting $a^{[3]}(\zeta^{[3]}(\lambda))$ and $b^{[3]}(\zeta^{[3]}(\lambda))$ denote the independent elements of the Zakharov-Shabat scattering matrix $\mathbf{S}^{\mathrm{ZS}[3]}(\zeta^{[3]}(\lambda))$ linking $\mathbf{W}^{[3]\pm}(x;\lambda)$, we obtain
\begin{equation}
\begin{split}
\begin{pmatrix} S^{[3]}_{11}(\lambda) & S^{[3]}_{12}(\lambda)\\S^{[3]}_{21}(\lambda) & S^{[3]}_{22}(\lambda)\end{pmatrix}&=
\omega_3^{\sigma_3}\mathbf{S}^{\mathrm{ZS}[3]}(\zeta^{[3]}(\lambda))\omega_3^{-\sigma_3}\\
&=\begin{pmatrix}a^{[3]}(\zeta^{[3]}(\lambda))^* & -\gamma^{[3]}e^{i\theta^{[3]}}b^{[3]}(\zeta^{[3]}(\lambda))^*\\
\gamma^{[3]}e^{-i\theta^{[3]}}b^{[3]}(\zeta^{[3]}(\lambda)) & a^{[3]}(\zeta^{[3]}(\lambda))\end{pmatrix},\quad\lambda\in\mathbb{R}.
\end{split}
\label{eq:S3-pieces}
\end{equation}

With these results, we clearly have sufficient information to construct the jump matrix $\mathbf{V}_0^\sigma(\lambda)$ for $\lambda\in\mathbb{R}$, needed to formulate the Riemann-Hilbert problem of inverse scattering for the TWRI initial-value problem, in terms of the Zakharov-Shabat spectral functions $a^{[k]}$ and $b^{[k]}$ for $k=1,2,3$.  Indeed, one simply applies LDU or UDL factorization (cf., \eqref{eq:LDU-UDL}) to $\mathbf{S}(\lambda)=\mathbf{S}^{[1]}(\lambda)\mathbf{S}^{[2]}(\lambda)\mathbf{S}^{[3]}(\lambda)$ and obtains $\mathbf{V}^\sigma_0(\lambda)$ from \eqref{eq:TWRI-Jump}.

We remark that an important implication of the assumption \eqref{eq:gamma-assumption} singling out a particular variety of the decay instability case for the TWRI system \eqref{3wave} is that the three Zakharov-Shabat spectral problems that arise from the assumption \eqref{eq:disjoint-support-assumption} of disjoint supports are all of nonselfadjoint or focusing type.  If \eqref{eq:gamma-assumption} were not to hold, at least one of the $2\times 2$ spectral problems would be of selfadjoint or defocusing type.  The nonselfadjoint version of the problem is preferable for our purposes because it leads in the semiclassical limit to purely discrete spectrum which allows for effective construction of semiclassical soliton ensembles as exact solutions via finite-dimensional linear algebra.  

\subsection{Expression of the TWRI discrete scattering data in terms of Zakharov-Shabat data}
\label{section:3wave-residues}
The expression of the jump matrix factors $\mathbf{S}^{[k]}(\lambda)$ in terms of the Zakharov-Shabat spectral functions $a^{[k]}$ and $b^{[k]}$ for the potentials $A^{[k]}(x)$ as above also determines the
location of the poles of $\mathbf{M}^\sigma(\lambda)=\mathbf{M}^\sigma(x,t;\lambda)$ in $\mathbb{C}_+$ in terms of the eigenvalues of the three different Zakharov-Shabat problems.
Indeed, using \eqref{eq:S1-pieces}, \eqref{eq:S2-pieces}, and \eqref{eq:S3-pieces} in \eqref{eq:u-plus-v-plus-partials} shows that in terms of Zakharov-Shabat scattering matrix elements we have
\begin{equation}
u_+(\lambda)=a^{[2]}(\zeta^{[2]}(\lambda))a^{[3]}(\zeta^{[3]}(\lambda))\quad\text{and}\quad
v_+(\lambda)=a^{[1]}(\zeta^{[1]}(\lambda))a^{[2]}(\zeta^{[2]}(\lambda)).
\label{eq:u-v-ZS}
\end{equation}
Analytically continuing these into the upper half $\lambda$-plane gives the functions $u(\lambda)$ and $v(\lambda)$ whose roots are the poles of the matrix $\mathbf{M}^\sigma(x,t;\lambda)$, and these are obviously the pre-images under the mappings $\zeta=\zeta^{[k]}(\lambda)$ of the zeros of $a^{[k]}(\zeta)$, i.e., the eigenvalues of the three different Zakharov-Shabat problems corresponding to the separate potentials $A^{[1]}(x)$, $A^{[2]}(x)$, and $A^{[3]}(x)$.  We now consider in more detail how generic (simple) Zakharov-Shabat eigenvalues lead to poles of $\mathbf{M}^\sigma(\lambda)=\mathbf{M}^\sigma(x,t;\lambda)$, and also how the corresponding connection coefficients may be computed explicitly in terms of Zakharov-Shabat data \emph{without analytic continuation of any functions $b^{[k]}$ from the real axis}.

\subsubsection{Poles arising from simple zeros of $a^{[1]}(\zeta^{[1]}(\lambda))$.}
\label{section:disjoint-type-1}
If $\lambda_0\in\mathbb{C}_+$ is a simple zero of $a^{[1]}(\zeta^{[1]}(\lambda))$ (i.e., $\zeta=\zeta^{[1]}(\lambda_0)\in\mathbb{C}_+$ is a simple eigenvalue of the Zakharov-Shabat problem \eqref{eq:ZS-Jost} with potential $A(x)=A^{[1]}(x)$) for which $a^{[2]}(\zeta^{[2]}(\lambda_0))$ and $a^{[3]}(\zeta^{[3]}(\lambda_0))$ are nonzero, then $\lambda_0$ is a simple pole of $\mathbf{M}^\sigma(\lambda)$ of type 1.  To complete the scattering data corresponding to $\lambda_0$, it is sufficient to calculate the nonzero connection coefficient $\beta_{32}^{[1]}$ appearing in the residue condition at $\lambda_0$ for $\mathbf{M}^+(\lambda)$ (see \eqref{eq:type-1-residues}), since the corresponding connection coefficient $\beta_{23}^{[1]}$ needed to describe the residue of $\mathbf{M}^-(\lambda)$ can be obtained from $\beta_{32}^{[1]}$ using \eqref{eq:type-1-beta-relations}.  The constant $\beta_{32}^{[1]}$ is characterized in terms of the columns $\mathbf{m}^{+,j}(x;\lambda)$, $j=1,2,3$, of the matrix $\mathbf{M}^+(x;\lambda)=\mathbf{M}^+(x,0;\lambda)$ by the condition 
\begin{equation}
\mathop{\mathrm{Res}}_{\lambda=\lambda_0}\mathbf{m}^{+,2}(x;\lambda) = \beta_{32}^{[1]}e^{i\lambda_0\Delta^{[1]}x/\epsilon}\mathbf{m}^{+,3}(x;\lambda_0).
\label{eq:beta32-definition}
\end{equation}
A general construction of Beals and Coifman \cite{BealsC87} based on the exterior algebra and described in Appendix~\ref{section:Volterra} shows that the simple pole in $\mathbf{m}^{+,2}(x;\lambda)$ arises because the latter can be expressed as a cross product of analytic vector functions divided by $v(\lambda)$.  Therefore \eqref{eq:beta32-definition} can be equivalently written in the form
\begin{equation}
\frac{1}{v'(\lambda_0)}\mathbf{n}^{-,3}(x;\lambda_0)\times\mathbf{n}^{+,1}(x;\lambda_0)=\beta_{32}^{[1]}e^{i\lambda_0\Delta^{[1]}x/\epsilon}\mathbf{m}^{+,3}(x;\lambda_0).
\label{eq:beta32-equation-1}
\end{equation}
Here $\mathbf{m}^{+,3}(x;\lambda_0)$ satisfies \eqref{eq:m-columns-ODE} for $j=3$ and $\lambda=\lambda_0\in\mathbb{C}_+$ with the boundary condition\footnote{Here and below, $\mathbf{e}^k$, $k=1,2,3$, denote the standard coordinate basis of unit vectors in $\mathbb{C}^3$.} $\mathbf{m}^{+,3}(x;\lambda_0)\to \mathbf{e}^3$ as $x\to +\infty$, $\mathbf{n}^{-,3}(x;\lambda_0)$ satisfies \eqref{eq:first-wedge-1}--\eqref{eq:first-wedge-2} for $\lambda=\lambda_0\in\mathbb{C}_+$ with the boundary condition $\mathbf{n}^{-,3}(x;\lambda_0)\to\mathbf{e}^3$ as $x\to -\infty$, and $\mathbf{n}^{+,1}(x;\lambda_0)$ satisfies \eqref{eq:second-wedge} for $\lambda=\lambda_0\in\mathbb{C}_+$ with the boundary condition $\mathbf{n}^{+,1}(x;\lambda_0)\to \mathbf{e}^1$ as $x\to +\infty$.  Importantly, \emph{all three of these vectors are analytic functions of $\lambda$ at $\lambda=\lambda_0$, being given as solutions of appropriate Volterra equations}.  Also, from \eqref{eq:A1-zeta1} and \eqref{eq:u-v-ZS} we have $v'(\lambda_0)=\tfrac{1}{2}\Delta^{[1]}a^{[1]\prime}(\zeta^{[1]}(\lambda_0))a^{[2]}(\zeta^{[2]}(\lambda_0))$.

We may consider \eqref{eq:beta32-equation-1} for any $x\in\mathbb{R}$, and we choose the value $x=x_{23}$.  First observe that since only the $12$ and $21$ elements of $\mathbf{Q}(x)$ are nonzero when $x\ge x_{23}$, it is easy to see from the Volterra equations for $\mathbf{m}^{+,3}(x;\lambda_0)$ and $\mathbf{n}^{+,1}(x;\lambda_0)$ (see \eqref{eq:Fredholm-system} and \eqref{eq:mplus-2cross3-Im-lambda-positive} respectively) that 
\begin{equation}
\mathbf{m}^{+,3}(x;\lambda_0)\equiv \mathbf{e}^3\quad\text{and}\quad n^{+,1}_3(x;\lambda_0)=0,\quad\text{for all } x\ge x_{23}.
\label{eq:type-1-exact-at-x23}
\end{equation}
The differential equation \eqref{eq:second-wedge} satisfied by $\mathbf{n}^{+,1}(x;\lambda)$ for $\lambda=\lambda_0$ implies that the two-component vector defined by 
\begin{equation}
\mathbf{w}^{[3]+,2}(x;\lambda_0):=\begin{pmatrix}\gamma^{[3]}e^{-i\theta^{[3]}}e^{-i\kappa^{[3]}x/\epsilon} & 0\\ 0 & 1\end{pmatrix}e^{i\zeta^{[3]}(\lambda_0)x/\epsilon}\sigma_1\begin{pmatrix}n_1^{+,1}(x;\lambda_0)\\n_2^{+,1}(x;\lambda_0)\end{pmatrix}
\label{eq:w32plus-define}
\end{equation}
is a vector solution of the nonselfadjoint Zakharov-Shabat system \eqref{eq:ZS-Jost} with potential $A=A^{[3]}$ and spectral parameter $\zeta=\zeta^{[3]}(\lambda_0)$ given by \eqref{eq:A3-zeta3} for $x\ge x_{23}$, and the normalization condition $\mathbf{n}^{+,1}(x;\lambda_0)\to \mathbf{e}^1$ as $x\to +\infty$ implies that $\mathbf{w}^{[3]+,2}(x;\lambda_0)$ is asymptotic to $e^{i\zeta^{[3]}(\lambda_0)x/\epsilon}(0,1)^\trans$ as $x\to +\infty$, uniquely identifying this solution with the second column of the Zakharov-Shabat Jost matrix $\mathbf{W}^{+}(x;\zeta)$ for the potential $A=A^{[3]}$ with $\zeta=\zeta^{[3]}(\lambda_0)$.  Therefore, 
\begin{equation}
\mathbf{n}^{+,1}(x;\lambda_0)=\begin{pmatrix}
e^{-i\zeta^{[3]}(\lambda_0)x/\epsilon}w_2^{[3]+,2}(x;\lambda_0)\\
\gamma^{[3]}e^{i\theta^{[3]}}e^{i\kappa^{[3]}x/\epsilon}e^{-i\zeta^{[3]}
(\lambda_0)x/\epsilon}w_1^{[3]+,2}(x;\lambda_0)\\
0\end{pmatrix},\quad x\ge x_{23},
\label{eq:n-plus-1-x-greater-x23}
\end{equation}
and so all components of $\mathbf{m}^{+,3}(x_{23};\lambda_0)$ and $\mathbf{n}^{+,1}(x_{23};\lambda_0)$ have been explicitly written in terms of Zakharov-Shabat data for \eqref{eq:ZS-Jost} subject to $A=A^{[3]}$ and $\zeta=\zeta^{[3]}$ given by \eqref{eq:A3-zeta3}.

To evaluate $\mathbf{n}^{-,3}(x;\lambda_0)$ (the unique solution of the Volterra equation \eqref{eq:mminus-1cross2-Im-lambda-positive}) at $x=x_{23}$, first note that 
since only the $23$ and $32$ elements of $\mathbf{Q}(x)$ are nonzero when $x\le x_{12}$,
\begin{equation}
n_1^{-,3}(x;\lambda_0)=0,\quad\text{for all } x\le x_{12}.
\end{equation}
So far, we have not used the fact that $\lambda_0\in\mathbb{C}_+$ is a simple zero of $v(\lambda)$.  This fact gives additional information about $\mathbf{n}^{-,3}(x;\lambda_0)$ at $x=x_{12}$.  Indeed, for $x\le x_{12}$, the remaining two elements of the vector $\mathbf{n}^{-,3}(x;\lambda_0)$ are easily related to the Zakharov-Shabat system \eqref{eq:ZS-Jost} with potential $A(x)=A^{[1]}(x)$ and spectral parameter $\zeta=\zeta^{[1]}(\lambda)$ given by \eqref{eq:A1-zeta1}; setting
\begin{equation}
\mathbf{w}^{[1]-,1}(x;\lambda_0):=\begin{pmatrix}1 & 0\\0 & \gamma^{[1]}e^{i\theta^{[1]}}e^{i\kappa^{[1]}x/\epsilon}\end{pmatrix}
e^{-i\zeta^{[1]}(\lambda_0)x/\epsilon}\sigma_1\begin{pmatrix}n_2^{-,3}(x;\lambda_0)\\n_3^{-,3}(x;\lambda_0)\end{pmatrix},
\label{eq:w-11-definition}
\end{equation}
one sees that $\mathbf{w}^{[1]-,1}(x;\lambda_0)$ is exactly the first column of the solution $\mathbf{W}^-$ of \eqref{eq:ZS-Jost} (asymptotic to the exponentially decaying vector $e^{-i\zeta^{[1]}(\lambda_0)x/\epsilon}(1,0)^\trans$ as $x\to -\infty$) with potential $A=A^{[1]}$ and spectral parameter $\zeta=\zeta^{[1]}$ given by \eqref{eq:A1-zeta1}.  Now, since by hypothesis $\lambda_0$ is a simple zero of $a^{[1]}(\zeta^{[1]}(\lambda))$, producing a corresponding simple zero of $v(\lambda)$, the Jost solution $\mathbf{w}^{[1]-,1}(x;\lambda_0)$ must coincide with $\tau^{[1]}\mathbf{w}^{[1]+,2}(x;\lambda_0)$, where $\tau^{[1]}$ is the proportionality constant associated with the Zakharov-Shabat eigenvalue $\zeta^{[1]}(\lambda_0)$ (see \eqref{eq:tau-def-1}) and $\mathbf{w}^{[1]+,2}(x;\lambda_0)$ is the second column of the matrix solution $\mathbf{W}^+$ of \eqref{eq:ZS-Jost} (asymptotic to the exponentially decaying vector $e^{i\zeta^{[1]}(\lambda_0)x/\epsilon}(0,1)^\trans$ as $x\to +\infty$) with $A=A^{[1]}$ and $\zeta=\zeta^{[1]}$ given by \eqref{eq:A1-zeta1}.  Since $A^{[1]}(x)\equiv 0$ for $x\ge x_{12}$, we have $\mathbf{w}^{[1]+,2}(x;\lambda_0)=e^{i\zeta^{[1]}(\lambda_0)x/\epsilon}(0,1)^\trans$ exactly for all $x\ge x_{12}$.  Hence $\mathbf{w}^{[1]-,1}(x_{12};\lambda_0)=\tau^{[1]}e^{i\zeta^{[1]}(\lambda_0)x_{12}/\epsilon}(0,1)^\trans$, and so \eqref{eq:w-11-definition} gives
\begin{equation}
n_2^{-,3}(x_{12};\lambda_0)=\gamma^{[1]}e^{-i\theta^{[1]}}\tau^{[1]}e^{i\lambda_0\Delta^{[1]}x_{12}/\epsilon}\quad\text{and}\quad n_3^{-,3}(x_{12};\lambda_0)=0,
\end{equation}
where we have also used \eqref{eq:A1-zeta1} to eliminate $\zeta^{[1]}(\lambda_0)$.  As only the second component of $\mathbf{n}^{-,3}(x;\lambda_0)$ is nonzero at $x=x_{12}$, and since according to \eqref{eq:first-wedge-1}--\eqref{eq:first-wedge-2} in the interval $[x_{12},x_{23}]$ (where only the potential $A^{[2]}(x)$ is nonzero) we have exactly 
\begin{equation}
n_2^{-,3}(x;\lambda_0)=e^{i\lambda_0\Delta^{[1]}(x-x_{12})/\epsilon}n_2^{-,3}(x_{12};\lambda_0),
\quad x_{12}\le x\le x_{23},
\end{equation}
we deduce the exact value of $\mathbf{n}^{-,3}(x;\lambda_0)$ at $x=x_{23}$ to be
\begin{equation}
\mathbf{n}^{-,3}(x_{23};\lambda_0)=\begin{pmatrix}0\\
\gamma^{[1]}e^{-i\theta^{[1]}}e^{i\lambda_0\Delta^{[1]}x_{23}/\epsilon}\tau^{[1]}\\
0\end{pmatrix}.
\label{eq:type-1-n-minus-3-at-x23}
\end{equation}

Combining \eqref{eq:n-plus-1-x-greater-x23} with \eqref{eq:type-1-n-minus-3-at-x23} gives
\eq
\begin{split}
\mathbf{n}^{-,3}(x_{23};\lambda_0)&\times\mathbf{n}^{+,1}(x_{23};\lambda_0)=\\
&\begin{pmatrix}0\\0\\
-\gamma^{[1]}e^{-i\theta^{[1]}}\tau^{[1]}e^{-i\zeta^{[3]}(\lambda_0)x_{23}/\epsilon}e^{i\lambda_0\Delta^{[1]}x_{23}/\epsilon}
w_{2}^{[3]+,2}(x_{23};\lambda_0)
\end{pmatrix},
\end{split}
\endeq
which is obviously proportional to $\mathbf{m}^{+,3}(x_{23};\lambda_0)=\mathbf{e}^3$, and therefore from \eqref{eq:beta32-equation-1} at $x=x_{23}$ we obtain
\begin{equation}
\beta^{[1]}_{32}=-\frac{2\gamma^{[1]}e^{-i\theta^{[1]}}\tau^{[1]}w_{2}^{[3]+,2}(x_{23};\lambda_0)e^{-i\zeta^{[3]}(\lambda_0)x_{23}/\epsilon}}{\Delta^{[1]}a^{[1]\prime}(\zeta^{[1]}(\lambda_0))a^{[2]}(\zeta^{[2]}(\lambda_0))}.
\label{eq:type-1-beta32}
\end{equation}
Using \eqref{eq:type-1-beta-relations} and \eqref{eq:u-v-ZS} then gives
\begin{equation}
\beta_{23}^{[1]}=-\frac{2\gamma^{[1]}e^{i\theta^{[1]}}a^{[3]}(\zeta^{[3]}(\lambda_0))}{\Delta^{[1]}a^{[1]\prime}(\zeta^{[1]}(\lambda_0))\tau^{[1]}w_2^{[3]+,2}(x_{23};\lambda_0)e^{-i\zeta^{[3]}(\lambda_0)x_{23}/\epsilon}}
\label{eq:type-1-beta23}
\end{equation}
as the connection coefficient needed to characterize the residue of $\mathbf{M}^-(\lambda)$ at $\lambda=\lambda_0$.  

Observe that even though the poles of type $1$ are determined from the initial packet $q^{[1]}(x,0)$ alone (essentially as a consequence of the assumption \eqref{eq:disjoint-support-assumption} of disjoint supports), the corresponding connection coefficients encode information about all three packets.  Indeed,
from \eqref{eq:type-1-beta32} and \eqref{eq:type-1-beta23} we see that $\beta^{[1]}_{32}$ and $\beta^{[1]}_{23}$ depend on the Zakharov-Shabat spectral function $a^{[2]}(\zeta^{[2]}(\lambda))$ associated with $q^{[2]}(x,0)$ and the spectral function $a^{[3]}(\zeta^{[3]}(\lambda))$ and Jost solution $\mathbf{w}^{[3]+,2}(x;\lambda)e^{-i\zeta^{[3]}(\lambda)x/\epsilon}$ associated with $q^{[3]}(x,0)$.  This fact can be understood at an intuitive level because the connection coefficients encode information about the average position of the soliton corresponding to the pole $\lambda_0$ in the moving frame with velocity $c^{[1]}$.  The position of the soliton in this frame will shift in time $t$ due to nonlinear interactions with other solitons and radiation.  But due to the orderings \eqref{eq:c-inequalities} and \eqref{eq:disjoint-support-assumption} of the velocities $c^{[k]}$ and supports $\mathrm{supp}(q^{[k]}(x,0))$ respectively this interaction can only happen in the future, for $t\ge T >0$.   Thus in order to get the effect of the interaction right for $t\gg T$ it is necessary for the solitons to be positioned at $t=0$ with the future interaction in mind, implying that the scattering data necessary to generate the solution by inverse-scattering for all $t>0$ has to link all three packets even though by the method of characteristics applied directly to \eqref{3wave} there can be no interaction at all for $t<T$.  

\subsubsection{Poles arising from simple zeros of $a^{[3]}(\zeta^{[3]}(\lambda))$.}
\label{section:disjoint-type-3}
If $\lambda_0\in\mathbb{C}_+$ is a simple zero of $a^{[3]}(\zeta^{[3]}(\lambda))$ (i.e., $\zeta=\zeta^{[3]}(\lambda_0)\in\mathbb{C}_+$ is a simple eigenvalue of the Zakharov-Shabat problem \eqref{eq:ZS-Jost} with potential $A(x)=A^{[3]}(x)$) but $a^{[1]}(\zeta^{[1]}(\lambda_0))$ and $a^{[2]}(\zeta^{[2]}(\lambda_0))$ are nonzero, then $\lambda_0$ is a simple pole of $\mathbf{M}^\sigma(\lambda)$ of type 3.  To obtain the corresponding residue matrices $\mathbf{N}^\sigma$ of the form \eqref{eq:type-3-residues} it is sufficient to calculate the nonzero connection coefficient $\beta_{21}^{[3]}$, because $\beta_{12}^{[3]}$ is then known via \eqref{eq:type-3-beta-relations}.  The constant $\beta_{21}^{[3]}$ is characterized in terms of the columns $\mathbf{m}^{+,j}(x;\lambda)$ of $\mathbf{M}^+(x;\lambda)=\mathbf{M}^+(x,0;\lambda)$ by the relation
\begin{equation}
\mathop{\mathrm{Res}}_{\lambda=\lambda_0}\mathbf{m}^{+,1}(x;\lambda)=\beta_{21}^{[3]}e^{i\lambda_0\Delta^{[3]}x/\epsilon}\mathbf{m}^{+,2}(x;\lambda_0),
\end{equation}
or, equivalently, since from the first column of \eqref{eq:Mplus-Mminus-intro} it follows that the simple pole of $\mathbf{m}^{+,1}(x;\lambda)$ arises from the simple zero of $u(\lambda)$ at $\lambda_0$,
\begin{equation}
\frac{1}{u'(\lambda_0)}\mathbf{m}^{-,1}(x;\lambda_0)=\frac{\beta_{21}^{[3]}e^{i\lambda_0\Delta^{[3]}x/\epsilon}}{v(\lambda_0)}\mathbf{n}^{-,3}(x;\lambda_0)\times\mathbf{n}^{+,1}(x;\lambda_0).
\label{eq:beta21-type-3-first}
\end{equation}
Here $\mathbf{m}^{-,1}(x;\lambda_0)$ satisfies \eqref{eq:m-columns-ODE} for $j=1$ and $\lambda=\lambda_0\in\mathbb{C}_+$ with the boundary condition $\mathbf{m}^{-,1}(x;\lambda_0)\to \mathbf{e}^1$ as $x\to -\infty$, and $\mathbf{n}^{-,3}(x;\lambda)$ and $\mathbf{n}^{+,1}(x;\lambda)$ are exactly as in \S\ref{section:disjoint-type-1}.  All three of these are analytic functions of $\lambda$ evaluated at $\lambda=\lambda_0$ via Volterra theory, and from \eqref{eq:A3-zeta3} and \eqref{eq:u-v-ZS} we have $u'(\lambda_0)=\tfrac{1}{2}\Delta^{[3]}a^{[3]\prime}(\zeta^{[3]}(\lambda_0))a^{[2]}(\zeta^{[2]}(\lambda_0))$ and 
$v(\lambda_0)=a^{[1]}(\zeta^{[1]}(\lambda_0))a^{[2]}(\zeta^{[2]}(\lambda_0))$.

Now we will calculate $\beta_{21}^{[3]}$ by evaluating \eqref{eq:beta21-type-3-first} for $x=x_{12}$.  Arguing as in \S\ref{section:disjoint-type-1}, we again have 
\begin{equation}
\mathbf{m}^{-,1}(x;\lambda_0)=\mathbf{e}^1 \quad\text{and}\quad
n_1^{-,3}(x;\lambda_0)=0,\quad\text{for all } x\le x_{12},
\label{eq:type-3-exact-at-x12}
\end{equation}
and $\mathbf{w}^{[1]-,1}(x;\lambda_0)$ defined by \eqref{eq:w-11-definition} is the Jost solution normalized to the decaying exponential $e^{-i\zeta^{[1]}(\lambda_0)x/\epsilon}(1,0)^\trans$ as $x\to -\infty$ of the nonselfadjoint
Zakharov-Shabat system \eqref{eq:ZS-Jost} with potential $A(x)=A^{[1]}(x)$ and spectral parameter $\zeta=\zeta^{[1]}(\lambda)$ given by \eqref{eq:A1-zeta1}.  Therefore,
\begin{equation}
\mathbf{n}^{-,3}(x;\lambda_0)=\begin{pmatrix}
0\\
\gamma^{[1]}e^{-i\theta^{[1]}}e^{-i\kappa^{[1]}x/\epsilon}e^{i\zeta^{[1]}(\lambda_0)x/\epsilon}w_2^{[1]-,1}(x;\lambda_0)\\
e^{i\zeta^{[1]}(\lambda_0)x/\epsilon}w_1^{[1]-,1}(x;\lambda_0)
\end{pmatrix},\quad x\le x_{12},
\label{eq:n-minus-3-x-less-x12}
\end{equation}
which completes our characterization of $\mathbf{m}^{-,1}(x;\lambda_0)$ and $\mathbf{n}^{-,3}(x;\lambda_0)$ for $x=x_{12}$ in terms of Zakharov-Shabat data.

To obtain $\mathbf{n}^{+,1}(x;\lambda_0)$ for $x=x_{12}$, first note that as in \S~\ref{section:disjoint-type-1},
\begin{equation}
n_3^{+,1}(x;\lambda_0)=0,\quad\text{for all } x\ge x_{23},
\label{eq:n1plus3-type-3}
\end{equation}
and $\mathbf{w}^{[3]+,2}(x;\lambda_0)$ defined by \eqref{eq:w32plus-define} is the Jost solution normalized to the decaying exponential $e^{i\zeta^{[3]}(\lambda_0)x/\epsilon}(0,1)^\trans$ as $x\to +\infty$ for the nonselfadjoint Zakharov-Shabat system \eqref{eq:ZS-Jost} with $A(x)=A^{[3]}(x)$ and $\zeta=\zeta^{[3]}(\lambda)$ given by \eqref{eq:A3-zeta3}.  Since $a^{[3]}(\zeta^{[3]}(\lambda_0))=0$, there exists a nonzero proportionality constant $\tau^{[3]}$ associated with the Zakharov-Shabat eigenvalue $\zeta^{[3]}(\lambda_0)\in\mathbb{C}_+$ such that
$\mathbf{w}^{[3]-,1}(x;\lambda_0)=\tau^{[3]}\mathbf{w}^{[3]+,2}(x;\lambda_0)$, where $\mathbf{w}^{[3]-,1}(x;\lambda_0)$ is the Jost solution of the same system (\eqref{eq:ZS-Jost} with \eqref{eq:A3-zeta3}) normalized to the decaying exponential $e^{-i\zeta^{[3]}(\lambda_0)x/\epsilon}(1,0)^\trans$ as $x\to -\infty$.  Since $A^{[3]}(x)$ is supported on $x\ge x_{23}$, we have the identity $\mathbf{w}^{[3]-,1}(x;\lambda_0)=e^{-i\zeta^{[3]}(\lambda_0)x/\epsilon}(1,0)^\trans$ holding for $x\le x_{23}$, so in particular $\mathbf{w}^{[3]+,2}(x_{23};\lambda_0)=(\tau^{[3]})^{-1}e^{-i\zeta^{[3]}(\lambda_0)x_{23}/\epsilon}(1,0)^\trans$.  Using this information in \eqref{eq:w32plus-define} along with \eqref{eq:n1plus3-type-3} shows that only the second component of $\mathbf{n}^{+,1}(x_{23};\lambda_0)$ is nonzero and determines the value of the latter.  For $x_{12}\le x\le x_{23}$, the second component of $\mathbf{n}^{+,1}(x;\lambda_0)$ decouples from the rest (because $A^{[1]}(x)=A^{[3]}(x)=0$) and satisfies, according to \eqref{eq:second-wedge}, 
\begin{equation}
n_2^{+,1}(x;\lambda_0)=e^{-i\lambda_0\Delta^{[3]}(x-x_{23})/\epsilon}n_2^{+,1}(x_{23};\lambda_0),
\quad x_{12}\le x\le x_{23}.
\end{equation}
We therefore obtain
\begin{equation}
\mathbf{n}^{+,1}(x_{12};\lambda_0)=\begin{pmatrix}0\\
\gamma^{[3]}e^{i\theta^{[3]}}e^{-i\lambda_0\Delta^{[3]}x_{12}/\epsilon}\tau^{[3]-1}\\
0
\end{pmatrix}.
\label{eq:n-plus-1-at-x12}
\end{equation}

As in \S\ref{section:disjoint-type-1}, we are now in a position to calculate the cross product $\mathbf{n}^{-,3}\times\mathbf{n}^{+,1}$, this time at $x=x_{12}$ by combining \eqref{eq:n-minus-3-x-less-x12} with \eqref{eq:n-plus-1-at-x12}.  We obtain
\eq
\begin{split}
\mathbf{n}^{-,3}(x_{12};\lambda_0)&\times\mathbf{n}^{+,1}(x_{12};\lambda_0)=\\
&\begin{pmatrix}
-\gamma^{[3]}e^{i\theta^{[3]}}\tau^{[3]-1}e^{i\zeta^{[1]}(\lambda_0)x_{12}/\epsilon}e^{-i\lambda_0\Delta^{[3]}x_{12}/\epsilon}w_1^{[1]-,1}(x_{12};\lambda_0)
\\
0\\
0
\end{pmatrix},
\end{split}
\endeq
which is obviously exactly proportional to $\mathbf{m}^{-,1}(x_{12};\lambda_0)=\mathbf{e}^1$, and therefore evaluating \eqref{eq:beta21-type-3-first} for $x=x_{12}$ gives
\begin{equation}
\beta_{21}^{[3]}=-\frac{2\gamma^{[3]}e^{-i\theta^{[3]}}a^{[1]}(\zeta^{[1]}(\lambda_0))\tau^{[3]}}{\Delta^{[3]}a^{[3]\prime}(\zeta^{[3]}(\lambda_0))w_1^{[1]-,1}(x_{12};\lambda_0)e^{i\zeta^{[1]}(\lambda_0)x_{12}/\epsilon}}.
\label{eq:type-3-beta21}
\end{equation}
Using \eqref{eq:type-3-beta-relations} and \eqref{eq:u-v-ZS} then gives
\begin{equation}
\beta_{12}^{[3]}=-\frac{2\gamma^{[3]}e^{i\theta^{[3]}}w_1^{[1]-,1}(x_{12};\lambda_0)e^{i\zeta^{[1]}(\lambda_0)x_{12}/\epsilon}}{\Delta^{[3]}a^{[3]\prime}(\zeta^{[3]}(\lambda_0))a^{[2]}(\zeta^{[2]}(\lambda_0))\tau^{[3]}}
\label{eq:type-3-beta12}
\end{equation}
as the connection coefficient needed to characterize the residue of $\mathbf{M}^-(\lambda)$ at $\lambda=\lambda_0$.  Again one observes that these connection coefficients for a pole $\lambda_0$ generated independently by the field $q^{[3]}(x,0)$ contain information about $q^{[1]}(x,0)$ and $q^{[2]}(x,0)$ as well.

\subsubsection{Poles arising from simple zeros of $a^{[2]}(\zeta^{[2]}(\lambda))$.}
If $\lambda_0\in\mathbb{C}_+$ is a simple zero of $a^{[2]}(\zeta^{[2]}(\lambda))$ (i.e., $\zeta=\zeta^{[2]}(\lambda_0)\in\mathbb{C}_+$ is a simple eigenvalue of the Zakharov-Shabat problem \eqref{eq:ZS-Jost} with potential $A(x)=A^{[2]}(x)$) for which $a^{[1]}(\zeta^{[1]}(\lambda_0))$ and $a^{[3]}(\zeta^{[3]}(\lambda_0))$ are nonzero, then $\lambda_0$ is a simple pole of $\mathbf{M}^{\sigma}(\lambda)$ of either type $2$, type $\solsplit$, or type $\solfuse$.  To determine which type it is, and compute the corresponding constants $\beta_{jk}^{[\mathrm{type}]}$, it suffices to calculate the residues of the first two columns of $\mathbf{M}^+(x;\lambda)=\mathbf{M}^+(x,0;\lambda)$.  We will calculate them, along with the value of the third (analytic) column, at the point $x=x_{23}$.

To begin, we recall \eqref{eq:type-1-exact-at-x23} and \eqref{eq:n-plus-1-x-greater-x23}, which specify the values of $\mathbf{m}^{+,3}(x_{23};\lambda_0)=\mathbf{e}^3$ and $\mathbf{n}^{+,1}(x_{23};\lambda_0)$, the latter in terms of the Jost solution $\mathbf{w}^{[3]+,2}(x;\lambda_0)$ of the nonselfadjoint Zakharov-Shabat system \eqref{eq:ZS-Jost} with potential $A(x)=A^{[3]}(x)$ and spectral parameter $\zeta=\zeta^{[3]}(\lambda)$ given by \eqref{eq:A3-zeta3} normalized to the decaying exponential $e^{i\zeta^{[3]}(\lambda_0)x/\epsilon}(0,1)^\trans$ as $x\to +\infty$.  From the first column of \eqref{eq:Mplus-Mminus-intro} and comparing \eqref{eq:beta32-definition} with \eqref{eq:beta32-equation-1}, the desired residues can be expressed in the form
\begin{equation}
\mathop{\mathrm{Res}}_{\lambda=\lambda_0}\mathbf{m}^{+,1}(x;\lambda)=\frac{1}{u'(\lambda_0)}\mathbf{m}^{-,1}(x;\lambda_0)\quad\text{and}\quad
\mathop{\mathrm{Res}}_{\lambda=\lambda_0}\mathbf{m}^{+,2}(x;\lambda)=\frac{1}{v'(\lambda_0)}\mathbf{n}^{-,3}(x;\lambda_0)\times\mathbf{n}^{+,1}(x;\lambda_0).
\end{equation}
Also, from \eqref{eq:u-v-ZS} we have the relations $u'(\lambda_0)=\tfrac{1}{2}\Delta^{[2]}a^{[2]\prime}(\zeta^{[2]}(\lambda_0))a^{[3]}(\zeta^{[3]}(\lambda_0))$ as well as $v'(\lambda_0)=\tfrac{1}{2}\Delta^{[2]}a^{[1]}(\zeta^{[1]}(\lambda_0))a^{[2]\prime}(\zeta^{[2]}(\lambda_0))$.
It therefore remains to compute $\mathbf{m}^{-,1}(x_{23};\lambda_0)$ and $\mathbf{n}^{-,3}(x_{23};\lambda_0)$.

To calculate these, first note that according to \eqref{eq:type-3-exact-at-x12} and \eqref{eq:n-minus-3-x-less-x12} we have the values $\mathbf{m}^{-,1}(x_{12};\lambda_0)=\mathbf{e}^1$ and $\mathbf{n}^{-,3}(x_{12};\lambda_0)$, the latter in terms of the Jost solution $\mathbf{w}^{[1]-,1}(x;\lambda_0)$ of the nonselfadjoint Zakharov-Shabat system \eqref{eq:ZS-Jost} with potential $A(x)=A^{[1]}(x)$ and spectral parameter $\zeta=\zeta^{[1]}(\lambda)$ given by \eqref{eq:A1-zeta1}.  We now propagate these values through the interval $x_{12}\le x\le x_{23}$ containing the support of $A^{[2]}$ and outside the supports of $A^{[1]}$ and $A^{[3]}$.  Because $A^{[1]}(x)=A^{[3]}(x)=0$ on this interval, the second component of both vectors decouples in the differential equations \eqref{eq:m-columns-ODE} (for $j=1$) governing $\mathbf{m}^{-,1}(x;\lambda_0)$ and \eqref{eq:first-wedge-1}--\eqref{eq:first-wedge-2} governing $\mathbf{n}^{-,3}(x;\lambda_0)$.  This implies that
\begin{equation}
m_2^{-,1}(x_{23};\lambda_0)=0
\end{equation}
and
\begin{equation}
\begin{split}
n_2^{-,3}(x_{23};\lambda_0)&=e^{i\lambda_0\Delta^{[1]}(x_{23}-x_{12})/\epsilon}n_2^{-,3}(x_{12};\lambda_0)
\\ &=\gamma^{[1]}e^{-i\theta^{[1]}}e^{i\lambda_0\Delta^{[1]}x_{23}/\epsilon}e^{-i\zeta^{[1]}(\lambda_0)x_{12}/\epsilon}w_2^{[1]-,1}(x_{12};\lambda_0),
\end{split}
\end{equation}
where we have used \eqref{eq:n-minus-3-x-less-x12}.  Furthermore, again using the differential equations \eqref{eq:m-columns-ODE} for $j=1$ and \eqref{eq:first-wedge-1}--\eqref{eq:first-wedge-2} shows that on the interval $x_{12}\le x\le x_{23}$, the two-component vector defined either by
\begin{equation}
\mathbf{w}^{[2]-,1}(x;\lambda_0)=\begin{pmatrix}1 & 0\\ 0 & \gamma^{[2]}e^{-i\theta^{[2]}}e^{-i\kappa^{[2]}x/\epsilon}\end{pmatrix}
e^{-i\zeta^{[2]}(\lambda_0)x/\epsilon}\begin{pmatrix}m_1^{-,1}(x;\lambda_0)\\
m_3^{-,1}(x;\lambda_0)\end{pmatrix}
\label{eq:m-minus-1-to-w-2-minus-1}
\end{equation} 
or by 
\begin{equation}
\mathbf{w}^{[2]-,1}(x;\lambda_0)=\begin{pmatrix}1 & 0\\ 0 & -\gamma^{[2]}e^{-i\theta^{[2]}}e^{-i\kappa^{[2]}x/\epsilon}\end{pmatrix}
e^{-i\zeta^{[2]}(\lambda_0)x/\epsilon}\sigma_1\begin{pmatrix}n_1^{-,3}(x;\lambda_0)\\
n_3^{-,3}(x;\lambda_0)\end{pmatrix}
\end{equation}
coincides in each case with the (same) Jost solution of the nonselfadjoint Zakharov-Shabat system \eqref{eq:ZS-Jost} with potential $A(x)=A^{[2]}(x)$ and spectral parameter $\zeta=\zeta^{[2]}(\lambda)$ given by \eqref{eq:A2-zeta2} and asymptotic to the decaying exponential $e^{-i\zeta^{[2]}(\lambda_0)x/\epsilon}(1,0)^\trans$ as $x\to -\infty$.  Since $a^{[2]}(\zeta^{[2]}(\lambda_0))=0$, there exists a nonzero proportionality constant $\tau^{[2]}$ for the Zakharov-Shabat eigenvalue $\zeta^{[2]}(\lambda_0)$ such that $\mathbf{w}^{[2]-,1}(x;\lambda_0)=\tau^{[2]}\mathbf{w}^{[2]+,2}(x;\lambda_0)$, where $\mathbf{w}^{[2]+,2}(x;\lambda_0)$ is the Jost solution of \eqref{eq:ZS-Jost} with potential and spectral parameter given by \eqref{eq:A2-zeta2} asymptotic to the decaying exponential $e^{i\zeta^{[2]}(\lambda_0)x/\epsilon}(0,1)^\trans$ as $x\to +\infty$.  Since $x=x_{23}$ lies to the right of the support of $A^{[2]}$, we have $\mathbf{w}^{[2]+,2}(x_{23};\lambda_0)=e^{i\zeta^{[2]}(\lambda_0)x_{23}/\epsilon}(0,1)^\trans$, and therefore
\begin{equation}
\mathbf{m}^{-,1}(x_{23};\lambda_0)=\begin{pmatrix}
0\\0\\\gamma^{[2]}e^{i\theta^{[2]}}e^{i\lambda_0\Delta^{[2]}x_{23}/\epsilon}\tau^{[2]}
\end{pmatrix}
\label{eq:type-2-m-minus-1-at-x23}
\end{equation}
and
\begin{equation}
\mathbf{n}^{-,3}(x_{23};\lambda_0)=\begin{pmatrix}
-\gamma^{[2]}e^{i\theta^{[2]}}e^{i\lambda_0\Delta^{[2]}x_{23}/\epsilon}\tau^{[2]}\\
\gamma^{[1]}e^{-i\theta^{[1]}}e^{i\lambda_0\Delta^{[1]}x_{23}/\epsilon}
e^{-i\zeta^{[1]}(\lambda_0)x_{12}/\epsilon}w_2^{[1]-,1}(x_{12};\lambda_0)\\
0\end{pmatrix}.
\label{eq:type-2-n-minus-3-at-x23}
\end{equation}

Obviously $\mathbf{m}^{-,1}(x_{23};\lambda_0)$ is proportional to $\mathbf{m}^{+,3}(x_{23};\lambda_0)=\mathbf{e}^3$, and therefore $\beta_{21}^{[\mathrm{type}]}=0$ while 
\begin{equation}
\beta_{31}^{[\mathrm{type}]}=\frac{2\gamma^{[2]}e^{i\theta^{[2]}}\tau^{[2]}}{\Delta^{[2]}a^{[2]\prime}(\zeta^{[2]}(\lambda_0))a^{[3]}(\zeta^{[3]}(\lambda_0))}.
\label{eq:type-solsplit-beta31}
\end{equation}
\emph{Since $\beta_{21}^{[\mathrm{type}]}$ vanishes, this immediately proves that the pole of $\mathbf{M}^\sigma(\lambda)$ at $\lambda=\lambda_0$ cannot in fact be of type $\solfuse$.}  The pole will be of type $2$ if $\mathbf{n}^{-,3}(x_{23};\lambda_0)\times\mathbf{n}^{+,1}(x_{23};\lambda_0)=\mathbf{0}$ and of type $\solsplit$ otherwise.  Computing this cross product using \eqref{eq:n-plus-1-x-greater-x23} and \eqref{eq:type-2-n-minus-3-at-x23} gives
\begin{equation}
\mathbf{n}^{-,3}(x_{23};\lambda_0)\times\mathbf{n}^{+,1}(x_{23};\lambda_0)=
\begin{pmatrix}
0\\0\\
e^{i\lambda_0\Delta^{[1]}x_{23}/\epsilon}\varphi
\end{pmatrix},
\end{equation}
where
\begin{multline}
\varphi:=e^{i(\theta^{[2]}+\theta^{[3]})}\tau^{[2]}e^{i\zeta^{[3]}(\lambda_0)x_{23}/\epsilon}w_1^{[3]+,2}(x_{23};\lambda_0)\\
{}-\gamma^{[1]}e^{-i\theta^{[1]}}e^{-i\zeta^{[1]}(\lambda_0)x_{12}/\epsilon}w_2^{[1]-,1}(x_{12};\lambda_0)e^{-i\zeta^{[3]}(\lambda_0)x_{23}/\epsilon}w_2^{[3]+,2}(x_{23};\lambda_0).
\label{eq:varphi-define}
\end{multline}
Observe that the cross product is clearly proportional to $\mathbf{m}^{+,3}(x_{23};\lambda_0)=\mathbf{e}^3$.  Dividing by $v'(\lambda_0)$ then yields
\begin{equation}
\beta^{[\mathrm{type}]}_{32}=\frac{2\varphi}{\Delta^{[2]}a^{[1]}(\zeta^{[1]}(\lambda_0))a^{[2]\prime}(\zeta^{[2]}(\lambda_0))}.
\label{eq:type-solsplit-beta32}
\end{equation}
Applying \eqref{eq:type-2-beta-relations} or \eqref{eq:type-solsplit-beta-relations} (depending on whether $\varphi=0$ or not) to \eqref{eq:type-solsplit-beta31} and \eqref{eq:type-solsplit-beta32} yields the connection coefficients $\beta_{12}^{[\mathrm{type}]}$ and $\beta_{13}^{[\mathrm{type}]}$ needed to characterize the residues of $\mathbf{M}^-(\lambda)$ at $\lambda=\lambda_0$:
\begin{equation}
\beta_{13}^{[\mathrm{type}]}=\frac{2\gamma^{[2]}e^{-i\theta^{[2]}}}{\Delta^{[2]}a^{[1]}(\zeta^{[1]}(\lambda_0))
a^{[2]\prime}(\zeta^{[2]}(\lambda_0))\tau^{[2]}}
\label{eq:type-solsplit-beta13}
\end{equation}
and
\begin{equation}
\beta_{12}^{[\mathrm{type}]}=-\frac{2\gamma^{[2]}e^{-i\theta^{[2]}}\varphi}{\Delta^{[2]}a^{[2]\prime}(\zeta^{[2]}(\lambda_0))a^{[3]}(\zeta^{[3]}(\lambda_0))}.
\label{eq:type-solsplit-beta12}
\end{equation}
The type of the pole is $2$ if $\varphi=0$ and $\solsplit$ if $\varphi\neq 0$.  The connection coefficients contain information about all three fields even though the pole $\lambda_0$ is generated from the field $q^{[2]}(x,0)$ independently.

The fact that the pole of $\mathbf{M}^\sigma(\lambda)$ corresponding to a Zakharov-Shabat eigenvalue of $A^{[2]}(x)$ cannot be of type $\solfuse$ can also be explained intuitively because, as a consequence of the ordering \eqref{eq:c-inequalities} and \eqref{eq:disjoint-support-assumption} of the velocities and initial supports respectively, we have $q^{[2]}(x,t)=q^{[2]}(x-c^{[2]}t,0)$ for all $t\le 0$.  Therefore all solitons traveling with velocity $c^{[2]}$ at time $t=0$ remain unchanged backwards in time (the packets only interact for $t\ge t_0>0$) and there is no mechanism for them to divide into solitons with velocities $c^{[1]}$ and $c^{[3]}$ in the negative $t$ direction.

\section{Semiclassical Approximation}
\label{sec4-semiclassical-approx}
Assuming that the zeros of $a^{[k]}(\zeta^{[k]}(\lambda))$ are distinct, simple, and finite in number,
all data needed to formulate the Riemann-Hilbert problem of inverse scattering for initial conditions of the form \eqref{eq:initial-data} subject to the support assumption \eqref{eq:disjoint-support-assumption} has been determined
in terms of quantities computable from $2\times 2$ systems of nonselfadjoint Zakharov-Shabat type.  Now we wish to approximate this data in the limit $\epsilon\downarrow 0$, and to do so we shall rely on formulae arising from the semiclassical asymptotic theory of the Zakharov-Shabat problem \eqref{eq:ZS-Jost} that is summarized in Appendix~\ref{subapp-semiclassical-ZS}.

In order to apply the asymptotic formulae from Appendix~\ref{subapp-semiclassical-ZS}, we now assume that the disjointly supported non-negative amplitude functions $H^{[k]}(x)$, $k=1,2,3$, are all continuous \emph{Klaus-Shaw} potentials, i.e., having exactly one peak.  Then it follows \cite{KlausS02} that the discrete eigenvalues $\zeta$ of the Zakharov-Shabat system \eqref{eq:ZS-Jost} with potential $A(x)=A^{[k]}(x)$ (also Klaus-Shaw by simple rescaling) and spectral parameter $\zeta=\zeta^{[k]}(\lambda)$ given by \eqref{eq:A1-zeta1}, \eqref{eq:A2-zeta2}, or \eqref{eq:A3-zeta3} are purely imaginary numbers, and hence in the $\lambda$-plane the corresponding points lie on vertical lines with real parts equal to $\ell^{[k]}$, $k=1,2,3$, where
\begin{equation}
\ell^{[1]}:=-\frac{\kappa^{[1]}}{\Delta^{[1]}},\quad\ell^{[2]}:=\frac{\kappa^{[2]}}{\Delta^{[2]}}, \quad\text{and}\quad
\ell^{[3]}:=-\frac{\kappa^{[3]}}{\Delta^{[3]}}.  
\label{eq:ell-k}
\end{equation}
To make sure that the functions $a^{[k]}(\zeta^{[k]}(\lambda))$ have no common zeros for $\lambda\in\mathbb{C}_+$,
we assume further that the values $\ell^{[k]}$, $k=1,2,3$, are all distinct.

\subsection{Quantization of amplitudes}  
\label{subsec:quantization}
To use the asymptotic formulae in Appendix~\ref{subapp-semiclassical-ZS}, we also need to relate the semiclassical limit $\epsilon\downarrow 0$ to a discrete limit by ensuring that for $k=1,2,3$, the condition
\begin{equation}
\epsilon = \frac{1}{N^{[k]}\pi}\int_\mathbb{R}A^{[k]}(x)\,\dd x
\label{eq:ZS-epsilon-assumption-first}
\end{equation}
holds for $\epsilon>0$ and some positive integers $N^{[k]}$ (cf., \eqref{eq:ZS-epsilon-assumption}).
Indeed, \eqref{eq:ZS-epsilon-assumption-first} implies that the nonselfadjoint Zakharov-Shabat system 
\eqref{eq:ZS-Jost} with Klaus-Shaw potential $A(x)=A^{[k]}(x)$ has precisely $N^{[k]}$ purely imaginary
eigenvalues in the upper half-plane \cite{KlausS02}.
Here, the (Klaus-Shaw) potentials $A^{[k]}(x)$, $k=1,2,3$, are defined by \eqref{eq:A1-zeta1}, \eqref{eq:A2-zeta2}, and \eqref{eq:A3-zeta3}.  Thus the limit $\epsilon\downarrow 0$ corresponds to the integers $N^{[k]}$ growing without bound.

If the integrals of $A^{[k]}$ are commensurate in the sense that there exist positive integers $M^{[k]}$, $k=1,2,3$, and some positive real number $E$ such that
\begin{equation}
\int_\mathbb{R}A^{[k]}(x)\,\dd x = M^{[k]}E,\quad k=1,2,3,
\end{equation}
then the condition \eqref{eq:ZS-epsilon-assumption-first} will hold for each $A=A^{[k]}$ with a common value of $\epsilon>0$ provided we also choose the integers $N^{[k]}$ in the form
\begin{equation}
N^{[k]}=M^{[k]}N,\quad N_0=0,1,2,3,\dots.
\label{eq:N-commensurate-quantize}
\end{equation}
We may therefore define a discrete sequence $\{\epsilon_N\}_{N=1}^\infty$ such that
$\epsilon_N\downarrow 0$ as $N\to\infty$ by setting 
\begin{equation}
\epsilon_N:=\frac{E}{N\pi},
\end{equation}
so that replacing $\epsilon$ with $\epsilon_N$ and using \eqref{eq:N-commensurate-quantize}, \eqref{eq:ZS-epsilon-assumption-first} holds for all $N$.  This allows the semiclassical limit to be explored with $\epsilon\downarrow 0$ along the discrete sequence $\{\epsilon_N\}_{N=1}^\infty$.

We wish to avoid the assumption that the integrals of $A^{[k]}$ are commensurate, and we would also like to be able to let $\epsilon\downarrow 0$ along any sequence, or indeed as a continuous variable, but we still
require \eqref{eq:ZS-epsilon-assumption-first} to hold.
We therefore proceed differently, by first approximating the given potentials $A^{[k]}$ by commensurate potentials obtained by multiplying each $A^{[k]}$ by a scale factor $f^{[k]}(\epsilon)$ such that $f^{[k]}(\epsilon)=1+\littleo{1}$ as $\epsilon\downarrow 0$.  The procedure we will now describe can also be applied in the commensurate case, and it allows $\epsilon$ to tend to zero as a continuous variable.  We call it \emph{quantization of amplitudes}.

Given the disjointly supported Klaus-Shaw potentials $A^{[k]}(x)$, $k=1,2,3$, we define three integer-valued functions of $\epsilon>0$ by
\begin{equation}
N^{[k]}=N^{[k]}(\epsilon):=\left\llbracket \frac{1}{\epsilon\pi}\int_\mathbb{R}A^{[k]}(x)\,\dd x\right\rrbracket,
\quad\epsilon>0,
\label{eq:N-of-epsilon}
\end{equation}
where $\llbracket\cdot\rrbracket$ denotes the nearest integer function defined for concreteness to satisfy $\llbracket n+\tfrac{1}{2}\rrbracket=n+1$ for $n\in\mathbb{Z}$.  Obviously $N^{[k]}(\epsilon)$ is nondecreasing as $\epsilon>0$ decreases, and $N^{[k]}(\epsilon)\to\infty$ as $\epsilon\downarrow 0$.  Then define a renormalization factor by 
\begin{equation}
f^{[k]}(\epsilon):=N^{[k]}(\epsilon)\pi\epsilon\left[\int_\mathbb{R}A^{[k]}(x)\,\dd x\right]^{-1},\quad\epsilon>0.
\label{eq:renormalization-factor}
\end{equation}
It is easy to see that $f^{[k]}(\epsilon)=1+\bigo{\epsilon}$ as $\epsilon\downarrow 0$, and moreover we have the obvious identity
\begin{equation}
\epsilon = \frac{1}{N^{[k]}(\epsilon)\pi}\int_\mathbb{R}f^{[k]}(\epsilon)A^{[k]}(x)\,\dd x
\end{equation}
which should be compared with \eqref{eq:ZS-epsilon-assumption-first}.  Therefore, by quantization of amplitudes we mean the following procedure:  given the $\epsilon$-independent Klaus-Shaw functions $H^{[k]}(x)$ (and hence $A^{[k]}(x)$ via \eqref{eq:A1-zeta1}, \eqref{eq:A2-zeta2}, and \eqref{eq:A3-zeta3}) and $\epsilon>0$ we will \emph{replace} $A^{[k]}(x)$ by 
\begin{equation}
A_\epsilon^{[k]}(x):=f^{[k]}(\epsilon)A^{[k]}(x), \quad k=1,2,3, 
\label{eq:rescaled-potentials}
\end{equation}
and study the Cauchy problem with the corresponding rescaled initial data $q^{[k]}_\epsilon(x,0):=f^{[k]}(\epsilon)q^{[k]}(x,0)$, $k=1,2,3$.  The number of Zakharov-Shabat eigenvalues in $\mathbb{C}_+$ generated by the (barely, when $\epsilon>0$ is small) rescaled potential $A(x)=A_\epsilon^{[k]}(x)$ in \eqref{eq:ZS-Jost} will then be exactly $N^{[k]}(\epsilon)$ for each $\epsilon>0$, and we may use all of the asymptotic formulae recorded in Appendix~\ref{subapp-semiclassical-ZS} in the continuous limit $\epsilon\downarrow 0$.  In particular, these formula prescribe $N^{[k]}(\epsilon)$ positive imaginary approximate eigenvalues in one-to-one correspondence with the actual (positive imaginary) eigenvalues.  Since the modification of the initial data involves multiplying by $x$-independent factors of the form $1+\bigo{\epsilon}$, it is clear that it corresponds to a relative error of $\bigo{\epsilon}$ in any norm.

\subsection{The semiclassical soliton ensemble}
\label{section:semiclassical-soliton-ensemble}
We now explain how we will approximate in the limit $\epsilon\downarrow 0$ the scattering data for the Cauchy problem for the TWRI system \eqref{3wave} with initial data of the form \eqref{eq:initial-data} having quantized disjointly supported Klaus-Shaw amplitudes.  

According to Appendix~\ref{subapp-semiclassical-ZS}, under the assumptions in force $b^{[k]}(\zeta^{[k]}(\lambda))=\littleo{1}$ as $\epsilon\downarrow 0$ uniformly for $\lambda\in\mathbb{R}$, for $k=1,2,3$.  According to \eqref{eq:S1-pieces}, \eqref{eq:S2-pieces}, and \eqref{eq:S3-pieces}, the partial scattering matrices $\mathbf{S}^{[k]}(\lambda)$, $k=1,2,3$, are all approximately diagonal, and hence the same is true for their product $\mathbf{S}(\lambda)=\mathbf{S}^{[1]}(\lambda)\mathbf{S}^{[2]}(\lambda)\mathbf{S}^{[3]}(\lambda)$.  Applying LDU or UDL factorization to $\mathbf{S}(\lambda)$ (cf., \eqref{eq:LDU-UDL}), one sees that the quantities $T_{j\ell}^\sigma(\lambda)$, $j\neq\ell$, are all uniformly small for $\lambda\in\mathbb{R}$, from which it follows via \eqref{eq:TWRI-Jump} that the jump matrix $\mathbf{V}_0^\sigma(\lambda)$ may be approximated with the identity matrix in the semiclassical limit $\epsilon\downarrow 0$. As part of the definition of the semiclassical soliton ensemble, we therefore replace the jump matrix $\mathbf{V}_0^\sigma(\lambda)$ by $\widetilde{\mathbf{V}}_0^\sigma(\lambda):=\mathbb{I}$ for $\lambda\in\mathbb{R}$, making the approximating inverse-scattering problem reflectionless and therefore reducible to finite-dimensional linear algebra.  

The exact poles in $\mathbb{C}_+$ for the problem are the $N^{[1]}+N^{[2]}+N^{[3]}$ distinct values
$\{\lambda^{[k]}_n\}_{n=0}^{N^{[k]}-1}$, $k=1,2,3$, where the $\lambda^{[k]}_n$ are the necessarily simple roots in $\mathbb{C}_+$ of $a^{[k]}(\zeta^{[k]}(\lambda))$ calculated for the quantized amplitude $A(x)=A_\epsilon^{[k]}(x)$ .  We take the poles $\lambda_0^{[k]},\dots,\lambda_{N^{[k]}-1}^{[k]}$ to be ordered along the line $\re\{\lambda\}=\ell^{[k]}$ in the downward direction toward $\mathbb{R}$.  By the WKB theory described in Appendix~\ref{subapp-semiclassical-ZS}, we define approximations to the poles denoted $\{\widetilde{\lambda}^{[k]}_{n}\}_{n=0}^{N^{[k]}-1}$ as follows:
\begin{equation}
\widetilde{\lambda}^{[k]}_{n}:=(\zeta^{[k]})^{-1}(i\widetilde{s}^{[k]}_{n}),\quad n=0,1,\dots,N^{[k]}-1,
\label{eq:approx-eigenvals}
\end{equation}
in which $(\zeta^{[k]})^{-1}$ denotes the (linear) inverse of the function $\zeta^{[k]}$ and the positive numbers $\widetilde{s}^{[k]}_{n}$ are determined from the Bohr-Sommerfeld quantization rule
\begin{equation}
\Psi^{[k]}(i\widetilde{s}^{[k]}_{n})=(n+\tfrac{1}{2})\epsilon\pi = \frac{2n+1}{2N^{[k]}}\int_\mathbb{R}A_\epsilon^{[k]}(x)\,\dd x,\quad n=0,1,\dots,N^{[k]}-1,
\label{eq:ensemble-eigenvalues}
\end{equation}
where the phase integral is given by 
\begin{equation}
\Psi^{[k]}(is):=\int_{x^{[k]}_-(s)}^{x^{[k]}_+(s)}\sqrt{A_\epsilon^{[k]}(x)^2-s^2}\,\dd x,\quad 0<s<\max_{x\in\mathbb{R}}A_\epsilon^{[k]}(x),
\label{eq:phase-integral}
\end{equation}
with $x^{[k]}_-(s)<x^{[k]}_+(s)$ being the two roots of the equation $A_\epsilon^{[k]}(x)=s$.  Note that the approximations $\widetilde{\lambda}^{[k]}_0,\dots,\widetilde{\lambda}^{[k]}_{N^{[k]}-1}$ are also ordered along the line $\re\{\lambda\}=\ell^{[k]}$ in the direction toward the real axis.  

To complete the definition of the scattering data for the semiclassical soliton ensemble we need to approximate the connection coefficients for the simple poles.  Aside from explicit known constants that do not require approximation, the exact formulae \eqref{eq:type-1-beta32}, \eqref{eq:type-1-beta23}, \eqref{eq:type-3-beta21}, \eqref{eq:type-3-beta12}, \eqref{eq:type-solsplit-beta31}, \eqref{eq:varphi-define}--\eqref{eq:type-solsplit-beta32}, \eqref{eq:type-solsplit-beta13}, and \eqref{eq:type-solsplit-beta12} involve (i) the functions $a^{[k]}(\zeta^{[k]}(\lambda))$ and their derivatives evaluated at the poles, (ii) proportionality constants $\tau^{[k]}$ associated with eigenfunctions of \eqref{eq:ZS-Jost}, and (iii) Jost solutions of \eqref{eq:ZS-Jost} evaluated when $\zeta$ is not an eigenvalue at values of $x$ outside the support of the quantized amplitude $A(x)$.  All of these quantities require approximation in the semiclassical limit $\epsilon\downarrow 0$, and we consider them in turn.

Having neglected the functions $b^{[k]}(\zeta^{[k]}(\lambda))$ and approximated the zeros of the functions $a^{[k]}(\zeta^{[k]}(\lambda))$ to yield $\{\widetilde{\lambda}_n^{[k]}\}_{n=0}^{N^{[k]}-1}$ for $k=1,2,3$, according to Appendix~\ref{subapp-semiclassical-ZS} the latter functions are themselves approximated for $\lambda\in\mathbb{C}_+$ by
\begin{equation}
a^{[k]}(\zeta^{[k]}(\lambda))\approx\widetilde{a}^{[k]}(\zeta^{[k]}(\lambda)),\quad \widetilde{a}^{[k]}(\zeta):=\prod_{n=0}^{N^{[k]}-1}\frac{\zeta-i\widetilde{s}^{[k]}_{n}}{\zeta+i\widetilde{s}^{[k]}_{n}}.
\label{eq:tilde-a-ensemble}
\end{equation}
The derivative of $a^{[k]}(\zeta^{[k]}(\lambda))$ with respect to $\zeta$ at the pole $\lambda_m^{[k]}$ approximated by $\widetilde{\lambda}^{[k]}_{m}$ is approximated by
\begin{equation}
a^{[k]\prime}(\zeta^{[k]}(\lambda))\approx\widetilde{a}^{[k]\prime}(i\widetilde{s}^{[k]}_{m})=\frac{1}{i}
\frac{\prod_{n\neq m}(\widetilde{s}^{[k]}_{m}-\widetilde{s}^{[k]}_{n})}{\prod_{n=0}^{N^{[k]}-1}(\widetilde{s}^{[k]}_{m}+\widetilde{s}^{[k]}_{n})}.
\label{eq:tilde-a-prime}
\end{equation}

Again referring to Appendix~\ref{subapp-semiclassical-ZS}, the proportionality constant $\tau^{[k]}$ associated with the Zakharov-Shabat eigenvalue approximated by $\zeta^{[k]}(\widetilde{\lambda}^{[k]}_{n})$ is approximated, \emph{mutatis mutandis}, by \eqref{eq:ZS-proportionality-constant}--\eqref{eq:mu-define}; namely for any integer $K$,
\begin{equation}
\tau^{[k]}\approx\widetilde{\tau}^{[k]}_n:=i(-1)^{K}\exp\left(\frac{i}{\epsilon}(2K+1)\Psi^{[k]}(i\widetilde{s}^{[k]}_{n})\right)\exp\left(\frac{1}{\epsilon}\mu^{[k]}(i\widetilde{s}^{[k]}_{n})\right).
\label{eq:tilde-tau}
\end{equation}
Here the function $\mu^{[k]}(is)$ is defined by \eqref{eq:mu-define}, replacing $A(x)$ and $x_\pm(s)$ by $A_\epsilon^{[k]}(x)$ and $x^{[k]}_\pm(s)$, respectively.  The formula \eqref{eq:tilde-tau} is independent\footnote{Keeping the integer $K$ arbitrary will be useful in applying steepest descent asymptotic analysis to the Riemann-Hilbert formulation of the semiclassical soliton ensemble, as it allows the use of multiple analytic interpolants of the same quantities $\widetilde{\tau}^{[k]}_n$ at the corresponding approximate poles $\widetilde{\lambda}_n^{[k]}$.  See \cite{Buckingham:inprep}.} of the choice of $K\in\mathbb{Z}$ due to the definition \eqref{eq:ensemble-eigenvalues}.

In \eqref{eq:type-1-beta32}--\eqref{eq:type-1-beta23} the quantity $w^{[3]+,2}_2(x_{23};\lambda_0)e^{-i\zeta^{[3]}(\lambda_0)x_{23}/\epsilon}$ appears in which $\lambda_0$ is a simple zero of $a^{[1]}(\zeta^{[1]}(\lambda))$, while in \eqref{eq:type-3-beta21}--\eqref{eq:type-3-beta12} the quantity $w^{[1]-,1}_1(x_{12};\lambda_0)e^{i\zeta^{[1]}(\lambda_0)x_{12}/\epsilon}$ appears in which $\lambda_0$ is a simple zero of $a^{[3]}(\zeta^{[3]}(\lambda))$.  These expressions involve solutions of \eqref{eq:ZS-Jost} for potentials $A=A_\epsilon^{[1]}$ and $A=A_\epsilon^{[3]}$ respectively, normalized at $x=-\infty$ and $x=+\infty$ respectively, and evaluated at a point on the other side of the support of the potential.
According to Appendix~\ref{subapp-semiclassical-ZS}, in the limit $\epsilon\downarrow 0$ these expressions can be analyzed using the WKB method in the absence of turning points, and they satisfy
\begin{equation}
\begin{split}
w^{[3]+,2}_2(x_{23};\lambda_0)e^{-i\zeta^{[3]}(\lambda_0)x_{23}/\epsilon}&=e^{-L^{[3]}(\zeta^{[3]}(\lambda))/\epsilon}(1+\littleo{1}),\\
w^{[1]-,1}_1(x_{12};\lambda_0)e^{i\zeta^{[1]}(\lambda_0)x_{12}/\epsilon}&=e^{-L^{[1]}(\zeta^{[1]}(\lambda))/\epsilon}(1+\littleo{1}),
\end{split}
\label{eq:Jost-1-approx}
\end{equation}
where for $\zeta$ in the upper half-plane omitting the segment $0\le -i\zeta\le \max_{x\in\mathbb{R}}A_\epsilon^{[k]}(x)$,
\begin{equation}
L^{[k]}(\zeta):=\int_\mathbb{R}\left[(-\zeta^2-A_\epsilon^{[k]}(y)^2)^{1/2}+i\zeta\right]\,\dd y.
\label{eq:L-quantized}
\end{equation}
See \eqref{eq:approximate-Josts}--\eqref{eq:Lfunc-def}.

The above approximations are sufficient to determine approximations of the connection coefficients for poles of types 1 and 3.  Indeed, replacing the functions $a^{[k]}(\zeta^{[k]}(\lambda))$ by $\widetilde{a}^{[k]}(\zeta^{[k]}(\lambda))$ according to \eqref{eq:tilde-a-ensemble}--\eqref{eq:tilde-a-prime}, replacing $\tau^{[k]}$ by $\widetilde{\tau}_n^{[k]}$ given by \eqref{eq:tilde-tau}, and neglecting the error terms in the
approximations \eqref{eq:Jost-1-approx} and substituting into \eqref{eq:type-1-beta32}--\eqref{eq:type-1-beta23}, we define
\begin{gather}
\widetilde{\beta}_{n,32}^{[1]}:=
-\frac{2\gamma^{[1]}e^{-i\theta^{[1]}}\widetilde{\tau}^{[1]}_ne^{-L^{[3]}(\zeta^{[3]}(\widetilde{\lambda}_n^{[1]}))/\epsilon}}{\Delta^{[1]}\widetilde{a}^{[1]\prime}(i\widetilde{s}^{[1]}_{n})\widetilde{a}^{[2]}(\zeta^{[2]}(\widetilde{\lambda}^{[1]}_n))}
\label{eq:beta32-approx}
\intertext{and} 
\widetilde{\beta}_{n,23}^{[1]}:=
-\frac{2\gamma^{[1]}e^{i\theta^{[1]}}\widetilde{a}^{[3]}(\zeta^{[3]}(\widetilde{\lambda}^{[1]}_n))}{
\Delta^{[1]}\widetilde{a}^{[1]\prime}(i\widetilde{s}^{[1]}_{n})\widetilde{\tau}^{[1]}_ne^{-L^{[3]}(\zeta^{[3]}(\widetilde{\lambda}_n^{[1]}))/\epsilon}}
\label{eq:beta23-approx}
\end{gather}
as the approximate connection coefficients for the approximate type 1 pole $\widetilde{\lambda}_n^{[1]}$ for $n=0,\dots,N^{[1]}(\epsilon)$.  Likewise substituting into 
\eqref{eq:type-3-beta21}--\eqref{eq:type-3-beta12}, we define
\begin{gather}
\widetilde{\beta}^{[3]}_{n,21}:=
-\frac{2\gamma^{[3]}e^{-i\theta^{[3]}}\widetilde{a}^{[1]}(\zeta^{[1]}(\widetilde{\lambda}^{[3]}_n))\widetilde{\tau}^{[3]}_n}{\Delta^{[3]}\widetilde{a}^{[3]\prime}(i\widetilde{s}^{[3]}_{n})
e^{-L^{[1]}(\zeta^{[1]}(\widetilde{\lambda}^{[3]}_n))/\epsilon}}
\label{eq:beta21-approx}
\intertext{and}
\widetilde{\beta}^{[3]}_{n,12}:=
-\frac{2\gamma^{[3]}e^{i\theta^{[3]}}e^{-L^{[1]}(\zeta^{[1]}(\widetilde{\lambda}^{[3]}_n))/\epsilon}}{\Delta^{[3]}\widetilde{a}^{[3]\prime}(i\widetilde{s}^{[3]}_{n})\widetilde{a}^{[2]}(\zeta^{[2]}(\widetilde{\lambda}^{[3]}_n))\widetilde{\tau}^{[3]}_n}
\label{eq:beta12-approx}
\end{gather}
as the approximate connection coefficients for the approximate type 3 pole $\widetilde{\lambda}_n^{[3]}$ for $n=0,\dots,N^{[3]}(\epsilon)$.

To approximate the connection coefficients for the simple poles contributed by the central packet $q^{[2]}(x,0)$, we first need to determine whether these are poles of type $2$ or poles of type $\solsplit$, i.e., whether $\varphi$ given by \eqref{eq:varphi-define} vanishes or not.  The two terms in $\varphi$ are proportional to the expressions $e^{L^{[3]}(\zeta^{[3]}(\lambda))/\epsilon}e^{-i\zeta^{[3]}(\lambda_0)x_{23}/\epsilon}w_1^{[3]+,2}(x_{23};\lambda_0)$ and
$e^{L^{[1]}(\zeta^{[1]}(\lambda))/\epsilon}e^{i\zeta^{[1]}(\lambda_0)x_{12}/\epsilon}w_2^{[1]-,1}(x_{12};\lambda_0)$ respectively, and according to the discussion at the end of Appendix~\ref{subapp-semiclassical-ZS} these quantities are small as $\epsilon\downarrow 0$, but it is difficult to ascertain exactly how small they are; indeed in the case that the potentials are all infinitely differentiable, these have complete WKB expansions in powers of $\epsilon$ each term of which vanishes, making them small beyond all orders.  Rather than deal with the question of approximating these small quantities, for the purposes of this paper we will replace them with zero, i.e., we will approximate $\varphi$ with $\widetilde{\varphi}=0$ for each pole generated by the central packet.  In other words, \emph{for the purposes of semiclassical approximation, all poles generated by the central packet $q^{[2]}(x,0)$ are taken to be poles of type 2 rather than type $\solsplit$.}  
It therefore only remains to approximate the connection coefficients associated with poles (of type 2) having approximations $\{\widetilde{\lambda}^{[2]}_n\}_{n=0}^{N^{[2]}-1}$ all lying on the vertical line $\re\{\widetilde{\lambda}_n^{[2]}\}=\ell^{[2]}$.  Substituting into \eqref{eq:type-solsplit-beta31}, we define 
\begin{gather}
\widetilde{\beta}^{[2]}_{n,31}:=\frac{2\gamma^{[2]}e^{i\theta^{[2]}}\widetilde{\tau}^{[2]}_n}{\Delta^{[2]}\widetilde{a}^{[2]\prime}(i\widetilde{s}^{[2]}_{n})
\widetilde{a}^{[3]}(\zeta^{[3]}(\widetilde{\lambda}^{[2]}_n))}
\label{eq:tilde-type-solsplit-beta31}
\intertext{and, substituting into \eqref{eq:type-solsplit-beta13},}
\widetilde{\beta}^{[2]}_{n,13}:=
\frac{2\gamma^{[2]}e^{-i\theta^{[2]}}}{\Delta^{[2]}\widetilde{a}^{[1]}(\zeta^{[1]}(\widetilde{\lambda}^{[2]}_n))\widetilde{a}^{[2]\prime}(i\widetilde{s}^{[2]}_{n})\widetilde{\tau}^{[2]}_n}
\label{eq:tilde-type-solsplit-beta13}
\end{gather}
as the approximate connection coefficients for the type 2 approximate pole $\widetilde{\lambda}_n^{[2]}$ for $n=0,\dots,N^{[2]}(\epsilon)-1$.

We therefore arrive at the following formal definition.
\begin{definition}[Semiclassical soliton ensembles for the TWRI equations]
\label{def:SSE}
Let Cauchy initial data of the form \eqref{eq:initial-data} subject to the disjoint-support condition \eqref{eq:disjoint-support-assumption} be
given such that the amplitudes $H^{[k]}(\cdot)$ are all continuous Klaus-Shaw functions.  For each $\epsilon>0$, the corresponding semiclassical soliton ensemble is the solution $\{\widetilde{q}^{[k]}(x,t)\}_{k=1}^3$ of the TWRI equations \eqref{3wave} generated from the following scattering data:
\begin{itemize}
\item Trivial jump:  the jump matrix across $\mathbb{R}$ is $\widetilde{\mathbf{V}}^\sigma_0(\lambda):=\mathbb{I}$.  
\item $N^{[1]}(\epsilon)$ simple poles of type 1 in $\mathbb{C}_+$ denoted $\{\widetilde{\lambda}_n^{[1]}\}_{n=0}^{N^{[1]}(\epsilon)-1}$ and defined by \eqref{eq:approx-eigenvals}--\eqref{eq:phase-integral} with connection coefficients $\{\widetilde{\beta}_{n,32}^{[1]}\}_{n=0}^{N^{[1]}(\epsilon)-1}$ and $\{\widetilde{\beta}_{n,23}^{[1]}\}_{n=0}^{N^{[1]}(\epsilon)-1}$ defined by \eqref{eq:beta32-approx}--\eqref{eq:beta23-approx}.
\item $N^{[2]}(\epsilon)$ simple poles of type 2 in $\mathbb{C}_+$ denoted $\{\widetilde{\lambda}_n^{[2]}\}_{n=0}^{N^{[2]}(\epsilon)-1}$ and defined by \eqref{eq:approx-eigenvals}--\eqref{eq:phase-integral} with connection coefficients $\{\widetilde{\beta}_{n,31}^{[2]}\}_{n=0}^{N^{[2]}(\epsilon)-1}$ and $\{\widetilde{\beta}_{n,13}^{[2]}\}_{n=0}^{N^{[2]}(\epsilon)-1}$ defined by \eqref{eq:tilde-type-solsplit-beta31}--\eqref{eq:tilde-type-solsplit-beta13}.
\item $N^{[3]}(\epsilon)$ simple poles of type 3 in $\mathbb{C}_+$ denoted $\{\widetilde{\lambda}_n^{[3]}\}_{n=0}^{N^{[3]}(\epsilon)-1}$ and defined by \eqref{eq:approx-eigenvals}--\eqref{eq:phase-integral} with connection coefficients $\{\widetilde{\beta}_{n,21}^{[1]}\}_{n=0}^{N^{[3]}(\epsilon)-1}$ and $\{\widetilde{\beta}_{n,12}^{[3]}\}_{n=0}^{N^{[3]}(\epsilon)-1}$ defined by \eqref{eq:beta21-approx}--\eqref{eq:beta12-approx}.
\item No simple poles of types $\solsplit$ or $\solfuse$, or any other singularities.
\end{itemize}
\end{definition}
The semiclassical soliton ensemble for initial data \eqref{eq:initial-data} is intended to be a good approximation to the actual solution of the Cauchy initial value problem and in particular should be an accurate approximation at $t=0$.

We wish to emphasize that the approximation of the poles contributed by the central packet $q^{[2]}(x,0)$ as poles of type 2 only by neglecting $\varphi$ is of a particularly uncontrolled nature.  Indeed, while each term of $\varphi$ contains a factor that is hard to approximate but small, there are also other factors that can be exponentially large in the limit $\epsilon \to 0$.  It is worth observing that this difficulty disappears entirely if one assumes that $H^{[2]}(x)\equiv 0$, i.e., there is nothing in the central channel at $t=0$, as then one has $a^{[2]}(\zeta^{[2]}(\lambda))\equiv 1$ and there are no controversial poles at all.  The accuracy of the semiclassical soliton ensemble in the case $H^{[2]}(x)\equiv 0$ will be confirmed numerically in \S\ref{subsec:two-packets-convergence}, and it will be shown in \S\ref{subsec:three-packets-convergence} that under some additional conditions the approximation $\varphi\approx 0$ is valid when $H^{[2]}(x)$ does not vanish identically, although it can also fail.

\subsection{Semiclassical soliton ensembles for colliding semicircular packets}
\label{subsec-data-for-plots}
Initial data of the form \eqref{eq:initial-data} for which the amplitude functions are semicircular profiles,
\begin{equation}
H^{[k]}(x):=\frac{2H_\mathrm{max}^{[k]}\chi_{(a^{[k]},b^{[k]})}(x)}{b^{[k]}-a^{[k]}}\sqrt{(x-a^{[k]})(b^{[k]}-x)},\quad k=1,2,3,
\label{eq:semicircular-H}
\end{equation}
are particularly convenient for the study of semiclassical soliton ensembles.  Here,
$H^{[k]}_\mathrm{max}\ge 0$ is the maximum value of $H^{[k]}(x)$ and the support endpoints satisfy $a^{[1]}<b^{[1]}<a^{[2]}<b^{[2]}<a^{[3]}<b^{[3]}$ for consistency with \eqref{eq:disjoint-support-assumption}.  Obviously the functions $H^{[k]}(x)$ all satisfy the Klaus-Shaw condition.  The main convenience of semicircular amplitude profiles stems from the fact that the quantities $\Psi^{[k]}(is)$, $L^{[k]}(\zeta)$, and $\mu^{[k]}(is)$, which are normally given by integral transforms of the rescaled and quantized amplitude functions $A_\epsilon^{[k]}(x)$ (see \eqref{eq:A1-zeta1}, \eqref{eq:A2-zeta2}, \eqref{eq:A3-zeta3}, \eqref{eq:renormalization-factor} and the formulae \eqref{eq:ensemble-eigenvalues}, \eqref{eq:L-quantized}, and \eqref{eq:mu-define} with $A$ and $x_\pm$ replaced by $A^{[k]}_\epsilon$ and $x^{[k]}_\pm$ respectively), can be computed in closed-form.  This dramatically speeds up subsequent numerical calculations.  

Indeed, since \eqref{eq:semicircular-H} implies that
\begin{equation}
\int_\mathbb{R} A^{[k]}(x)\,\dd x=\frac{\sqrt{\Delta^{[k]}}}{\sqrt{\Delta^{[1]}\Delta^{[2]}\Delta^{[3]}}}
\int_\mathbb{R} H^{[k]}(x)\,\dd x = \frac{\sqrt{\Delta^{[k]}}(b^{[k]}-a^{[k]})\pi H^{[k]}_\mathrm{max}}{4\sqrt{\Delta^{[1]}\Delta^{[2]}\Delta^{[3]}}},
\end{equation}
$N^{[k]}(\epsilon)$ and $f^{[k]}(\epsilon)$ can be calculated from \eqref{eq:N-of-epsilon} and \eqref{eq:renormalization-factor} without computing any integrals.  Let $A_\mathrm{max}^{[k]}(\epsilon)$ denote the maximum value of $A^{[k]}_\epsilon(x)$:
\begin{equation}
A^{[k]}_\mathrm{max}(\epsilon):=\frac{\sqrt{\Delta^{[k]}}f^{[k]}(\epsilon)H^{[k]}_\mathrm{max}}{\sqrt{\Delta^{[1]}\Delta^{[2]}\Delta^{[3]}}},\quad k=1,2,3.
\end{equation}
Then, 
a residue calculation shows the phase integral \eqref{eq:phase-integral} is explicitly given for the semicircular initial data by
\begin{equation}
\Psi^{[k]}(is)=\frac{\pi(b^{[k]}-a^{[k]})}{4A^{[k]}_\mathrm{max}(\epsilon)}[(A^{[k]}_\mathrm{max}(\epsilon))^2-s^2], \quad 0<s<A_\mathrm{max}^{[k]}(\epsilon),\quad k=1,2,3.
\end{equation}
Similarly, $L^{[k]}(\zeta)$ defined by \eqref{eq:L-quantized} can be calculated explicitly:
\begin{equation}
L^{[k]}(\zeta) =
	\frac{a^{[k]}-b^{[k]}}{4A_\mathrm{max}^{[k]}(\epsilon)}
	\left(2i\zeta A_\mathrm{max}^{[k]}(\epsilon) + (A_\mathrm{max}^{[k]}(\epsilon)^2+\zeta^2)
	\log\left(\frac{i\zeta+A_\mathrm{max}^{[k]}(\epsilon)}{i\zeta-A_\mathrm{max}^{[k]}(\epsilon)}\right)\right), 
	\quad k=1,2,3,
\end{equation}
where $\zeta$ lies in $\mathbb{C}_+$ with the imaginary interval $0\le -i\zeta\le A_\mathrm{max}^{[k]}(\epsilon)$ omitted and the principal branch of the complex logarithm is meant.
Finally, since $A_\epsilon^{[k]}(x)$ is in each case a function that is even about its centroid $\tfrac{1}{2}(a^{[k]}+b^{[k]})$, the formula \eqref{eq:mu-define} gives
\begin{equation}
\mu^{[k]}(is)=\frac{1}{2}(a^{[k]}+b^{[k]})s,\quad 0<s<A^{[k]}_\mathrm{max}(\epsilon),\quad k=1,2,3.
\end{equation}

Another advantage of the semicircular amplitudes \eqref{eq:semicircular-H} is that since the phase integral in each case is quadratic, the Bohr-Sommerfeld quantization rule \eqref{eq:ensemble-eigenvalues} that normally \emph{implicitly} determines the values $\{\widetilde{s}_n^{[k]}\}_{n=0}^{N^{[k]}-1}$ becomes a completely \emph{explicit} formula:
\begin{equation}
\widetilde{s}_n^{[k]} = \sqrt{A^{[k]}_\text{max}(\epsilon)^2-\frac{4\epsilon A^{[k]}_\text{max}(\epsilon)}{b^{[k]}-a^{[k]}}\left(n+\frac{1}{2}\right)}, \quad n=0,\dots,N^{[k]}(\epsilon)-1, \quad k=1,2,3,
\end{equation}
and then the poles of types $1$, $2$, and $3$ for the semiclassical soliton ensemble are explicitly
\begin{equation}
\widetilde{\lambda}_n^{[k]}=\ell^{[k]}+\frac{2\widetilde{s}_n^{[k]}}{\Delta^{[k]}}i,\quad n=0,\dots,N^{[k]}(\epsilon)-1,\quad k=1,2,3,
\end{equation}
where $\ell^{[k]}$ is given by \eqref{eq:ell-k} for $k=1,2,3$.

Applying these explicit results to Definition~\ref{def:SSE} in \S\ref{section:semiclassical-soliton-ensemble} completes the specification of the semiclassical soliton ensemble in the case of initial data \eqref{eq:initial-data} with semicircular amplitude profiles \eqref{eq:semicircular-H}.  To find $\widetilde{q}^{[k]}(x,t)$ one now formulates and solves the corresponding inverse problem.
In this reflectionless setting described in detail in Appendix~\ref{sec:solitons},
the inverse scattering transform amounts to inserting the ansatz \eqref{eq:Mplus-ansatz} for $\mathbf{M}^+(\lambda)$ into the residue conditions \eqref{eq:RHP-pole-C-plus} and \eqref{eq:RHP-pole-C-minus}.
This yields the square linear system \eqref{eq:b-plus-1}--\eqref{eq:b-plus-3}
for the $\mathbb{C}^3$ vector unknowns $\mathbf{b}_n^{[1]+}$ and $\mathbf{c}_n^{[1]+}$ for $n=0,\dots,N^{[1]}(\epsilon)-1$, 
$\mathbf{a}_n^{[2]+}$ and $\mathbf{c}_n^{[2]+}$ for $n=0,\dots,N^{[2]}(\epsilon)-1$, and $\mathbf{a}_n^{[3]+}$ and 
$\mathbf{b}_n^{[3]+}$ for $n=0,\dots,N^{[3]}(\epsilon)-1$
(in all these equations we take\footnote{If we want to consider a semiclassical soliton ensemble for semicircular amplitudes with $H^{[2]}_\mathrm{max}=0$, we also take $N^{[2]}=0$ and omit the equations
\eqref{eq:a-plus-2}--\eqref{eq:c-plus-2} along with the unknowns $\mathbf{a}_n^{[2]+}$ and $\mathbf{c}_n^{[2]+}$ for $n=0,\dots,N^{[2]}(\epsilon)-1$ to obtain a smaller square linear system.} $N^{[\solsplit]} = N^{[\solfuse]}=0$).  Given $(x,t)\in\mathbb{R}^2$ we evaluate the coefficients of this system to high precision and solve for the unknowns numerically.
The three fields $\widetilde{q}^{[k]}(x,t)$, $k=1,2,3$, of the semiclassical soliton ensemble of the TWRI system,
whose scattering data is exactly given by Definition~\ref{def:SSE}, 
are then recovered at the specified $(x,t)\in\mathbb{R}^2$ using \eqref{eq:q-formulas}.  
Repeating this process for a sequence of decreasing values of $\epsilon$, 
we may expect that $\widetilde{q}^{[k]}(x,t)\to q^{[k]}(x,t)$ as $\epsilon\downarrow 0$.

\subsection{Convergence of semiclassical soliton ensembles at $t=0$ without a central packet}  
\label{subsec:two-packets-convergence}
We now illustrate the convergence as $\epsilon\downarrow 0$ of the 
semiclassical soliton ensemble at $t=0$ to the original initial data with 
initial packets in channels 1 and 3 (but not channel 2).  We choose the 
initial condition given by \eqref{eq:initial-data} and \eqref{plot-Hk} with  
parameters \eqref{plot-vals1}.  
The procedure for computing the semiclassical soliton ensembles is 
detailed in \S\ref{subsec:numerical-results}.  
The convergence can be seen visually in Figure \ref{fig-sc13-1d-t0}, where 
we plot $|q^{[k]}(x,0)|=H^{[k]}(x)$ and $|\widetilde{q}^{[k]}(x,0)|$, 
$k=1,3$, for various values of $\epsilon$.  Note that before the collision time of the two semicircles 
(for these parameters the collision occurs at $t=T=1/2$), we have 
$q^{[2]}(x,t)\equiv 0$.  While $\widetilde{q}^{[2]}(x,0)$ is not identically 
zero in these calculations, it is sufficiently close to zero to not be 
distinguishable in the plots.
\begin{figure}[H]
\begin{center}
\includegraphics[height=1.35in]{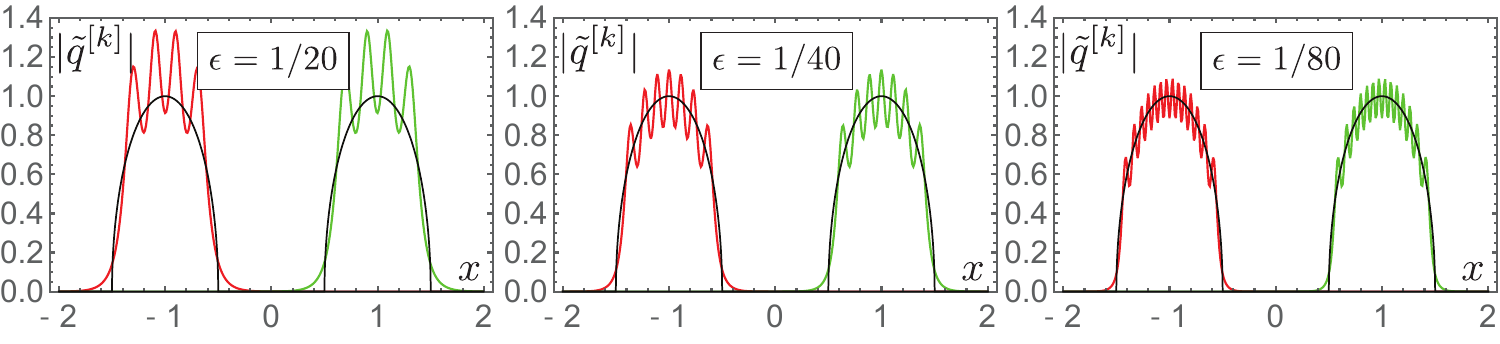}
\caption{Accuracy of the semiclassical soliton ensemble at time $t=0$.  Plots of 
$|q^{[k]}(x,0)|=H^{[k]}(x)$, $k=1,3$,  as defined by 
\eqref{eq:initial-data} and \eqref{plot-Hk} with parameters 
\eqref{plot-vals1} and $\kappa^{[3]}=1$, along 
with the modulus of $\widetilde{q}^{[k]}(x,0)$ obtained from the associated semiclassical soliton ensemble, $k=1,3$, for various $\epsilon$, as 
indicated.  Black:  $H^{[k]}(x)$, $k=1,3$.  Red:  
$|\widetilde{q}^{[1]}(x,0)|$. Green:  $|\widetilde{q}^{[3]}(x,0)|$.}
\label{fig-sc13-1d-t0}
\end{center}
\end{figure}

In Figure \ref{fig-sc13-1d-t0-error} 
we plot the absolute value of the difference between the original data 
$q^{[1]}(x,0)$ and the approximating semiclassical soliton ensemble 
$\widetilde{q}^{[1]}(x,0)$ for various values of $\epsilon$.  The plot with 
$\epsilon=1/80$ suggests the rate 
of convergence is slowest near the semicircle edges (where the initial 
data is nondifferentiable).  The solution in channel 3 displays 
qualitatively similar behavior.
\begin{figure}[H]
\begin{center}
\includegraphics[height=1.35in]{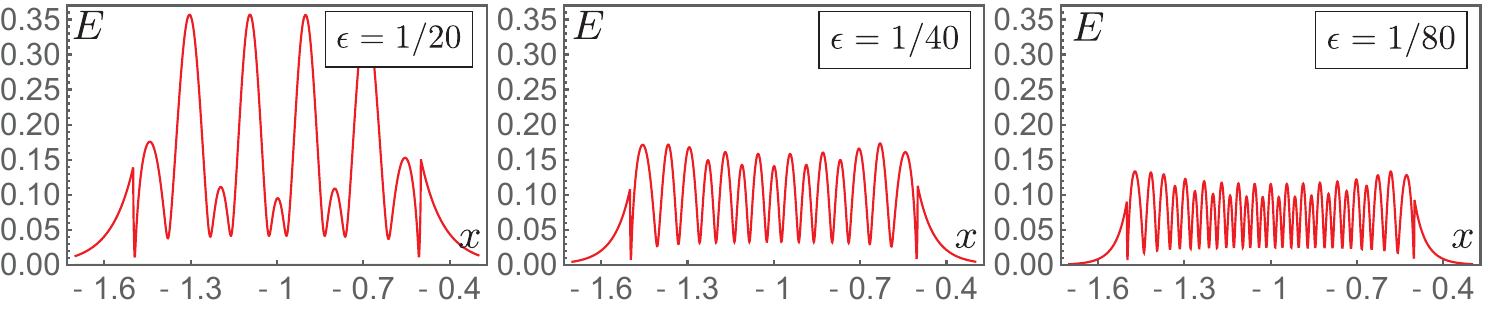}
\caption{Plot of the approximation error 
$E:=|q^{[1]}(x,0)-\widetilde{q}^{[1]}(x,0)|$, where $q^{[1]}$ is defined by 
\eqref{eq:initial-data} and \eqref{plot-Hk} with parameters 
\eqref{plot-vals1} and $\kappa^{[3]}=1$, and 
$\widetilde{q}^{[1]}$ is the corresponding field from the semiclassical soliton ensemble for various 
values of $\epsilon$, as indicated.}
\label{fig-sc13-1d-t0-error}
\end{center}
\end{figure}

Figures \ref{fig-sc13-1d-t0} and \ref{fig-sc13-1d-t0-error} suggest but do not prove 
convergence of the semiclassical soliton ensemble profile at $t=0$ to the original 
initial condition.  However, it is possible to show the convergence for two packets 
rigorously by analyzing the associated Riemann-Hilbert problem.  This will be 
reported elsewhere \cite{Buckingham:inprep}.  Interestingly, while studying the 
Riemann-Hilbert problem it became apparent that convergence does not always hold for 
three packets, as will be shown numerically in 
\S\ref{subsec:three-packets-convergence}.

\subsection{Conditional convergence of semiclassical soliton ensembles at $t=0$ for three 
packets}  
\label{subsec:three-packets-convergence}
We now consider the effect of including a packet in channel 2.  We return to the semicircular amplitudes
considered in \S\ref{subsec-data-for-plots} but now take $H^{[2]}_\mathrm{max}>0$.
For some parameter values the semiclassical soliton ensembles appear to 
converge as $\epsilon\downarrow 0$ at $t=0$ to the given initial data.  For 
example, if we choose the parameters \eqref{plot-vals2}, 
the convergence can be seen visually in Figure \ref{fig-sc123-1d-t0}.  There 
we plot $|q^{[k]}(x,0)|=H^{[k]}(x)$ and $|\widetilde{q}^{[k]}(x,0)|$, $k=1,2,3$, 
for various values of $\epsilon$, where $\widetilde{q}^{[k]}(x,0)$, $k=1,2,3$, are the 
fields of the associated semiclassical soliton ensemble.  
\begin{figure}[h]
\begin{center}
\includegraphics[height=1.35in]{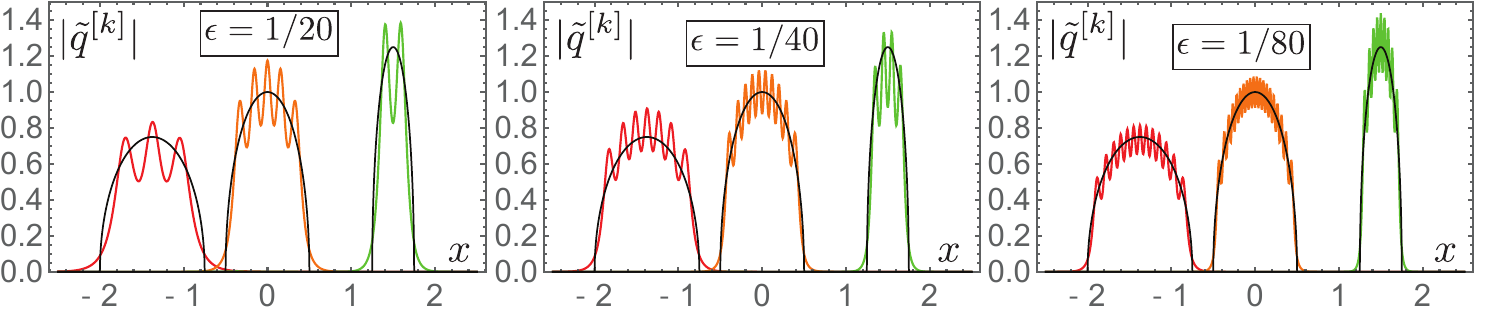}
\caption{Convergence of the semiclassical soliton ensemble at $t=0$ with three 
packets.  Plots of 
$|q^{[k]}(x,0)|=H^{[k]}(x)$ as defined by \eqref{eq:initial-data} and 
\eqref{plot-Hk-sc123} with parameters \eqref{plot-vals2}, along with 
the modulus of the fields $\widetilde{q}^{[k]}(x,0)$, $k=1,2,3$, of the associated semiclassical soliton ensemble for 
various $\epsilon$.  
Black:  $H^{[k]}(x)$, $k=1,2,3$.  Red:  $|\widetilde{q}^{[1]}(x,0)|$.  Orange:  
$|\widetilde{q}^{[2]}(x,0)|$.  Green:  $|\widetilde{q}^{[3]}(x,0)|$.}
\label{fig-sc123-1d-t0}
\end{center}
\end{figure}

However, the situation is different if we choose the parameters 
\eq
\begin{split}
\{c^{[1]}, c^{[2]}, c^{[3]}\}=\left\{1,0,-\frac{1}{5}\right\}, \quad \{\gamma^{[1]}, \gamma^{[2]}, \gamma^{[3]}\} = \{1,-1,1\}, \hspace{1in}\\
\{\kappa_1,\kappa_2,\kappa_3\}=\left\{-1,5,0\right\}, \quad \{H_\text{max}^{[1]},H_\text{max}^{[2]},H_\text{max}^{[3]}\} = \{1,1,1\}, \quad \{\theta^{[1]},\theta^{[2]},\theta^{[3]}\} = \{0,0,0\},\\ 
\{(a^{[1]},b^{[1]}), (a^{[2]},b^{[2]}), (a^{[3]},b^{[3]}) \} = \left\{(-2,-1),\left(-\frac{3}{4},\frac{3}{4}\right),(1,2)\right\}. \hspace{.7in}
\label{plot-vals3}
\end{split}
\endeq
Table \ref{N-for-plot-vals3} gives the associated number of solitons of each type 
for selected $\epsilon$.
\begin{table}[H]
\renewcommand{\arraystretch}{1.1}
\begin{center}
\begin{tabular}{@{}l@{\qquad}c@{\qquad \qquad}c@{\qquad \qquad}c@{}}
\toprule
$\epsilon$ & $N^{[1]}(\epsilon)$ & $N^{[2]}(\epsilon)$ & $N^{[3]}(\epsilon)$ \\
\midrule
0.085 & 3 & 10 & 6 \\
0.075 & 3 & 11 & 7 \\
0.065 & 4 & 13 & 8 \\
0.055 & 4 & 15 & 9 \\
0.045 & 5 & 19 & 11 \\
0.035 & 7 & 24 & 15 \\
\bottomrule
\end{tabular}
\end{center}
\caption{The number of solitons of each type assuming initial data 
defined by \eqref{eq:initial-data}, \eqref{plot-Hk}, and \eqref{plot-vals3} for the 
values of $\epsilon$ used in Figure \ref{fig-sc123-1d-t0-nonconv}.}
\label{N-for-plot-vals3}
\end{table}
As shown in Figure \ref{fig-sc123-1d-t0-nonconv}, the fields $\widetilde{q}^{[k]}(x,t)$ of the semiclassical soliton ensemble 
evidently do \emph{not} converge to $q^{[k]}(x,t)$ at $t=0$.  
\begin{figure}[H]
\begin{center}
\includegraphics[height=2.9in]{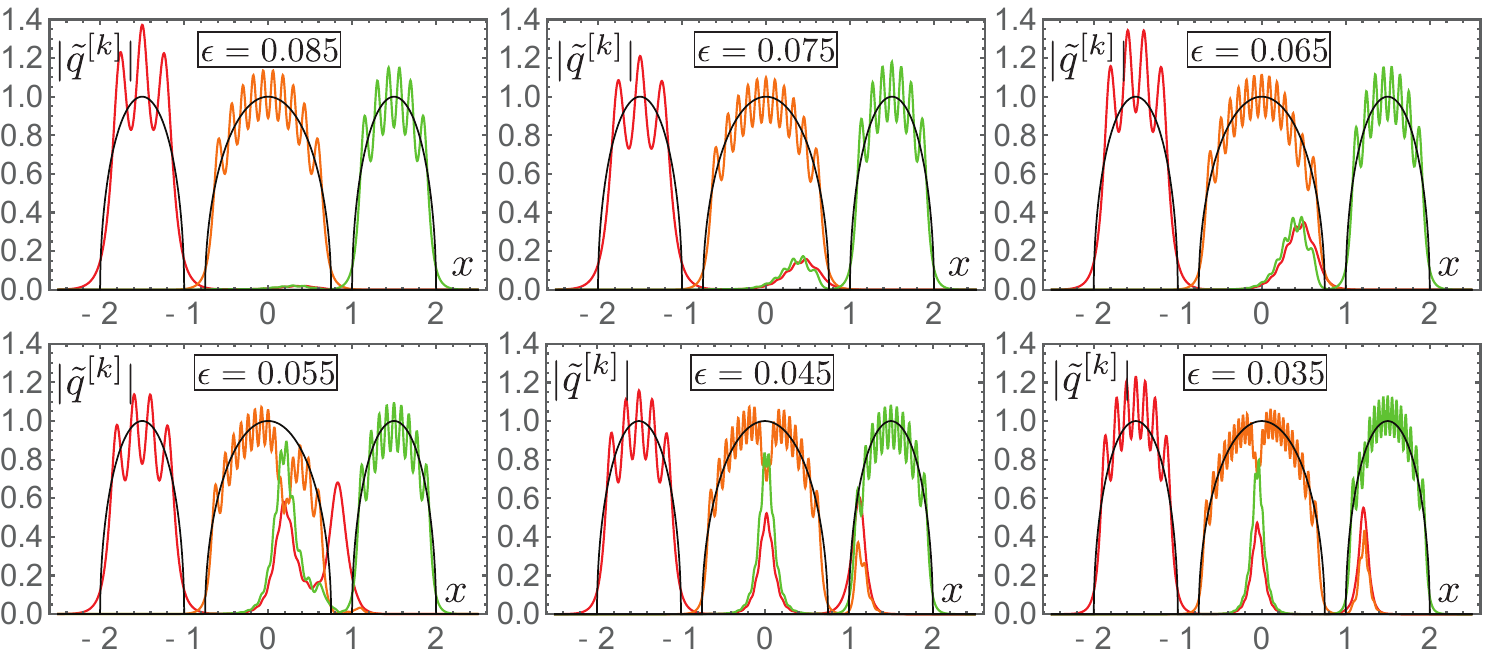}
\caption{Failure of convergence of the semiclassical soliton ensemble at $t=0$ with three packets.  Plots of 
$|q^{[k]}(x,0)|=H^{[k]}(x)$ as defined by 
\eqref{eq:initial-data} and \eqref{plot-Hk-sc123}
with parameters \eqref{plot-vals3}, along with 
the modulus of the fields $\widetilde{q}^{[k]}(x,0)$ of the associated semiclassical soliton ensemble 
for $k=1,2,3$ and various values of $\epsilon$.  
Black:  $H^{[k]}(x)$, $k=1,2,3$.  Red:  $|\widetilde{q}^{[1]}(x,0)|$.  Orange:  
$|\widetilde{q}^{[2]}(x,0)|$.  Green:  $|\widetilde{q}^{[3]}(x,0)|$.}
\label{fig-sc123-1d-t0-nonconv}
\end{center}
\end{figure}
As suggested in \S\ref{section:semiclassical-soliton-ensemble}, the lack of convergence may be traced to the inaccuracy of the approximation $\varphi\approx 0$, i.e., in this case one should include solitons of type $\solsplit$ in place of some of those of type $2$.

\appendix

\section{A Derivation of the TWRI Equations}
\label{appA-derivation}
Consider the semilinear partial differential equation 
\eq
u_{TT}+\Omega(-i\partial_X)^2 u = N(u),
\endeq
where $\Omega(k)^2$ is a real, even, nonnegative polynomial with $\Omega(0)^2>0$ 
and $N$ is twice-differentiable function with $N(0)=N'(0)=0$ encoding 
fully nonlinear terms.  Letting $\delta>0$ be a small parameter, the nonlinearity 
becomes weak upon setting $u:=\delta U$:
\eq
\label{semilinear-small-amp}
U_{TT}+\Omega(-i\partial_X)^2U = \frac{1}{2}\delta N''(0)U^2 + \mathcal{O}(\delta^2).
\endeq
When $\delta=0$, this equation has solutions $U(X,T)=Ae^{i(kX-\omega T)}+\mathrm{c.c.}$, 
where $A\in\mathbb{C}$, $k,\omega\in\mathbb{R}$, $\omega^2=\Omega(k)^2$, and $\mathrm{c.c.}$ 
stands for complex conjugate of the preceding expression.  Now assume that a 
resonant triad exists, i.e. wave numbers $k_1,k_2,k_3$ satisfying 
\eq
\label{resonant-condition}
k_1+k_2+k_3=0 \quad \text{and} \quad \omega_1+\omega_2+\omega_3=0, \quad \text{where}\quad \omega_j^2:=\Omega(k_j)^2.
\endeq
To observe the effect of the nonlinearity, assume a slow dependence of the 
amplitudes of these three modes on $X$ and $T$, where we set 
$\chi:=\delta X$ and $\tau:=\delta T$:
\eq
\label{slow-amp-ansatz}
U(X,T) = \sum_{j=1}^3 A_j(\chi,\tau)e^{i(k_j X - \omega_j T)} + \mathrm{c.c.} + \delta E(X,T),
\endeq
where $E(X,T)$ is a lower-order error term.  Substituting \eqref{slow-amp-ansatz} 
into \eqref{semilinear-small-amp} and collecting terms proportional to $\delta$ 
yields the following equation for the error:
\eq
\begin{split}
E_{TT} + \Omega(-i\partial_X)^2E = & \sum_{j=1}^3\left[\frac{N^{\prime\prime}(0)}{2}A_\ell^*A_m^* + 2i\omega_j A_{j\tau} + 2i \omega_j\Omega'(k_j)A_{j\chi}\right]e^{i(k_j X-\omega_j T)} + \mathrm{c.c.} \\
 & \text{+ nonresonant harmonics }+\mathcal{O}(\delta), \quad \text{$j,\ell,m$ distinct}.
\end{split}
\endeq
Here we have repeatedly used \eqref{resonant-condition}, and nonresonant harmonics 
refers to all other exponential terms, none of which (in general) satisfy the linear dispersion relation $\omega^2=\Omega(k)^2$.  Note that the external forcing frequency in 
each of the terms proportional to $\exp(ik_j X-i\omega_j T)$ matches an internal 
system frequency for $j=1,2,3$.  The resulting resonance leads to linear growth 
of the error, and after $\mathcal{O}(\delta^{-1})$ time the error $E$ becomes 
large.  To prevent this, the amplitudes should satisfy 
\eq
\begin{split}
A_{1\tau} + c^{[1]}A_{1\chi} & = -\frac{N^{\prime\prime}(0)}{4i\omega_1}A_2^*A_3^*, \\
A_{2\tau} + c^{[2]}A_{2\chi} & = -\frac{N^{\prime\prime}(0)}{4i\omega_2}A_3^*A_1^*, \\
A_{3\tau} + c^{[3]}A_{3\chi} & = -\frac{N^{\prime\prime}(0)}{4i\omega_3}A_1^*A_2^*,
\end{split}
\label{proto-3wave}
\endeq
where $c^{[j]}:=\Omega^\prime(k_j)$.  Indeed, \eqref{proto-3wave} removes the terms that cause the undesired resonance effect.  We obtain our starting form \eqref{3wave} of 
the TWRI equations by setting 
\eq
x:=\epsilon\chi, \quad t:=\epsilon\tau, \quad q^{[j]}(x,t):=\frac{i|N^{\prime\prime}(0)|\sqrt{|\omega_j|}}{4\sqrt{|\omega_1\omega_2\omega_3|}}A_j\left(\frac{x}{\epsilon},\frac{t}{\epsilon}\right), \quad \text{and} \quad \gamma^{[j]}:=\text{sgn}(N^{\prime\prime}(0)\omega_j).
\endeq
Observe that the resonance condition \eqref{resonant-condition} implies that this system is automatically in the decay instability case.  In deriving the TWRI system from other simple contexts it is possible to arrive at a system exhibiting explosive instability; for example if the nonlinear term $N(u)$ is replaced with $N(u_x)$, then the signs become $\gamma^{[j]}:=-\mathrm{sgn}(N''(0)\omega_jk_\ell k_m)$, for $j,\ell,m$ all distinct.  Again taking into account \eqref{resonant-condition}, this yields explosive instability ($\gamma^{[1]}=\gamma^{[2]}=\gamma^{[3]}$) provided the phase velocities $\omega_j/k_j$ are all positive.

\section{Inverse Scattering Transform for the TWRI Equations}
\label{appB-inverse-scattering}
Here we present a primarily self-contained review of the inverse-scattering 
transform theory necessary for the construction of semiclassical soliton 
ensembles.  Starting from the basic framework constructed by Zakharov and 
Manakov 
\cite{Zakharov:1973,Zakharov:1975} and Kaup \cite{Kaup76}, we review the 
general approaches developed by Beals and Coifman \cite{BealsC84,BealsC87} 
and Zhou \cite{Zhou89,Zhou89b} as applied to the TWRI equations.  In 
Appendices \ref{subapp:direct-scattering1}--\ref{subapp:time-dependence} we 
set up the direct-scattering transformation mapping the initial data to the 
scattering data.  In Appendices \ref{appB5-inverse-scattering} and 
\ref{subapp-Zhou-transform} we formulate the inverse map as a Riemann-Hilbert 
problem.  Finally, as our focus is on pure-soliton solutions, in Appendix 
\ref{sec:solitons} we detail the construction of these functions from the 
Riemann-Hilbert problem and discuss some of their relevant properties.
\subsection{Formulation of an integral equation for direct scattering}
\label{subapp:direct-scattering1}
We develop the direct scattering theory for the equation \eqref{eq:LaxPair-x} by expanding upon a general approach to $N\times N$ first-order systems given by Beals and Coifman in \cite{BealsC87}.  Let $t$ be fixed and write $q^{[k]}(x,t)=q^{[k]}(x)$ for $k=1,2,3$, defining a matrix $\mathbf{Q}(x)$ by \eqref{CQ}--\eqref{Q normalization}.  Suppose that $q^{[k]}(\cdot)\in L^1(\mathbb{R})$ for $k=1,2,3$.  Let $\sigma:=\pm$ be a fixed sign.  We seek a fundamental matrix solution $\Phi=\Phi^\sigma(x;\lambda)$ of \eqref{eq:LaxPair-x} defined for $\imag\{\lambda\}\neq 0$ satisfying the conditions:
\begin{equation}
\lim_{x\to\sigma\infty}\mathbf{M}^\sigma(x;\lambda)=\mathbb{I}\quad\text{and}\quad
\sup_{x\in\mathbb{R}}\|\mathbf{M}^\sigma(x;\lambda)\|<\infty,\quad \mathbf{M}^\sigma(x;\lambda):=
\Phi^\sigma(x;\lambda)e^{i\lambda\mathbf{C}x/\epsilon}.
\label{eq:RHP-M-direct-conditions}
\end{equation}
The substitution $\Phi=\mathbf{M}^\sigma e^{-i\lambda\mathbf{C}x/\epsilon}$ into \eqref{eq:LaxPair-x} results in a coupled system of differential equations on the matrix elements $M^\sigma_{jk}$ of $\mathbf{M}^\sigma$ as follows:
\begin{equation}
\epsilon \frac{\dd M^\sigma_{jk}}{\dd x}=-i\lambda (c^{[j]}-c^{[k]})M^\sigma_{jk} - (\mathbf{Q}\mathbf{M}^\sigma)_{jk}\quad j,k=1,2,3,
\label{eq:Mij-system}
\end{equation}
or equivalently via the introduction of an exponential integrating factor,
\begin{equation}
\epsilon \frac{\dd}{\dd x}\left(e^{i\lambda (c^{[j]}-c^{[k]})x/\epsilon}M^\sigma_{jk}\right)=-e^{i\lambda (c^{[j]}-c^{[k]})x/\epsilon}(\mathbf{Q}\mathbf{M}^\sigma)_{jk},\quad j,k=1,2,3.
\label{eq:integrating-factor}
\end{equation}
According to \eqref{eq:RHP-M-direct-conditions} and the inequalities \eqref{eq:c-inequalities}--\eqref{eq:epsilon-inequality}, the value of $e^{i\lambda (c^{[j]}-c^{[k]}) x/\epsilon}M_{jk}^\sigma(x;\lambda)$ is known in either the limit $x\to-\infty$ or $x\to +\infty$, depending on $\sigma$, the sign of $\imag\{\lambda\}$, and the subscripts $j,k$.  Indeed, for the diagonal elements $k=j$,
\begin{equation}
\lim_{x\to \sigma\infty} e^{i\lambda(c^{[j]}-c^{[k]})x/\epsilon}M_{jk}^\sigma(x;\lambda)=1,\quad k=j=1,2,3
\end{equation}
holds regardless of the sign of $\imag\{\lambda\}$, while otherwise using only the boundedness condition on the elements of $\mathbf{M}^\sigma$ gives
\begin{equation}
\lim_{x\to \sigma_{jk}(\lambda)\infty} e^{i\lambda(c^{[j]}-c^{[k]})x/\epsilon}M^\sigma_{jk}(x;\lambda)=0,\quad
\sigma_{jk}(\lambda):=\mathrm{sgn}(\imag\{\lambda\})\,\mathrm{sgn}(k-j),\quad k\neq j.
\label{eq:sigma-j-k-lambda}
\end{equation}
If we set
\begin{equation}
s_{jk}^\sigma(\lambda):=\begin{cases}\sigma,&\quad k=j=1,2,3,\\
\sigma_{jk}(\lambda),&\quad k\neq j,
\end{cases}
\label{eq:signs}
\end{equation}
then we may integrate \eqref{eq:integrating-factor} incorporating the limiting values to obtain
\begin{equation}
M_{jk}^\sigma(x;\lambda)=\delta_{jk}-\frac{1}{\epsilon}\int_{s_{jk}^\sigma(\lambda)\infty}^x
e^{-i\lambda (c^{[j]}-c^{[k]})(x-y)/\epsilon}(\mathbf{Q}(y)\mathbf{M}^\sigma(y;\lambda))_{jk}\,\dd y,\quad
\imag\{\lambda\}\neq 0.
\label{eq:Fredholm-system}
\end{equation}
The system of integral equations \eqref{eq:Fredholm-system} posed on $L^\infty(\mathbb{R})$ is therefore a necessary condition for $\mathbf{M}^\sigma(x;\lambda)$ to satisfy both the differential equations \eqref{eq:Mij-system} and the auxiliary conditions \eqref{eq:RHP-M-direct-conditions}. To show that \eqref{eq:Fredholm-system} is also sufficient, it remains to establish that $\mathbf{M}^\sigma(x;\lambda)\to\mathbb{I}$ as $x\to\sigma\infty$.  It is rather obvious that this holds for the diagonal elements, as well as for those off-diagonal elements for which $s_{jk}^\sigma=\sigma$ holds.  For the remaining off-diagonal elements, we appeal to a dominated convergence argument using $\mathbf{M}^\sigma(\cdot;\lambda)\in L^\infty(\mathbb{R})$ and $\mathbf{Q}(\cdot)\in L^1(\mathbb{R})$ along with the fact that the exponential factor $e^{-i\lambda (c^{[j]}-c^{[k]})(x-y)/\epsilon}$ is bounded in magnitude by $1$ and decays to zero for each fixed $y\in\mathbb{R}$ as $x\to\sigma\infty$.  
 
The system of integral equations \eqref{eq:Fredholm-system} is formally of Fredholm type and not Volterra type, because the limits of integration are necessarily different for different matrix elements.  In particular, this implies that there may not be a solution for every $\lambda\in\mathbb{C}\setminus\mathbb{R}$.  The integral operator in \eqref{eq:Fredholm-system} is not compact on $L^\infty(\mathbb{R})$, so the analytic Fredholm theorem does not immediately apply to characterize the dependence of $\mathbf{M}^\sigma(x;\lambda)$ on the spectral parameter $\lambda$.  In \cite{BealsC84} it was shown that $\mathbf{M}^\sigma(x;\lambda)$ is meromorphic separately in the domains $\imag\{\lambda\}>0$ and $\imag\{\lambda\}<0$, with isolated singularities that are poles of finite order at locations independent of $x\in\mathbb{R}$ (and these poles are the only values of $\lambda$ for which a solution of \eqref{eq:Fredholm-system} fails to exist).  These results were shown by an induction argument on the $L^1(\mathbb{R})$ norm of the potential $\mathbf{Q}$, in which at each step the norm is reduced by a factor of $1/2$ until it is below a threshold at which the integral equation \eqref{eq:Fredholm-system} can be solved by Neumann series.  Each step of the induction introduces an analytic function in the denominator that contributes possible poles.  In \cite{BealsC87} an alternate approach was developed, a method based on Volterra equations whose solutions are automatically analytic in the appropriate half-planes.   We will describe this latter approach to the theory of \eqref{eq:Fredholm-system} in Appendix~\ref{section:Volterra} below.  

Assuming that $\mathbf{M}^\sigma(x;\lambda)$ solving \eqref{eq:Fredholm-system} exists in $L^\infty(\mathbb{R})$, it follows from a dominated convergence argument similar to the one above proving $\mathbf{M}^\sigma(x;\lambda)\to\mathbb{I}$ as $x\to\sigma\infty$ that  $\mathbf{M}^\sigma(x;\lambda)$ has a limit as $x\to -\sigma\infty$.  Moreover this limit is a diagonal matrix denoted $\mathbf{D}(\lambda)^\sigma$, i.e., $\mathbf{D}(\lambda)$ for $\sigma=+$ or $\mathbf{D}(\lambda)^{-1}$ for $\sigma=-$, and the elements of $\mathbf{D}(\lambda)$ 
are meromorphic functions of $\lambda\in\mathbb{C}\setminus\mathbb{R}$. 
Multiplication of \eqref{eq:Fredholm-system} on the right by either $\mathbf{D}(\lambda)$ or its inverse reveals the key relationship \eqref{eq:Mplus-Mminus-intro} between $\mathbf{M}^+(x;\lambda)$ and $\mathbf{M}^-(x;\lambda)$.

Integration by parts shows that the system \eqref{eq:Fredholm-system} is uniquely solvable if $|\lambda|$ is sufficiently large, and that $\mathbf{M}^\sigma(x;\lambda)\to\mathbb{I}$ as $\lambda\to\infty$ in $\mathbb{C}\setminus\mathbb{R}$.  In more detail, one observes that when $j=k$, the system \eqref{eq:Fredholm-system} explicitly expresses $M_{jj}^\sigma(x;\lambda)$ in terms of the off-diagonal elements of $\mathbf{M}^\sigma(x;\lambda)$ only, because $\mathbf{Q}$ is off-diagonal.  Using this fact to eliminate the diagonal elements of $\mathbf{M}^\sigma(x;\lambda)$ from the remaining equations in \eqref{eq:Fredholm-system} gives a closed linear system of (double) integral equations on the off-diagonal matrix elements of $\mathbf{M}^\sigma(x;\lambda)$.  Integrating by parts under the additional assumption that $\mathbf{Q}'(\cdot)\in L^1(\mathbb{R})$ shows that the operator for this system has a norm on $L^\infty(\mathbb{R})$ that is proportional to $|\lambda|^{-1}$, and that the forcing term is a function in $L^\infty(\mathbb{R})$ with norm also proportional to $|\lambda|^{-1}$.  Hence the off-diagonal elements can be obtained uniquely in this space by convergent Neumann series and have norms proportional to $|\lambda|^{-1}$.  Using this result in \eqref{eq:Fredholm-system} considered for $j=k$ then shows that $M^\sigma_{jj}(\cdot;\lambda)$ converges uniformly to $1$ as $|\lambda|\to\infty$.  (The same conclusion holds true without the need to integrate by parts and with only the assumption that $\mathbf{Q}(\cdot)\in L^1(\mathbb{R})$ provided one takes the limit $\lambda\to\infty$ in the sense that $\imag\{\lambda\}\to\infty$ as well.) For potentials $\mathbf{Q}$ with $\mathbf{Q}$ and $\mathbf{Q}'$ in $L^1(\mathbb{R})$, the singularities of $\mathbf{M}^\sigma(x;\lambda)$ therefore form a bounded set in $\mathbb{C}\setminus\mathbb{R}$.  For a generic subset of off-diagonal potentials $\mathbf{Q}$ in the Schwartz space, the singularities of $\mathbf{M}^\sigma(x;\lambda)$ form a \emph{finite} subset of $\mathbb{C}\setminus\mathbb{R}$, and each is a simple pole.  It also follows from Abel's theorem that $\det(\mathbf{M}^\sigma(x;\lambda))$ is independent of $x$, and then from the conditions \eqref{eq:RHP-M-direct-conditions} that $\det(\mathbf{M}^\sigma(x;\lambda))=1$ holds  for each non-exceptional $\lambda\in\mathbb{C}\setminus\mathbb{R}$ and for each $x\in\mathbb{R}$.  From \eqref{eq:Mplus-Mminus-intro} it then follows that also that the diagonal matrix $\mathbf{D}(\lambda)$ satisfies $\det(\mathbf{D}(\lambda))=1$ for all such $\lambda$.

Finally, we note that if $\mathbf{M}^\sigma(x;\lambda)$ admits an asymptotic representation for large $\lambda$ of the form
\begin{equation}
\mathbf{M}^\sigma(x;\lambda)=\mathbb{I} + \mathbf{F}^\sigma(x)\lambda^{-1} + o(\lambda^{-1}),\quad\lambda\to\infty,
\label{eq:M-lambda-infinity-expand}
\end{equation}
such that also
\begin{equation}
\frac{\dd\mathbf{M}^\sigma}{\dd x}(x;\lambda) = \frac{\dd\mathbf{F}^\sigma}{\dd x}(x)\lambda^{-1} + o(\lambda^{-1}),\quad\lambda\to\infty,
\end{equation}
then it follows by taking the limit $\lambda\to\infty$ in the differential equation \eqref{eq:Mij-system}
that
\begin{equation}
\mathbf{Q}(x)=-i[\mathbf{C},\mathbf{F}^\sigma(x)]
=\begin{pmatrix}
0 & -i\Delta^{[3]}F^\sigma_{12}(x) & -i\Delta^{[2]}F^\sigma_{13}(x)\\
i\Delta^{[3]}F^\sigma_{21}(x) & 0 & -i\Delta^{[1]}F^\sigma_{23}(x)\\
i\Delta^{[2]}F^\sigma_{31}(x) & i\Delta^{[1]}F^\sigma_{32}(x) & 0\end{pmatrix}.
\end{equation}
This identity shows also that the matrices $\mathbf{F}^\sigma(x)$ for $\sigma=\pm$ have the same off-diagonal elements.
Comparing with \eqref{CQ} and \eqref{Q normalization} yields the formula for reconstructing the potentials $q^{[k]}(x)$ from $\mathbf{M}^\sigma$:
\begin{equation}
\begin{split}
q^{[1]}(x)&=\gamma^{[1]}\sqrt{\Delta^{[2]}\Delta^{[3]}}Q_{23}(x)=-i\gamma^{[1]} \Delta^{[1]}\sqrt{\Delta^{[2]}\Delta^{[3]}}F^\sigma_{23}(x),\\
q^{[2]}(x)&=\gamma^{[2]}\sqrt{\Delta^{[3]}\Delta^{[1]}}Q_{31}(x)=i\gamma^{[2]} \Delta^{[2]}\sqrt{\Delta^{[1]}\Delta^{[3]}}F^\sigma_{31}(x),\\
q^{[3]}(x)&=\gamma^{[3]}\sqrt{\Delta^{[1]}\Delta^{[2]}}Q_{12}(x)=-i\gamma^{[3]} \Delta^{[3]}\sqrt{\Delta^{[1]}\Delta^{[2]}}F^\sigma_{12}(x).
\end{split}
\label{eq:q-potentials-reconstruct}
\end{equation}

\subsection{Analysis of the direct-scattering integral equation}
\label{section:Volterra}
We now construct $\mathbf{M}^\sigma(x;\lambda)$ for $\imag\{\lambda\}\neq 0$ 
from analytic functions solving Volterra equations.  
Let $\mathbf{m}^{\sigma,k}(x;\lambda)$ denote the $k^\text{th}$ column of $\mathbf{M}^\sigma(x;\lambda)$.  We will now show that the matrix $\mathbf{M}^\sigma(x;\lambda)$ can be constructed from $\mathbf{m}^{\sigma,1}(x;\lambda)$, $\mathbf{m}^{\sigma,3}(x;\lambda)$, and from the wedge products $\mathbf{m}^{\sigma,1}(x;\lambda)\wedge\mathbf{m}^{\sigma,2}(x;\lambda)$
and $\mathbf{m}^{\sigma,2}(x;\lambda)\wedge\mathbf{m}^{\sigma,3}(x;\lambda)$, and that these quantities are proportional via meromorphic factors to solutions of Volterra integral equations, the latter automatically being analytic for $\imag\{\lambda\}\neq 0$.  As a natural basis of $\mathbb{C}^3$ we take the standard unit vectors $\mathbf{e}^1$, $\mathbf{e}^2$, and $\mathbf{e}^3$.  For the wedge product space $(\mathbb{C}^3)^{\wedge 2}$ we take as a basis $\mathbf{f}^1:=\mathbf{e}^2\wedge\mathbf{e}^3$, $\mathbf{f}^2:=\mathbf{e}^3\wedge\mathbf{e}^1$, and $\mathbf{f}^3:=\mathbf{e}^1\wedge\mathbf{e}^2$.  

We remark that this construction is actually fairly well-known in the literature concerned with stability of waves using Evans function methods \cite{LedouxMT10}.  The Evans function is, by definition, the Wronskian determinant of a subspace of solutions decaying as $x\to -\infty$ with a subspace of solutions decaying as $x\to +\infty$.  Its zeros are therefore exactly the eigenvalues of the problem.  The numerical computation of the Evans function involves the calculation of bases of these subspaces by solving initial-value problems with boundary conditions at $\pm\infty$ and then ``shooting'' toward a common point, say $x=0$, at which the solutions are compared and the Wronskian calculated.  It is well known that if the subspace of solutions decaying as $x\to +\infty$, say, contains solutions with two different asymptotic exponential decay rates, then it is numerically very difficult if not impossible to calculate accurately the solution with the smaller decay rate, because any numerical noise introduced while integrating in from $+\infty$ will contaminate the solution with a bit of the faster decaying (and hence faster growing as $x$ decreases) solution, which will then overtake it.  Hence the numerical calculation of the Wronskian in such situations nearly always gives zero, and it has been a standard technique in this situation to pass to the exterior algebra, where the whole subspace of decaying solutions is represented as a single wedge product that satisfies its own induced differential equation with boundary condition at $x=+\infty$.  Fortunately, the latter corresponds to a Volterra integral equation.  This implies that the numerical solution of the associated initial-value problem with initial condition at infinity is well-conditioned.

\subsubsection{Algebraic construction of $\mathbf{M}^\sigma$ from exterior products}
First, we show that $\mathbf{M}^\sigma(x;\lambda)$ may be explicitly constructed from the specified data.  This is an algebraic fact.  Indeed, we recall that $\det(\mathbf{M}^\sigma)=1$ and suppose that in addition $\mathbf{m}^{\sigma,1}$, $\mathbf{m}^{\sigma,3}$, and the wedge products 
$\mathbf{m}^{\sigma,1}\wedge\mathbf{m}^{\sigma,2}$ and $\mathbf{m}^{\sigma,2}\wedge\mathbf{m}^{\sigma,3}$ are given in terms of their components as:
\begin{equation}
\begin{split}
\mathbf{m}^{\sigma,1}&=m^1_1\mathbf{e}^1 +m^1_2\mathbf{e}^2+m^1_3\mathbf{e}^3,\\
\mathbf{m}^{\sigma,3}&=m^3_1\mathbf{e}^1 +m^3_2\mathbf{e}^2+m^3_3\mathbf{e}^3,
\end{split}
\end{equation}
and
\begin{equation}
\begin{split}
\mathbf{m}^{\sigma,1}\wedge\mathbf{m}^{\sigma,2}&=n^{3}_1\mathbf{f}^1 +n^{3}_2\mathbf{f}^2 +n^{3}_3\mathbf{f}^3,\\
\mathbf{m}^{\sigma,2}\wedge\mathbf{m}^{\sigma,3}&=n^{1}_1\mathbf{f}^1+n^{1}_2\mathbf{f}^2+n^{1}_3\mathbf{f}^3.
\end{split}
\label{eq:wedges-components}
\end{equation}
Obviously it only remains to show that the central column $\mathbf{m}^{\sigma,2}=m^2_1\mathbf{e}^1+m^2_2\mathbf{e}^2+m^2_3\mathbf{e}^3$ can be constructed explicitly from this data, and it suffices to use \eqref{eq:wedges-components} and $\det(\mathbf{M}^\sigma)=1$.  Indeed,
\begin{equation}
\begin{split}
n^{3}_2n^{1}_3-n^{3}_3n^{1}_2 &= (m^1_3m^2_1-m^1_1m^2_3)(m^2_1m^3_2-m^2_2m^3_1)-(m^1_1m^2_2-m^1_2m_1^2)(m^2_3m^3_1-m^2_1m^3_3)\\
&=m_1^2\,\mathrm{det}(\mathbf{M}^\sigma)\\
&=m_1^2,
\\
n^{3}_3n^{1}_1-n^{3}_1n^{1}_3&=
(m_1^1m_2^2-m_2^1m_1^2)(m_2^2m_3^3-m_3^2m_2^3)-
(m_2^1m_3^2-m_3^1m_2^2)(m_1^2m_2^3-m_2^2m_1^3)\\
&=m_2^2\,\mathrm{det}(\mathbf{M}^\sigma)\\&=m_2^2,
\\
n^{3}_1n^{1}_2-n^{3}_2n^{1}_1&=
(m^1_2m^2_3-m^1_3m^2_2)(m^2_3m^3_1-m^2_1m^3_3)-
(m^1_3m^2_1-m^1_1m^2_3)(m^2_2m^3_3-m^2_3m^3_2)\\
&=m^2_3\,\mathrm{det}(\mathbf{M}^\sigma)\\
&=m^2_3.
\end{split}
\end{equation}
In other words, if we identify $(\mathbb{C}^3)^{\wedge 2}$ with $\mathbb{C}^3$ in the usual way by associating $\mathbf{f}^j$ with $\mathbf{e}^j$, $j=1,2,3$, then the vector $\mathbf{m}^{\sigma,2}$ can be recovered explicitly from the cross-products 
\begin{equation}
\begin{split}
\mathbf{n}^{\sigma,3}(x;\lambda)&:=\mathbf{m}^{\sigma,1}(x;\lambda)\times\mathbf{m}^{\sigma,2}(x;\lambda),\\
\mathbf{n}^{\sigma,1}(x;\lambda)&:=\mathbf{m}^{\sigma,2}(x;\lambda)\times\mathbf{m}^{\sigma,3}(x;\lambda)
\end{split}
\label{eq:cross-product-definition}
\end{equation}
by the formula
\begin{equation}
\mathbf{m}^{\sigma,2}(x;\lambda)=\mathbf{n}^{\sigma,3}(x;\lambda)\times
\mathbf{n}^{\sigma,1}(x;\lambda).
\label{eq:m2-cross-product-formula}
\end{equation}

\subsubsection{Representation of $\mathbf{M}^\sigma(x;\lambda)$ in terms of solutions of Volterra equations}
The next observation we may make in introducing the wedge product space is that the system of differential equations \eqref{eq:Mij-system} imply a corresponding linear system of differential equations for the components of the wedge products of columns of $\mathbf{M}^\sigma$.  Indeed,
\eqref{eq:Mij-system} can be rewritten as
\begin{equation}
\epsilon\frac{\dd \mathbf{m}^{\sigma,j}}{\dd x}=-i\lambda(\mathbf{C}-c^{[j]}\mathbb{I})\mathbf{m}^{\sigma,j}-\mathbf{Q}\mathbf{m}^{\sigma,j},\quad j=1,2,3.
\label{eq:m-columns-ODE}
\end{equation}
Therefore,
\begin{multline}
\epsilon\frac{\dd}{\dd x}(\mathbf{m}^{\sigma,1}\wedge\mathbf{m}^{\sigma,2})=
-i\lambda\left([(\mathbf{C}-c^{[1]}\mathbb{I})\mathbf{m}^{\sigma,1}]\wedge\mathbf{m}^{\sigma,2}
+\mathbf{m}^{\sigma,1}\wedge [(\mathbf{C}-c^{[2]}\mathbb{I})\mathbf{m}^{\sigma,2}]\right)\\
{}-\left([\mathbf{Q}\mathbf{m}^{\sigma,1}]\wedge\mathbf{m}^{\sigma,2}+\mathbf{m}^{\sigma,1}\wedge[\mathbf{Q}\mathbf{m}^{\sigma,2}]\right).
\end{multline}
In the basis $\{\mathbf{f}^j\}_{j=1}^3$, the components of $\mathbf{m}^{\sigma,1}\wedge\mathbf{m}^{\sigma,2}$ are $\{n^{3}_j\}_{j=1}^3$ (which in turn form a basis of the antisymmetric quadratic forms of the elements of $\mathbf{m}^{\sigma,1}$ and $\mathbf{m}^{\sigma,2}$), and because the components of the right-hand side are also such antisymmetric quadratic forms, they are necessarily linear in $\{n^{3}_j\}_{j=1}^3$.  This implies that, working in the basis $\{\mathbf{f}^j\}_{j=1}^3$, there is a $3\times 3$ matrix $\mathbf{H}^{3}$ such that
\begin{equation}
\epsilon\frac{\dd\mathbf{n}^{\sigma,3}}{\dd x}=\mathbf{H}^{3}\mathbf{n}^{\sigma,3}.
\label{eq:first-wedge-1}
\end{equation}
A computation shows that
\begin{equation}
\mathbf{H}^{3}=i\lambda(\mathbf{C}-c^{[3]}\mathbb{I}) +\begin{pmatrix}-Q_{22}-Q_{33} &Q_{21} & Q_{31}\\Q_{12} & -Q_{11}-Q_{33} & Q_{32}\\
Q_{13} & Q_{23} & -Q_{11}-Q_{22}\end{pmatrix}=i\lambda(\mathbf{C}-c^{[3]}\mathbb{I})+\mathbf{Q}^\trans,
\label{eq:first-wedge-2}
\end{equation}
with the second equality following because $\mathbf{Q}$ is off-diagonal.  Similarly,
\begin{equation}
\epsilon\frac{\dd\mathbf{n}^{\sigma,1}}{\dd x}=\mathbf{H}^{1}\mathbf{n}^{\sigma,1},\quad
\label{eq:second-wedge}
\mathbf{H}^{1}=i\lambda(\mathbf{C}-c^{[1]}\mathbb{I})+\mathbf{Q}^\trans.
\end{equation}
An important observation following from \eqref{eq:m-columns-ODE}, \eqref{eq:first-wedge-1}, and \eqref{eq:second-wedge} and uniqueness for the corresponding initial-value problems is that for all non-exceptional $\lambda\in\mathbb{C}\setminus\mathbb{R}$, each vector of $\mathbf{m}^{\sigma,j}(x;\lambda)$, $j=1,2,3$, $\mathbf{n}^{\sigma,3}(x;\lambda)$, and $\mathbf{n}^{\sigma,1}(x;\lambda)$ either vanishes for all $x\in\mathbb{R}$ or for no $x\in\mathbb{R}$.

Now, taking into account \eqref{eq:signs}, fixing $k=1$ in \eqref{eq:Fredholm-system} yields a Volterra integral equation with lower integration limit $\sigma\infty$ for the first column $\mathbf{m}^{\sigma,1}(x;\lambda)$ provided that $\mathrm{sgn}(\imag\{\lambda\})=-\sigma$.  Similarly, taking $k=3$ in \eqref{eq:Fredholm-system} yields a Volterra integral equation with lower integration limit $\sigma\infty$ for the third column $\mathbf{m}^{\sigma,3}(x;\lambda)$ provided that $\mathrm{sgn}(\imag\{\lambda\})=\sigma$.  These columns are therefore analytic functions in the indicated half-planes.  This fact suggests writing the diagonal matrix $\mathbf{D}(\lambda)$ with $\det(\mathbf{D}(\lambda))=1$ in the form \eqref{eq:D-representation}
with the functions $u$ and $v$ being analytic in the domain $\mathbb{C}\setminus\mathbb{R}$.  Indeed, taking into account the fact that $\mathbf{M}^\sigma(x;\lambda)\to\mathbb{I}$ as $x\to\sigma\infty$ and the relation \eqref{eq:Mplus-Mminus-intro} we obtain the formulas
\begin{equation}
u(\lambda)=\begin{cases}\displaystyle\lim_{x\to +\infty} M^-_{11}(x;\lambda),&\imag\{\lambda\}>0,\\
\displaystyle\lim_{x\to -\infty}M_{11}^+(x;\lambda),&\imag\{\lambda\}<0
\end{cases}\quad
\text{and}\quad
v(\lambda)=\begin{cases}\displaystyle\lim_{x\to -\infty}M^+_{33}(x;\lambda),&\imag\{\lambda\}>0,\\
\displaystyle\lim_{x\to +\infty}M_{33}^-(x;\lambda),&\imag\{\lambda\}<0.
\end{cases}
\end{equation}

To obtain the columns $\mathbf{m}^{\sigma,1}(x;\lambda)$ and $\mathbf{m}^{\sigma,3}(x;\lambda)$ for $\lambda$ in the opposite half-planes, we simply use \eqref{eq:Mplus-Mminus-intro} to exchange $\sigma$ for $-\sigma$.  Thus:
\begin{itemize}
\item For $\imag\{\lambda\}>0$, we have 
\begin{equation}
\mathbf{m}^{+,1}(x;\lambda)=\frac{1}{u(\lambda)}\mathbf{m}^{-,1}(x;\lambda),
\label{eq:m1-plus-Im-lambda-positive}
\end{equation}
where $u(\lambda)$ is analytic and $\mathbf{m}^{-,1}(x;\lambda)$ is analytic and non-vanishing $\text{for all } x\in\mathbb{R}$ due to the normalization condition $\mathbf{m}^{-,1}\to \mathbf{e}^1$ as $x\to -\infty$. 
Similarly, 
\begin{equation}
\mathbf{m}^{-,3}(x;\lambda)=\frac{1}{v(\lambda)}\mathbf{m}^{+,3}(x;\lambda),
\label{eq:m3-minus-Im-lambda-positive}
\end{equation}
where $v(\lambda)$ is analytic and $\mathbf{m}^{+,3}(x;\lambda)$ is analytic and non-vanishing $\text{for all } x\in\mathbb{R}$ due to the normalization condition $\mathbf{m}^{+,3}\to \mathbf{e}^3$ as $x\to +\infty$.
\item For $\imag\{\lambda\}<0$, we have 
\begin{equation}
\mathbf{m}^{-,1}(x;\lambda)=\frac{1}{u(\lambda)}\mathbf{m}^{+,1}(x;\lambda),
\label{eq:m1-minus-Im-lambda-negative}
\end{equation}
where $u(\lambda)$ is analytic and $\mathbf{m}^{+,1}(x;\lambda)$ is analytic and non-vanishing $\text{for all } x\in\mathbb{R}$ due to the normalization condition $\mathbf{m}^{+,1}\to \mathbf{e}^1$ as $x\to +\infty$.
Similarly,
\begin{equation}
\mathbf{m}^{+,3}(x;\lambda)=\frac{1}{v(\lambda)}\mathbf{m}^{-,3}(x;\lambda),
\label{eq:m3-plus-Im-lambda-negative}
\end{equation}
where $v(\lambda)$ is analytic and $\mathbf{m}^{-,3}(x;\lambda)$ is analytic and non-vanishing $\text{for all } x\in\mathbb{R}$ due to the normalization condition $\mathbf{m}^{-,3}\to\mathbf{e}^3$ as $x\to -\infty$.
\end{itemize}
To obtain the central column $\mathbf{m}^{\sigma,2}(x;\lambda)$, we first consider the cross-products $\mathbf{n}^{\sigma,3}$ and $\mathbf{n}^{\sigma,1}$ defined by \eqref{eq:cross-product-definition},
which satisfy the differential equations \eqref{eq:first-wedge-1}--\eqref{eq:first-wedge-2} and \eqref{eq:second-wedge} respectively.  Since $\mathbf{C}-c^{[3]}\mathbb{I}$ and $\mathbf{C}-c^{[1]}\mathbb{I}$ are diagonal matrices with entries of fixed sign (non-negative and non-positive, respectively), the differential equations can be integrated up to Volterra equations assuming that the sign of $\imag\{\lambda\}$ is correlated with the infinite limit of integration.  Thus, with the use of appropriate exponential integrating factors, taking into account the boundary condition $\mathbf{M}^\sigma(x;\lambda)\to\mathbb{I}$ as $x\to\sigma\infty$, we find that
\begin{itemize}
\item For $\imag\{\lambda\}>0$, 
\begin{equation}
\mathbf{n}^{-,3}(x;\lambda)=\mathbf{e}^3 +\frac{1}{\epsilon}\int_{-\infty}^x
e^{i\lambda(\mathbf{C}-c^{[3]}\mathbb{I})(x-y)/\epsilon}\mathbf{Q}(y)^\trans \mathbf{n}^{-,3}(y;\lambda)\,\dd y
\label{eq:mminus-1cross2-Im-lambda-positive}
\end{equation}
and
\begin{equation}
\mathbf{n}^{+,1}(x;\lambda)=\mathbf{e}^1 +\frac{1}{\epsilon}\int_{+\infty}^x
e^{i\lambda(\mathbf{C}-c^{[1]}\mathbb{I})(x-y)/\epsilon}\mathbf{Q}(y)^\trans\mathbf{n}^{+,1}(y;\lambda)\,\dd y,
\label{eq:mplus-2cross3-Im-lambda-positive}
\end{equation}
defining these two cross-products as analytic functions of $\lambda$.  To obtain $\mathbf{n}^{+,3}$ and $\mathbf{n}^{-,1}$ in the same half-plane,
we use \eqref{eq:Mplus-Mminus-intro} with $\mathbf{D}(\lambda)$ in the representation \eqref{eq:D-representation} and bilinearity of the cross-products in \eqref{eq:cross-product-definition} to obtain
\begin{equation}
\mathbf{n}^{+,3}(x;\lambda)=\frac{1}{v(\lambda)}
\mathbf{n}^{-,3}(x;\lambda),
\label{eq:mplus-1cross2-Im-lambda-positive}
\end{equation}
where $v(\lambda)$ is analytic and $\mathbf{n}^{-,3}(x;\lambda)$ is analytic and non-vanishing $\text{for all } x\in\mathbb{R}$ due to the normalization condition $\mathbf{n}^{-,3}(x;\lambda)\to\mathbf{e}^3$ as $x\to -\infty$.  Similarly,
\begin{equation}
\mathbf{n}^{-,1}(x;\lambda)=\frac{1}{u(\lambda)}
\mathbf{n}^{+,1}(x;\lambda),
\label{eq:mminus-2cross3-Im-lambda-positive}
\end{equation}
where $u(\lambda)$ is analytic and $\mathbf{n}^{+,1}(x;\lambda)$ is analytic and non-vanishing $\text{for all } x\in\mathbb{R}$ due to the normalization condition $\mathbf{n}^{+,1}(x;\lambda)\to\mathbf{e}^1$ as $x\to +\infty$.
\item For $\imag\{\lambda\}<0$,
\begin{equation}
\mathbf{n}^{+,3}(x;\lambda)=\mathbf{e}^3 +\frac{1}{\epsilon}\int_{+\infty}^x
e^{i\lambda(\mathbf{C}-c^{[3]}\mathbb{I})(x-y)/\epsilon}\mathbf{Q}(y)^\trans\mathbf{n}^{+,3}(y;\lambda)\,\dd y
\label{eq:mplus-1cross2-Im-lambda-negative}
\end{equation}
and
\begin{equation}
\mathbf{n}^{-,1}(x;\lambda)=\mathbf{e}^1 +\frac{1}{\epsilon}\int_{-\infty}^x
e^{i\lambda(\mathbf{C}-c^{[1]}\mathbb{I})(x-y)/\epsilon}\mathbf{Q}(y)^\trans\mathbf{n}^{-,1}(y;\lambda)\,\dd y,
\label{eq:mminus-2cross3-Im-lambda-negative}
\end{equation}
defining these two cross-products as analytic functions of $\lambda$.  To obtain $\mathbf{n}^{-,3}$ and $\mathbf{n}^{+,1}$ in the same half-plane,
we use \eqref{eq:Mplus-Mminus-intro} with $\mathbf{D}(\lambda)$ in the representation \eqref{eq:D-representation} and bilinearity of the cross-products in \eqref{eq:cross-product-definition} to obtain
\begin{equation}
\mathbf{n}^{-,3}(x;\lambda)(x;\lambda)=\frac{1}{v(\lambda)}
\mathbf{n}^{+,3}(x;\lambda),
\label{eq:mminus-1cross2-Im-lambda-negative}
\end{equation}
where $v(\lambda)$ is analytic and $\mathbf{n}^{+,3}(x;\lambda)$ is analytic and non-vanishing $\text{for all } x\in\mathbb{R}$ due to the normalization condition $\mathbf{n}^{+,3}(x;\lambda)\to\mathbf{e}^3$ as $x\to +\infty$.  Similarly,
\begin{equation}
\mathbf{n}^{+,1}(x;\lambda)=\frac{1}{u(\lambda)}
\mathbf{n}^{-,1}(x;\lambda),
\label{eq:mplus-2cross3-Im-lambda-negative}
\end{equation}
where $u(\lambda)$ is analytic and $\mathbf{n}^{-,1}(x;\lambda)$ is analytic and non-vanishing $\text{for all } x\in\mathbb{R}$ due to the normalization condition $\mathbf{n}^{-,1}(x;\lambda)\to\mathbf{e}^1$ as $x\to -\infty$.
\end{itemize}
Once $\mathbf{n}^{\sigma,3}(x;\lambda)$ and $\mathbf{n}^{\sigma,1}(x;\lambda)$ have been obtained in this way for $\imag\{\lambda\}\neq 0$, the column $\mathbf{m}^{\sigma,2}(x;\lambda)$ is recovered explicitly from these cross-products via \eqref{eq:m2-cross-product-formula}.

\subsection{Singularities of $\mathbf{M}^\sigma(x;\lambda)$ and associated scattering data}
\label{subappB3-singularities}
\subsubsection{Schwarz symmetry of $\mathbf{M}^\sigma(x;\lambda)$}
Recall $\mathbf{E}:=\mathrm{diag}(\gamma^{[1]},-\gamma^{[2]},\gamma^{[3]})$, and note that $\mathbf{E}=\mathbf{E}^{-1}$.  A simple calculation shows that the coefficient matrix $\mathcal{L}(x,t;\lambda):=-i\lambda\mathbf{C}-\mathbf{Q}(x,t)$ (see \eqref{x flow}) satisfies (for fixed real $x$ and $t$)
\begin{equation}
\mathcal{L}(x,t;\lambda) = -\mathbf{E}\mathcal{L}(x,t;\lambda^*)^\dagger\mathbf{E}
\end{equation}
(dagger denotes the conjugate transpose).  It then follows easily that $\Phi(x;\lambda)$ is an invertible matrix solution of \eqref{eq:LaxPair-x} (suppressing the $t$-dependence) if and only if
\begin{equation}
\Phi(x;\lambda^*):=\mathbf{E}\Phi(x;\lambda)^{-\dagger}\mathbf{E}
\end{equation}
is an invertible matrix solution of the same equation at the complex-conjugate value of the spectral parameter.  Here the superscript $-\dagger$ indicates both Hermitian conjugation and matrix inversion.  Multiplication on the right by $e^{i\lambda^*\mathbf{C}x/\epsilon}$ and taking into account the conditions \eqref{eq:RHP-M-direct-conditions} shows that $\mathbf{M}^\sigma(x;\lambda)$ is a solution of the direct scattering problem for some $\lambda$ with $\imag\{\lambda\}\neq 0$ if and only if $\mathbf{E}\mathbf{M}^\sigma(x;\lambda)^{-\dagger}\mathbf{E}$ is also a solution for $\lambda^*$.  Since (by the Volterra approach to direct scattering) $\mathbf{M}^\sigma(x;\lambda)$ is unique if it exists for some $\lambda$, it follows that the solution $\mathbf{M}^\sigma(x;\lambda)$ of \eqref{eq:Fredholm-system} has the Schwarz symmetry 
\eqref{eq:TWRI-Schwarz-symmetry}.
Therefore, knowledge of $\mathbf{M}^\sigma(x;\lambda)$ for $\imag\{\lambda\}>0$ only determines it also for $\imag\{\lambda\}<0$.  Note that using \eqref{eq:TWRI-Schwarz-symmetry} in \eqref{eq:Mplus-Mminus-intro} with $\mathbf{D}(\lambda)$ written in the form \eqref{eq:D-representation} shows that the analytic functions $u:\mathbb{C}\setminus\mathbb{R}\to\mathbb{C}$ and $v:\mathbb{C}\setminus\mathbb{R}\to\mathbb{C}$ satisfy 
\begin{equation}
u(\lambda^*)^*=u(\lambda)\quad\text{and}\quad v(\lambda^*)^*=v(\lambda).
\end{equation}
\subsubsection{Discontinuity of $\mathbf{M}^\sigma(x;\lambda)$ for $\lambda\in\mathbb{R}$}
To relate the boundary values taken on the real axis by $\mathbf{M}^\sigma(x;\lambda)$ from the upper and lower half $\lambda$-planes, we first introduce solutions of the conditions \eqref{eq:RHP-M-direct-conditions} and the differential equation \eqref{eq:LaxPair-x} assuming now that $\lambda\in\mathbb{R}$ (in which case the boundedness condition on $\mathbf{M}^\sigma(x;\lambda)$ becomes superfluous).  These are the so-called \emph{Jost solutions} of the scattering problem, and we denote them by $\mathbf{M}^\sigma_\mathrm{J}(x;\lambda)$ for $\lambda\in\mathbb{R}$.  They are uniquely characterized by Volterra equations:
\begin{equation}
M^\sigma_{\mathrm{J},jk}(x;\lambda)=\delta_{jk} -\frac{1}{\epsilon}\int_{\sigma\infty}^x e^{-i\lambda(c^{[j]}-c^{[k]})(x-y)/\epsilon}(\mathbf{Q}(y)\mathbf{M}_\mathrm{J}^\sigma(y;\lambda))_{jk}\,\dd y,\quad\lambda\in\mathbb{R}.
\label{eq:TWRI-Jost-Volterra}
\end{equation}
The iterates of these equations converge for $\lambda\in\mathbb{R}$ provided only $\mathbf{Q}\in L^1(\mathbb{R})$.  The corresponding matrix solutions $\Phi_\mathrm{J}^\sigma(x;\lambda)$ of the differential equation \eqref{eq:LaxPair-x} are both fundamental and so there exists a \emph{scattering matrix} $\mathbf{S}(\lambda)$ such that the identity
$\Phi_\mathrm{J}^+(x;\lambda)=\Phi_\mathrm{J}^-(x;\lambda)\mathbf{S}(\lambda)$
holds as an identity in $x\in\mathbb{R}$.  Equivalently, the scattering matrix is determined from $\mathbf{M}^\sigma_\mathrm{J}(x;\lambda)$ by the relation \eqref{eq:scattering-relation}, which in particular implies that 
$\mathbf{S}(\lambda)$ is unimodular:  $\det(\mathbf{S}(\lambda))=1$ holds for all $\lambda\in\mathbb{R}$.

The solution $\mathbf{M}^\sigma(x;\lambda)$ of the Fredholm-type system \eqref{eq:Fredholm-system} for $\imag\{\lambda\}\neq 0$ can be continuously extended to the real axis (possibly excluding a discrete set of real values of $\lambda$; see Appendix~\ref{subapp-Zhou-transform} for how such spectral singularities can be dealt with) from both half-planes.  We denote the boundary values by $\mathbf{M}^\sigma_\pm(x;\lambda)$ defined where possible as follows:
\begin{equation}
\mathbf{M}^\sigma_\pm(x;\lambda):=\lim_{\delta\downarrow 0}\mathbf{M}^\sigma(x;\lambda\pm i\delta),\quad \lambda\in\mathbb{R}.
\end{equation}
The boundary values $\mathbf{M}^\sigma_\pm(x;\lambda)$ continue to satisfy the Fredholm-type system \eqref{eq:Fredholm-system} with the understanding that the sign $\sigma_{jk}(\lambda)$ appearing in \eqref{eq:sigma-j-k-lambda}--\eqref{eq:signs} is reinterpreted simply as $\pm\mathrm{sgn}(k-j)$.
Now, the argument that $\mathbf{M}^\sigma(x;\lambda)\to \mathbb{I}$ as $x\to \sigma\infty$ for $\imag\{\lambda\}\neq 0$ breaks down when $\lambda$ becomes real.  From \eqref{eq:Fredholm-system} it is easy to see that if $\lambda\in\mathbb{R}$ and $\mathbf{M}^\sigma_\pm(x;\lambda)$ solving \eqref{eq:Fredholm-system} exists, then $M^\sigma_{\pm,jk}(x;\lambda)\to \delta_{jk}$ as $x\to\sigma\infty$ provided that $k=j$ or $\pm\mathrm{sgn}(k-j)=\sigma$.  Otherwise, $M^\sigma_{\pm,jk}(x;\lambda)e^{i\lambda(c^{[j]}-c^{[k]})x/\epsilon}$ tends, as $x\to\sigma\infty$, to a (generally nonzero) limit depending on $\lambda\in\mathbb{R}$.  Therefore, for some quantities $T_{jk}^+(\lambda)$, 
\begin{equation}
\lim_{x\to +\infty}e^{i\lambda\mathbf{C}x/\epsilon}\mathbf{M}^+_+(x;\lambda)e^{-i\lambda\mathbf{C}x/\epsilon} = \begin{pmatrix}1 & 0 & 0\\ T^+_{21}(\lambda) & 1 & 0\\ T^+_{31}(\lambda) & T^+_{32}(\lambda) & 1\end{pmatrix},
\label{eq:M-plus-plus-limit}
\end{equation}
while for some other quantities $T_{jk}^-(\lambda)$, 
\begin{equation}
\lim_{x\to -\infty}e^{i\lambda\mathbf{C}x/\epsilon}\mathbf{M}^-_+(x;\lambda)e^{-i\lambda\mathbf{C}x/\epsilon}=\begin{pmatrix}1 & T^-_{12}(\lambda) & T^-_{13}(\lambda)\\0 & 1 & T^-_{23}(\lambda)\\
0 & 0 & 1\end{pmatrix}.
\label{eq:M-minus-plus-limit}
\end{equation}
Letting $\lambda$ tend to the real axis from the upper half-plane in the Schwarz symmetry relation \eqref{eq:TWRI-Schwarz-symmetry} yields the formula
\begin{equation}
\mathbf{M}^\sigma_-(x;\lambda)=\mathbf{E}\mathbf{M}^\sigma_+(x;\lambda)^{-\dagger}\mathbf{E},\quad \lambda\in\mathbb{R}.
\end{equation}
Combining this with \eqref{eq:M-plus-plus-limit} yields
\begin{equation}
\lim_{x\to +\infty} e^{i\lambda\mathbf{C}x/\epsilon}\mathbf{M}_-^+(x;\lambda)e^{-i\lambda\mathbf{C}x/\epsilon}=\begin{pmatrix}1 & -\gamma^{[1]}\gamma^{[2]} T^+_{21}(\lambda)^* & \gamma^{[1]}\gamma^{[3]} T^+_{31}(\lambda)^*\\0 & 1 & -\gamma^{[2]}\gamma^{[3]} T^+_{32}(\lambda)^*\\0 & 0 & 1
\end{pmatrix}^{-1},
\end{equation}
and combining it with \eqref{eq:M-minus-plus-limit} yields
\begin{equation}
\lim_{x\to -\infty}e^{i\lambda\mathbf{C}x/\epsilon}\mathbf{M}^-_-(x;\lambda)e^{-i\lambda\mathbf{C}x/\epsilon}=\begin{pmatrix}
1 & 0 & 0\\
-\gamma^{[1]}\gamma^{[2]} T^-_{12}(\lambda)^* & 1 & 0\\
\gamma^{[1]}\gamma^{[3]} T^-_{13}(\lambda)^* & -\gamma^{[2]}\gamma^{[3]} T^-_{23}(\lambda)^* & 1
\end{pmatrix}^{-1}.
\end{equation}
By uniqueness of the Jost solutions, it then follows that
\begin{equation}
\mathbf{M}_+^+(x;\lambda)=
\mathbf{M}_\mathrm{J}^+(x;\lambda)e^{-i\lambda\mathbf{C}x/\epsilon}\begin{pmatrix}1 & 0 & 0\\T^+_{21}(\lambda) & 1 & 0\\T^+_{31}(\lambda) & T^+_{32}(\lambda) & 1\end{pmatrix}e^{i\lambda\mathbf{C}x/\epsilon},
\label{eq:M-plus-jump-upper}
\end{equation}
\begin{equation}
\mathbf{M}_-^+(x;\lambda)=\mathbf{M}_\mathrm{J}^+(x;\lambda) e^{-i\lambda\mathbf{C}x/\epsilon}
\begin{pmatrix}1 & -\gamma^{[1]}\gamma^{[2]} T^+_{21}(\lambda)^* & \gamma^{[1]}\gamma^{[3]} T^+_{31}(\lambda)^*\\0 & 1 & -\gamma^{[2]}\gamma^{[3]} T^+_{32}(\lambda)^*\\0 & 0 & 1
\end{pmatrix}^{-1}e^{i\lambda\mathbf{C}x/\epsilon},
\label{eq:M-plus-jump-lower}
\end{equation}
\begin{equation}
\mathbf{M}_+^-(x;\lambda)=\mathbf{M}_\mathrm{J}^-(x;\lambda)e^{-i\lambda\mathbf{C}x/\epsilon}
\begin{pmatrix}1 & T^-_{12}(\lambda) & T^-_{13}(\lambda)\\
0 & 1 & T^-_{23}(\lambda)\\
0 & 0 & 1\end{pmatrix}e^{i\lambda\mathbf{C}x/\epsilon},
\label{eq:M-minus-jump-upper}
\end{equation}
and
\begin{equation}
\mathbf{M}^-_-(x;\lambda)=\mathbf{M}_\mathrm{J}^-(x;\lambda)e^{-i\lambda\mathbf{C}x/\epsilon}
\begin{pmatrix}1 & 0 &0\\
-\gamma^{[1]}\gamma^{[2]} T^-_{12}(\lambda)^* & 1 & 0\\
\gamma^{[1]}\gamma^{[3]} T^-_{13}(\lambda)^* & -\gamma^{[2]}\gamma^{[3]} T^-_{23}(\lambda)^* & 1\end{pmatrix}^{-1} e^{i\lambda \mathbf{C}x/\epsilon}.
\label{eq:M-minus-jump-lower}
\end{equation}
Combining \eqref{eq:M-plus-jump-upper}--\eqref{eq:M-plus-jump-lower} or \eqref{eq:M-minus-jump-upper}--\eqref{eq:M-minus-jump-lower}, gives the \emph{jump condition} relating the boundary values of $\mathbf{M}^\sigma(x;\lambda)$:
\begin{equation}
\mathbf{M}^\sigma_+(x;\lambda)=\mathbf{M}^\sigma_-(x;\lambda)e^{-i\lambda\mathbf{C}x/\epsilon}
\mathbf{V}_0^\sigma(\lambda)e^{i\lambda\mathbf{C}x/\epsilon},\quad \lambda\in\mathbb{R},
\end{equation}
where $\mathbf{V}_0^\sigma(\lambda)$ is defined in terms of the quantities $T^\sigma_{\ell m}(\lambda)$ for $\sigma(\ell- m)>0$ by \eqref{eq:TWRI-Jump}.

The uniqueness of the Jost solutions combined with the Schwarz symmetry \eqref{eq:TWRI-Schwarz-symmetry} for $\lambda\in\mathbb{R}$ shows that $\mathbf{M}_\mathrm{J}^\pm(x;\lambda)=
\mathbf{E}\mathbf{M}_\mathrm{J}^\pm(x;\lambda)^{-\dagger}\mathbf{E}$ holds for all real $\lambda$.  From these identities it follows that the scattering matrix satisfies a similar identity:
\begin{equation}
\mathbf{S}(\lambda)=\mathbf{E}\mathbf{S}(\lambda)^{-\dagger}\mathbf{E},\quad\lambda\in\mathbb{R}.
\label{eq:S-symmetry}
\end{equation}
We will now express the elements of the jump matrices $\mathbf{V}_0^\sigma(\lambda)$ in terms of those of the scattering matrix $\mathbf{S}(\lambda)$, which makes them computable directly from the analysis of the Jost solutions alone.  To do this, we observe that $e^{i\lambda (c^{[j]}-c^{[k]})x/\epsilon}M_{\pm,jk}^\sigma(x;\lambda)$ also has a limit as $x\to -\sigma\infty$, namely,
\begin{equation}
\lim_{x\to -\infty}e^{i\lambda\mathbf{C}x/\epsilon}\mathbf{M}_+^+(x;\lambda)e^{-i\lambda\mathbf{C}x/\epsilon}=\begin{pmatrix}1 & T^+_{12}(\lambda) & T^+_{13}(\lambda)\\
0 &1 & T^+_{23}(\lambda)\\
0 & 0 & 1\end{pmatrix}^{-1}\mathbf{D}_+(\lambda),
\label{eq:M-plus-plus-other-limit}
\end{equation}
and similarly,
\begin{equation}
\lim_{x\to +\infty}e^{i\lambda\mathbf{C}x/\epsilon}\mathbf{M}_+^-(x;\lambda)e^{-i\lambda\mathbf{C}x/\epsilon}=\begin{pmatrix} 1 & 0 & 0\\
T^-_{21}(\lambda) & 1& 0\\
T^-_{31}(\lambda) & T^-_{32}(\lambda) & 1\end{pmatrix}^{-1}\mathbf{D}_+(\lambda)^{-1},
\label{eq:M-minus-plus-other-limit}
\end{equation}
where $T^\sigma_{\ell m}(\lambda)$ for $\sigma(\ell-m)<0$ are some additional quantities, and where 
\begin{equation}
\mathbf{D}_+(\lambda):=\lim_{\delta\downarrow 0} \mathbf{D}(\lambda+ i\delta),\quad\lambda\in\mathbb{R}.
\end{equation} 
Again by uniqueness of the Jost solutions, from \eqref{eq:M-plus-plus-other-limit}--\eqref{eq:M-minus-plus-other-limit} we get
\begin{equation}
\mathbf{M}_+^+(x;\lambda)=\mathbf{M}_\mathrm{J}^-(x;\lambda)e^{-i\lambda\mathbf{C}x/\epsilon}
\begin{pmatrix}1& T^+_{12}(\lambda) & T^+_{13}(\lambda)\\
0 & 1 & T^+_{23}(\lambda)\\ 0 & 0 & 1\end{pmatrix}^{-1}\mathbf{D}_+(\lambda)
e^{i\lambda\mathbf{C}x/\epsilon}
\label{eq:M-plus-plus-in-terms-of-Jost-minus}
\end{equation}
and 
\begin{equation}
\mathbf{M}^-_+(x;\lambda)=\mathbf{M}_\mathrm{J}^+(x;\lambda)e^{-i\lambda\mathbf{C}x/\epsilon}
\begin{pmatrix}1 & 0 & 0\\
T^-_{21}(\lambda) & 1 & 0\\
T^-_{31}(\lambda) & T^-_{32}(\lambda) &1\end{pmatrix}^{-1}\mathbf{D}_+(\lambda)^{-1}e^{i\lambda\mathbf{C}x/\epsilon}.
\label{eq:M-minus-plus-in-terms-of-Jost-plus}
\end{equation}
Eliminating $\mathbf{M}_+^+(x;\lambda)$ between \eqref{eq:M-plus-jump-upper} and \eqref{eq:M-plus-plus-in-terms-of-Jost-minus}, substituting from \eqref{eq:scattering-relation} and using the fact that $\mathbf{M}^-_\mathrm{J}(x;\lambda)$ is invertible gives the ``LDU'' factorization of $\mathbf{S}(\lambda)^{-1}$;
 similarly, from \eqref{eq:M-minus-jump-upper} and \eqref{eq:M-minus-plus-in-terms-of-Jost-plus} we get
the ``UDL'' factorization of $\mathbf{S}(\lambda)$.  These factorizations are explicitly given by \eqref{eq:LDU-UDL}.
Therefore, the quantities $T^+_{jk}(\lambda)$ for $j>k$ that enter into the jump matrix $\mathbf{V}_0^+(\lambda)$ can be obtained from the lower triangular factor in the LDU factorization of $\mathbf{S}(\lambda)^{-1}$, while the quantities $T^-_{jk}(\lambda)$ for $j<k$ that enter into the jump matrix $\mathbf{V}_0^-(\lambda)$ can be obtained from the upper triangular factor in the UDL factorization of $\mathbf{S}(\lambda)$.  Clearly, if $\mathbf{S}(\lambda)$ is diagonal, then 
$\mathbf{S}(\lambda)=\mathbf{D}_+(\lambda)$ and the jump matrices $\mathbf{V}_0^\sigma(\lambda)$ both coincide with the identity matrix.  More generally, the diagonal matrix $\mathbf{D}(\lambda)$ for $\imag\{\lambda\}>0$ can be recovered from the scattering matrix $\mathbf{S}(\lambda)$ by meromorphic continuation of $\mathbf{D}_+(\lambda)$, and from the factorization of $\mathbf{S}(\lambda)^{-1}$ given by \eqref{eq:LDU-UDL} and the representation \eqref{eq:D-representation} we have
\begin{equation}
u_+(\lambda)
=\left[\mathbf{S}(\lambda)^{-1}\right]_{11}=S_{11}(\lambda)^*,\quad\lambda\in\mathbb{R},
\label{eq:reciprocal-d1-plus}
\end{equation}
(the second equality follows from \eqref{eq:S-symmetry}) 
while from the factorization of $\mathbf{S}(\lambda)$ given by \eqref{eq:LDU-UDL} we have
\begin{equation}
v_+(\lambda)
=S_{33}(\lambda),\quad\lambda\in\mathbb{R}.
\label{eq:d3-plus}
\end{equation}
We remind the reader that the functions defined for $\lambda\in\mathbb{R}$ by \eqref{eq:reciprocal-d1-plus} and \eqref{eq:d3-plus} are the boundary values of functions analytic in the upper half-plane.

Finally, we note a general property of the scattering matrix $\mathbf{S}(\lambda)$ that is particularly useful in the study of the TWRI equations for initial fields $q^{[k]}(x,0)=q^{[k]}(x)$ having disjoint supports.  
\begin{prop}
Let $\mathbf{Q}(x)$ be a potential with scattering matrix $\mathbf{S}(\lambda)$.  Let $x_0\in\mathbb{R}$, and define potentials $\mathbf{Q}_<(x):=\mathbf{Q}(x)\chi_{(-\infty,x_0)}(x)$ and
$\mathbf{Q}_>(x):=\mathbf{Q}(x)\chi_{(x_0,+\infty)}(x)$ with associated scattering matrices $\mathbf{S}_<(\lambda)$ and $\mathbf{S}_>(\lambda)$ respectively.  Then $\mathbf{S}(\lambda)=\mathbf{S}_<(\lambda)\mathbf{S}_>(\lambda)$ holds for all $\lambda\in\mathbb{R}$.
\label{prop-scattering-factorize}
\end{prop}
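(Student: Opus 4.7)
The plan is to evaluate the scattering relations for all three potentials at the common point $x=x_0$, exploiting the fact that the Jost solutions of the cutoff problems collapse to free-wave solutions in the region where their respective potentials vanish.

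First, I would observe that since $\mathbf{Q}_<(x)\equiv \mathbf{0}$ for $x>x_0$, the Lax equation for $\mathbf{Q}_<$ reduces on $(x_0,+\infty)$ to $\epsilon\partial_x\Phi=-i\lambda\mathbf{C}\Phi$, whose solutions are of the form $e^{-i\lambda\mathbf{C}(x-x_1)/\epsilon}\Phi(x_1)$. The boundary condition $\Phi_\mathrm{J}^{+,<}(x;\lambda)\sim e^{-i\lambda\mathbf{C}x/\epsilon}$ as $x\to+\infty$ then forces
\begin{equation*}
\Phi_\mathrm{J}^{+,<}(x;\lambda)=e^{-i\lambda\mathbf{C}x/\epsilon}\quad\text{for all } x\ge x_0.
\end{equation*}
Conversely, since $\mathbf{Q}_<=\mathbf{Q}$ on $(-\infty,x_0]$ and $\Phi_\mathrm{J}^{-,<}$ and $\Phi_\mathrm{J}^{-}$ satisfy the same ODE on this half-line with the same normalization at $-\infty$, uniqueness of Volterra solutions of \eqref{eq:TWRI-Jost-Volterra} yields $\Phi_\mathrm{J}^{-,<}(x_0;\lambda)=\Phi_\mathrm{J}^{-}(x_0;\lambda)$.

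Next I would carry out the analogous argument for $\mathbf{Q}_>$, obtaining $\Phi_\mathrm{J}^{-,>}(x;\lambda)=e^{-i\lambda\mathbf{C}x/\epsilon}$ for all $x\le x_0$ and $\Phi_\mathrm{J}^{+,>}(x_0;\lambda)=\Phi_\mathrm{J}^{+}(x_0;\lambda)$. Then evaluating the partial scattering relations $\Phi_\mathrm{J}^{+,<}(x_0;\lambda)=\Phi_\mathrm{J}^{-,<}(x_0;\lambda)\mathbf{S}_<(\lambda)$ and $\Phi_\mathrm{J}^{+,>}(x_0;\lambda)=\Phi_\mathrm{J}^{-,>}(x_0;\lambda)\mathbf{S}_>(\lambda)$ at $x=x_0$ and solving for the scattering matrices gives
\begin{equation*}
\mathbf{S}_<(\lambda)=\Phi_\mathrm{J}^{-}(x_0;\lambda)^{-1}e^{-i\lambda\mathbf{C}x_0/\epsilon},\qquad
\mathbf{S}_>(\lambda)=e^{i\lambda\mathbf{C}x_0/\epsilon}\Phi_\mathrm{J}^{+}(x_0;\lambda).
\end{equation*}
Meanwhile the full scattering relation $\Phi_\mathrm{J}^{+}=\Phi_\mathrm{J}^{-}\mathbf{S}(\lambda)$ evaluated at $x_0$ gives $\mathbf{S}(\lambda)=\Phi_\mathrm{J}^{-}(x_0;\lambda)^{-1}\Phi_\mathrm{J}^{+}(x_0;\lambda)$. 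Multiplying $\mathbf{S}_<(\lambda)\mathbf{S}_>(\lambda)$ causes the factors $e^{\mp i\lambda\mathbf{C}x_0/\epsilon}$ to cancel telescopically, yielding $\mathbf{S}_<(\lambda)\mathbf{S}_>(\lambda)=\mathbf{S}(\lambda)$ for every $\lambda\in\mathbb{R}$.

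There is no substantive obstacle; the only subtlety is the bookkeeping which ensures that each Jost solution is being identified with the correct partner on the correct side of the cutoff point, and the observation that the splitting works because $\mathbf{C}$ commutes with itself (so the free-evolution group $e^{-i\lambda\mathbf{C}(x-x_0)/\epsilon}$ behaves multiplicatively in a clean way). The argument is general: it applies to any first-order matrix Lax system of the type \eqref{eq:LaxPair-x} whose Jost solutions are defined by Volterra equations like \eqref{eq:TWRI-Jost-Volterra}, and iterating it immediately yields the factorization $\mathbf{S}(\lambda)=\mathbf{S}^{[1]}(\lambda)\mathbf{S}^{[2]}(\lambda)\mathbf{S}^{[3]}(\lambda)$ used throughout \S\ref{subsec:zs-matrix}.
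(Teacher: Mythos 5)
Your argument is correct and is essentially the paper's own proof: both rest on the same two observations (the Jost solution of a cutoff potential reduces to the free wave $e^{-i\lambda\mathbf{C}x/\epsilon}$ on the side where the potential vanishes, and agrees with the Jost solution of the full potential on the side where the potentials coincide), evaluated at the common point $x=x_0$. The paper organizes this as a single telescoping chain of equalities for $\Phi_\mathrm{J}^+(x_0;\lambda)$ rather than solving explicitly for $\mathbf{S}_<$ and $\mathbf{S}_>$, but the content is identical.
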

\begin{proof}
Let $\Phi_\mathrm{J}^\pm(x;\lambda)$, $\Phi_{<\mathrm{J}}^\pm(x;\lambda)$, and $\Phi_{>\mathrm{J}}^\pm(x;\lambda)$ denote the Jost solutions corresponding to the potentials $\mathbf{Q}(x)$, $\mathbf{Q}_<(x)$, and $\mathbf{Q}_>(x)$ respectively.  Then 
\begin{multline}
\Phi_\mathrm{J}^+(x_0;\lambda)=\Phi_{>\mathrm{J}}^+(x_0;\lambda)=\Phi_{>\mathrm{J}}^-(x_0;\lambda)\mathbf{S}_>(\lambda)=e^{-i\lambda\mathbf{C}x_0/\epsilon}\mathbf{S}_>(\lambda)\\{}=
\Phi_{<\mathrm{J}}^+(x_0;\lambda)\mathbf{S}_>(\lambda)=\Phi_{<\mathrm{J}}^-(x_0;\lambda)
\mathbf{S}_<(\lambda)\mathbf{S}_>(\lambda)=\Phi_\mathrm{J}^-(x_0;\lambda)
\mathbf{S}_<(\lambda)\mathbf{S}_>(\lambda).
\end{multline}
Comparing with $\Phi_\mathrm{J}^+(x;\lambda)=\Phi_\mathrm{J}^-(x;\lambda)\mathbf{S}(\lambda)$
completes the proof.
\end{proof}
By applying this result successively, further decompositions of $\mathbf{S}(\lambda)$ can be achieved; for example if $x_{12}<x_{23}$ and we set $\mathbf{Q}^{[1]}(x):=\mathbf{Q}(x)\chi_{(-\infty,x_{12})}(x)$, $\mathbf{Q}^{[2]}(x):=\mathbf{Q}(x)\chi_{(x_{12},x_{23})}(x)$, and $\mathbf{Q}^{[3]}(x):=\mathbf{Q}(x)\chi_{(x_{23},+\infty)}(x)$, with associated scattering matrices $\mathbf{S}^{[k]}(\lambda)$, $k=1,2,3$, then $\mathbf{S}(\lambda)=\mathbf{S}^{[1]}(\lambda)\mathbf{S}^{[2]}(\lambda)\mathbf{S}^{[3]}(\lambda)$.
This is exactly the generalization needed in the situation described by \eqref{eq:disjoint-support-assumption}.

\subsubsection{Isolated singularities of $\mathbf{M}^\sigma(x;\lambda)$}
\label{section-discrete-data}
We first consider singularities in the upper half-plane $\imag\{\lambda\}>0$.  Since $\mathbf{m}^{+,3}(x;\lambda)$ and $\mathbf{m}^{-,1}(x;\lambda)$ satisfy Volterra equations with analytic kernels and absolutely convergent iterates for $\imag\{\lambda\}>0$, these columns of $\mathbf{M}^\sigma(x;\lambda)$ are automatically analytic in the upper half-plane.  According to \eqref{eq:m1-plus-Im-lambda-positive}, $\mathbf{m}^{+,1}(x;\lambda)$ has a pole of finite order at $\lambda\in\mathbb{C}_+$ if and only if the analytic function $u(\lambda)$ has a zero there of the same order.  Similarly, according to \eqref{eq:m3-minus-Im-lambda-positive}, $\mathbf{m}^{-,3}(x;\lambda)$ has a pole of finite order at $\lambda\in\mathbb{C}_+$ if and only if the analytic function $v(\lambda)$ has a zero there of the same order.  

The conditions under which the central column $\mathbf{m}^{\sigma,2}(x;\lambda)$ exhibits a singularity for $\imag\{\lambda\}>0$ are more subtle.  Combining \eqref{eq:m2-cross-product-formula} with \eqref{eq:mplus-1cross2-Im-lambda-positive} and the analyticity of $\mathbf{n}^{+,3}(x;\lambda)$ shows that it is necessary that $v(\lambda)=0$ for $\mathbf{m}^{+,2}(x;\lambda)$ to be singular (have a pole of finite order) at $\lambda\in\mathbb{C}_+$.
However, the condition $v(\lambda)=0$ alone is not sufficient to generate a singularity because although neither of the analytic factors $\mathbf{n}^{-,3}(x;\lambda)$ nor 
$\mathbf{n}^{+,1}(x;\lambda)$ (appearing in \eqref{eq:m2-cross-product-formula} for $\sigma=+$ along with the scalar factor $v(\lambda)^{-1}$) can vanish for any $x\in\mathbb{R}$, it is indeed possible that $\mathbf{n}^{-,3}(x;\lambda)\times\mathbf{n}^{+,1}(x;\lambda)$ may vanish identically as a function of $x$.  The latter is a vector solution of the differential equation \eqref{eq:m-columns-ODE} for $j=2$ that has the limit $u(\lambda)\mathbf{e}^2$ as $x\to -\infty$ and the limit $v(\lambda)\mathbf{e}^2$ as $x\to +\infty$.  Obviously the limiting value at $x=+\infty$ vanishes under the condition $v(\lambda)=0$ necessary for existence of a singularity of $\mathbf{m}^{+,2}(x;\lambda)$.  It is also clear that if $u(\lambda)\neq 0$ then $\mathbf{n}^{-,3}(x;\lambda)\times\mathbf{n}^{+,1}(x;\lambda)$ can vanish for no $x\in\mathbb{R}$ and thus $\mathbf{m}^{+,2}(x;\lambda)$ has a pole of the same order $z_v$ as the zero of $v(\lambda)$ (and at the same time there can be no singularity of $\mathbf{m}^{+,1}(x;\lambda)$).  More generally, if $u(\lambda)$ vanishes to some non-negative order $z_u<z_v$ then $\mathbf{m}^{+,2}(x;\lambda)$ has a pole of order at least $1$ and at most $z_v$.
If $u(\lambda)$ vanishes to order $z_u\ge z_v>0$ at some $\lambda\in\mathbb{C}_+$, then either $\mathbf{n}^{-,3}(x;\lambda)\times\mathbf{n}^{+,1}(x;\lambda)$ vanishes identically in $x$ to order at least $z_v$ or not.  In the former case 
the singularity of $\mathbf{m}^{+,2}(x;\lambda)$ is removable, while in the latter case 
(if $z_v=1$ this implies the existence of a nontrivial eigenfunction of \eqref{eq:m-columns-ODE} for $j=2$, i.e., a nonzero solution decaying to zero as $|x|\to\infty$)
 $\mathbf{m}^{+,2}(x;\lambda)$ has a pole of order at least $1$ and at most $z_v$ at $\lambda\in\mathbb{C}_+$.  Similar analysis shows that $u(\lambda)$ must have a zero of order $z_u>0$ at $\lambda\in\mathbb{C}_+$ in order that $\mathbf{m}^{-,2}(x;\lambda)$ have a pole of order at most $z_u$ at $\lambda\in\mathbb{C}_+$, but the singularity is non-removable only if either $v(\lambda)$ vanishes to non-negative order $z_v<z_u$ or $z_v\ge z_u$ and also there is some $x\in\mathbb{R}$ for which $\mathbf{n}^{-,3}(x;\lambda)\times\mathbf{n}^{+,1}(x;\lambda)$ does not vanish to order at least $z_u$ at $\lambda\in\mathbb{C}_+$.
 
 We therefore see that $\mathbf{M}^\sigma(x;\lambda)$ has a \emph{simple} pole at a point $\lambda_0\in\mathbb{C}_+$ if one of the following three distinct cases holds.
 \begin{itemize}
 \item If $\lambda_0\in\mathbb{C}_+$ is a simple zero of $u(\lambda)$ but $v(\lambda_0)\neq 0$, then $\mathbf{m}^{+,1}(x;\lambda)$ and $\mathbf{m}^{-,2}(x;\lambda)$ have simple poles at $\lambda_0$ while all other columns of $\mathbf{M}^\sigma(x;\lambda)$ are analytic at $\lambda_0$.  
 Since both $\mathbf{m}^{-,1}\times\mathbf{m}^{-,2}=\mathbf{n}^{-,3}$ and $\mathbf{m}^{+,1}\times\mathbf{m}^{+,2}=v^{-1}\mathbf{m}^{-,1}\times\mathbf{m}^{-,2}=v^{-1}\mathbf{n}^{-,3}$ are known to be analytic at $\lambda_0$ it follows that the principal parts of their Laurent expansions (proportional to $(\lambda-\lambda_0)^{-1}$) vanish, leading to the identities
 \begin{equation}
 \left[\mathop{\mathrm{Res}}_{\lambda=\lambda_0}\mathbf{m}^{+,1}(x;\lambda)\right]\times\mathbf{m}^{+,2}(x;\lambda_0)=\mathbf{0}\quad\text{and}\quad
 \mathbf{m}^{-,1}(x;\lambda_0)\times\left[\mathop{\mathrm{Res}}_{\lambda=\lambda_0}\mathbf{m}^{-,2}(x;\lambda)\right]=\mathbf{0}.
 \end{equation}
 The residue factors are necessarily nonzero (by the assumptions on $u$ and $v$ near $\lambda_0$) and the analytic columns solve the first-order equation \eqref{eq:m-columns-ODE} with nonzero boundary conditions and hence can vanish for no $x\in\mathbb{R}$.  It follows that there exist nonzero scalars $\beta_{12}(x)$ and $\beta_{21}(x)$ associated with $\lambda_0$ such that
\begin{equation}
\begin{split}
\mathop{\mathrm{Res}}_{\lambda=\lambda_0}\mathbf{M}^+(x;\lambda)&=\lim_{\lambda\to\lambda_0}\mathbf{M}^+(x;\lambda)\begin{pmatrix}0 & 0 & 0\\\beta_{21}(x) & 0 & 0\\
0 & 0 & 0\end{pmatrix},\\
\mathop{\mathrm{Res}}_{\lambda=\lambda_0}\mathbf{M}^-(x;\lambda)&=\lim_{\lambda\to\lambda_0}\mathbf{M}^-(x;\lambda)\begin{pmatrix}0 & \beta_{12}(x) & 0\\0 & 0 & 0\\0 & 0 & 0\end{pmatrix}.
\end{split}
\end{equation}
Using \eqref{eq:Mplus-Mminus-intro} and \eqref{eq:D-representation} for $\imag\{\lambda_0\}>0$ one can easily show that
\begin{equation}
\beta_{12}(x)\beta_{21}(x)=\frac{v(\lambda_0)}{u'(\lambda_0)^2}.
\label{eq:beta12-beta21-relation}
\end{equation}
 \item If $\lambda_0\in\mathbb{C}_+$ is a simple zero of $v(\lambda)$ but $u(\lambda_0)\neq 0$, then $\mathbf{m}^{+,2}(x;\lambda)$ and $\mathbf{m}^{-,3}(x;\lambda)$ have simple poles at $\lambda_0$ while all other columns of $\mathbf{M}^\sigma(x;\lambda)$ are analytic at $\lambda_0$.  Since both $\mathbf{m}^{+,2}\times\mathbf{m}^{+,3}=\mathbf{n}^{+,1}$ and $\mathbf{m}^{-,2}\times\mathbf{m}^{-,3}=u^{-1}\mathbf{m}^{+,2}\times\mathbf{m}^{+,3}=u^{-1}\mathbf{n}^{+,1}$ are known to be analytic at $\lambda_0$ it follows that the principal parts of their Laurent expansions (proportional to $(\lambda-\lambda_0)^{-1}$) vanish, leading to the identities
 \begin{equation}
 \left[\mathop{\mathrm{Res}}_{\lambda=\lambda_0}\mathbf{m}^{+,2}(x;\lambda)\right]\times
 \mathbf{m}^{+,3}(x;\lambda_0)=\mathbf{0}\quad\text{and}\quad
 \mathbf{m}^{-,2}(x;\lambda_0)\times\left[\mathop{\mathrm{Res}}_{\lambda=\lambda_0}
 \mathbf{m}^{-,3}(x;\lambda)\right]=\mathbf{0}.
 \end{equation}
 The residue factors are necessarily nonzero (by the assumptions on $u$ and $v$ near $\lambda_0$) and the analytic columns solve the first-order equation \eqref{eq:m-columns-ODE} with nonzero boundary conditions and hence can vanish for no $x\in\mathbb{R}$.  It follows that there exist nonzero scalars $\beta_{23}(x)$ and $\beta_{32}(x)$ such that
 \begin{equation}
 \begin{split}
 \mathop{\mathrm{Res}}_{\lambda=\lambda_0}\mathbf{M}^+(x;\lambda)=
 \lim_{\lambda\to\lambda_0}\mathbf{M}^+(x;\lambda)\begin{pmatrix}0 & 0 & 0\\
 0 & 0 & 0\\
 0 & \beta_{32}(x) & 0\end{pmatrix},\\
 \mathop{\mathrm{Res}}_{\lambda=\lambda_0}\mathbf{M}^-(x;\lambda)=
 \lim_{\lambda\to\lambda_0}\mathbf{M}^-(x;\lambda)\begin{pmatrix}0 & 0 & 0\\
 0 & 0 & \beta_{23}(x)\\0 & 0 & 0\end{pmatrix}.
 \end{split}
 \end{equation}
 Again, from \eqref{eq:Mplus-Mminus-intro} and \eqref{eq:D-representation} for $\imag\{\lambda_0\}>0$ it follows that
 \begin{equation}
 \beta_{23}(x)\beta_{32}(x)=\frac{u(\lambda_0)}{v'(\lambda_0)^2}.
 \label{eq:type-1-beta-swap}
 \end{equation}
 \item If $\lambda_0$ is a simultaneous simple zero of both $u(\lambda)$ and $v(\lambda)$, we have a subsidiary dichotomy based on whether or not 
 the double cross-product $\mathbf{n}^{-,3}(x;\lambda)\times\mathbf{n}^{+,1}(x;\lambda)$ vanishes identically as a function of $x$ for $\lambda=\lambda_0$.  Note that as a solution of \eqref{eq:m-columns-ODE} for $j=2$, the double cross-product either vanishes for all $x\in\mathbb{R}$ or for no $x\in\mathbb{R}$.
 \begin{itemize}
 \item If $\mathbf{n}^{-,3}(x;\lambda)\times\mathbf{n}^{+,1}(x;\lambda)\neq 0$ for $\lambda=\lambda_0$,
 then only the columns $\mathbf{m}^{+,3}(x;\lambda)$ and $\mathbf{m}^{-,1}(x;\lambda)$ are analytic at $\lambda_0$ and all other columns of $\mathbf{M}^\sigma(x;\lambda)$ have simple poles at $\lambda_0$.  Since $\mathbf{n}^{-,3}(x;\lambda)$ and $\mathbf{n}^{+,1}(x;\lambda)$ are analytic and nonzero at $\lambda=\lambda_0$ it follows that the leading terms of their Laurent expansions (proportional to $(\lambda-\lambda_0)^{-1}$) vanish, so using \eqref{eq:cross-product-definition} yields the identities
\begin{equation}
\mathbf{m}^{-,1}(x;\lambda_0)\times\left[\mathop{\mathrm{Res}}_{\lambda=\lambda_0}
\mathbf{m}^{-,2}(x;\lambda)\right]=\mathbf{0}\quad\text{and}\quad
\left[\mathop{\mathrm{Res}}_{\lambda=\lambda_0}\mathbf{m}^{+,2}(x;\lambda)\right]\times
\mathbf{m}^{+,3}(x;\lambda_0)=\mathbf{0}.
\label{eq:cross-identity-1}
\end{equation}
It also follows that both $\mathbf{m}^{+,1}\times\mathbf{m}^{+,2}=v^{-1}\mathbf{m}^{-,1}\times\mathbf{m}^{-,2}=v^{-1}\mathbf{n}^{-,3}$ and $\mathbf{m}^{-,2}\times\mathbf{m}^{-,3}=u^{-1}\mathbf{m}^{+,2}\times\mathbf{m}^{+,3}=u^{-1}\mathbf{n}^{+,1}$ have simple poles at $\lambda_0$.  This implies that the dominant terms (proportional to $(\lambda-\lambda_0)^{-2}$) of the Laurent expansions of $\mathbf{m}^{+,1}\times\mathbf{m}^{+,2}$ and $\mathbf{m}^{-,2}\times\mathbf{m}^{-,3}$ both vanish, leading to the identities
 \begin{equation}
 \left[\mathop{\mathrm{Res}}_{\lambda=\lambda_0}\mathbf{m}^{+,1}(x;\lambda)\right]\times
 \left[\mathop{\mathrm{Res}}_{\lambda=\lambda_0}\mathbf{m}^{+,2}(x;\lambda)\right]=\mathbf{0}\;\text{and}\;
 \left[\mathop{\mathrm{Res}}_{\lambda=\lambda_0}\mathbf{m}^{-,2}(x;\lambda)\right]\times
 \left[\mathop{\mathrm{Res}}_{\lambda=\lambda_0}\mathbf{m}^{-,3}(x;\lambda)\right]=\mathbf{0}.
 \label{eq:cross-identity-2}
 \end{equation}
 None of the residue factors in \eqref{eq:cross-identity-1}--\eqref{eq:cross-identity-2} can be zero, nor can the analytic columns, and therefore there exist nonzero scalars $\beta_{12}(x)$, $\beta_{13}(x)$, $\beta_{31}(x)$, and $\beta_{32}(x)$ such that
 \begin{equation}
 \begin{split}
 \mathop{\mathrm{Res}}_{\lambda=\lambda_0}\mathbf{M}^+(x;\lambda)&=
 \lim_{\lambda\to\lambda_0}\mathbf{M}^+(x;\lambda)\begin{pmatrix}0&0&0\\0&0&0\\\beta_{31}(x)&\beta_{32}(x) & 0\end{pmatrix},\\
 \mathop{\mathrm{Res}}_{\lambda=\lambda_0}\mathbf{M}^-(x;\lambda)&=
 \lim_{\lambda\to\lambda_0}\mathbf{M}^-(x;\lambda)\begin{pmatrix}0 & \beta_{12}(x) & \beta_{13}(x)\\0 & 0 & 0\\0 & 0 & 0\end{pmatrix}.
 \end{split}
 \end{equation}
 Using \eqref{eq:Mplus-Mminus-intro} and \eqref{eq:D-representation} for $\imag\{\lambda_0\}>0$ shows that
 the residue scalars for $\mathbf{M}^-$ are related to those for $\mathbf{M}^+$ by
 \begin{equation}
 \beta_{12}(x)=\frac{v'(\lambda_0)\beta_{32}(x)}{u'(\lambda_0)^2\beta_{31}(x)}
 \quad\text{and}\quad\beta_{13}(x)=\frac{1}{u'(\lambda_0)v'(\lambda_0)\beta_{31}(x)}.
  \label{eq:type-solsplit-beta-swap}
 \end{equation}
 \item If $\mathbf{n}^{-,3}(x;\lambda)\times\mathbf{n}^{+,1}(x;\lambda)=0$ for $\lambda=\lambda_0$, 
  then only the columns $\mathbf{m}^{+,1}(x;\lambda)$ and $\mathbf{m}^{-,3}(x;\lambda)$ have simple poles at $\lambda_0$ and all other columns of $\mathbf{M}^\sigma(x;\lambda)$ are analytic at $\lambda_0$.  Since $\mathbf{n}^{-,3}(x;\lambda)$ and $\mathbf{n}^{+,1}(x;\lambda)$ are analytic and nonzero at $\lambda_0$, $\mathbf{m}^{+,1}\times\mathbf{m}^{+,2}=v^{-1}\mathbf{m}^{-,1}\times\mathbf{m}^{-,2}=v^{-1}\mathbf{n}^{-,3}$ and $\mathbf{m}^{-,2}\times\mathbf{m}^{-,3}=u^{-1}\mathbf{m}^{+,2}\times\mathbf{m}^{+,3}=u^{-1}\mathbf{n}^{+,1}$ have simple poles at $\lambda_0$.  Therefore, the leading coefficients in their Laurent expansions are nonzero, which implies the \emph{inequalities}
 \begin{equation}
 \left[\mathop{\mathrm{Res}}_{\lambda=\lambda_0}\mathbf{m}^{+,1}(x;\lambda)\right]\times
 \mathbf{m}^{+,2}(x;\lambda_0)\neq\mathbf{0}\quad\text{and}\quad
 \mathbf{m}^{-,2}(x;\lambda_0)\times\left[\mathop{\mathrm{Res}}_{\lambda=\lambda_0}\mathbf{m}^{-,3}(x;\lambda)\right]\neq\mathbf{0}.
 \label{eq:cross-product-inequalities}
 \end{equation}
 On the other hand, the term proportional to $(\lambda-\lambda_0)^{-1}$ in the Laurent expansion of the left-hand side of the identity $\det(\mathbf{M}^\sigma(x;\lambda))=1$ has to vanish, and this implies the identities
 \begin{equation}
 \begin{split}
 \det\left(\mathop{\mathrm{Res}}_{\lambda=\lambda_0}\mathbf{m}^{+,1}(x;\lambda),\mathbf{m}^{+,2}(x;\lambda_0),\mathbf{m}^{+,3}(x;\lambda_0)\right)&=0,\\
 \det\left(\mathbf{m}^{-,1}(x;\lambda_0),\mathbf{m}^{-,2}(x;\lambda_0),\mathop{\mathrm{Res}}_{\lambda=\lambda_0}\mathbf{m}^{-,3}(x;\lambda)\right)&=0.
 \end{split}
 \end{equation}
 From the latter relations it follows that there exist scalars $\beta_{13}(x)$, $\beta_{23}(x)$, $\beta_{21}(x)$, and $\beta_{31}(x)$ such that
 \begin{equation}
 \begin{split}
 \mathop{\mathrm{Res}}_{\lambda=\lambda_0}\mathbf{M}^+(x;\lambda)&=\lim_{\lambda\to\lambda_0}\mathbf{M}^+(x;\lambda)\begin{pmatrix}0 & 0 & 0\\
 \beta_{21}(x) & 0 & 0\\
 \beta_{31}(x) & 0 & 0\end{pmatrix},\\
 \mathop{\mathrm{Res}}_{\lambda=\lambda_0}\mathbf{M}^-(x;\lambda)&=\lim_{\lambda\to\lambda_0}\mathbf{M}^-(x;\lambda)\begin{pmatrix} 0&0&\beta_{13}(x)\\
 0 & 0 & \beta_{23}(x)\\0 & 0 & 0\end{pmatrix}.
 \end{split}
 \label{eq:96}
 \end{equation}
 From the inequalities \eqref{eq:cross-product-inequalities} it follows that both $\beta_{13}(x)$ and $\beta_{31}(x)$ must be nonzero, but it is not possible to exclude the possibility that $\beta_{21}(x)$ or $\beta_{23}(x)$ could vanish.  Using \eqref{eq:Mplus-Mminus-intro} and \eqref{eq:D-representation} for $\imag\{\lambda_0\}>0$ shows that the residue scalars for $\mathbf{M}^-$ are related to those for $\mathbf{M}^+$ by
 \begin{equation}
 \beta_{23}(x)=-\frac{u'(\lambda_0)\beta_{21}(x)}{v'(\lambda_0)^2\beta_{31}(x)}\quad\text{and}\quad
 \beta_{13}(x)=\frac{1}{u'(\lambda_0)v'(\lambda_0)\beta_{31}(x)}.
 \label{eq:type-solfuse-or-2-beta-swap}
 \end{equation}
 \end{itemize}
 \end{itemize}
Because all singularities of $\mathbf{M}^\sigma(x;\lambda)$ for $\lambda\in\mathbb{C}_+$ arise by multiplication of columns that are solutions of \eqref{eq:m-columns-ODE} analytic for $\imag\{\lambda\}>0$ by meromorphic factors independent of $x$, it follows that if $\mathbf{m}^{\sigma,j}(x;\lambda)$ has a simple pole at $\lambda_0$ then its residue satisfies \eqref{eq:m-columns-ODE}.  Then, since the product of $e^{-i\lambda c^{[j]}x/\epsilon}$ with any solution of \eqref{eq:m-columns-ODE} satisfies \eqref{eq:LaxPair-x}, an equation that is the same regardless of the column index $j$, the above residue relations for $\mathbf{M}^\sigma(x;\lambda)$ at $\lambda=\lambda_0$ imply that the scalars $\beta_{jk}(x)$ 
associated to the simple pole $\lambda_0\in\mathbb{C}_+$ necessarily have the form
\begin{equation}
\beta_{jk}(x)=\beta_{jk}e^{-i\lambda_0(c^{[j]}-c^{[k]})x/\epsilon},
\label{eq:gamma-jk-x}
\end{equation}
where $\beta_{jk}$ are complex constants, the \emph{connection coefficients} for the simple pole $\lambda_0$.  

The residue relations may be summarized as follows:  for each simple pole $\lambda_0\in\mathbb{C}_+$ of $\mathbf{M}^\sigma(x;\lambda)$ there exist associated nonzero triangular constant matrices $\mathbf{N}^\sigma$ that are $2$-nilpotent (i.e., $\mathbf{N}^\sigma\mathbf{N}^\sigma=\mathbf{0}$), having the form \eqref{eq:Nplus-Nminus-form}, such that the identity \eqref{eq:M-sigma-residue-summary} holds.  It should be noted that, unlike in many other integrable equations, the poles of $\mathbf{M}^\sigma(x;\lambda)$ are generally not $L^2$ eigenvalues of the equation \eqref{eq:LaxPair-x}; the residue condition implies the existence of a certain subspace of solutions with related behavior in the limits $x\to \pm\sigma\infty$, but this is not a subspace of solutions with exponential decay in both directions.

Assuming that the only singularities of $\mathbf{M}^\sigma(x;\lambda)$ for $\imag\{\lambda\}>0$ are simple poles, the Schwarz-symmetry relation \eqref{eq:TWRI-Schwarz-symmetry} implies that the same is true for $\imag\{\lambda\}<0$ with the poles being the complex conjugates of those in the upper half-plane.  Moreover, relations corresponding to \eqref{eq:M-sigma-residue-summary}
but characterizing instead the residues at the singularity $\lambda_0^*\in\mathbb{C}_-$ can be obtained directly from the latter relations using \eqref{eq:TWRI-Schwarz-symmetry}.  The induced relations all stem from the observation that if $\mathbf{A}=(\mathbf{a}^1,\mathbf{a}^2,\mathbf{a}^3)$ is any $3\times 3$ matrix with unit determinant, then from Cramer's rule,
\begin{equation}
\mathbf{A}^{-\trans}=(\mathbf{a}^2\times\mathbf{a}^3,\mathbf{a}^3\times\mathbf{a}^1,\mathbf{a}^1\times\mathbf{a}^2),
\label{eq:A-inverse-transpose}
\end{equation}
where $\mathbf{A}^{-\trans}$ denotes the inverse transpose matrix.  

Let $\lambda_0\in\mathbb{C}_+$ be a simple pole of $\mathbf{M}^\sigma(x;\lambda)$ for which $\beta_{32}=0$.  According to \eqref{eq:M-sigma-residue-summary}, $\mathbf{M}^+(x;\lambda)$ has a Laurent expansion about $\lambda_0$ of the form  
\begin{equation}
\mathbf{M}^+(x;\lambda)=\frac{(\beta_{21}(x)\mathbf{c}^2+\beta_{31}(x)\mathbf{c}^3,\mathbf{0},\mathbf{0})}{\lambda-\lambda_0} + (\mathbf{c}^1,\mathbf{c}^2,\mathbf{c}^3) + \bigo{\lambda-\lambda_0}.
\end{equation}
Applying \eqref{eq:A-inverse-transpose} gives
\begin{equation}
\mathbf{M}^+(x;\lambda)^{-\trans}=\frac{(\mathbf{0},-\beta_{21}(x)\widetilde{\mathbf{c}}^1,-\beta_{31}(x)\widetilde{\mathbf{c}}^1))}{\lambda-\lambda_0} +(\widetilde{\mathbf{c}}^1,\widetilde{\mathbf{c}}^2,\widetilde{\mathbf{c}}^3) + \bigo{\lambda-\lambda_0},
\end{equation}
where $\widetilde{\mathbf{c}}^1:=\mathbf{c}^2\times\mathbf{c}^3$,
or, evaluating at $\lambda^*$ and taking the complex conjugate,
\begin{equation}
\mathbf{M}^+(x;\lambda^*)^{-\dagger}=\frac{(\mathbf{0},-\beta_{21}(x)^*\widetilde{\mathbf{c}}^{1*},-\beta_{31}(x)^*\widetilde{\mathbf{c}}^{1*})}{\lambda-\lambda_0^*} + (\widetilde{\mathbf{c}}^{1*},\widetilde{\mathbf{c}}^{2*},\widetilde{\mathbf{c}}^{3*})+\bigo{\lambda-\lambda_0^*}.
\end{equation}
The latter implies the residue relation
\begin{equation}
\mathop{\mathrm{Res}}_{\lambda=\lambda_0^*}\mathbf{M}^+(x;\lambda^*)^{-\dagger}=
\lim_{\lambda\to\lambda_0^*}\mathbf{M}^+(x;\lambda^*)^{-\dagger}\begin{pmatrix}0 & -\beta_{21}(x)^* & -\beta_{31}(x)^*\\0 & 0 & 0\\ 0 & 0 & 0\end{pmatrix}.
\end{equation}
Conjugating this formula by $\mathbf{E}=\mathrm{diag}(\gamma^{[1]},-\gamma^{[2]},\gamma^{[3]})=\mathbf{E}^{-1}$ and using \eqref{eq:TWRI-Schwarz-symmetry} gives
\begin{equation}
\mathop{\mathrm{Res}}_{\lambda=\lambda_0^*}\mathbf{M}^+(x;\lambda)=
\lim_{\lambda\to\lambda_0^*}\mathbf{M}^+(x;\lambda)\begin{pmatrix}
0 & \gamma^{[1]}\gamma^{[2]}\beta_{21}(x)^* & -\gamma^{[1]}\gamma^{[3]}\beta_{31}(x)^*\\
0 & 0 & 0\\
0 & 0 & 0\end{pmatrix}.
\label{eq:Mplus-residue-A-Im-lambda-negative}
\end{equation}
Similarly, if $\lambda_0\in\mathbb{C}_+$ is a simple pole of $\mathbf{M}^\sigma(x;\lambda)$ for which $\beta_{12}=0$, 
\begin{equation}
\mathop{\mathrm{Res}}_{\lambda=\lambda_0^*}\mathbf{M}^-(x;\lambda)=\lim_{\lambda\to\lambda_0^*}\mathbf{M}^-(x;\lambda)\begin{pmatrix}0 & 0 & 0\\0 & 0 & 0\\-\gamma^{[1]}\gamma^{[3]}\beta_{13}(x)^* & 
\gamma^{[2]}\gamma^{[3]}\beta_{23}(x)^* & 0\end{pmatrix}.
\end{equation}

Now let $\lambda_0\in\mathbb{C}_+$ be a simple pole of $\mathbf{M}^\sigma(x;\lambda)$ for which $\beta_{21}=0$.  From \eqref{eq:M-sigma-residue-summary}, the Laurent expansion of $\mathbf{M}^+(x;\lambda)$ about $\lambda_0$ has the form (it is necessary to keep track of further terms in the  expansion in order to observe their eventual cancellation)
\begin{equation}
\mathbf{M}^+(x;\lambda)=\frac{(\beta_{31}(x)\mathbf{c}^3,\beta_{32}(x)\mathbf{c}^3,\mathbf{0})}{\lambda-\lambda_0} + (\mathbf{c}^1,\mathbf{c}^2,\mathbf{c}^3) + (\mathbf{d}^1,\mathbf{d}^2,\mathbf{d}^3)(\lambda-\lambda_0)+\bigo{(\lambda-\lambda_0)^2}.
\end{equation}
Therefore using \eqref{eq:A-inverse-transpose},
\begin{equation}
\mathbf{M}^+(x;\lambda)^{-\trans}=\frac{(\mathbf{0},\mathbf{0},\beta_{31}(x)\mathbf{c}^3\times\mathbf{c}^2+\beta_{32}(x)\mathbf{c}^1\times\mathbf{c}^3)}{\lambda-\lambda_0} + 
(\widetilde{\mathbf{c}}^1,\widetilde{\mathbf{c}}^2
,\widetilde{\mathbf{c}}^3) + \bigo{\lambda-\lambda_0},
\label{eq:intermediate-formula-1}
\end{equation}
where
\begin{equation}
\widetilde{\mathbf{c}}^1=\beta_{32}(x)\mathbf{c}^3\times\mathbf{d}^3 +\mathbf{c}^2\times\mathbf{c}^3
\quad\text{and}\quad
\widetilde{\mathbf{c}}^2=\beta_{31}(x)\mathbf{d}^3\times\mathbf{c}^3+\mathbf{c}^3\times\mathbf{c}^1.
\end{equation}
We then observe that the residue term in \eqref{eq:intermediate-formula-1} can be expressed in terms of $\widetilde{\mathbf{c}}^1$ and $\widetilde{\mathbf{c}}^2$ only (the terms proportional to $\mathbf{d}^3$ cancel):  
\begin{equation}
\beta_{31}(x)\mathbf{c}^3\times\mathbf{c}^2+\beta_{32}(x)\mathbf{c}^1\times\mathbf{c}^3 = 
-\beta_{31}(x)\widetilde{\mathbf{c}}^1-\beta_{32}(x)\widetilde{\mathbf{c}}^2.
\end{equation}
Evaluating \eqref{eq:intermediate-formula-1} at $\lambda^*$, complex conjugating, and using \eqref{eq:TWRI-Schwarz-symmetry} then leads to the formula
\begin{equation}
\mathop{\mathrm{Res}}_{\lambda=\lambda_0^*}\mathbf{M}^+(x;\lambda)=
\lim_{\lambda\to\lambda_0^*}\mathbf{M}^+(x;\lambda)\begin{pmatrix}
0 & 0 & -\gamma^{[1]}\gamma^{[3]}\beta_{31}(x)^*\\
0 & 0 & \gamma^{[2]}\gamma^{[3]}\beta_{32}(x)^*\\
0 & 0 & 0
\end{pmatrix}.
\label{eq:Mplus-residue-AB-Im-lambda-negative}
\end{equation}
Similarly, if $\beta_{23}=0$, then
\begin{equation}
\mathop{\mathrm{Res}}_{\lambda=\lambda_0^*}\mathbf{M}^-(x;\lambda)=
\lim_{\lambda\to\lambda_0^*}\mathbf{M}^-(x;\lambda)
\begin{pmatrix}
0 & 0 & 0\\
\gamma^{[1]}\gamma^{[2]}\beta_{12}(x)^* & 0 & 0\\
-\gamma^{[1]}\gamma^{[3]}\beta_{13}(x)^* & 0 & 0
\end{pmatrix}.
\label{eq:Mminus-residue-AB-Im-lambda-negative}
\end{equation}

In summary, the relations corresponding to \eqref{eq:M-sigma-residue-summary} but applying to the simple pole at the conjugate point $\lambda_0^*\in\mathbb{C}_-$ can be written in the universal form
\begin{equation}
\mathop{\mathrm{Res}}_{\lambda=\lambda_0^*}\mathbf{M}^\sigma(x;\lambda)=\lim_{\lambda\to\lambda_0^*}\mathbf{M}^\sigma(x;\lambda)e^{-i\lambda_0^*\mathbf{C}x/\epsilon}(-\mathbf{E}\mathbf{N}^{\sigma\dagger}\mathbf{E})e^{i\lambda_0^*\mathbf{C}x/\epsilon}.
\end{equation}

\subsection{Time dependence of the scattering data}
\label{subapp:time-dependence}
Let $\{q^{[1]},q^{[2]},q^{[3]}\}$ be a classical solution of the TWRI equations \eqref{3wave} for which $q^{[k]}(x,t)\to 0$, $k=1,2,3$, sufficiently rapidly as $x\to\pm\infty$ for each $t$ in some open interval $t_0<t<t_1$.  
This means that the matrix $\mathbf{M}^\sigma(x;\lambda)$ satisfying the conditions \eqref{eq:RHP-M-direct-conditions} and the differential equations \eqref{eq:Mij-system} exists for all $t\in(t_0,t_1)$.  More precisely, for each such $t$ there is an exceptional set consisting of the union of $\mathbb{R}$ and a  discrete set of finite-order pole singularities in $\mathbb{C}\setminus\mathbb{R}$, and $\mathbf{M}(x;\lambda)$ exists uniquely for each $\lambda$ in the complement of this exceptional set.  We denote this matrix function by $\mathbf{M}(x,t;\lambda)$.  

Fix a point $\lambda\in\mathbb{C}\setminus\mathbb{R}$ that is non-exceptional in a neighborhood $U(\tau)$ of some $\tau\in (t_0,t_1)$.  Then for each $t\in U(\tau)$, $\Phi^\sigma(x,t;\lambda):=\mathbf{M}^\sigma(x,t;\lambda)e^{i\lambda\mathbf{C}x/\epsilon}$ is a fundamental solution matrix for the differential equation \eqref{eq:LaxPair-x}, and hence for each invertible matrix $\mathbf{K}(t)$ the same can be said of $\Phi^\sigma(x,t;\lambda)\mathbf{K}(t)$.  Moreover, since $\{q^{[1]},q^{[2]},q^{[3]}\}$ solve \eqref{3wave}, the two equations \eqref{eq:LaxPair-x}--\eqref{eq:LaxPair-t} are compatible in the sense that they admit a common simultaneous solution matrix for all $t\in U(\tau)$ and $x\in\mathbb{R}$ for the chosen value of $\lambda$.  Obviously this simultaneous solution must have the form $\Phi^\sigma(x,t;\lambda)\mathbf{K}(t)$, and by substitution into \eqref{eq:LaxPair-t} it follows that the matrix $\mathbf{M}^\sigma(x,t;\lambda)$ satisfies the differential equation
\begin{equation}
\epsilon\frac{\partial\mathbf{M}^\sigma}{\partial t}(x,t;\lambda)=\mathcal{B}(x,t;\lambda)\mathbf{M}^\sigma(x,t;\lambda)-\epsilon \mathbf{M}^\sigma(x,t;\lambda)e^{-i\lambda\mathbf{C}x/\epsilon}\frac{\dd\mathbf{K}}{\dd t}(t)\mathbf{K}(t)^{-1}e^{i\lambda\mathbf{C}x/\epsilon}.
\label{eq:M-general-t-eqn}
\end{equation}
Since $q^{[k]}(x,t)\to 0$ as $x\to\pm\infty$, $k=1,2,3$, it follows that $\mathcal{B}(x,t;\lambda)\to\mathcal{B}^\infty(\lambda):=-i\lambda|\mathbf{C}|\mathbf{C}^{-1}$ in the same limit.  Also, by definition of $\mathbf{M}^\sigma(x,t;\lambda)$, $\mathbf{M}^\sigma(x,t;\lambda)\to\mathbb{I}$ as $x\to\sigma\infty$ and $\mathbf{M}^\sigma(x,t;\lambda)\to\mathbf{D}(\lambda;t)^\sigma$ as $x\to-\sigma\infty$, where $\mathbf{D}(\lambda;t)$ is a diagonal matrix with unit determinant.  If we suppose that differentiation with respect to $t$ commutes with taking the limits $x\to\pm\infty$, we conclude that $\partial\mathbf{M}^\sigma(x,t;\lambda)/\partial t \to \mathbf{0}$ as $x\to\sigma\infty$ and that the off-diagonal part of $\partial\mathbf{M}^\sigma(x,t;\lambda)/\partial t$ vanishes also as $x\to -\sigma\infty$.
Multiplying \eqref{eq:M-general-t-eqn} on the left by $\mathbf{M}^\sigma(x,t;\lambda)^{-1}$ and considering the limits $x\to\pm\infty$ then shows that $\epsilon\mathbf{K}'(t)\mathbf{K}(t)^{-1}$ must be a diagonal matrix\footnote{Otherwise $e^{-i\lambda\mathbf{C}x/\epsilon}\mathbf{K}'(t)\mathbf{K}(t)^{-1}e^{i\lambda\mathbf{C}x/\epsilon}$ has off-diagonal elements that blow up in one or the other limit --- as $\imag\{\lambda\}\neq 0$ --- and that cannot be compensated for by the remaining terms in \eqref{eq:M-general-t-eqn} which converge to diagonal matrices.}, and from the limit $x\to\sigma\infty$ one can identify this diagonal matrix as $\mathcal{B}^\infty(\lambda)$.  Therefore \eqref{eq:M-general-t-eqn} can in fact be rewritten as
\begin{equation}
\epsilon\frac{\partial\mathbf{M}^\sigma}{\partial t}(x,t;\lambda)=\mathcal{B}(x,t;\lambda)\mathbf{M}^\sigma(x,t;\lambda)-\mathbf{M}^\sigma(x,t;\lambda)\mathcal{B}^\infty(\lambda).
\label{eq:M-t-eqn}
\end{equation}
This equation immediately shows that if $\lambda=\lambda_0(t)\not\in\mathbb{R}$ is an isolated singularity of $\mathbf{M}^\sigma(x,t;\lambda)$ assumed to be differentiable with respect to $t$, necessarily a pole of finite order, then $\lambda_0(t)$ is in fact independent of $t$.  Indeed, differentiating with respect to $t$ the Laurent expansion of $\mathbf{M}^\sigma(x,t;\lambda)$ about a pole $\lambda_0(t)$ produces a term proportional to $\lambda_0'(t)$ that is more singular at $\lambda_0(t)$ than $\mathbf{M}^\sigma(x,t;\lambda)$ itself.  It follows from \eqref{eq:M-t-eqn} that $\lambda_0'(t)=0$.  Therefore, the exceptional set in the complex $\lambda$-plane for $\mathbf{M}^\sigma(x,t;\lambda)$ is independent of $t$ (and, of course $x$).

Suppose that $\lambda_0$ is a simple pole of $\mathbf{M}^\sigma(x,t;\lambda)$, and let $C$ be a small circle with positive orientation centered at $\lambda_0$ with radius sufficiently small that every point of $C$ is non-exceptional and that $\lambda_0$ is the only exceptional point in the interior.  For each $t$ a relation of the general form
\begin{equation}
\mathop{\mathrm{Res}}_{\lambda=\lambda_0}\mathbf{M}^\sigma(x,t;\lambda)=\lim_{\lambda\to\lambda_0}\mathbf{M}^\sigma(x,t;\lambda)\mathbf{N}(x,t)
\label{eq:M-residue-general}
\end{equation}
holds, where $\mathbf{N}(x,t)$ is a $2$-nilpotent matrix whose structure and dependence on $x$ has been explained in 
Appendix~\ref{section-discrete-data}.  We can now
easily deduce the way that $\mathbf{N}(x,t)$ evolves in time $t$.  We begin by rewriting \eqref{eq:M-residue-general} in the form
\begin{equation}
\frac{1}{2\pi i}\oint_C\mathbf{M}^\sigma(x,t;\lambda)\,\dd\lambda = \frac{1}{2\pi i}\oint_C\frac{\mathbf{M}^\sigma(x,t;\lambda)\mathbf{N}(x,t)}{\lambda-\lambda_0}\,\dd\lambda.
\end{equation}
For each $\lambda\in C$ the differential equation \eqref{eq:M-t-eqn} holds, so differentiating with respect to $t$ under the integral sign yields
\begin{multline}
\oint_C\left[\mathcal{B}(x,t;\lambda)\mathbf{M}^\sigma(x,t;\lambda)-\mathbf{M}^\sigma(x,t;\lambda)\mathcal{B}^\infty(\lambda)\right]\,\dd\lambda \\ {}= \oint_C\frac{\mathcal{B}(x,t;\lambda)\mathbf{M}^\sigma(x,t;\lambda)\mathbf{N}(x,t)-\mathbf{M}^\sigma(x,t;\lambda)\mathcal{B}^\infty(\lambda)\mathbf{N}(x,t)+\mathbf{M}^\sigma(x,t;\lambda)\epsilon\mathbf{N}_t(x,t)}{\lambda-\lambda_0}\,\dd\lambda,
\label{eq:loop-t-derivative}
\end{multline}
where $\mathbf{N}_t(x,t):=\partial\mathbf{N}(x,t)/\partial t$.  Let us write the Laurent expansion of $\mathbf{M}^\sigma(x,t;\lambda)$ about the presumed simple pole $\lambda=\lambda_0$ in the form
\begin{equation}
\mathbf{M}^\sigma(x,t;\lambda)=\frac{\mathbf{R}(x,t)}{\lambda-\lambda_0} +\mathbf{S}(x,t) + \cdots,
\label{eq:M-pole-Laurent}
\end{equation}
and then evaluate the integrals in \eqref{eq:loop-t-derivative} by residues at $\lambda_0$.  This yields the identity
\begin{multline}
\mathcal{B}(x,t;\lambda_0)\mathbf{R}(x,t)-\mathbf{R}(x,t)\mathcal{B}^\infty(\lambda_0)\\
{}=\mathcal{B}_\lambda(x,t;\lambda_0)\mathbf{R}(x,t)\mathbf{N}(x,t) + \mathcal{B}(x,t;\lambda_0)\mathbf{S}(x,t)\mathbf{N}(x,t)\\{}-\mathbf{R}(x,t)\mathcal{B}^\infty_\lambda(\lambda_0)\mathbf{N}(x,t)-\mathbf{S}(x,t)\mathcal{B}^\infty(\lambda_0)\mathbf{N}(x,t)+\mathbf{S}(x,t)\epsilon\mathbf{N}_t(x,t),
\label{eq:residue-calculation-1}
\end{multline}
where the subscript $\lambda$ denotes differentiation with respect to $\lambda$.  
Combining \eqref{eq:M-pole-Laurent} with 
\eqref{eq:M-residue-general} gives $\mathbf{R}(x,t)=\mathbf{S}(x,t)\mathbf{N}(x,t)$ and, since $\mathbf{N}(x,t)$ is $2$-nilpotent, $\mathbf{R}(x,t)\mathbf{N}(x,t)=\mathbf{0}$.  Using these in \eqref{eq:residue-calculation-1} gives
\begin{equation}
\mathbf{S}(x,t)\left(\epsilon\frac{\partial\mathbf{N}}{\partial t}(x,t)+[\mathbf{N}(x,t),\mathcal{B}^\infty(\lambda_0)]-\mathbf{N}(x,t)\mathcal{B}_\lambda^\infty(\lambda_0)\mathbf{N}(x,t)\right)=\mathbf{0}.
\label{eq:pre-S-on-the-left}
\end{equation}
It is easy to check that the matrix $\mathbf{N}(x,t)$ satisfies $\mathbf{N}(x,t)\mathbf{D}\mathbf{N}(x,t)=\mathbf{0}$ for every diagonal matrix $\mathbf{D}$, so as $\mathcal{B}^\infty(\lambda_0)$ is diagonal, \eqref{eq:pre-S-on-the-left} can be rewritten as
\begin{equation}
\mathbf{S}(x,t)\left(\epsilon\frac{\partial\mathbf{N}}{\partial t}(x,t)+[\mathbf{N}(x,t),\mathcal{B}^\infty(\lambda_0)]\right)=\mathbf{0}.
\label{eq:S-on-the-left}
\end{equation}
In fact, it can be shown from \eqref{eq:S-on-the-left} that\footnote{The argument is as follows.  Suppose first that the column space of $\mathbf{N}(x,t)$ coincides with $\mathrm{span}(\mathbf{e}^k)$ for some $k=1,2,3$ and $t_0<t<t_1$.  Then also the column space of $\epsilon\mathbf{N}_t+[\mathbf{N},\mathcal{B}^\infty(\lambda_0)]$ is contained in $\mathrm{span}(\mathbf{e}^k)$ because $\mathcal{B}^\infty(\lambda_0)$ is diagonal.  Since $\mathbf{R}\neq\mathbf{0}$, we have $\mathbf{SN}=\mathbf{R}\neq\mathbf{0}$, so $\mathbf{Se}^k\neq\mathbf{0}$.  It therefore follows from \eqref{eq:S-on-the-left} that \eqref{eq:S-gone!} holds.  The only remaining case to consider is if $\mathbf{N}$ has two zero columns and one column with two nonzero elements for some $t\in (t_0,t_1)$.  But in this case we may repeat the argument leading to \eqref{eq:S-on-the-left} working instead with the complex-conjugate pole $\lambda_0^*$, for which the residue matrix $-\mathbf{E}\mathbf{N}^\dagger\mathbf{E}$ has only one nonzero row and hence the argument described above applies.} even though $\mathbf{S}$ is not necessarily invertible, the evolution equation for $\mathbf{N}(x,t)$ is 
\begin{equation}
\epsilon\frac{\partial\mathbf{N}}{\partial t}(x,t)+[\mathbf{N}(x,t),\mathcal{B}^\infty(\lambda_0)]=\mathbf{0},
\label{eq:S-gone!}
\end{equation}
which implies that the matrix $\mathbf{N}(x,t)$ evolves explicitly in time $t$ as follows:
\begin{equation}
\mathbf{N}(x,t)=e^{\mathcal{B}^\infty(\lambda_0)t/\epsilon}\mathbf{N}(x,0)e^{-\mathcal{B}^\infty(\lambda_0)t/\epsilon}=e^{-i\lambda_0|\mathbf{C}|\mathbf{C}^{-1}t/\epsilon}\mathbf{N}(x,0)
e^{i\lambda_0|\mathbf{C}|\mathbf{C}^{-1}t/\epsilon}.
\label{eq:residue-matrices-time-evolution}
\end{equation}
Note also that according to \eqref{eq:gamma-jk-x}, $\mathbf{N}(x,0)=e^{-i\lambda_0\mathbf{C}x/\epsilon}\mathbf{N}_0e^{i\lambda_0\mathbf{C}x/\epsilon}$ holds for some nonzero triangular $2$-nilpotent constant matrix $\mathbf{N}_0$.  This proves \eqref{eq:pole-evolution-intro}.

Letting $\lambda$ approach the real axis from above and below and supposing that \eqref{eq:M-t-eqn} also governs the limiting boundary values taken on $\mathbb{R}$,  differentiating the jump condition 
$\mathbf{M}^\sigma_+(x,t;\lambda)=\mathbf{M}^\sigma_-(x,t;\lambda)e^{-i\lambda\mathbf{C}x/\epsilon}\mathbf{V}^\sigma_0(\lambda;t)e^{i\lambda\mathbf{C}x/\epsilon}$ with respect to $t$ and using \eqref{eq:M-t-eqn} yields the Lax-type equation
\begin{equation}
\epsilon\frac{\dd\mathbf{V}^\sigma_0}{\dd t}(\lambda;t)+[\mathbf{V}^\sigma_0(\lambda;t),\mathcal{B}^\infty(\lambda)]=\mathbf{0},\quad\lambda\in\mathbb{R}.
\label{eq:V0-ODE}
\end{equation}
From this equation it follows that
\begin{equation}
\mathbf{V}^\sigma_0(\lambda;t)=e^{\mathcal{B}^\infty(\lambda)t/\epsilon}\mathbf{V}_0^\sigma(\lambda;0)e^{-\mathcal{B}^\infty(\lambda) t/\epsilon}=
e^{-i\lambda|\mathbf{C}|\mathbf{C}^{-1}t/\epsilon}\mathbf{V}_0^\sigma(\lambda;0)e^{i\lambda|\mathbf{C}|\mathbf{C}^{-1}t/\epsilon},\quad\lambda\in\mathbb{R},
\label{eq:jump-matrix-time-evolution}
\end{equation}
so the jump matrix $\mathbf{V}_0^\sigma(\lambda;t)$ also evolves explicitly in time $t$, which proves \eqref{eq:jump-evolution-intro}.

The above derivations made use of various technical assumptions concerning the nature of the time dependence induced in $\mathbf{M}^\sigma(x,t;\lambda)$ from the fact that the fields $\{q^{[1]},q^{[2]},q^{[3]}\}$ constitute a suitable solution of the TWRI equations \eqref{3wave}.  However, to some degree these technicalities can be avoided in the sense that it is easy to prove that if the residue matrices evolve in time according to \eqref{eq:residue-matrices-time-evolution} and the jump matrix evolves in time according to \eqref{eq:jump-matrix-time-evolution}, then provided that $\mathbf{M}^\sigma(x,t;\lambda)$ can be reconstructed from its explicitly time-dependent scattering data (see Appendix~\ref{section:inverse} below), the fields $\{q^{[1]},q^{[2]},q^{[3]}\}$ extracted from $\mathbf{M}^\sigma(x,t;\lambda)$ via \eqref{eq:M-lambda-infinity-expand} and \eqref{eq:q-potentials-reconstruct} necessarily solve the TWRI equations \eqref{3wave}.  That said, the necessary assumptions may be fulfilled for certain initial data due to a priori well-posedness results for \eqref{3wave} such as that of Rauch \cite[Theorem 9.2.3]{Rauch12} (obtained without the use of complete integrability).

\subsection{The inverse scattering problem}
\label{appB5-inverse-scattering}
\label{section:inverse}
Suppose that the initial data $\{q^{[k]}(x,0)\}_{k=1}^3$ are such that the complex singularities of $\mathbf{M}^\sigma(x,0;\lambda)$ for $\imag\{\lambda\}>0$ are a finite number $N$ of simple poles, and that there exist no spectral singularities on the real axis, i.e., real zeros of $u_+(\lambda)$ or $v_+(\lambda)$.
The \emph{scattering data} for this initial condition consists of:
\begin{itemize}
\item The set $P=\{\lambda_n,n\in \mathcal{N}\}$ of simple poles of $\mathbf{M}^\sigma(x,0;\lambda)$ in $\mathbb{C}_+$, where $\mathcal{N}$ is a finite indexing set.
\item For each point $\lambda_n\in P$, a nonzero $2$-nilpotent strictly lower-triangular matrix $\mathbf{N}_n^+$ (for $\sigma=+$) or a nonzero $2$-nilpotent strictly upper-triangular matrix $\mathbf{N}_n^-$ (for $\sigma=-$).
\item The jump matrix $\mathbf{V}_0^\sigma(\lambda)$ defined for $\lambda\in\mathbb{R}$ by \eqref{eq:TWRI-Jump}.
\end{itemize}
The inverse problem is to recover $\mathbf{M}^\sigma(x,t;\lambda)$ from this scattering data, which evolves explicitly in time $t$ as described in Appendix~\ref{subapp:time-dependence}.  To this end, we formulate the following Riemann-Hilbert problem, which essentially determines both $\mathbf{M}^+(x,t;\lambda)$ and  $\mathbf{M}^-(x,t;\lambda)$.
\begin{myrhp}
Given scattering data and values of the independent variables $(x,t)\in\mathbb{R}^2$, seek a matrix function $\mathbf{M}^\sigma(\lambda)=\mathbf{M}^\sigma(x,t;\lambda)$ with the following properties:
\begin{itemize}
\item[]\textit{\textbf{Analyticity:}} $\mathbf{M}^\sigma(\lambda)$ is analytic for $\lambda\in \mathbb{C}\setminus(\mathbb{R}\cup P\cup P^*)$.
\item[]\textit{\textbf{Jump condition:}} $\mathbf{M}^\sigma(\lambda)$ takes continuous boundary values $\mathbf{M}^\sigma_\pm(\lambda):=\lim_{\delta\downarrow 0}\mathbf{M}^\sigma(\lambda\pm i\delta)$ for $\lambda\in\mathbb{R}$, and the boundary values are related by 
\begin{equation}
\mathbf{M}^\sigma_+(\lambda)=\mathbf{M}^\sigma_-(\lambda)e^{-i\lambda(\mathbf{C}x+|\mathbf{C}|\mathbf{C}^{-1}t)/\epsilon}\mathbf{V}_0^\sigma(\lambda)e^{i\lambda(\mathbf{C}x+|\mathbf{C}|\mathbf{C}^{-1}t)/\epsilon},\quad\lambda\in\mathbb{R}.
\label{eq:RHP-jump-condition}
\end{equation}
\item[]\textit{\textbf{Poles:}}  Each point of $P\cup P^*$ is a simple pole of $\mathbf{M}^\sigma(\lambda)$, and
\begin{equation}
\mathop{\mathrm{Res}}_{\lambda=\lambda_n}\mathbf{M}^\sigma(\lambda)=\lim_{\lambda\to\lambda_n}\mathbf{M}^\sigma(\lambda)
e^{-i\lambda_n(\mathbf{C}x+|\mathbf{C}|\mathbf{C}^{-1}t)/\epsilon}\mathbf{N}_n^\sigma
e^{i\lambda_n(\mathbf{C}x+|\mathbf{C}|\mathbf{C}^{-1}t)/\epsilon},\quad n\in\mathcal{N}
\label{eq:RHP-pole-C-plus}
\end{equation}
and
\begin{equation}
\mathop{\mathrm{Res}}_{\lambda=\lambda_n^*}\mathbf{M}^\sigma(\lambda)=\lim_{\lambda\to\lambda_n^*}
\mathbf{M}^\sigma(\lambda)e^{-i\lambda_n^*(\mathbf{C}x+|\mathbf{C}|\mathbf{C}^{-1}t)/\epsilon}(-\mathbf{E}\mathbf{N}_n^{\sigma\dagger}\mathbf{E})e^{i\lambda_n^*(\mathbf{C}x+|\mathbf{C}|\mathbf{C}^{-1}t)/\epsilon},\quad n\in\mathcal{N}.  
\label{eq:RHP-pole-C-minus}
\end{equation}
\item[]\textit{\textbf{Normalization:}} $\mathbf{M}^\sigma(\lambda)\to\mathbb{I}$ as $\lambda\to\infty$.
\end{itemize}
\label{rhp:M-sigma}
\end{myrhp}

The solution of this Riemann-Hilbert problem in either case $\sigma=\pm$ suffices to determine the solution $\{q^{[k]}(x,t)\}_{k=1}^3$ of the TWRI system \eqref{3wave} corresponding to the initial conditions that generated the scattering data.  Indeed, from $\mathbf{M}^\sigma(x,t;\lambda)$ one simply extracts the coefficient $\mathbf{F}^\sigma(x,t)$ from the Laurent expansion of $\mathbf{M}^\sigma(x,t;\lambda)$ (see \eqref{eq:M-lambda-infinity-expand}) and then obtains $q^{[k]}(x,t)$ for $k=1,2,3$ from \eqref{eq:q-potentials-reconstruct}.  

\begin{prop}
Suppose that $\gamma^{[1]}\gamma^{[2]}=\gamma^{[2]}\gamma^{[3]}=-1$.  Then for each $(x,t)\in\mathbb{R}^2$ there exists a unique classical solution of Riemann-Hilbert Problem~\ref{rhp:M-sigma}.
\end{prop}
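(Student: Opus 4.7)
The plan is to prove existence and uniqueness by invoking Zhou's framework for Riemann--Hilbert problems with Schwarz-symmetric, Hermitian positive-definite jump structure \cite{Zhou89,Zhou89b}. The key structural observation is that the hypothesis $\gamma^{[1]}\gamma^{[2]}=\gamma^{[2]}\gamma^{[3]}=-1$ (which implies also $\gamma^{[1]}\gamma^{[3]}=1$) forces $\gamma^{[1]}=\gamma^{[3]}=-\gamma^{[2]}$, so the Schwarz conjugation matrix collapses to $\mathbf{E}=\mathrm{diag}(\gamma^{[1]},-\gamma^{[2]},\gamma^{[3]})=\gamma^{[1]}\mathbb{I}=\pm\mathbb{I}$. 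All instances of $\mathbf{E}$ then drop out of \eqref{eq:TWRI-Schwarz-symmetry} and \eqref{eq:RHP-pole-C-minus}, while the six signs $\pm\gamma^{[j]}\gamma^{[k]}$ appearing in \eqref{eq:TWRI-Jump} all equal $+1$, so that
\begin{equation*}
\mathbf{V}_0^+(\lambda)=\mathbf{L}^+(\lambda)^\dagger\mathbf{L}^+(\lambda),\qquad \mathbf{V}_0^-(\lambda)=\mathbf{U}^-(\lambda)^\dagger\mathbf{U}^-(\lambda),
\end{equation*}
with $\mathbf{L}^+$ unit lower-triangular and $\mathbf{U}^-$ unit upper-triangular. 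Hence $\mathbf{V}_0^\sigma(\lambda)$ is Hermitian positive definite on $\mathbb{R}$. Setting $\Lambda(x,t):=\mathbf{C}x+|\mathbf{C}|\mathbf{C}^{-1}t$ and $\widetilde{\mathbf{V}}^\sigma(\lambda):=e^{-i\lambda\Lambda/\epsilon}\mathbf{V}_0^\sigma(\lambda)e^{i\lambda\Lambda/\epsilon}$ for the dressed jump in \eqref{eq:RHP-jump-condition}, the unitarity of $e^{\pm i\lambda\Lambda/\epsilon}$ on $\mathbb{R}$ implies that $\widetilde{\mathbf{V}}^\sigma$ is Hermitian positive definite on $\mathbb{R}$ as well.

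The heart of the argument is a vanishing lemma. Let $\mathbf{X}(\lambda)$ solve the homogeneous version of Riemann--Hilbert Problem~\ref{rhp:M-sigma} (same jumps and residue conditions but $\mathbf{X}(\lambda)\to\mathbf{0}$ at infinity), and form $\mathbf{F}(\lambda):=\mathbf{X}(\lambda)\mathbf{X}(\bar\lambda)^\dagger$. This is meromorphic in $\mathbb{C}_+$ with potential poles at each $\lambda_n\in P$, since both $\mathbf{X}(\lambda)$ and $\mathbf{X}(\bar\lambda)^\dagger$ contribute a pole there (the latter because $\bar\lambda_n=\lambda_n^*\in P^*$). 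A short computation with the local Laurent expansions shows that the minus sign in the conjugate residue $-\mathbf{E}\mathbf{N}_n^{\sigma\dagger}\mathbf{E}=-\mathbf{N}_n^{\sigma\dagger}$ (available precisely because $\mathbf{E}=\pm\mathbb{I}$), combined with $2$-nilpotency $(\mathbf{N}_n^\sigma)^2=\mathbf{0}$, annihilates both the double-pole and simple-pole coefficients of $\mathbf{F}$ at each $\lambda_n$. Hence $\mathbf{F}$ is in fact analytic throughout $\mathbb{C}_+$, and with $\mathbf{F}(\lambda)=\bigo{|\lambda|^{-2}}$ at infinity Cauchy's theorem yields
\begin{equation*}
\mathbf{0}=\int_\mathbb{R}\mathbf{F}_+(\lambda)\,\dd\lambda=\int_\mathbb{R}\mathbf{X}_-(\lambda)\widetilde{\mathbf{V}}^\sigma(\lambda)\mathbf{X}_-(\lambda)^\dagger\,\dd\lambda
\end{equation*}
with pointwise Hermitian positive semidefinite integrand. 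Taking the trace and using positive definiteness of $\widetilde{\mathbf{V}}^\sigma$ forces $\mathbf{X}_-(\lambda)\equiv\mathbf{0}$ on $\mathbb{R}$. Extending $\mathbf{X}$ by $\mathbf{0}$ across $\mathbb{R}$ from below produces a rational function on $\mathbb{C}$ with poles confined to $P^*\subset\mathbb{C}_-$ that vanishes on the open set $\mathbb{C}_+$, hence identically; thus $\mathbf{X}\equiv\mathbf{0}$ in $\mathbb{C}_-$. The jump condition then gives $\mathbf{X}_+\equiv\mathbf{0}$ on $\mathbb{R}$, and the symmetric rational-function argument in $\mathbb{C}_+$ propagates the vanishing to all of $\mathbb{C}$.

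The vanishing lemma supplies the injectivity needed for the Fredholm alternative of Zhou's theory: after incorporating the pole data at $P\cup P^*$ as rank-one corrections in the Cauchy integral representation of $\mathbf{M}^\sigma$, Riemann--Hilbert Problem~\ref{rhp:M-sigma} becomes equivalent to a singular integral equation on $\mathbb{R}$ whose associated Cauchy operator is Fredholm of index zero, since $\det\mathbf{V}_0^\sigma\equiv 1$ and both $\mathbf{V}_0^\sigma$ and its inverse are uniformly bounded and smooth on $\mathbb{R}$. Injectivity therefore upgrades to invertibility, yielding a unique $L^2$ solution for the identity normalization, which the Plemelj--Privalov theorem lifts to a classical solution with continuous boundary values on the real axis. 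The main obstacle will be the verification of the cancellation of both the double- and simple-pole coefficients of $\mathbf{F}(\lambda)$ at the points of $P$: it is precisely the minus sign in $-\mathbf{E}\mathbf{N}_n^{\sigma\dagger}\mathbf{E}$, which simplifies to $-\mathbf{N}_n^{\sigma\dagger}$ only when $\mathbf{E}=\pm\mathbb{I}$, that combines with the nilpotency of the residue matrices to enable the cancellation, thereby identifying the decay-instability sign hypothesis of the proposition as the essential ingredient.
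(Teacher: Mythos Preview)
Your proof is correct and rests on the same two pillars as the paper's: the observation that under the hypothesis $\mathbf{E}=\pm\mathbb{I}$ (so that the Schwarz symmetry and conjugate residue matrices simplify), and Zhou's vanishing-lemma criterion combined with the Fredholm-index-zero setup.  The difference is purely in how the pole data are handled.  The paper removes the poles explicitly by multiplying $\mathbf{M}^\sigma$ inside small disks $D_n$ by $\mathbb{I}-\mathbf{N}_n^\sigma/(\lambda-\lambda_n)$ (and the Schwarz-reflected factor in $D_n^*$), which converts each pole into a jump across $\partial D_n$; then Zhou's theorem applies verbatim on the augmented contour $\Sigma=\mathbb{R}\cup\bigcup_n(\partial D_n\cup\partial D_n^*)$, with the two hypotheses $\widetilde{\mathbf{V}}(\lambda^*)=\widetilde{\mathbf{V}}(\lambda)^\dagger$ off $\mathbb{R}$ and $\widetilde{\mathbf{V}}+\widetilde{\mathbf{V}}^\dagger>0$ on $\mathbb{R}$ checked directly.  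You instead keep the poles and verify by a direct Laurent computation that $\mathbf{F}(\lambda)=\mathbf{X}(\lambda)\mathbf{X}(\bar\lambda)^\dagger$ is actually analytic in $\mathbb{C}_+$, the double- and simple-pole cancellations following from $(\widehat{\mathbf{N}}_n^\sigma)^2=\mathbf{0}$ and the minus sign in $-\mathbf{N}_n^{\sigma\dagger}$ respectively.  The paper's route is tidier in that the Fredholm theory on the enlarged contour is entirely standard, whereas your ``rank-one corrections'' sentence leaves the singular-integral reformulation somewhat implicit; but your direct pole-cancellation makes the role of nilpotency and of the sign hypothesis more transparent.  Both arguments are complete once one invokes Zhou's Fredholm framework.
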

\begin{proof}
For each $n\in\mathcal{N}$, let $D_n$ be a small disk centered at the pole $\lambda=\lambda_n\in\mathbb{C}_+$ with the positive radii of the disks chosen sufficiently small that no two disks intersect and no disk intersects $\mathbb{R}$.  Define a new unknown $\widetilde{\mathbf{M}}^\sigma(\lambda)$ by setting 
\begin{equation}
\widetilde{\mathbf{M}}^\sigma(\lambda):=\mathbf{M}^\sigma(\lambda)e^{-i\lambda(\mathbf{C}x+|\mathbf{C}|\mathbf{C}^{-1}t)/\epsilon}
\left(\mathbb{I}-\frac{\mathbf{N}_n^\sigma}{\lambda-\lambda_n}\right)e^{i\lambda(\mathbf{C}x+|\mathbf{C}|\mathbf{C}^{-1}t)/\epsilon},\; \lambda\in D_n,\; n\in\mathcal{N},
\end{equation}
preserving Schwarz symmetry by setting 
\begin{equation}
\widetilde{\mathbf{M}}^\sigma(\lambda):=\widetilde{\mathbf{M}}^\sigma(\lambda^*)^{-\dagger},\quad \lambda\in D_n^*,\quad n\in\mathcal{N},
\end{equation}
(here we used the fact that the stated conditions on $\gamma^{[k]}$, $k=1,2,3$, guarantee that $\mathbf{E}=\mathbb{I}$ or $\mathbf{E}=-\mathbb{I}$), and for all $\lambda\in\mathbb{C}\setminus\mathbb{R}$ exterior to all disks $D_n$ or $D_n^*$, we simply take $\widetilde{\mathbf{M}}^\sigma(\lambda):=\mathbf{M}^\sigma(\lambda)$.  By a simple calculation using the fact that $\mathbf{N}_n^\sigma\mathbf{D}\mathbf{N}_n^\sigma=\mathbf{0}$ for every diagonal matrix $\mathbf{D}$, it follows easily that $\widetilde{\mathbf{M}}^\sigma(\lambda)$ has removable singularities at all of the poles of $\mathbf{M}^\sigma(\lambda)$ in $\mathbb{C}\setminus\mathbb{R}$.  However, $\widetilde{\mathbf{M}}^\sigma(\lambda)$ now has jump discontinuities across the boundaries of all of the disks, circles $\partial D_n$ which we take to be positively-oriented, while $\partial D_n^*$ will be negatively-oriented.  It follows that the jump across $\partial D_n$ is characterized by the jump condition
\begin{equation}
\widetilde{\mathbf{M}}^\sigma_+(\lambda)=\widetilde{\mathbf{M}}^\sigma_-(\lambda)e^{-i\lambda(\mathbf{C}x+|\mathbf{C}|\mathbf{C}^{-1}t)/\epsilon}
\left(\mathbb{I}-\frac{\mathbf{N}_n^\sigma}{\lambda-\lambda_n}\right)e^{i\lambda(\mathbf{C}x+|\mathbf{C}|\mathbf{C}^{-1}t)/\epsilon},\;\lambda\in\partial D_n,\; n\in\mathcal{N},
\label{eq:M-tilde-jump-C-plus-circles}
\end{equation}
and that the jump across $\partial D_n^*$ is given by
\begin{equation}
\widetilde{\mathbf{M}}^\sigma_+(\lambda)=\widetilde{\mathbf{M}}^\sigma_-(\lambda)
e^{-i\lambda(\mathbf{C}x+|\mathbf{C}|\mathbf{C}^{-1}t)/\epsilon}\left(\mathbb{I}-\frac{\mathbf{N}_n^{\sigma\dagger}}{\lambda-\lambda_n^*}\right)e^{i\lambda(\mathbf{C}x+|\mathbf{C}|\mathbf{C}^{-1}t)/\epsilon},\;\lambda\in\partial D_n^*,\;n\in\mathcal{N}.
\label{eq:M-tilde-jump-C-minus-circles}
\end{equation}
In these formulae the subscript ``$+$'' (respectively, ``$-$'') refers to the boundary value taken on the indicated contour from the left (respectively, right) side according to the assigned orientation.
Meanwhile, the jump across the real axis is given simply by \eqref{eq:RHP-jump-condition} with $\mathbf{M}^\sigma$ replaced everywhere by $\widetilde{\mathbf{M}}^\sigma$, because the latter matrices are equal in a deleted neighborhood of $\mathbb{R}$.  (This assumes that there are no spectral singularities; see Appendix~\ref{subapp-Zhou-transform} for how to deal with these.)

The conditions of the equivalent Riemann-Hilbert problem for $\widetilde{\mathbf{M}}^\sigma(\lambda)$ may be translated into a linear system of singular integral equations with Cauchy kernels, and on suitable spaces of boundary values ($L^2(\Sigma)$ or classical H\"older spaces of functions on $\Sigma=\mathbb{R}\cup\{\mathrm{circles}\}$) the relevant singular integral operator is known to be Fredholm with index zero (the value of the index follows from the unimodularity of the jump matrices; see \cite{Zhou89}).
It therefore remains to prove that the kernel is trivial, which is equivalent to ruling out the existence of nonzero solutions of the Riemann-Hilbert problem modified by replacing the normalization condition by $\widetilde{\mathbf{M}}^\sigma(\lambda)\to\mathbf{0}$ as $\lambda\to\infty$.  For Schwarz-symmetric contours $\Sigma$ (with Schwarz-symmetric orientation), Zhou \cite{Zhou89} has proven that no such vanishing solution exists provided that the jump matrix $\widetilde{\mathbf{V}}(\lambda)$, for which $\widetilde{\mathbf{M}}^\sigma_+(\lambda)=\widetilde{\mathbf{M}}^\sigma_-(\lambda)\widetilde{\mathbf{V}}(\lambda)$ holds for each $\lambda\in\Sigma$, has the following properties:
\begin{itemize}
\item $\widetilde{\mathbf{V}}(\lambda^*)=\widetilde{\mathbf{V}}(\lambda)^\dagger$ for $\lambda\in\Sigma\setminus\mathbb{R}$, and
\item $\widetilde{\mathbf{V}}(\lambda)+\widetilde{\mathbf{V}}(\lambda)^\dagger$ is positive definite for $\lambda\in\Sigma\cap\mathbb{R}$.
\end{itemize}
The first property is obviously true as one can see by comparing \eqref{eq:M-tilde-jump-C-plus-circles}--\eqref{eq:M-tilde-jump-C-minus-circles}. For the second property, we note that according to \eqref{eq:TWRI-Jump}, the conditions in force on the signs $\gamma^{[k]}$, $k=1,2,3$, guarantee that for $\lambda\in\mathbb{R}$, $\widetilde{\mathbf{V}}(\lambda)$ has the form $\mathbf{A}(\lambda)^\dagger\mathbf{A}(\lambda)$ with $\det(\mathbf{A}(\lambda))=1$.  This immediately implies the second property.  Hence the Fredholm system has a unique solution, which corresponds to the unique solution of the equivalent Riemann-Hilbert problem for $\widetilde{\mathbf{M}}^\sigma(\lambda)$.  By inverting the relation between $\mathbf{M}^\sigma$ and $\widetilde{\mathbf{M}}^\sigma$ we obtain the existence of a unique solution of the original Riemann-Hilbert problem.
\end{proof}

\subsection{Problems with spectral singularities, higher-order poles and/or infinitely many poles}
\label{subapp-Zhou-transform}
Here we briefly indicate an approach to the inverse-scattering transform due to Zhou \cite{Zhou89b} that allows for a unified treatment of both generic (finitely many simple poles and no real zeros of $u_+(\lambda)$ or $v_+(\lambda)$) and nongeneric scattering data in a simple way.  The basic idea is very simple.  As mentioned in Appendix~\ref{subapp:direct-scattering1}, given initial data encoded in a matrix $\mathbf{Q}(x)$ with $\mathbf{Q}$ and $\mathbf{Q}'$ in $L^1(\mathbb{R})$, the Fredholm equation \eqref{eq:Fredholm-system} governing $\mathbf{M}^\sigma(x;\lambda)$ has a unique solution for $\imag\{\lambda\}\neq 0$ and $|\lambda|$ sufficiently large in the form of a convergent Neumann series, i.e., in this situation \eqref{eq:Fredholm-system} becomes a small-norm problem.  The solution obtained is obviously analytic in the two domains $\mathbb{C}_\pm^\mathrm{out}(R):=\{\lambda\in \mathbb{C}_\pm:|\lambda|>R\}$ for $R$ sufficiently large.  We will now indicate how to obtain \emph{analytic} solutions of the differential equation \eqref{eq:Mij-system} for $\lambda$ in the complementary domains $\mathbb{C}_\pm^\mathrm{in}(R):=\{\lambda\in\mathbb{C}_\pm: |\lambda|<R\}$ in such a way that the jump discontinuity across $|\lambda|=R$ takes a convenient form.  As described in \cite{Zhou89b}, the only thing we need to give up for $\lambda\in \mathbb{C}_\pm^\mathrm{in}(R)$ is the condition that $\mathbf{M}^\sigma(x;\lambda)$ should remain bounded as $x\to -\sigma\infty$; we will retain the condition that $\mathbf{M}^\sigma(x;\lambda)\to\mathbb{I}$ as $x\to\sigma\infty$, as this will ensure a simple time dependence of the scattering data to be introduced.  Thus, instead of using the Fredholm equation \eqref{eq:Fredholm-system} to define $\mathbf{M}^\sigma(x;\lambda)$ when $\lambda\in\mathbb{C}_\pm^\mathrm{in}(R)$, we proceed as follows.

Let $B^\sigma(x)$ be a smooth ``bump'' function with the properties that $0\le B^\sigma(x)\le 1$ and 
$B^\sigma(x)\equiv 0$ for $\sigma x<L-1$ while $B^\sigma(x)\equiv 1$ for $\sigma x>L$.  The ``cutoff'' potential
\begin{equation}
\mathbf{Q}^\sigma_\mathrm{c}(x):=B^\sigma(x)\mathbf{Q}(x)
\label{eq:cutoff-potential}
\end{equation}
than satisfies $\mathbf{Q}^\sigma_\mathrm{c},\mathbf{Q}^{\sigma\prime}_\mathrm{c}\in L^1(\mathbb{R})$, and $\mathbf{Q}^\sigma_\mathrm{c}(x)$ agrees exactly with $\mathbf{Q}(x)$ for $\sigma x>L$.
Because $\mathbf{Q}$ is in $L^1(\mathbb{R})$, we now choose $L>0$ so large that $\|\mathbf{Q}_\mathrm{c}^\sigma\|_{L^1(\mathbb{R})}<\epsilon$.  With this choice, the Fredholm equation \eqref{eq:Fredholm-system}
with $\mathbf{Q}$ replaced by the cutoff potential $\mathbf{Q}_\mathrm{c}^\sigma$ has a unique solution as a Neumann series that is analytic for $\imag\{\lambda\}\neq 0$, and in particular for $\lambda\in\mathbb{C}_\pm^\mathrm{in}(R)$.  Let this solution be denoted $\mathbf{M}_\mathrm{c}^\sigma(x;\lambda)$.  Clearly we have $\mathbf{M}_\mathrm{c}^\sigma(x;\lambda)\to\mathbb{I}$ as $x\to\sigma\infty$.

With $\mathbf{M}^\sigma_\mathrm{c}(x;\lambda)$ defined as an analytic function for $\lambda$ in $\mathbf{C}_\pm^\mathrm{in}(R)$, we notice that $\Phi^\sigma_\mathrm{c}(x;\lambda):=\mathbf{M}^\sigma_\mathrm{c}(x;\lambda)e^{-i\lambda\mathbf{C}x/\epsilon}$ is a solution of the Lax pair equation \eqref{eq:LaxPair-x} for the original potential $\mathbf{Q}(x)$ over the interval $\sigma x>L$, but not for $\sigma x<L$.  However, if we let $\Psi(x;\lambda)$ denote the fundamental solution matrix for \eqref{eq:LaxPair-x} normalized by the initial condition $\Psi(\sigma L;\lambda)=\mathbb{I}$, then for each finite $x\in\mathbb{R}$, $\Psi(x;\lambda)$ is an entire function of $\lambda$.  The product 
\begin{equation}
\Phi^\sigma(x;\lambda):=\Psi(x;\lambda)\Phi^\sigma_\mathrm{c}(\sigma L;\lambda)
\label{eq:product-formula-Zhou}
\end{equation}
is therefore the unique solution of \eqref{eq:LaxPair-x} defined for all $x\in\mathbb{R}$ that agrees with $\Phi^\sigma_\mathrm{c}(x;\lambda)$ for $\sigma x>L$.  The corresponding matrix function
\begin{equation}
\mathbf{M}^\sigma(x;\lambda):=\Phi^\sigma(x;\lambda)e^{i\lambda\mathbf{C}x/\epsilon}=
\Psi(x;\lambda)\mathbf{M}^\sigma_\mathrm{c}(\sigma L;\lambda)e^{i\lambda\mathbf{C}(x-\sigma L)/\epsilon}
\label{eq:product-formula-Zhou-M}
\end{equation}
is then a solution of the differential equations \eqref{eq:Mij-system} for all $x\in\mathbb{R}$ with the following additional properties:
\begin{itemize}
\item Because $\mathbf{M}^\sigma(x;\lambda)$ agrees with $\mathbf{M}^\sigma_\mathrm{c}(x;\lambda)$
for $\sigma x>L$, we have $\mathbf{M}^\sigma(x;\lambda)\to\mathbb{I}$ as $x\to\sigma\infty$.
\item Because according to \eqref{eq:product-formula-Zhou-M}, $\mathbf{M}^\sigma(x;\lambda)$ is a product of factors that are entire functions of $\lambda$ and a central factor that is analytic for $\imag\{\lambda\}\neq 0$, $\mathbf{M}^\sigma(x;\lambda)$ is analytic for $\imag\{\lambda\}\neq 0$.
\end{itemize}
Note however, that for given $\lambda\in\mathbb{C}_\pm$, $\mathbf{M}^\sigma(x;\lambda)$ is not generally bounded as $x\to -\sigma\infty$.  Also $\mathbf{M}^\sigma(x;\lambda)$ given by \eqref{eq:product-formula-Zhou-M} does not tend to $\mathbb{I}$ as $\lambda\to\infty$.  For the latter reason, we agree to use this definition of $\mathbf{M}^\sigma(x;\lambda)$ only for $\lambda\in\mathbb{C}_\pm^\mathrm{in}(R)$.

The matrix $\mathbf{M}^\sigma(x;\lambda)$ defined as before for $\lambda\in\mathbb{C}_\pm^\mathrm{out}(R)$ and by the above modified procedure for $\lambda\in\mathbb{C}_\pm^\mathrm{in}(R)$ is therefore analytic for $\lambda\in\mathbb{C}\setminus\Sigma$, where the contour $\Sigma$ consists of the real axis and the circle of radius $R$ centered at the origin.  Since the definition for $|\lambda|>R$ is the original one, we retain the property that $\mathbf{M}^\sigma(x;\lambda)\to\mathbb{I}$ as $\lambda\to\infty$.  To formulate the appropriate Riemann-Hilbert problem of inverse scattering, it therefore only remains to determine the jump conditions across $\Sigma$.  Clearly on the part of $\Sigma$ with $|\lambda|>R$ the jump condition is exactly as before; see \eqref{eq:M-sigma-jump-intro}, \eqref{eq:jump-evolution-intro}, and \eqref{eq:TWRI-Jump}.  We next find the form of the jump matrix on the real interval $-R<\lambda<R$ as well as the upper and lower semicircles.

For $-R<\lambda<R$, we start with the observation that, as $\mathbf{M}^\sigma_\mathrm{c}(x;\lambda)$ is a solution of the Fredholm equation \eqref{eq:Fredholm-system} that is analytic in $\mathbb{C}_\pm^\mathrm{in}(R)$ and (by the small-norm condition ensuring uniform convergence of the Neumann series up to the real axis) takes continuous boundary values on $-R<\lambda<R$, it satisfies a jump condition analogous to \eqref{eq:M-sigma-jump-intro} with a jump matrix $\mathbf{V}_{\mathrm{c}0}^\sigma(\lambda)$ obtained from the cutoff potential \eqref{eq:cutoff-potential} in exactly the same way as described previously for the original potential $\mathbf{Q}(x)$.  Thus we have
\begin{equation}
\mathbf{M}^\sigma_{\mathrm{c}+}(x;\lambda)=\mathbf{M}^\sigma_{\mathrm{c}-}(x;\lambda)e^{-i\lambda\mathbf{C}x/\epsilon}\mathbf{V}_{\mathrm{c}0}^\sigma(\lambda)e^{i\lambda\mathbf{C}x/\epsilon},\quad
-R<\lambda<R.
\end{equation}
Using this result for $x=\sigma L$ and taking into account that the factors $\Psi(x;\lambda)$ and $e^{i\lambda\mathbf{C}(x-\sigma L)/\epsilon}$ in \eqref{eq:product-formula-Zhou-M} are entire in $\lambda$, 
we obtain the formula
\begin{equation}
\mathbf{M}^\sigma_+(x;\lambda)=\mathbf{M}^\sigma_-(x;\lambda)e^{-i\lambda\mathbf{C}x/\epsilon}\mathbf{V}_{\mathrm{c}0}^\sigma(\lambda)e^{i\lambda\mathbf{C}x/\epsilon},\quad -R<\lambda<R.
\end{equation}
It is easy to see that the matrix $\mathbf{V}_{\mathrm{c}0}^\sigma(\lambda)$ evolves in time exactly as does $\mathbf{V}_0^\sigma(\lambda)$, namely by the explicit conjugation \eqref{eq:jump-evolution-intro}.

For the jump across the upper and lower semicircles, observe that since for both $\lambda\in\mathbb{C}_\pm^\mathrm{out}(R)$ and $\lambda\in\mathbb{C}_\pm^\mathrm{in}(R)$ the matrix $\mathbf{\Phi}^\sigma(x;\lambda)=\mathbf{M}^\sigma(x;\lambda)e^{-i\lambda\mathbf{C}x/\epsilon}$ satisfies the same Lax equation \eqref{eq:LaxPair-x}, a jump condition of the basic form \eqref{eq:M-sigma-jump-intro} holds for some matrix $\mathbf{V}_0^\sigma(\lambda)$ independent of $x$, where the subscript ``$+$'' (respectively, ``$-$'') refers to the boundary value taken on the semicircle from the left (respectively, right), and we choose the semicircles to be oriented from $\lambda=R$ back to $\lambda=-R$. We just have to determine the structure and time dependence of the matrices $\mathbf{V}_0^\sigma(\lambda)$ for the upper and lower semicircles. 
The structure of the matrices comes from noting the agreement of the asymptotic behavior of the two boundary values as $x\to \sigma\infty$.  This actually implies that 
\begin{itemize}
\item For the upper semicircle, $\mathbf{V}_0^+(\lambda)$ is lower triangular and $\mathbf{V}_0^-(\lambda)$ is upper triangular, with ones on the diagonal in both cases.
\item For the lower semicircle, $\mathbf{V}_0^+(\lambda)$ is upper triangular and $\mathbf{V}_0^-(\lambda)$ is lower triangular, with ones on the diagonal in both cases.
\end{itemize}
Moreover, the off-diagonal entries in the jump matrix are analytic functions of $\lambda$ on the two semicircles, which stems from the fact that the radius $R$ is somewhat arbitrary in this construction provided it is sufficiently large to contain all of the isolated exceptional points of the solution of the Fredholm equation \eqref{eq:Fredholm-system}.  Finally, by a modification of the arguments in Appendix~\ref{subapp:time-dependence} it can be shown that the jump matrices on the upper and lower semicircles
evolve in time $t$ by exactly the standard conjugation formula \eqref{eq:jump-evolution-intro}.  This part of the argument uses the fact that both boundary values tend to $\mathbb{I}$ as $x\to\sigma\infty$, by construction.  

We thus arrive at a Riemann-Hilbert problem of inverse scattering whose unknown has no isolated singularities at all, but rather is piecewise analytic in four complementary domains of the complex plane, taking very nice boundary values related by well-defined\footnote{Actually, the jump matrices depend on a number of rather arbitrary choices like the precise nature of how the cutoff potential is constructed, etc., and one can try to ``mod out'' these ambiguities to properly define unambiguous scattering data; see \cite{Zhou89b} for details.  However if one is only interested in formulating a suitable inverse problem from which to construct the solution of the Cauchy problem, this ambiguity is not much of an issue.} jump conditions across the boundary arcs in which $x$ and $t$ appear explicitly by exponential conjugation, and normalized to the identity as $\lambda\to\infty$.  Since $\mathbf{M}^\sigma(x;\lambda)$ is the same outside the circle of radius $R$ in both the original approach and this version, the solution of the Cauchy problem is extracted from the residue term in the Laurent expansion of $\mathbf{M}^\sigma(x;\lambda)$ about $\lambda=\infty$ in the usual way.  The best part about this construction is that it does not require any a priori knowledge of the isolated singularities of the Fredholm equation \eqref{eq:Fredholm-system} in $\mathbb{C}_\pm$ or possible real zeros of $u_+(\lambda)$ or $v_+(\lambda)$.
The nature of these can be quite severe even for very ``nice'' non-generic potentials; for example in \cite[Example 3.3.16]{Zhou89b} it is shown that there exist Schwartz-class potentials $\mathbf{Q}\in\mathscr{S}(\mathbb{R})$ for which there are infinitely many isolated exceptional points for \eqref{eq:Fredholm-system} in $\mathbb{C}_\pm$ that necessarily accumulate at severe spectral singularities on the real axis, i.e., zeros of infinite order for the boundary values $u_+(\lambda)$ and/or $v_+(\lambda)$.  As unified as Zhou's approach to inverse-scattering is, it obscures somewhat the presence of solitons generated from the poles that have been removed from the domain $|\lambda|<R$; see \S\ref{sec:solitons}.  In Zhou's approach the solitons are instead encoded in the jump matrices on the semicircular arcs of $\Sigma$.  Upon analytic/meromorphic continuation of these jumps toward the real axis one discovers the singularities of the solution of \eqref{eq:Fredholm-system} lurking within the circle of radius $R$.

\subsection{Reflectionless potentials and solitons}
\label{sec:solitons}
Potentials $\{q^{[k]}(x)\}_{k=1}^3$ for which the scattering matrix 
$\mathbf{S}(\lambda)$ associated to the Jost solutions is diagonal for all 
$\lambda\in\mathbb{R}$ (and hence the jump matrix in Riemann-Hilbert Problem~\ref{rhp:M-sigma} is the identity) are called \emph{reflectionless potentials}.  In particular, 
semiclassical soliton ensembles are reflectionless potentials (see Definition~\ref{def:SSE}).  
In Appendix~\ref{subsubsec:solitons-rhp}, we show how Riemann-Hilbert 
Problem~\ref{rhp:M-sigma} can be reduced to a problem of finite-dimensional linear 
algebra with the use of partial-fraction expansions for reflectionless potentials.  
In Appendix~\ref{subsubsec:solitons-onepole} we give explicit formulae for the single 
solitons of types $1$, $2$, $3$, $\solsplit$, and $\solfuse$.  In 
Appendix~\ref{subsubsec:solitons-splitfuse} we elaborate on some basic properties of 
solitons of type $\solsplit$ and $\solfuse$.  Finally, in Appendix~\ref{subsubsec:solitons-merging}
we consider the nonlinear superposition of one pole each of types $1$ and $3$ and 
show how it can degenerate to any of the five elementary solitons as its parameters 
vary.

\subsubsection{Solution of the reflectionless Riemann-Hilbert problem}
\label{subsubsec:solitons-rhp}
It is useful to distinguish the poles $\lambda\in P$ according to the various forms that the $2$-nilpotent triangular matrices $\mathbf{N}^\sigma$ can take, and we adopt the terminology explained 
in \S\ref{subsec:lax-pair}, referring to each simple pole in $P$ as being of type $1$, $2$, $3$, $\solsplit$, or $\solfuse$ based on the shape of its residue matrix $\mathbf{N}^\sigma$.  
We consider the total number of simple poles in $P\subset\mathbb{C}_+$ to therefore be partitioned as follows:  $N=N^{[1]}+N^{[2]}+N^{[3]}+N^{[\solsplit]}+N^{[\solfuse]}$ with poles $\lambda_n^{[\mathrm{type}]}\in P$, $n=0,\dots,N^{[\mathrm{type}]}-1$, of type ``type'' having corresponding nonzero connection coefficients $\beta_{n,jk}^{[\mathrm{type}]}$.

Indeed, 
in the reflectionless situation $\mathbf{M}^+(\lambda)=\mathbf{M}^+(x,t;\lambda)$ can be represented by its partial fraction expansion in the form 
\begin{equation}
\label{eq:Mplus-ansatz}
\begin{split}
\mathbf{M}^+(\lambda) = \mathbb{I}&
+\sum_{n=0}^{N^{[1]}-1}\left[\frac{(\mathbf{0},\mathbf{b}_n^{[1]+},\mathbf{0})}{\lambda-\lambda_n^{[1]}}+\frac{(\mathbf{0},\mathbf{0},\mathbf{c}_n^{[1]+})}{\lambda-\lambda_n^{[1]*}}\right]
+ \sum_{n=0}^{N^{[2]}-1}\left[\frac{(\mathbf{a}_n^{[2]+},\mathbf{0},\mathbf{0})}{\lambda-\lambda_n^{[2]}}+\frac{(\mathbf{0},\mathbf{0},\mathbf{c}_n^{[2]+})}{\lambda-\lambda_n^{[2]*}}\right] \\
  & + \sum_{n=0}^{N^{[3]}-1}\left[\frac{(\mathbf{a}_n^{[3]+},\mathbf{0},\mathbf{0})}{\lambda-\lambda_n^{[3]}}+\frac{(\mathbf{0},\mathbf{b}_n^{[3]+},\mathbf{0})}{\lambda-\lambda_n^{[3]*}}\right] 
 + \sum_{n=0}^{N^{[\solsplit]}-1}
\left[\frac{(\mathbf{a}_n^{[\solsplit]+},\mathbf{b}_n^{[\solsplit]+},\mathbf{0})}{\lambda-\lambda_n^{[\solsplit]}}+
\frac{(\mathbf{0},\mathbf{0},\mathbf{c}_n^{[\solsplit]+})}{\lambda-\lambda_n^{[\solsplit]*}}\right] \\
  &+ \sum_{n=0}^{N^{[\solfuse]}-1}\left[\frac{(\mathbf{a}_n^{[\solfuse]+},\mathbf{0},\mathbf{0})}{\lambda-\lambda_n^{[\solfuse]}} + 
\frac{(\mathbf{0},\mathbf{b}_n^{[\solfuse]+},\mathbf{c}_n^{[\solfuse]+})}{\lambda-\lambda_n^{[\solfuse]*}}\right],
\end{split}
\end{equation}
and $\mathbf{M}^-(\lambda)=\mathbf{M}^-(x,t;\lambda)$ can be represented in the form
\begin{equation}
\begin{split}
\mathbf{M}^-(\lambda)=\mathbb{I} & + \sum_{n=0}^{N^{[1]}-1}\left[
\frac{(\mathbf{0},\mathbf{0},\mathbf{c}_n^{[1]-})}{\lambda-\lambda_n^{[1]}}+
\frac{(\mathbf{0},\mathbf{b}_n^{[1]-},\mathbf{0})}{\lambda-\lambda_n^{[1]*}}\right]
+\sum_{n=0}^{N^{[2]}-1}\left[
\frac{(\mathbf{0},\mathbf{0},\mathbf{c}_n^{[2]-})}{\lambda-\lambda_n^{[2]}}+
\frac{(\mathbf{a}_n^{[2]-},\mathbf{0},\mathbf{0})}{\lambda-\lambda_n^{[2]*}}\right]\\
& + \sum_{n=0}^{N^{[3]}-1}\left[\frac{(\mathbf{0},\mathbf{b}_n^{[3]-},\mathbf{0})}{\lambda-\lambda_n^{[3]}} + \frac{(\mathbf{a}_n^{[3]-},\mathbf{0},\mathbf{0})}{\lambda-\lambda_n^{[3]*}}\right]
+\sum_{n=0}^{N^{[\solsplit]}-1}\left[
\frac{(\mathbf{0},\mathbf{b}_n^{[\solsplit]-},\mathbf{c}_n^{[\solsplit]-})}{\lambda-\lambda_n^{[\solsplit]}} +
\frac{(\mathbf{a}_n^{[\solsplit]-},\mathbf{0},\mathbf{0})}{\lambda-\lambda_n^{[\solsplit]*}}\right]\\
& + \sum_{n=0}^{N^{[\solfuse]}-1}\left[
\frac{(\mathbf{0},\mathbf{0},\mathbf{c}_n^{[\solfuse]-})}{\lambda-\lambda_n^{[\solfuse]}}+
\frac{(\mathbf{a}_n^{[\solfuse]-},\mathbf{b}_n^{[\solfuse]-},\mathbf{0})}{\lambda-\lambda_n^{[\solfuse]*}}\right].
\end{split}
\end{equation}
Here, the vector coefficients $\mathbf{a}_n^{[\mathrm{type}]\sigma}$, $\mathbf{b}_n^{[\mathrm{type}]\sigma}$, and $\mathbf{c}_n^{[\mathrm{type}]\sigma}$ are functions of $x$ and $t$ to be determined.
These representations are obtained by imposing the normalization conditions $\mathbf{M}^\sigma(\lambda)\to\mathbb{I}$ as $\lambda\to\infty$ and by setting to $\mathbf{0}$ the residues of those columns of $\mathbf{M}^\sigma(\lambda)$ that are known to be analytic because the corresponding column of the residue matrix $\mathbf{N}^\sigma$ (for a pole at $\lambda_n^{[\mathrm{type}]}$) or $-\mathbf{E}\mathbf{N}^{\sigma\dagger}\mathbf{E}$ (for a pole at $\lambda_n^{[\mathrm{type}]*}$) vanishes.

The constraints imposed by \eqref{eq:RHP-pole-C-plus}--\eqref{eq:RHP-pole-C-minus}
 on the remaining columns of the residue then yield a square inhomogeneous system 
of linear equations on the unknown vector coefficients.\footnote{From this point we 
specialize to $\sigma=+$.  The procedure for $\sigma=-$ is analogous.  While we 
do not write all the details, we note that the equations for $\sigma=-$ are 
important in their own right;  in particular our experience is that numerical 
calculations are more stable using the equations for $\sigma=-$ for $x$-values 
where there are more packets to the right than to the left.}
If we define the set of pole types
\eq
\mathcal{T}:=\{1,2,3,\solsplit,\solfuse\},
\endeq
then these equations are the following:  
\begin{multline}
	\mathbf{b}_n^{[1]+} = 
 	\beta_{n,32}^{[1]}e^{i\lambda_n^{[1]}\Delta^{[1]}(x-c^{[1]}t)/\epsilon}
 	\left[
	\mathbf{e}^3 + \sum_{\substack{k \in \mathcal{T} \\ k \neq 3} }
	\sum_{m=0}^{N^{[k]}-1}\frac{\mathbf{c}_m^{[k]+}}{\lambda_n^{[1]}-\lambda_m^{[k]*}}
	\right],
	\quad
	n=0,\dots,N^{[1]}-1,
\label{eq:b-plus-1}
\end{multline}

\begin{multline}
	\mathbf{c}_n^{[1]+} =
	\gamma^{[2]}\gamma^{[3]}\beta_{n,32}^{[1]*}
	e^{-i\lambda_n^{[1]*}\Delta^{[1]}(x-c^{[1]}t)/\epsilon}
	\left[
	\vphantom{
	\sum_{k \in \{1,\solsplit \}} \!\! \sum_{m=0}^{N^{[k]}-1}
	\frac{\mathbf{b}_m^{[k]+}}{\lambda_n^{[1]*}-\lambda_m^{[k]}} 
	+ \sum_{k \in \{3,\solfuse \} } \!\! \sum_{m=0}^{N^{[k]}-1}
	\frac{\mathbf{b}_m^{[k]+}}{\lambda_n^{[1]*}-\lambda_m^{[k]*}}
	}
	\mathbf{e}^2 + \right. \\
	\left.
	\sum_{k \in \{1,\solsplit \}} \!\! \sum_{m=0}^{N^{[k]}-1}
	\frac{\mathbf{b}_m^{[k]+}}{\lambda_n^{[1]*}-\lambda_m^{[k]}} 
	+ \sum_{k \in \{3,\solfuse \} } \!\! \sum_{m=0}^{N^{[k]}-1}
	\frac{\mathbf{b}_m^{[k]+}}{\lambda_n^{[1]*}-\lambda_m^{[k]*}}
\right],\quad
n=0,\dots,N^{[1]}-1,
\label{eq:c-plus-1}
\end{multline}

\begin{multline}
\mathbf{a}_n^{[2]+}=\beta_{n,31}^{[2]}e^{i\lambda_n^{[2]}\Delta^{[2]}(x-c^{[2]}t)/\epsilon}\left[
\mathbf{e}^3+ \sum_{\substack{k\in\mathcal{T} \\ k\neq 3}} \sum_{m=0}^{N^{[k]}-1}\frac{\mathbf{c}_m^{[k]+}}{\lambda_n^{[2]}-\lambda_m^{[k]*}} \right],\quad
n=0,\dots,N^{[2]}-1,
\label{eq:a-plus-2}
\end{multline}

\begin{multline}
	\mathbf{c}_n^{[2]+}=-\gamma^{[1]}\gamma^{[3]}\beta_{n,31}^{[2]*}
	e^{-i\lambda_n^{[2]*}	\Delta^{[2]}(x-c^{[2]}t)/\epsilon}
	\left[
	\mathbf{e}^1 + \sum_{\substack{k\in\mathcal{T} \\ k\neq 1}} 
	\sum_{m=0}^{N^{[k]}-1}
	\frac{\mathbf{a}_m^{[k]+}}{\lambda_n^{[2]*}-\lambda_m^{[k]}} \right], \\
	n=0,\dots,N^{[2]}-1,
	\label{eq:c-plus-2}
\end{multline}

\begin{multline}
	\mathbf{a}_n^{[3]+} =
	\beta_{n,21}^{[3]}e^{i\lambda_n^{[3]}\Delta^{[3]}(x-c^{[3]}t)/\epsilon}
	\left[
	\vphantom{
	\sum_{k \in \{1,\solsplit \}} \!\! \sum_{m=0}^{N^{[k]}-1}
	\frac{\mathbf{b}_m^{[k]+}}{\lambda_n^{[3]}-\lambda_m^{[k]}} +
	\sum_{k \in \{3,\solfuse \}} \!\! \sum_{m=0}^{N^{[k]}-1}\frac{\mathbf{b}_m^{[k]+}}
	{\lambda_n^{[3]}-\lambda_m^{[k]*}}
	}
	\mathbf{e}^2 + \right. \\
	\left.
	\sum_{k \in \{1,\solsplit \}} \!\! \sum_{m=0}^{N^{[k]}-1}
	\frac{\mathbf{b}_m^{[k]+}}{\lambda_n^{[3]}-\lambda_m^{[k]}} +
	\sum_{k \in \{3,\solfuse \}} \!\! \sum_{m=0}^{N^{[k]}-1}\frac{\mathbf{b}_m^{[k]+}}
	{\lambda_n^{[3]}-\lambda_m^{[k]*}}
	\right]	
\quad
n=0,\dots,N^{[3]}-1,
\label{eq:a-plus-3}
\end{multline}

\begin{multline}
	\mathbf{b}_n^{[3]+}=\gamma^{[1]}\gamma^{[2]}\beta_{n,21}^{[3]*}
	e^{-i\lambda_n^{[3]*}\Delta^{[3]}(x-c^{[3]}t)/\epsilon}
	\left[
	\mathbf{e}^1+ \sum_{\substack{k\in\mathcal{T} \\ k\neq 1}}  
	\sum_{m=0}^{N^{[k]}-1}\frac{\mathbf{a}_m^{[k]+}}{\lambda_n^{[3]*}-\lambda_m^{[k]}}
	\right], \\
	n=0,\dots,N^{[3]}-1,
	\label{eq:b-plus-3}
\end{multline}

\begin{multline}
	\mathbf{a}_n^{[\solsplit]+}=\beta_{n,31}^{[\solsplit]}
	e^{i\lambda_n^{[\solsplit]}\Delta^{[2]}(x-c^{[2]}t)/\epsilon}
	\left[
	\mathbf{e}^3 + \sum_{\substack{k\in\mathcal{T} \\ k\neq 3}} \sum_{m=0}^{N^{[k]}-1}
	\frac{\mathbf{c}_m^{[k]+}}{\lambda_n^{[\solsplit]}-\lambda_m^{[k]*}} 
	\right], \quad
	n=0,\dots,N^{[\solsplit]}-1,
\label{eq:a-plus-12}
\end{multline}

\begin{multline}
	\mathbf{b}_n^{[\solsplit]+}=\beta_{n,32}^{[\solsplit]}
	e^{i\lambda_n^{[\solsplit]}\Delta^{[1]}(x-c^{[1]}t)/\epsilon}
	\left[
	\mathbf{e}^3 + \sum_{\substack{k\in\mathcal{T} \\ k\neq 3}}  \sum_{m=0}^{N^{[k]}-1}
	\frac{\mathbf{c}_m^{[k]+}}{\lambda_n^{[\solsplit]}-\lambda_m^{[k]*}} 
	\right], \quad
	n=0,\dots,N^{[\solsplit]}-1,
\label{eq:b-plus-12}
\end{multline}

\begin{multline}
	\mathbf{c}_n^{[\solsplit]+} =-\gamma^{[1]}\gamma^{[3]}\beta_{n,31}^{[\solsplit]*}
	e^{-i\lambda_n^{[\solsplit]*}\Delta^{[2]}(x-c^{[2]}t)/\epsilon}
	\left[
	\mathbf{e}^1 + \sum_{\substack{k\in\mathcal{T} \\ k\neq 1}}\sum_{m=0}^{N^{[k]}-1}
	\frac{\mathbf{a}_m^{[k]+}}{\lambda_n^{[\solsplit]*}-\lambda_m^{[k]}} 
	\right] \\
	{}+\gamma^{[2]}\gamma^{[3]}\beta_{n,32}^{[\solsplit]*}
	e^{-i\lambda_n^{[\solsplit]*}\Delta^{[1]}(x-c^{[1]}t)/\epsilon}
	\left[ 
	\mathbf{e}^2 + \sum_{k \in \{1, \solsplit\} } \sum_{m=0}^{N^{[k]}-1} 
	\frac{\mathbf{b}_m^{[k]+}}{\lambda_n^{[\solsplit]*}-\lambda_m^{[k]}} +
	\sum_{ k \in \{3, \solfuse \} } \sum_{m=0}^{N^{[k]}-1}
	\frac{\mathbf{b}_m^{[k]+}}{\lambda_n^{[\solsplit]*}-\lambda_m^{[k]*} }
	\right],\\
	\quad
	n=0,\dots,N^{[\solsplit]}-1,
\end{multline}

\begin{multline}
	\mathbf{a}_n^{[\solfuse]+}= \beta_{n,21}^{[\solfuse]}
	e^{i\lambda_n^{[\solfuse]}\Delta^{[3]}(x-c^{[3]}t)/\epsilon}
	\left[
	\mathbf{e}^2+ \sum_{k\in\{1,\solsplit \} } \sum_{m=0}^{N^{[k]}-1}
	\frac{\mathbf{b}_m^{[k]+}}{\lambda_n^{[\solfuse]}-\lambda_m^{[k]}} 
	\right. \\ \left.
	{}+\sum_{k \in \{ 3, \solfuse \} } \sum_{m=0}^{N^{[k]}-1}
	\frac{\mathbf{b}_m^{[k]+}}{\lambda_n^{[\solfuse]}-\lambda_m^{[k]*}}
	\right]
	+\beta_{n,31}^{[\solfuse]}
	e^{i\lambda_n^{[\solfuse]}\Delta^{[2]}(x-c^{[2]}t)/\epsilon}
	\left[ \mathbf{e}^3 +
	\sum_{\substack{k\in\mathcal{T} \\ k\neq 3}}\sum_{m=0}^{N^{[k]}-1}
	\frac{\mathbf{c}_m^{[k]+}}{\lambda_n^{[\solfuse]}-\lambda_m^{[k]*}} 
	\right], \\
	\quad n=0,\dots,N^{[\solfuse]}-1,
\end{multline}

\eq
\begin{split}
\mathbf{b}_n^{[\solfuse]+}=\gamma^{[1]}\gamma^{[2]}\beta_{n,21}^{[\solfuse]*}e^{-i\lambda_n^{[\solfuse]*}\Delta^{[3]}(x-c^{[3]}t)/\epsilon}\left[\mathbf{e}^1 + \sum_{\substack{k\in\mathcal{T} \\ k\neq 1}} \sum_{m=0}^{N^{[k]}-1}\frac{\mathbf{a}_m^{[k]+}}{\lambda_n^{[\solfuse]*}-\lambda_m^{[k]}} \right], \\
n=0,\dots,N^{[\solfuse]}-1,
\label{eq:b-plus-23}
\end{split}
\endeq

\eq
\begin{split}
\mathbf{c}_n^{[\solfuse]+}=-\gamma^{[1]}\gamma^{[3]}\beta_{n,31}^{[\solfuse]*}e^{-i\lambda_n^{[\solfuse]*}\Delta^{[2]}(x-c^{[2]}t)/\epsilon}\left[
\mathbf{e}^1 + \sum_{\substack{k\in\mathcal{T} \\ k\neq 1}} \sum_{m=0}^{N^{[k]}-1}\frac{\mathbf{a}_m^{[k]+}}{\lambda_n^{[\solfuse]*}-\lambda_m^{[k]}} \right], \\
n=0,\dots,N^{[\solfuse]}-1.
\label{eq:c-plus-23}
\end{split}
\endeq

From the solution of this square system of linear equations, the use of  \eqref{eq:M-lambda-infinity-expand} and \eqref{eq:q-potentials-reconstruct} yields the formulas:  
\begin{equation}
\label{eq:q-formulas}
\begin{split}
q^{[1]}(x,t)&=-i\gamma^{[1]}\Delta^{[1]}\sqrt{\Delta^{[2]}\Delta^{[3]}} \sum_{\substack{k\in\mathcal{T} \\ k\neq 3}}\sum_{n=0}^{N^{[k]}-1}c^{[k]+}_{n,2}(x,t), \\ 
q^{[2]}(x,t)&=i\gamma^{[2]}\Delta^{[2]}\sqrt{\Delta^{[3]}\Delta^{[1]}} \sum_{\substack{k\in\mathcal{T} \\ k\neq 1}} \sum_{n=0}^{N^{[k]}-1}a^{[k]+}_{n,3}(x,t), \\
q^{[3]}(x,t)&=-i\gamma^{[3]}\Delta^{[3]}\sqrt{\Delta^{[1]}\Delta^{[2]}} \sum_{\substack{k\in\mathcal{T} \\ k\neq 2}} \sum_{n=0}^{N^{[k]}-1}b^{[k]+}_{n,1}(x,t).
\end{split}
\end{equation}
Situations in which there are only simple poles of a fixed type $k$, $k=1,2,3$,  yield trivial solutions of the TWRI system \eqref{3wave} in the sense that only the field $q^{[k]}(x,t)$ is nonzero and hence propagates in time by translation with velocity $c^{[k]}$.  For example, if $N^{[2]}=N^{[3]}=N^{[\solsplit]}=N^{[\solfuse]}=0$, then obviously we have $q^{[2]}(x,t)\equiv 0$ as the corresponding sums in \eqref{eq:q-formulas} are all empty, and moreover only equations \eqref{eq:b-plus-1} and \eqref{eq:c-plus-1} are relevant. They take the simplified form
\begin{equation}
\begin{split}
\mathbf{b}_n^{[1]+}&=\beta_{n,32}^{[1]}e^{i\lambda_n^{[1]}\xi^{[1]}/\epsilon}\left[\mathbf{e}^3 +\sum_{m=0}^{N^{[1]}-1}\frac{\mathbf{c}_m^{[1]+}}{\lambda_n^{[1]}-\lambda_m^{[1]*}}\right],\\
\mathbf{c}_n^{[1]+}&=\gamma^{[2]}\gamma^{[3]}\beta_{n,32}^{[1]*}e^{-i\lambda_n^{[1]*}\xi^{[1]}/\epsilon}
\left[\mathbf{e}^2 +\sum_{m=0}^{N^{[1]}-1}\frac{\mathbf{b}_m^{[1]+}}{\lambda_n^{[1]*}-\lambda_m^{[1]}}\right],\quad n=0,\dots,N^{[1]}-1,
\end{split}
\end{equation}
where $\xi^{[1]}:=\Delta^{[1]}(x-c^{[1]}t)$ is a traveling-wave variable with velocity $c^{[1]}$.  As the first component of the forcing terms always vanishes, the same is true of $\mathbf{b}_n^{[1]+}$ and $\mathbf{c}_n^{[1]+}$ provided the determinant of the linear system is nonzero, from which it follows that $q^{[3]}(x,t)\equiv 0$, and also that $q^{[1]}$ depends on $(x,t)$ only via the combination $\xi^{[1]}$.  The profile of this traveling wave solution can be arbitrarily complicated by choosing $N^{[1]}$ sufficiently large.

\subsubsection{Exact solutions corresponding to one pole}
\label{subsubsec:solitons-onepole}
Reflectionless solutions of the TWRI equations \eqref{3wave} for which $P$ contains only one simple pole $\lambda^{[\mathrm{type}]}$ are the elementary solitons of the system.  They are the following.
\begin{itemize}
\item The type $1$ soliton is generated from the data $\lambda^{[1]}=a+ib$ with $a\in\mathbb{R}$ and $b>0$, and using parameters $x_1\in\mathbb{R}$ and $\phi_1\in\mathbb{R}$ (mod $2\pi$), the connection coefficient for $\mathbf{M}^+$ is written without loss of generality in the form $\beta_{32}^{[1]}=2ibe^{b\Delta^{[1]}x_1/\epsilon}e^{-ia\Delta^{[1]}x_1/\epsilon}e^{-i\phi_1}$.  Defining a phase variable by $\xi_1:=x-x_1-c^{[1]}t$, the solution is given by
\begin{equation}
q^{[1]}(x,t)=\gamma^{[1]}e^{i\phi_1}
e^{-ia\Delta^{[1]}\xi_1/\epsilon}b\Delta^{[1]}\sqrt{\Delta^{[2]}\Delta^{[3]}}
\begin{cases} \mathrm{sech}(b\Delta^{[1]}\xi_1/\epsilon),&\quad \gamma^{[2]}\gamma^{[3]}=-1,\\
-\mathrm{csch}(b\Delta^{[1]}\xi_1/\epsilon),&\quad \gamma^{[2]}\gamma^{[3]}=1,
\end{cases}
\label{eq:pure-1}
\end{equation}
and then $q^{[2]}(x,t)=q^{[3]}(x,t)=0$.  Clearly for each $t\in\mathbb{R}$ this solution is bounded only if $\gamma^{[2]}\gamma^{[3]}=-1$.
\item The type $2$ soliton is generated from the data $\lambda^{[2]}=a+ib$ with $a\in\mathbb{R}$ and $b>0$, and using parameters $x_2\in\mathbb{R}$ and $\phi_2\in\mathbb{R}$ (mod $2\pi$), the connection coefficient for $\mathbf{M}^+$ is written in the form $\beta_{31}^{[2]}=2ibe^{b\Delta^{[2]}x_2/\epsilon}e^{-ia\Delta^{[2]}x_2/\epsilon}e^{i\phi_2}$.  Defining a phase variable by $\xi_2:=x-x_2-c^{[2]}t$, the solution is given by $q^{[1]}(x,t)=0$, 
\begin{equation}
q^{[2]}(x,t)=
-\gamma^{[2]}e^{i\phi_2}e^{ia\Delta^{[2]}\xi_2/\epsilon}b\Delta^{[2]}\sqrt{\Delta^{[3]}\Delta^{[1]}}\begin{cases}\mathrm{sech}(b\Delta^{[2]}\xi_2/\epsilon),&\quad \gamma^{[1]}\gamma^{[3]}=1,\\
\mathrm{csch}(b\Delta^{[2]}\xi_2/\epsilon),&\quad \gamma^{[1]}\gamma^{[3]}=-1,
\end{cases}
\label{eq:pure-2}
\end{equation}
and $q^{[3]}(x,t)=0$.  Clearly for each $t\in\mathbb{R}$ this solution is bounded only if $\gamma^{[1]}\gamma^{[3]}=1$.
\item The type $3$ soliton is generated from the data $\lambda^{[3]}=a+ib$ with $a\in\mathbb{R}$ and $b>0$, and using parameters $x_3\in\mathbb{R}$ and $\phi_3\in\mathbb{R}$ (mod $2\pi$), the connection coefficient for $\mathbf{M}^+$ is written in the form $\beta_{21}^{[3]}=2ibe^{b\Delta^{[3]}x_3/\epsilon}e^{-ia\Delta^{[3]}x_3/\epsilon}e^{-i\phi_3}$.  Defining a phase variable by $\xi_3:=x-x_3-c^{[3]}t$, the solution is given by $q^{[1]}(x,t)=q^{[2]}(x,t)=0$ and
\begin{equation}
q^{[3]}(x,t)=
\gamma^{[3]}e^{i\phi_3}e^{-ia\Delta^{[3]}\xi_3/\epsilon}b\Delta^{[3]}\sqrt{\Delta^{[1]}\Delta^{[2]}}\begin{cases}
\mathrm{sech}(b\Delta^{[3]}\xi_3/\epsilon),&\quad \gamma^{[1]}\gamma^{[2]}=-1,\\
-\mathrm{csch}(b\Delta^{[3]}\xi_3/\epsilon),&\quad \gamma^{[1]}\gamma^{[2]}=1.
\end{cases}
\label{eq:pure-3}
\end{equation}
Clearly for each $t\in\mathbb{R}$ this solution is bounded only if $\gamma^{[1]}\gamma^{[2]}=-1$.
\item The type $\solsplit$ soliton is generated from the data $\lambda^{[\solsplit]}=a+ib$ with $a\in\mathbb{R}$ and $b>0$, and $\beta_{31}^{[\solsplit]}=2ibe^{b\Delta^{[2]}x_2/\epsilon}e^{-ia\Delta^{[2]}x_2/\epsilon}e^{i\phi_2}$ with $x_2,\phi_2\in\mathbb{R}$ and $\beta_{32}^{[\solsplit]}=2ibe^{b\Delta^{[1]}x_1/\epsilon}e^{-ia\Delta^{[1]}x_1/\epsilon}e^{-i\phi_1}$ with $x_1,\phi_1\in\mathbb{R}$.  We define $x_3$ by the relation 
\begin{equation}
\Delta^{[1]}x_1+\Delta^{[3]}x_3=\Delta^{[2]}x_2
\label{eq:phase-relations}
\end{equation}
and $\phi_3$ (mod $2\pi$) by the relation
\begin{equation}
\phi_1+\phi_2+\phi_3=0,
\label{eq:phi-relation-1}
\end{equation}
and define three phase variables by $\xi_j:=x-x_j-c^{[j]}t$.  Note that \eqref{eq:phase-relations} implies that 
\begin{equation}
\Delta^{[1]}\xi_1+\Delta^{[3]}\xi_3=\Delta^{[2]}\xi_2
\end{equation}
holds for all $(x,t)\in\mathbb{R}^2$.  
Then, 
\begin{multline}
q^{[1]}(x,t)=\gamma^{[1]}e^{i\phi_1}e^{-ia\Delta^{[1]}\xi_1/\epsilon}b\Delta^{[1]}
\sqrt{\Delta^{[2]}\Delta^{[3]}}\\
\cdot\begin{cases}
\left[\cosh(b\Delta^{[1]}\xi_1/\epsilon) - \tfrac{1}{2}\gamma^{[1]}\gamma^{[2]}
e^{-b\Delta^{[2]}\xi_2/\epsilon}e^{-b\Delta^{[3]}\xi_3/\epsilon}\right]^{-1},&\quad
\gamma^{[2]}\gamma^{[3]}=-1,\\
\left[-\sinh(b\Delta^{[1]}\xi_1/\epsilon) - \tfrac{1}{2}\gamma^{[1]}\gamma^{[2]}
e^{-b\Delta^{[2]}\xi_2/\epsilon}e^{-b\Delta^{[3]}\xi_3/\epsilon}\right]^{-1},&\quad
\gamma^{[2]}\gamma^{[3]}=1,
\end{cases}
\label{eq:split-1}
\end{multline}
\begin{multline}
q^{[2]}(x,t)=-\gamma^{[2]}e^{i\phi_2}e^{ia\Delta^{[2]}\xi_2/\epsilon}b\Delta^{[2]}\sqrt{\Delta^{[3]}\Delta^{[1]}}\\
\cdot\begin{cases}
\left[\cosh(b\Delta^{[2]}\xi_2/\epsilon)-\tfrac{1}{2}\gamma^{[2]}\gamma^{[3]}e^{-b\Delta^{[1]}\xi_1/\epsilon}e^{b\Delta^{[3]}\xi_3/\epsilon}\right]^{-1},&\quad\gamma^{[1]}\gamma^{[3]}=1,\\
\left[\sinh(b\Delta^{[2]}\xi_2/\epsilon)-\tfrac{1}{2}\gamma^{[2]}\gamma^{[3]}e^{-b\Delta^{[1]}\xi_1/\epsilon}e^{b\Delta^{[3]}\xi_3/\epsilon}\right]^{-1},&\quad\gamma^{[1]}\gamma^{[3]}=-1,
\end{cases}
\label{eq:split-2}
\end{multline}
and
\begin{multline}
q^{[3]}(x,t)=\gamma^{[3]}e^{i\phi_3}e^{-ia\Delta^{[3]}\xi_3/\epsilon}
b\Delta^{[3]}\sqrt{\Delta^{[1]}\Delta^{[2]}}\\
\cdot\begin{cases}
\left[\cosh(b\Delta^{[3]}\xi_3/\epsilon)+\tfrac{1}{2}\gamma^{[1]}\gamma^{[3]}e^{b\Delta^{[1]}\xi_1/\epsilon}e^{b\Delta^{[2]}\xi_2/\epsilon}\right]^{-1},&\quad\gamma^{[1]}\gamma^{[2]}=-1,\\
\left[-\sinh(b\Delta^{[3]}\xi_3/\epsilon)+\tfrac{1}{2}\gamma^{[1]}\gamma^{[3]}e^{b\Delta^{[1]}\xi_1/\epsilon}e^{b\Delta^{[2]}\xi_2/\epsilon}\right]^{-1},&\quad\gamma^{[1]}\gamma^{[2]}=1.
\end{cases}
\label{eq:split-3}
\end{multline}
\item The type $\solfuse$ soliton is generated from the data $\lambda^{[\solfuse]}=a+ib$ with $a\in\mathbb{R}$ and $b>0$, and $\beta_{21}^{[\solfuse]}=2ibe^{b\Delta^{[3]}x_3/\epsilon}e^{-ia\Delta^{[3]}x_3/\epsilon}e^{-i\phi_3}$ with $x_3,\phi_3\in\mathbb{R}$ and $\beta_{31}^{[\solfuse]}=2ibe^{b\Delta^{[2]}x_2/\epsilon}e^{-ia\Delta^{[2]}x_2/\epsilon}e^{i\phi_2}$ with $x_2,\phi_2\in\mathbb{R}$.  We define $x_1$  by the relation \eqref{eq:phase-relations} and define $\phi_1$ (mod $2\pi)$ by the relation
\begin{equation}
\phi_1+\phi_2+\phi_3=\pi.  
\end{equation}
Then recalling $\xi_j:=x-x_j-c^{[j]}t$,
\begin{multline}
q^{[1]}(x,t)=\gamma^{[1]}e^{i\phi_1}e^{-ia\Delta^{[1]}\xi_1/\epsilon}b\Delta^{[1]}\sqrt{\Delta^{[2]}\Delta^{[3]}}\\
\cdot\begin{cases}
\left[\cosh(b\Delta^{[1]}\xi_1/\epsilon) + \tfrac{1}{2}\gamma^{[1]}\gamma^{[3]}e^{b\Delta^{[2]}\xi_2/\epsilon}e^{b\Delta^{[3]}\xi_3/\epsilon}\right]^{-1},&\quad\gamma^{[2]}\gamma^{[3]}=-1,\\
\left[-\sinh(b\Delta^{[1]}\xi_1/\epsilon) + \tfrac{1}{2}\gamma^{[1]}\gamma^{[3]}e^{b\Delta^{[2]}\xi_2/\epsilon}e^{b\Delta^{[3]}\xi_3/\epsilon}\right]^{-1},&\quad\gamma^{[2]}\gamma^{[3]}=1,
\end{cases}
\label{eq:fuse-1}
\end{multline}
\begin{multline}
q^{[2]}(x,t)=-\gamma^{[2]}e^{i\phi_2}e^{ia\Delta^{[2]}\xi_2/\epsilon}b\Delta^{[2]}\sqrt{\Delta^{[3]}\Delta^{[1]}}\\
\cdot\begin{cases}
\left[\cosh(b\Delta^{[2]}\xi_2/\epsilon)-\tfrac{1}{2}\gamma^{[1]}\gamma^{[2]}e^{b\Delta^{[1]}\xi_1/\epsilon}e^{-b\Delta^{[3]}\xi_3/\epsilon}\right]^{-1},&\quad\gamma^{[1]}\gamma^{[3]}=1,\\
\left[\sinh(b\Delta^{[2]}\xi_2/\epsilon)-\tfrac{1}{2}\gamma^{[1]}\gamma^{[2]}e^{b\Delta^{[1]}\xi_1/\epsilon}e^{-b\Delta^{[3]}\xi_3/\epsilon}\right]^{-1},&\quad\gamma^{[1]}\gamma^{[3]}=-1,
\end{cases}
\label{eq:fuse-2}
\end{multline}
and
\begin{multline}
q^{[3]}(x,t)=\gamma^{[3]}e^{i\phi_3}e^{-ia\Delta^{[3]}\xi_3/\epsilon}b\Delta^{[3]}\sqrt{\Delta^{[1]}\Delta^{[2]}}\\
\cdot\begin{cases}
\left[\cosh(b\Delta^{[3]}\xi_3/\epsilon)-\tfrac{1}{2}\gamma^{[2]}\gamma^{[3]}e^{-b\Delta^{[1]}\xi_1/\epsilon}e^{-b\Delta^{[2]}\xi_2/\epsilon}\right]^{-1},&\quad\gamma^{[1]}\gamma^{[2]}=-1,\\
\left[-\sinh(b\Delta^{[3]}\xi_3/\epsilon)-\tfrac{1}{2}\gamma^{[2]}\gamma^{[3]}e^{-b\Delta^{[1]}\xi_1/\epsilon}e^{-b\Delta^{[2]}\xi_2/\epsilon}\right]^{-1},&\quad\gamma^{[1]}\gamma^{[2]}=1.
\end{cases}
\label{eq:fuse-3}
\end{multline}
\end{itemize}

\subsubsection\protect{\emph{Properties of solitons of type} $\solsplit$ \emph{and} $\solfuse$.}
\label{subsubsec:solitons-splitfuse}
Consider first the soliton of type $\solsplit$.  
The fields $\{q^{[k]}(x,t)\}_{k=1}^3$ are all proportional to $(1-\gamma^{[2]}\gamma^{[3]}e^{-2b\Delta^{[1]}\xi_1/\epsilon}+\gamma^{[1]}\gamma^{[3]}e^{-2b\Delta^{[2]}\xi_2/\epsilon})^{-1}$ via bounded nonvanishing factors.  Therefore, if the sign parameters satisfy $\gamma^{[2]}\neq \gamma^{[1]}=\gamma^{[3]}$, all three fields will be uniformly bounded and $q^{[j]}(\cdot,t)\in\mathscr{S}(\mathbb{R})$ for all $t\in\mathbb{R}$.  All other possible sign configurations imply singularities of all three fields for some $(x,t)\in\mathbb{R}^2$.  Indeed, consider the following remaining cases.
\begin{itemize}
\item
If $\gamma^{[1]}=\gamma^{[2]}\neq\gamma^{[3]}$, then dominant balance arguments show that there is a unique simple pole in all fields near $\xi_2=0$ when $|t|\gg 1$ provided that $e^{-2b\Delta^{[1]}\xi_1/\epsilon}\ll 1$ in the limit, i.e., that $t\to -\infty$; we denote this pole by $x=x_-(t)$.  Similarly, there is a unique simple pole in all fields near $\Delta^{[2]}\xi_2=\Delta^{[1]}\xi_1$ or equivalently near $\xi_3=0$ when $|t|\gg 1$ provided that $e^{-2b\Delta^{[1]}\xi_1/\epsilon}\gg 1$ in the limit, i.e., that $t\to +\infty$; we denote this pole by $x=x_+(t)$.  Invoking the implicit function theorem to continue the simple roots $x_\pm(t)$ of the denominator to finite $t$ shows that unique continuation is possible to all $t\in\mathbb{R}$, and hence $x_+(t)=x_-(t)$.  Therefore if $\gamma^{[1]}=\gamma^{[2]}\neq\gamma^{[3]}$, all three fields $q^{[j]}(x,t)$ exhibit a simple pole singularity at a unique point $x=x(t)\in\mathbb{R}$ for all $t\in\mathbb{R}$.  The location $x=x(t)$ of the singularity moves with velocity $c^{[2]}$ as $t\to -\infty$ and with velocity $c^{[3]}$ as $t\to +\infty$.
\item
Similarly, if $\gamma^{[1]}\neq\gamma^{[2]}=\gamma^{[3]}$, then for each $t\in\mathbb{R}$ there is a unique simple pole in all three fields $q^{[j]}(x,t)$ at a point $x=x(t)$ that moves with velocity $c^{[2]}$ as $t\to -\infty$ and with velocity $c^{[1]}$ as $t\to +\infty$.  
\item
Finally, if $\gamma^{[1]}=\gamma^{[2]}=\gamma^{[3]}$, then dominant balance arguments show that as $t\to +\infty$ there are two distinct simple poles of all three fields $q^{[j]}(x,t)$ near $\xi_1=0$ and near $\xi_3=0$, while for $t$ sufficiently negative there are no singularities and all three fields satisfy $q^{[j]}(\cdot,t)\in\mathscr{S}(\mathbb{R})$.  By the implicit function theorem, the continuations of the pole curves approximated by $\xi_1=0$ and $\xi_3=0$ from large positive $t$ will first collide at a point along the straight line 
\begin{equation}
\xi_3=\frac{\epsilon}{2b\Delta^{[3]}}\ln\left(\frac{\Delta^{[2]}}{\Delta^{[1]}}\right).
\end{equation}
\end{itemize}
Therefore, there exists $t_0\in\mathbb{R}$ for which the soliton of type $\solsplit$ corresponds to bounded fields $q^{[j]}(\cdot,t_0)$ only if $\gamma^{[2]}$ is distinct from both $\gamma^{[1]}$ and $\gamma^{[3]}$ or if $\gamma^{[1]}=\gamma^{[2]}=\gamma^{[3]}$.  In the former case, the fields lie in $\mathscr{S}(\mathbb{R})$ for all $t$, while in the latter case all three fields blow up in finite time at some $t_*>t_0$ with the birth of a double pole that then splits into two simple poles for $t>t_*$.  The elementary soliton solution of type $\solsplit$ therefore clearly exhibits the $L^2(\mathbb{R}_x)$ blowup that is suggested by the indefiniteness of the Manley-Rowe relations \eqref{eq:Manley-Rowe} for $\gamma^{[1]}=\gamma^{[2]}=\gamma^{[3]}$.

In the case that all three fields are bounded for all $t\in\mathbb{R}$, one can read off the large $|t|$ asymptotics from the explicit formulae \eqref{eq:split-1}--\eqref{eq:split-3}.  Letting $x-vt$ be bounded as $|t|\to\infty$, one sees that unless $v=c^{[k]}$, $q^{[k]}(x,t)$ decays exponentially to zero as $|t|\to\infty$.  Moreover, if $v=c^{[1]}$ then $q^{[1]}(x,t)$ decays exponentially to zero as $t\to -\infty$ but takes exactly the limiting form \eqref{eq:pure-1} as $t\to +\infty$.  If $v=c^{[2]}$ then $q^{[2]}(x,t)$ decays exponentially to zero as $t\to +\infty$ but takes exactly the limiting form \eqref{eq:pure-2} as $t\to -\infty$.  Finally, if $v=c^{[3]}$ then $q^{[3]}(x,t)$ decays exponentially to zero as $t\to -\infty$ but takes exactly the limiting form \eqref{eq:pure-3} as $t\to +\infty$.  These results are suggested by the notation $\solsplit$; indeed the solution represents a moderate-velocity solitary wave $q^{[2]}(x,t)$ of the form \eqref{eq:pure-2} that decays and splits as $t$ increases into a fast-moving solitary wave $q^{[1]}(x,t)$ of the form \eqref{eq:pure-1} and a slow-moving solitary wave $q^{[3]}(x,t)$ of the form \eqref{eq:pure-3}.  

Again considering the case $\gamma^{[2]}\neq\gamma^{[1]}=\gamma^{[3]}$ in which all fields are bounded,
we may consider certain degenerations of the type $\solsplit$ soliton.  For example, if $x_2,x_3\to -\infty$ in such a way that $x_1$ remains bounded (taking into account \eqref{eq:phase-relations}) then $q^{[2]},q^{[3]}\to 0$ while $q^{[1]}\to \gamma^{[1]}e^{i\phi_1}e^{-ia\Delta^{[1]}\xi_1/\epsilon}b\Delta^{[1]}\sqrt{\Delta^{[2]}\Delta^{[3]}}\mathrm{sech}(b\Delta^{[1]}\xi_1/\epsilon)$ uniformly for bounded $(x,t)\in\mathbb{R}^2$. In this way, the type $\solsplit$ soliton degenerates into the type $1$ soliton.  Similarly, if $x_1\to -\infty$ and $x_3\to +\infty$ such that $x_2$ remains bounded, then $q^{[1]},q^{[3]}\to 0$.  On the other hand, we have $q^{[2]}\to -\gamma^{[2]}e^{i\phi_2}e^{ia\Delta^{[2]}\xi_2/\epsilon}b\Delta^{[2]}\sqrt{\Delta^{[3]}\Delta^{[1]}}\mathrm{sech}(b\Delta^{[2]}\xi_2/\epsilon)$ uniformly for bounded $(x,t)\in\mathbb{R}^2$, exhibiting the degeneration of the type $\solsplit$ soliton to the type $2$ soliton.  In both of these cases, there is an obvious corresponding limit in the scattering domain, with $\beta^{[\solsplit]}_{31}\to 0$ yielding the degeneration to type $1$ and $\beta^{[\solsplit]}_{32}\to 0$ yielding the degeneration to type $2$.  However, a third possibility is to consider the limit $x_1,x_2\to +\infty$ in such a way that $x_3$ remains finite.  Then $q^{[1]},q^{[2]}\to 0$ while $q^{[3]}\to \gamma^{[3]}e^{i\phi_3}e^{-ia\Delta^{[3]}\xi_3/\epsilon}b\Delta^{[3]}\sqrt{\Delta^{[1]}\Delta^{[2]}}\mathrm{sech}(b\Delta^{[3]}\xi_3/\epsilon)$ uniformly for bounded $(x,t)\in\mathbb{R}^2$.  This is a degeneration of the type $\solsplit$ soliton to the type $3$ soliton; however, at the level of the spectral data the limit is more subtle as $\beta^{[\solsplit]}_{31}$ and $\beta^{[\solsplit]}_{32}$ both blow up.  The resolution of this apparent difficulty is to recall the identity \eqref{eq:type-solsplit-beta-relations}, which shows that the corresponding connection coefficients for the matrix $\mathbf{M}^-$ satisfy $\beta^{[\solsplit]}_{13}\to 0$ while $\beta^{[\solsplit]}_{12}$ has a finite limit; thus the divergence disappears upon working with $\mathbf{M}^-$ rather than $\mathbf{M}^+$.

We now consider the soliton of type $\solfuse$.  
The fields $\{q^{[k]}(x,t)\}_{k=1}^3$ are all proportional to $(1+\gamma^{[1]}\gamma^{[3]}e^{-2b\Delta^{[2]}\xi_2/\epsilon}-\gamma^{[1]}\gamma^{[2]}e^{-2b\Delta^{[3]}\xi_3/\epsilon})^{-1}$ via bounded nonvanishing factors.  Therefore, as in the case of the type $\solsplit$ soliton, unless either $\gamma^{[1]}=\gamma^{[3]}\neq\gamma^{[2]}$ or $\gamma^{[1]}=\gamma^{[2]}=\gamma^{[3]}$, all three fields exhibit a unique simple pole singularity at a point $x=x(t)$ well defined for all $t\in\mathbb{R}$.  If $\gamma^{[1]}=\gamma^{[2]}=\gamma^{[3]}$, $q^{[k]}(\cdot,t)\in\mathscr{S}(\mathbb{R})$, $k=1,2,3$, holds for $t$ sufficiently positive, but all fields contain singularities for $t$ sufficiently negative.  Finally, if $\gamma^{[1]}=\gamma^{[3]}\neq\gamma^{[2]}$ then $q^{[k]}(\cdot,t)\in\mathscr{S}(\mathbb{R})$ for all $t\in\mathbb{R}$, $k=1,2,3$, and all seminorms are uniformly bounded in time.  In the latter case, the type $\solfuse$ soliton resembles a superposition of solitons of types $1$ (fast-moving) and $3$ (slow-moving) for large negative $t$ but these solitons combine for finite $t$ and produce a single soliton of type $2$ (moderate velocity) for large positive $t$.  This is suggested by the notation $\solfuse$.  The type $\solfuse$ soliton can degenerate for bounded $(x,t)$ into solitons of types $1$, $2$, and $3$ by taking appropriate limits of the connection coefficients.

\subsubsection{Double-scaling limits of the $1+3$ two-pole solution}
\label{subsubsec:solitons-merging}
Suppose that $\gamma^{[1]}=\gamma^{[3]}\neq\gamma^{[2]}$ so that all reflectionless potentials are bounded for all $(x,t)\in\mathbb{R}^2$, and 
consider a reflectionless potential corresponding to two simple poles in $P$:  one pole $\lambda^{[1]}$ of type $1$ with 
 $\mathbf{M}^+$ connection coefficient $\beta_{32}^{[1]}$ and one pole $\lambda^{[3]}$ of type $3$ 
with $\mathbf{M}^+$ connection coefficient $\beta_{21}^{[3]}$.  We will show that by merging 
these two poles in appropriate ways we can obtain any of the five elementary 
solitons described in Appendix~\ref{subsubsec:solitons-onepole}.  Let 
\begin{equation}
	\lambda^{[1]} - \lambda^{[3]} = \delta e^{i \theta}
	\label{eq:lambda-delta-theta}
\end{equation}
for $\delta > 0$ and $\theta \in \R$ (mod $2\pi$) and write $\lambda^{[k]} - \lambda^{[k]*} = 2i b^{[k]}$, $k=1,3$.  We will consider $\delta\downarrow 0$ to merge the poles and we introduce a double-scaling limit by allowing the connection coefficients to scale with powers of $\delta$ by writing
\begin{gather}
	\beta_{32}^{[1]} = 2ib^{[1]} \delta^{m_1} \widehat{\beta}_{32}^{[1]} \quad\text{and}\quad
	\beta_{21}^{[3]} = 2ib^{[3]} \delta^{m_3} \widehat{\beta}_{21}^{[3]}
	\label{beta scales}
\end{gather}
for powers $m_1, m_3 \in \R$ to be determined. 
Using the procedure in Appendix~\ref{subsubsec:solitons-rhp}, we reconstruct the potentials $q^{[k]}=q^{[k]}(x,t)$, $k=1,2,3$, from $\mathbf{M}^+(x,t;\lambda)$ as
\begin{equation}\label{coefficients}
	\begin{aligned}
  		q^{[1]} &=\frac{
			\gamma^{[1]} \Delta^{[1]} \sqrt{\Delta^{[2]}\Delta^{[3]}}
			\delta^{m_1} 2b^{[1]} \widehat{\beta}^{[1]*}_{32}E^{[1]*} 
			\left( 1 +  \delta^{2m_3}| \widehat{\beta}^{[3]}_{21}E^{[3]}|^2
			- 2i b^{[3]} e^{i\theta} \delta^{2m_3-1}|\widehat{\beta}^{[3]}_{21}E^{[3]}|^2 \right)
		}
		{ 
			1 + \delta^{2m_1} | \widehat{\beta}^{[1]}_{32}E^{[1]}|^2 + 
			\delta^{2m_3} | \widehat{\beta}^{[3]}_{21}E^{[3]}|^2 +
			\delta^{2(m_1 + m_3 - 1)}  |\widehat{\beta}^{[1]}_{32}E^{[1]} \widehat{\beta}^{[3]}_{21}E^{[3]}|^2 
			\left( 4 b^{[1]} b^{[3]} + \delta^{2} \right)
	 	},    \\ 
	q^{[2]} &= \frac{ 
		i\gamma^{[2]} \Delta^{[2]} \sqrt{\Delta^{[1]}\Delta^{[3]}}	
		4 \delta^{m_1+m_3-1} e^{-i \theta} b^{[1]} b^{[3]}  
		\widehat{\beta}^{[1]}_{32}E^{[1]} \widehat{\beta}^{[3]}_{21}E^{[3]} 
		}
		{ 
			1 + \delta^{2m_1} | \widehat{\beta}^{[1]}_{32}E^{[1]}|^2 + 
			\delta^{2m_3} | \widehat{\beta}^{[3]}_{21}E^{[3]}|^2 +
			\delta^{2(m_1 + m_3 - 1)}  |\widehat{\beta}^{[1]}_{32}E^{[1]} \widehat{\beta}^{[3]}_{21}E^{[3]}|^2 
			\left( 4 b^{[1]} b^{[3]} + \delta^{2} \right)
		}, \\
	q^{[3]} &=\frac{
		\gamma^{[3]} \Delta^{[3]} \sqrt{\Delta^{[1]}\Delta^{[2]}}	
		\delta^{m_3} 2b^{[3]} \widehat{\beta}^{[3]*}_{21}E^{[3]*} 
		\left( 1 +  \delta^{2m_1}| \widehat{\beta}^{[1]}_{32}|^2
		+ 2i b^{[1]} e^{i\theta} \delta^{2m_1-1} |\widehat{\beta}^{[1]}_{32}E^{[1]}|^2 \right)
		}
		{ 
			1 + \delta^{2m_1} | \widehat{\beta}^{[1]}_{32}E^{[1]}|^2 + 
			\delta^{2m_3} | \widehat{\beta}^{[3]}_{21}E^{[3]}|^2 +
			\delta^{2(m_1 + m_3 - 1)}  |\widehat{\beta}^{[1]}_{32}E^{[1]} \widehat{\beta}^{[3]}_{21}E^{[3]}|^2 
			\left( 4 b^{[1]} b^{[3]} + \delta^{2} \right)
		},	
	\end{aligned}
\end{equation}
where the dependent variables enter via $E^{[k]}=E^{[k]}(x,t):=e^{i\lambda^{[k]}\Delta^{[k]}(x-c^{[k]}t)/\epsilon}$.
We now want to consider the limit of these solutions as $\delta \downarrow 0$ holding all other parameters fixed along with the independent variables $(x,t)$. The structure of the limiting solution depends on which of the four terms in the common denominator of each expression in \eqref{coefficients} is dominant;  let 
\begin{equation}
	m: = \min \{ 0, m_1, m_3, m_1+m_3-1 \}.
\end{equation}
The $(m_1,m_3)$-plane is then divided into four distinct regions on which $m$ is constant:
\begin{equation}
	\begin{aligned}
		A &:= \{ (m_1, m_3) \in \R^2 \,:\, m = 0 \}, \\
		B &:= \{ (m_1, m_3) \in \R^2 \,:\, m = m_1 \}, \\
		C &:= \{ (m_1, m_3) \in \R^2 \,:\, m = m_3 \}, \\
		D &:= \{ (m_1, m_3) \in \R^2 \,:\, m = m_1 + m_3 -1 \}. \\
	\end{aligned}
\end{equation}
As summarized in Figure~\ref{fig-scaling-parameters} in the interior of each region $\lim_{\delta \downarrow 0} q^{[k]}(x,t) = 0$, while on the boundaries between the regions elementary (single pole) solitons of different types emerge in the limit as $\delta \downarrow 0$. 
\begin{figure}[h]
\centering
\begin{tabular}{c c}
\begin{tikzpicture}[scale=2.5]
\path [fill=gray!10](0,2) -- (0,1) -- (1,0) -- (2,0)--(2,2)--(0,2);
\path [fill=gray!50](-1,1) -- (0,1) -- (1,0) -- (1,-1)--(-1,-1)--(-1,1);
\path [fill=gray!30](-1,1) -- (0,1) -- (0,2) -- (-1,2)--(-1,1);
\path [fill=gray!30](1,-1) -- (1,0) -- (2,0) -- (2,-1)--(1,-1);
\draw [dotted] (-0.1,0) -- (1,0);
\draw [dotted] (0,-0.1) -- (0,1);
\node [right] at (2.01,0) {$m_1$};
\node [above] at (0,2.01) {$m_3$};
\draw [fill] (0,0) circle (.1ex);
\node [below left] at (0,0) {$0$};
\draw [thick] (1,0) -- (0,1) node[circle, midway, draw=black, fill=white] {$2$};
\draw [thick] (0,1) -- (-1,1) node[circle, near end, draw=black, fill=white] {$3$};
\draw [thick][->] (0,1) -- (0,2.01) node[circle, near end, draw=black, fill=white] {$1$};
\draw [thick][->] (1,0) -- (2.01,0) node[circle, near end, draw=black, fill=white] {$3$};
\draw [thick] (1,0) -- (1,-1) node[circle, near end, draw=black, fill=white] {$1$};
\node [circle, draw=black, fill=white] at (1,0) {$\solfuse$};
\node [circle, draw=black, fill=white] at (0,1) {$\solsplit$};
\node at (1,1) {$A$};
\node at (-0.5,1.5) {$B$};
\node at (1.5,-0.5) {$C$};
\node at (-0.5,-0.5) {$D$};
\end{tikzpicture}\\ 
\end{tabular}
\caption{The type of elementary soliton that emerges in the double-scaling limit depends on the choice of exponents $m_1$ and $m_3$ in \eqref{beta scales}. The soliton type along each boundary curve in the $(m_1, m_3)$-plane is indicated above. 
Types $\protect\solfuse$ 
and $\solsplit$ occur only at the vertices $(0,1)$ and $(1,0)$ respectively. In each of the four open regions $A$, $B$, $C$, and $D$, $q^{[k]}(x,t) \to 0$ as $\delta\downarrow 0$ for $k=1,2,3$.
\label{fig-scaling-parameters}
}
\end{figure}
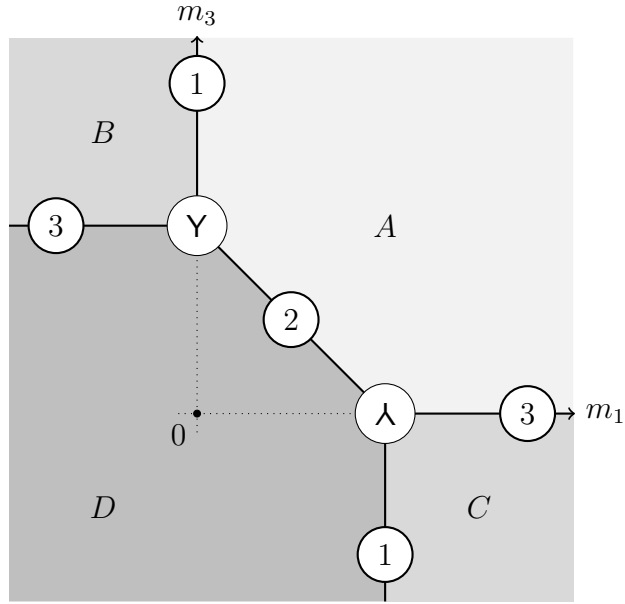
We describe these limits in detail below.

To study the solution in the interior of the four regions $A$, $B$, $C$, and $D$, we inspect \eqref{coefficients} and find the following.
\begin{itemize}
\item For $(m_1,m_3)$ in region $A$, 
		$q^{[1]} = \bigo{ \delta^{\min\{m_1, m_1+2m_3-1\}} }$,
		$q^{[2]} = \bigo{ \delta^{m_1 + m_3 - 1} }$, and
		$q^{[3]} = \bigo{ \delta^{\min\{m_3, 2m_1+m_3-1\}} }$.
\item For $(m_1,m_3)$ in region $B$,
		$q^{[1]} = \bigo{ \delta^{-m_1} }$,
		$q^{[2]} = \bigo{ \delta^{m_3 -m_1- 1} }$, and
		$q^{[3]} = \bigo{ \delta^{m_3 - 1} }$.	
\item For $(m_1,m_3)$ in region $C$,
		$q^{[1]} = \bigo{ \delta^{m_1-1} }$,
		$q^{[2]} = \bigo{ \delta^{m_1 -m_3- 1}  }$, and
		$q^{[3]} = \bigo{ \delta^{-m_3} }$.
\item For $(m_1,m_3)$ in region $D$, 
		$q^{[1]} = \bigo{ \delta^{\min\{1-m_1, 2 - m_1 - 2m_3\}} }$,
		$q^{[2]} = \bigo{ \delta^{1- m_1 - m_3} }$, and
		$q^{[3]} = \bigo{ \delta^{\min\{1-m_3, 2- 2m_1 -m_3\}} }$.
\end{itemize}
It is easy to check that in all four regions, the exponents in the above estimates are all strictly positive, and hence $q^{[k]}(x,t)\to 0$ as $\delta\downarrow 0$ for $k=1,2,3$.

Nontrivial limiting solutions exist on the shared boundaries between the various regions. To express these let 
\begin{equation}
	\begin{gathered}
	\lim_{\delta \downarrow 0} \lambda^{[1]} = \lim_{\delta \downarrow 0} \lambda^{[3]} 
	= \lambda = a + i b \qquad (a,b\in \R),
	\\
	\lim_{\delta \downarrow 0} b^{[1]} = \lim_{\delta \downarrow 0} b^{[3]} = \imag \{\lambda\} = b.
	\end{gathered}
\end{equation}
If we introduce parameters $x_k \in \R$ and $\phi_k$ (mod $2\pi$) $\in \R$, $k=1,3$, define related constants $x_2$ and $\phi_2$ by \eqref{eq:phase-relations} and \eqref{eq:phi-relation-1} respectively, 
and recall $\xi_k := x  - x_k- c^{[k]} t$ for $k=1,2,3$, then we have the limits
\begin{gather}
	\lim_{\delta\downarrow 0}\widehat{\beta}^{[1]}_{32}E^{[1]}(x,t) 
		= e^{-b\Delta^{[1]} \xi_1 / \epsilon} e^{i a \Delta^{[1]} \xi_1 / \epsilon} 
		e^{-i \phi_1}, \\
	\lim_{\delta\downarrow 0}\widehat{\beta}^{[3]}_{21}E^{[3]}(x,t) 
		= e^{-b\Delta^{[3]} \xi_3 / \epsilon} e^{i a \Delta^{[3]} \xi_3 / \epsilon} 
		e^{-i \phi_3},
\end{gather}
and the limit of the product is
\begin{gather}
	\lim_{\delta\downarrow 0}\widehat{\beta}^{[1]}_{32}E^{[1]}(x,t) \widehat{\beta}^{[3]}_{21}E^{[3]}(x,t) 
	= e^{-b\Delta^{[2]} \xi_2 / \epsilon} e^{i a \Delta^{[2]} \xi_2 / \epsilon} 
	e^{i \phi_2},
\end{gather}
which evolves in $(x,t)$ like a type-2 connection coefficient. The $\delta\downarrow 0$ limits of $q^{[k]}(x,t)$, $k=1,2,3$, 
when $(m_1,m_3)$ lies on the boundary between different regions are then as follows.  Recall the angle $\theta$ defined (mod $2\pi$) by \eqref{eq:lambda-delta-theta}.
\begin{itemize}
\item Type $1$ solitons emerge on the $AB$ and $CD$ boundaries: 
\begin{itemize}
	\item On the $AB$ boundary where $m_1=0$ and $m_3 > 1$ we have
	\begin{equation}
	\begin{gathered}	
	q^{[1]}(x,t) \to \gamma^{[1]} 
	e^{i \phi_1} e^{-i a \Delta^{[1]} \xi_1 / \epsilon} 
	b \Delta^{[1]} \sqrt{  \Delta^{[2]} \Delta^{[3]}} 
	\sech(b \Delta^{[1]} \xi_1/ \epsilon), \\
	q^{[2]}(x,t) \to 0,\quad  q^{[3]}(x,t)\to 0.
	\end{gathered}
	\end{equation}
	\item On the $CD$ boundary where $m_1=1$ and $m_3 < 0$ we have
	\begin{equation}
	\begin{gathered}	
	q^{[1]}(x,t) \to -i \gamma^{[1]} 
	e^{i(\theta+ \phi_1)} e^{-i a \Delta^{[1]} {\xi}_1 / \epsilon} 
	b \Delta^{[1]} \sqrt{  \Delta^{[2]} \Delta^{[3]}} 
	\sech( b \Delta^{[1]} \breve{\xi}_1/ \epsilon), \\
	q^{[2]}(x,t) \to 0,\quad  q^{[3]}(x,t) \to 0.
	\end{gathered}
	\end{equation}
	where $\breve{\xi}_k$ is defined for $k=1,2,3$ by the relation
	$2b e^{-b\Delta^{[k]}\xi_k/\eps} = e^{-b\Delta^{[1]} \breve{\xi}_k /\eps}$, essentially introducing a phase shift. 
\end{itemize}
\item 
Type $2$ solitons emerge on the $AD$ boundary where $1-m_1-m_3 = 0$ with $m_1,m_3>0$: 
	\begin{equation}
	\begin{gathered}	
	q^{[2]}(x,t) \to i \gamma^{[2]} 
	e^{i(\phi_2-\theta)} e^{i a \Delta^{[2]} \xi_2 / \epsilon} 
	b \Delta^{[2]} \sqrt{  \Delta^{[1]} \Delta^{[3]}} 
	\sech(b \Delta^{[2]} \breve \xi_2/ \epsilon), \\
	q^{[1]}(x,t) \to 0,\quad  q^{[3]}(x,t) \to 0.
	\end{gathered}
	\end{equation}
\item
Type $3$ solitons emerge on the $AC$ and $BD$ boundaries: 
\begin{itemize}
	\item On the $AC$ boundary where $m_1 > 1$ and $m_3 = 0$ we have 
	\begin{equation}
	\begin{gathered}	
	q^{[3]}(x,t) \to \gamma^{[3]} 
	e^{i \phi_3} e^{-i a \Delta^{[3]} \xi_3 / \epsilon} 
	b \Delta^{[3]} \sqrt{  \Delta^{[1]} \Delta^{[2]}} 
	\sech(b \Delta^{[3]} \xi_3/ \epsilon), \\
	q^{[1]}(x,t) \to 0,\quad q^{[2]}(x,t) \to 0.
	\end{gathered}
	\end{equation}
	\item On the $BD$ boundary where $m_1 < 0$ and $m_3 = 1$ we have
	\begin{equation}
	\begin{gathered}	
	q^{[3]}(x,t) \to -i \gamma^{[3]} 
	e^{i(\theta+ \phi_3)} e^{-i a \Delta^{[3]} {\xi}_3 / \epsilon} 
	b \Delta^{[3]} \sqrt{  \Delta^{[1]} \Delta^{[2]}} 
	\sech( b \Delta^{[3]} \breve{\xi}_3/ \epsilon), \\
	q^{[1]}(x,t) \to 0,\quad q^{[2]}(x,t) \to 0,
	\end{gathered}
	\end{equation}
	where $\breve{\xi}_3$ is as defined above.
\end{itemize}
\item 
At the $ABD$ vertex where $(m_1,m_3)=(0,1)$ we have a type $\solsplit$ soliton
\begin{equation}
	\begin{aligned}
	q^{[1]}(x,t) &\to \gamma^{[1]} 
		e^{i \phi_1} e^{-i a \Delta^{[1]} \xi_1 / \epsilon} 
		b \Delta^{[1]} \sqrt{  \Delta^{[2]} \Delta^{[3]}} 
		\left[
		\cosh(b \Delta^{[1]} \xi_1/\eps) + \tfrac{1}{2} 
		e^{-b \Delta^{[2]} \breve{\xi}_2/\eps} e^{-b \Delta^{[3]} \breve{\xi}_3/\eps}  
		\right]^{-1}, \\
	q^{[2]}(x,t) &\to i \gamma^{[2]} 
		e^{i(\phi_2-\theta)} e^{i a \Delta^{[2]} \xi_2 / \epsilon} 
		b \Delta^{[2]} \sqrt{  \Delta^{[1]} \Delta^{[3]}} 
		\left[
		\cosh(b \Delta^{[2]} \breve{\xi}_2/\eps) + \tfrac{1}{2} 
		e^{-b \Delta^{[1]} \xi_1/\eps} e^{b \Delta^{[3]} \breve{\xi}_3/\eps} 
		\right]^{-1}, \\
	q^{[3]}(x,t) &\to i \gamma^{[3]} 
		e^{i(\theta + \phi_3)} e^{-i a \Delta^{[3]} \xi_3 / \epsilon} 
		b \Delta^{[3]} \sqrt{  \Delta^{[1]} \Delta^{[2]}} 
		\left[
		\cosh(b \Delta^{[3]} \breve{\xi}_3/\eps) + \tfrac{1}{2} 
		e^{b \Delta^{[1]} \xi_1/\eps} e^{b \Delta^{[2]} \breve{\xi}_2/\eps} 
		\right]^{-1}.
	\end{aligned}
\end{equation}
\item
At the $ACD$ vertex where $(m_1,m_3)=(1,0)$ we have a type $\solfuse$ soliton
\begin{equation}
	\begin{aligned}
	q^{[1]}(x,t) &\to -i \gamma^{[1]} 
		e^{i(\theta + \phi_1)} e^{-i a \Delta^{[1]} \xi_1 / \epsilon} 
		b \Delta^{[1]} \sqrt{  \Delta^{[2]} \Delta^{[3]}} 
		\left[
		\cosh(b \Delta^{[1]} \breve{\xi}_1/\eps) + \tfrac{1}{2} 
		e^{b \Delta^{[2]} \breve{\xi}_2/\eps} e^{b \Delta^{[3]} \xi_3/\eps}  
		\right]^{-1}, \\
	q^{[2]}(x,t) &\to i \gamma^{[2]} 
		e^{i(\phi_2-\theta)} e^{i a \Delta^{[2]} \xi_2 / \epsilon} 
		b \Delta^{[2]} \sqrt{  \Delta^{[1]} \Delta^{[3]}} 
		\left[
		\cosh(b \Delta^{[2]} \breve{\xi}_2/\eps) + \tfrac{1}{2} 
		e^{b \Delta^{[1]} \breve{\xi}_1/\eps} e^{-b \Delta^{[3]} \xi_3/\eps}  
		\right]^{-1},  \\
	q^{[3]}(x,t) &\to \gamma^{[3]} 
		e^{i \phi_3} e^{-i a \Delta^{[3]} \xi_3 / \epsilon} 
		b \Delta^{[3]} \sqrt{  \Delta^{[1]} \Delta^{[2]}} 
		\left[
		\cosh(b \Delta^{[3]} \xi_3/\eps) + \tfrac{1}{2} 
		e^{-b \Delta^{[1]} \breve{\xi}_1/\eps} e^{-b \Delta^{[2]} \breve{\xi}_2/\eps}  
		\right]^{-1} .
	\end{aligned}
\end{equation}
\end{itemize}
Note that in the cases of the limits leading to solitons of types $1$, $2$, and $3$ in which $(m_1,m_3)$ lies along a line segment in the plane, the limiting solution is independent of the particular point chosen provided it is fixed on the interior of that segment.

\section{The Nonselfadjoint Zakharov-Shabat Problem}
\label{appC-nonselfadjoint-zs}
\subsection{Direct scattering theory for integrable potentials}
Let $\psi\in L^1(\mathbb{R};\mathbb{C})$ be a complex-valued integrable function, and let $\zeta\in\mathbb{C}$ be a complex spectral parameter.  The non-selfadjoint Zakharov-Shabat problem is the first-order linear system
\begin{equation}
\epsilon\frac{\dd \mathbf{w}}{\dd x}=\mathbf{Z}(x;\zeta)\mathbf{w},\quad\mathbf{Z}(x;\zeta):=\begin{pmatrix}-i\zeta & \psi(x)\\
-\psi(x)^* & i\zeta\end{pmatrix}
\label{eq:Zakharov-Shabat}
\end{equation}
governing an unknown $\mathbf{w}:\mathbb{R}\to\mathbb{C}^2$.  This differential 
equation was first derived by Zakharov and Shabat \cite{Zakharov:1972};  in 
particular it is part of the Lax pair for the cubic focusing nonlinear Schr\"odinger 
equation in one space dimension.  There exist two $2\times 2$ fundamental solution matrices denoted $\mathbf{W}^\pm(x;\zeta)$ defined for $\zeta\in\mathbb{R}$ whose columns are \emph{Jost solutions} of \eqref{eq:Zakharov-Shabat} and that satisfy the boundary conditions
\begin{equation}
\lim_{x\to\pm\infty}\mathbf{W}^\pm(x;\zeta)e^{i\zeta\sigma_3 x/\epsilon} = \mathbb{I},\quad\zeta\in\mathbb{R}.
\label{eq:ZS-Jost-BC}
\end{equation}
Both matrices $\mathbf{W}^\pm(x;\zeta)$ have unit determinant by Abel's Theorem, and they have the symmetry $\mathbf{W}^\pm(x;\zeta)^*=i\sigma_2\mathbf{W}^\pm(x;\zeta)(i\sigma_2)^{-1}$, where on the left-hand side we mean component-wise complex conjugation (no transpose).
The scattering matrix $\mathbf{S}^\mathrm{ZS}(\zeta)$ is defined in terms of these fundamental matrices by the equation
\begin{equation}
\mathbf{W}^+(x;\zeta)=\mathbf{W}^-(x;\zeta)\mathbf{S}^\mathrm{ZS}(\zeta),\quad x\in\mathbb{R},\quad \zeta\in\mathbb{R}.
\label{eq:ZS-scattering-matrix}
\end{equation}
The conjugation symmetry of the matrices $\mathbf{W}^\pm(x;\zeta)$ then implies that the scattering matrix can be written in the form
\begin{equation}
\mathbf{S}^\mathrm{ZS}(\zeta)=\begin{pmatrix}a(\zeta)^* & b(\zeta)^*\\-b(\zeta) & a(\zeta)\end{pmatrix},\quad\zeta\in\mathbb{R},
\label{eq:ZS-scattering-matrix-ab}
\end{equation}
and the unimodularity of $\mathbf{W}^\pm(x;\zeta)$ implies that $\mathbf{S}^\mathrm{ZS}(\zeta)$ has unit determinant, i.e., 
\begin{equation}
|a(\zeta)|^2+|b(\zeta)|^2=1,\quad \zeta\in\mathbb{R}.
\label{eq:ZS-ab-modulus}
\end{equation}
For each $x\in\mathbb{R}$, the first (second) column of $\mathbf{W}^-(x;\zeta)$ (of $\mathbf{W}^+(x;\zeta)$), denoted $\mathbf{w}^{-,1}(x;\zeta)$ ($\mathbf{w}^{+,2}(x;\zeta)$) is continuous for $\imag\{\zeta\}\ge 0$ and analytic for $\imag\{\zeta\}>0$, and it can be shown that the corresponding columns of the boundary condition \eqref{eq:ZS-Jost-BC} continue to hold true for $\imag\{\zeta\}>0$.  Since $a(\zeta)$ can be represented as a Wronskian determinant by
\begin{equation}
a(\zeta)=\det(\mathbf{w}^{-,1}(x;\zeta),\mathbf{w}^{+,2}(x;\zeta)),\quad\zeta\in\mathbb{R},
\label{eq:a-determinant}
\end{equation}
it follows that $a(\zeta)$ is the continuous boundary value of a function analytic in the upper half-plane $\imag\{\zeta\}>0$, and from the analysis of Volterra equations governing the Jost solutions, it can be shown that $a(\zeta)\to 1$ as $\zeta\to\infty$ with $\imag\{\zeta\}>0$.
The zeros of $a(\cdot)$ in the upper half-plane are the \emph{eigenvalues} of the Zakharov-Shabat problem \eqref{eq:Zakharov-Shabat}.  Indeed, if $\imag\{\zeta\}>0$ and $a(\zeta)=0$, then from \eqref{eq:a-determinant} it follows that there is a nonzero \emph{proportionality constant} $\tau\in\mathbb{C}$ corresponding to $\zeta$ such that $\mathbf{w}^{-,1}(x;\zeta)=\tau\mathbf{w}^{+,2}(x;\zeta)$ holds for all $x\in\mathbb{R}$.  Since $\mathbf{w}^{-,1}(x;\zeta)$ ($\mathbf{w}^{+,2}(x;\zeta)$) is a solution of \eqref{eq:Zakharov-Shabat} that decays as $x\to -\infty$ ($x\to +\infty$), the condition $a(\zeta)=0$ obviously implies the existence of a nontrivial solution $\mathbf{w}(x;\zeta)$ of \eqref{eq:Zakharov-Shabat} that decays rapidly to zero in both limits $x\to\pm\infty$.  Supposing that (i) there are a finite number of zeros of $a(\cdot)$ in the upper half-plane, say $\zeta_0,\dots,\zeta_{N-1}$ 
and (ii) all zeros of $a(\cdot)$ in the upper half-plane are simple, the function
\begin{equation}
\alpha(\zeta):=a(\zeta)\prod_{n=0}^{N-1}\frac{\zeta-\zeta_n^*}{\zeta-\zeta_n},\quad\imag\{\zeta\}\ge 0
\end{equation}
is clearly analytic and non-vanishing for $\imag\{\zeta\}>0$, and $\alpha(\zeta)\to 1$ as $\zeta\to\infty$ with $\imag\{\zeta\}>0$.  Moreover $|\alpha(\zeta)|^2=|a(\zeta)|^2$ whenever $\zeta\in\mathbb{R}$.  Letting $f(\zeta)$ be defined in terms of the principal branch of the logarithm as $f(\zeta):=\log(\alpha(\zeta))$ for $\imag\{\zeta\}>0$ and $f(\zeta):=-\log(\alpha(\zeta^*)^*)$ for $\imag\{\zeta\}<0$, we see that $f(\zeta)$ is analytic for $\zeta\in\mathbb{C}\setminus\mathbb{R}$.  From \eqref{eq:ZS-ab-modulus} we get $f_+(\zeta)-f_-(\zeta)=\log(1-|b(\zeta)|^2)$ for $\zeta\in\mathbb{R}$, where $f_\pm$ denotes the boundary value taken by $f$ on $\mathbb{R}$ from $\mathbb{C}_\pm$ (note that $|b(\zeta)|^2<1$ holds at all real $\zeta$ at which $a(\zeta)\neq 0$).  Therefore assuming that $a(\zeta)\neq 0$ for all $\zeta\in\mathbb{R}$, applying the Plemelj formula  gives
\begin{equation}
f(\zeta)=\frac{1}{2\pi i}\int_\mathbb{R} \frac{\log(1-|b(\xi)|^2)}{\xi-\zeta}\,\dd\xi,\quad \zeta\in\mathbb{C}\setminus\mathbb{R},
\end{equation}
from which we obtain the relation
\begin{equation}
a(\zeta)=\prod_{n=0}^{N-1}\frac{\zeta-\zeta_n}{\zeta-\zeta_n^*}\exp\left(\frac{1}{2\pi i}\int_\mathbb{R}\frac{\log(1-|b(\xi)|^2)}{\xi-\zeta}\,\dd\xi\right),\quad\imag\{\zeta\}>0
\label{eq:ZS-a-upper-half-plane}
\end{equation}
expressing the analytic continuation of $a(\cdot)$ from the real axis to the upper half-plane in terms of its zeros and the complementary function $b(\cdot)$ on the real axis.  

\subsection{Semiclassical direct scattering for Klaus-Shaw potentials}
\label{subapp-semiclassical-ZS}
Now suppose further that $\psi(x) = A(x)$ is a \emph{Klaus-Shaw potential}, i.e., $A:\mathbb{R}\to [0,+\infty)$ is a continuous nonnegative amplitude function that is nondecreasing on $(-\infty,x_0]$ and nonincreasing on $[x_0,+\infty)$ for some $x_0\in\mathbb{R}$, the global maximizer of $A$.  Klaus and Shaw \cite{KlausS02} proved that for such potentials, the eigenvalues in the upper half-plane are indeed finite in number, simple, and moreover they lie on the positive imaginary axis. Moreover, $a(\zeta)$ is nonzero for all real $\zeta$ with the possible exception of the origin $\zeta=0$, and $a(0)=0$ if and only if $\int_\mathbb{R}A(x)\,dx\in \pi(\mathbb{Z}+1/2)\epsilon$.  We avoid this transitional situation (in which an eigenvalue is born from the origin as the $L^1$ norm of $A/\epsilon$ increases) by assuming that, given $A$, $\epsilon$ is chosen so that for some $N=0,1,2,3,\dots$, 
\begin{equation}
\epsilon = \frac{1}{N\pi}\int_\mathbb{R}A(x)\,\dd x.
\label{eq:ZS-epsilon-assumption}
\end{equation}
In this situation, there are exactly $N$ strictly positive imaginary eigenvalues $\zeta_n=is_n$, $0<s_{N-1}<\cdots <s_1<s_0$, all simple.
Note that the assumption \eqref{eq:ZS-epsilon-assumption} allows the consideration of the limit $\epsilon\downarrow 0$ by the corresponding (discrete) limit $N\to\infty$.  For further details see \cite{Kamvissis:2003,Miller02}.  Considering the limit $N\to\infty$, we have the following \emph{approximate Zakharov-Shabat scattering data}:  
\begin{itemize}
\item $b(\zeta)=o(1)$ uniformly on $\mathbb{R}$ in the limit $N\to\infty$.
\item Let $\rho(s)$ be defined by 
\eq
\begin{split}
\rho(s):=&\frac{s}{\pi}\int_{x_-(s)}^{x_+(s)}\frac{\dd x}{\sqrt{A(x)^2-s^2}}\\
  =&-\frac{1}{\pi}\frac{\dd}{\dd s}
\int_{x_-(s)}^{x_+(s)}\sqrt{A(x)^2-s^2}\,\dd x,\quad 0<s<A_\mathrm{max}:=\max_{x\in\mathbb{R}}A(x),
\label{eq:ZS-density}
\end{split}
\endeq
where $x_-(s)<x_+(s)$ are the two roots $x$ of $A(x)^2-s^2$.
Then, the \emph{approximate eigenvalues} are $\zeta=i\widetilde{s}_{n}$, $n=0,\dots,N-1$, where $\widetilde{s}_{0},\dots,\widetilde{s}_{N-1}$ are determined uniquely by the Bohr-Sommerfeld quantization rule
\begin{equation}
\Psi(i\widetilde{s}_{n}) = (n+\tfrac{1}{2})\epsilon\pi = \frac{2n+1}{2N}\int_\mathbb{R}A(x)\,\dd x,\quad n=0,\dots,N-1
\label{eq:ZS-approximate-eigenvalues}
\end{equation}
with the \emph{phase integral} $\Psi$ being defined by
\begin{equation}
\Psi(is):=\pi\int_{s}^{A_\mathrm{max}}\rho(s')\,\dd s'= \int_{x_-(s)}^{x_+(s)}\sqrt{A(x)^2-s^2}\,\dd x,\quad 0<s<A_\mathrm{max}.
\label{eq:ZS-phase-integral}
\end{equation}
\item The proportionality constant $\tau$ associated with the eigenvalue best approximated by $\zeta=i\widetilde{s}_{n}$ is itself approximated by 
\begin{equation}
\tau\approx\widetilde{\tau}_{n} :=  i(-1)^Ke^{i(2K+1)\Psi(i\widetilde{s}_{n})/\epsilon}e^{\mu(i\widetilde{s}_{n})/\epsilon},\quad K\in\mathbb{Z},
\label{eq:ZS-proportionality-constant}
\end{equation}
where
\eq
\begin{split}
\mu(is):=&(x_+(s)+x_-(s))s + \int_{-\infty}^{x_-(s)}\left(\sqrt{s^2-A(x)^2}-s\right)\,\dd x \\
  & - \int_{x_+(s)}^{+\infty}\left(\sqrt{s^2-A(x)^2}-s\right)\,\dd x,\quad 0<s<A_\mathrm{max}.
\label{eq:mu-define}
\end{split}
\endeq
The value of the integer $K$ is arbitrary due to the definition \eqref{eq:ZS-approximate-eigenvalues} of the numbers $\{\widetilde{s}_{n}\}_{n=0}^{N-1}$ (in fact, $\widetilde{\tau}_{n}e^{-\mu(i\widetilde{s}_{n})/\epsilon}=(-1)^{n+1}$ regardless of the value of $K\in\mathbb{Z}$).  The advantage of interpolating the factor $(-1)^{n+1}$ using the phase integral $\Psi$ is that, for each choice of $K\in\mathbb{Z}$, $\widetilde{\tau}_{n}$ may be considered to be an $n$-independent analytic function of $s$ evaluated at $s=\widetilde{s}_{n}$.
\end{itemize}
Note that according to \eqref{eq:ZS-a-upper-half-plane}, since $b(\cdot)$ is negligible on the real axis, the function $a(\zeta)$ is approximated for $\imag\{\zeta\}>0$ by a Blaschke product:
\begin{equation}
a(\zeta)\approx\widetilde{a}(\zeta):=\prod_{n=0}^{N-1}\frac{\zeta-i\widetilde{s}_{n}}{\zeta+i\widetilde{s}_{n}},
\label{eq:a-tilde-define}
\end{equation}
while if $\zeta$ is the eigenvalue best approximated by $i\widetilde{s}_{m}$,
\begin{equation}
a'(\zeta)\approx \widetilde{a}'(i\widetilde{s}_{m})=\frac{1}{i}\frac{\prod_{n\neq m}(\widetilde{s}_{m}-\widetilde{s}_{n})}{\prod_{n=0}^{N-1}(\widetilde{s}_{m}+\widetilde{s}_{n})}.
\end{equation}
The above asymptotic formulae may be obtained by applying turning point theory (for $A$ Klaus-Shaw there are exactly two real simple turning points $x=x_\pm(s)$ for $\zeta=is$ with $0<s<A_\mathrm{max}$, and the real axis can be covered by two overlapping intervals, each containing precisely one of the turning points, on which Langer transformations can be used to map the Zakharov-Shabat problem \eqref{eq:Zakharov-Shabat} to a controllable perturbation of the Airy equation).  This theory is suitably robust to make the above approximations rigorous as long as $s$ is confined to an arbitrary closed subinterval of $(0,A_\mathrm{max})$.

Still considering the case of $\psi(x)=A(x)$ with $A:\mathbb{R}\to\mathbb{R}_+$, we may obtain approximate formulae valid as $\epsilon\downarrow 0$ for the Jost solutions $\mathbf{w}^{-,1}(x;\zeta)$ and $\mathbf{w}^{+,2}(x;\zeta)$ in the complementary situation that $\zeta\in\mathbb{C}_+\setminus [0,iA_\mathrm{max}]$, i.e., for $\zeta$ bounded away from the eigenvalue locus.  In this situation the WKB method without turning points applies, i.e., solutions $\mathbf{w}(x;\zeta)$ of \eqref{eq:Zakharov-Shabat} are approximated by $\mathbf{w}(x;\zeta)\approx \mathbf{w}_0(x;\zeta)e^{f(x;\zeta)/\epsilon}$, where $f'(x;\zeta)$ is an eigenvalue of the coefficient matrix $\mathbf{Z}(x;\zeta)$ in \eqref{eq:Zakharov-Shabat} with eigenvector $\mathbf{w}_0(x;\zeta)$ normalized so that $\mathbf{w}_0'(x;\zeta)$ is in the column space of the singular matrix $\mathbf{Z}(x;\zeta)-f'(x;\zeta)\mathbb{I}$.  This approximation can be proved to be accurate given an initial value of $\mathbf{w}(x_0;\zeta)$ in an interval with endpoint $x_0$ on which the real part of the exponent $f(x;\zeta)$ is strictly increasing in the direction away from $x_0$.  By an easy generalization of this analysis to allow $x_0\to \pm\infty$, the Jost solutions $\mathbf{w}^{-,1}(x;\zeta)$ and $\mathbf{w}^{+,2}(x;\zeta)$ can be rigorously approximated on the respective intervals $(-\infty,x_1]$ and $[x_1,+\infty)$ for any $x_1\in\mathbb{R}$ with the use of the respective eigenvalues $f'(x;\zeta)=(-\zeta^2-A(x)^2)^{1/2}$ and $f'(x;\zeta)=-(-\zeta^2-A(x)^2)^{1/2}$ (principal branch of the square root) which have respectively positive and negative real parts on the corresponding semi-infinite intervals.
These approximations are the following:
\begin{equation}
\begin{split}
\mathbf{w}^{-,1}(x;\zeta)\approx\widetilde{\mathbf{w}}^{-,1}(x;\zeta)&:=N(x;\zeta)\begin{pmatrix}
i\zeta-\left(-\zeta^2-A(x)^2\right)^{1/2}\\A(x)\end{pmatrix} e^{-i\zeta x/\epsilon}e^{\nu^-(x;\zeta)/\epsilon},\\
\mathbf{w}^{+,2}(x;\zeta)\approx\widetilde{\mathbf{w}}^{+,2}(x;\zeta)&:=N(x;\zeta)\begin{pmatrix}A(x)\\i\zeta-\left(-\zeta^2-A(x)^2\right)^{1/2}\end{pmatrix}
e^{i\zeta x/\epsilon}e^{\nu^+(x;\zeta)/\epsilon},
\end{split}
\label{eq:approximate-Josts}
\end{equation}
for $x\in\mathbb{R}$ and $\zeta\in\mathbb{C}_+\setminus[0,iA_\mathrm{max}]$, where the complex scalar normalizing factor $N(x;\zeta)$ is given by
\begin{equation}
N(x;\zeta):=-\left(2\left(-\zeta^2-A(x)^2\right)^{1/2}\left(\left(-\zeta^2-A(x)^2\right)^{1/2}-i\zeta\right)\right)^{-1/2},
\label{eq:approximate-Josts-norm}
\end{equation}
and where
\begin{equation}
\begin{split}
\nu^-(x;\zeta)&:=\int_{-\infty}^x\left[\left(-\zeta^2-A(y)^2\right)^{1/2}+i\zeta\right]\,\dd y,\\
\nu^+(x;\zeta)&:=\int_x^{+\infty}\left[\left(-\zeta^2-A(y)^2\right)^{1/2}+i\zeta\right]\,\dd y.
\end{split}
\label{eq:mu-plus-minus-define}
\end{equation}
We may observe that, at any value of $x\in\mathbb{R}$ at which $A(x)=0$, significant simplification occurs:
\begin{equation}
A(x)=0\;\implies\; \widetilde{\mathbf{w}}^{-,1}(x;\zeta)=\begin{pmatrix}1\\0\end{pmatrix}e^{-i\zeta x/\epsilon}e^{\nu^-(x;\zeta)/\epsilon}\;\text{and}\;
\widetilde{\mathbf{w}}^{+,2}(x;\zeta)=\begin{pmatrix}0\\1\end{pmatrix}e^{i\zeta x/\epsilon}e^{\nu^+(x;\zeta)/\epsilon}.
\label{eq:approximate-Josts-simplify}
\end{equation}
Furthermore, observe that if $x>\mathrm{supp}(A)$ (respectively $x<\mathrm{supp}(A)$), then $\nu^-(x;\zeta)$ (respectively $\nu^+(x;\zeta)$) coincides with a complete integral:
\begin{equation}
L(\zeta):=-\nu^-(+\infty;\zeta)=-\nu^+(-\infty;\zeta)=\int_\mathbb{R}\left[(-\zeta^2-A(y)^2)^{1/2}+i\zeta\right]\,\dd y.
\label{eq:Lfunc-def}
\end{equation}
The vectors $\mathbf{w}^{-,1}(x;\zeta)e^{i\zeta x/\epsilon}e^{-\nu^-(x;\zeta)/\epsilon}$ and $\mathbf{w}^{+,2}(x;\zeta)e^{-i\zeta x/\epsilon}e^{-\nu^+(x;\zeta)/\epsilon}$ actually have complete asymptotic expansions in powers of $\epsilon$ provided that $A(x)$ is infinitely differentiable, and the approximations \eqref{eq:approximate-Josts} capture just the leading term.   In this case, for every $p>0$,
\begin{gather}
w_2^{-,1}(x;\zeta)e^{i\zeta x/\epsilon}e^{L(\zeta)/\epsilon} = \bigo{\epsilon^p},\quad\epsilon\downarrow 0,\quad x>\mathrm{supp}(A)
\intertext{and}
w_1^{+,2}(x;\zeta)e^{-i\zeta x/\epsilon}e^{L(\zeta)/\epsilon}=\bigo{\epsilon^p},\quad\epsilon\downarrow 0,\quad x<\mathrm{supp}(A)
\end{gather}
hold for each $\zeta\in\mathbb{C}_+$ bounded away from the eigenvalue locus.  Thus, these quantities are small \emph{beyond all orders} in $\epsilon$.  In the case of potentials $A$ with only finitely-many derivatives one generally only gets a finite order of vanishing.  In both cases it is challenging to calculate a leading term (the quantities on the left-hand side are generally nonzero).

\end{document}